\newcommand{\squeezelist}{\setlength{\itemsep}{0pt}}
\newtheorem{thm}{Theorem}
\newtheorem{obs}[theorem]{Observation}
\newtheorem{lem}[theorem]{Lemma}
\newtheorem{coro}[theorem]{Corollary}
\newtheorem{assumption}{Assumption}
\newtheorem{assumptions}{Assumptions}
\newcommand{\changed}[1]{{#1}}
\newcommand{\newchanged}[1]{{#1}}
\newcommand{\nnewchanged}[1]{{#1}}
\newcommand{\reviewerchange}[1]{{#1}}
\def\defn#1{\textit{\textbf{\boldmath #1}}}
\title{The Geodesic Edge Center of a Simple Polygon}
\author{Anna Lubiw}
{David R.~Cheriton School of Computer Science, University of Waterloo, Waterloo, Canada}
{alubiw@uwaterloo.ca}
{https://orcid.org/0000-0002-2338-361X}{}
\author{Anurag Murty Naredla}{David R.~Cheriton School of Computer Science, University of Waterloo, Waterloo, Canada \and Institut für Informatik, University of Bonn, Bonn, Germany}{amnaredla@uwaterloo.ca}{}{}
\authorrunning{A.~Lubiw and A.M.~Naredla}
\keywords{geodesic center of polygon, farthest edges, farthest-segment Voronoi diagram} %
\begin{document}
\maketitle

\begin{abstract}
The \emph{geodesic edge center} of a polygon is a point $c$ inside the polygon that minimizes the maximum geodesic distance 
from $c$ to any edge of the polygon, where \emph{geodesic distance} is the shortest path distance inside the polygon.
We give a \reviewerchange{linear-time algorithm} to find a geodesic edge center of a simple polygon. 
This improves on the previous $O(n \log n)$ time algorithm by Lubiw and Naredla [European Symposium on Algorithms, 2021].
The algorithm builds on an  
algorithm to find the geodesic \emph{vertex} center of a simple polygon due to Pollack, Sharir, and Rote [Discrete \& Computational Geometry, 1989] and
an improvement to linear time by Ahn, Barba, Bose, De Carufel, Korman, and Oh [Discrete \& Computational Geometry, 2016].

The
geodesic edge center 
can easily be found from the
\reviewerchange{geodesic farthest-edge} Voronoi diagram of the polygon.
Finding that Voronoi diagram
in linear time is an open question, although
the geodesic \emph{nearest} edge Voronoi diagram 
(the medial axis) can be found in linear time.
As a first step of our geodesic edge center algorithm,  we give a \reviewerchange{linear-time algorithm} to find the \reviewerchange{geodesic farthest-edge} Voronoi diagram  
restricted to the polygon boundary. 
\end{abstract}

\section{Introduction}\label{section:introduction}

The most basic ``center'' problem is Sylvester's problem: given $n$ points in the plane, find the smallest disc that encloses the points. The center of this disc is a point that minimizes the maximum distance to any of the given points.
We consider a center problem that differs in two ways from Sylvester's problem. First, the domain is a simple polygon and the distance measure is not Euclidean distance, but rather the shortest path, or ``geodesic'' distance inside the polygon.
Second, the sites are not points but rather the edges of the polygon.  More precisely, the problem is to find, given a simple polygon in the plane, the \defn{geodesic edge center}, which is a point in the polygon that minimizes the maximum geodesic distance to a polygon edge. 
See Figure~\ref{fig:center-example}.
More formally, let $E$ be the set of edges of the polygon $P$, and for point $p \in P$ and edge $e \in E$, define \defn{$d(p,e)$} to be the geodesic distance from $p$ to $e$.  
\changed{Define the \defn{geodesic radius} of a point $p \in P$ to be $r(p) := \max \{d(p,e): e \in E\}$.
Then the \defn{geodesic edge center} is a point $p \in P$ that minimizes
$r(p)$.}

\begin{figure}[htb]
  \centering
  \includegraphics[width=.65\textwidth]{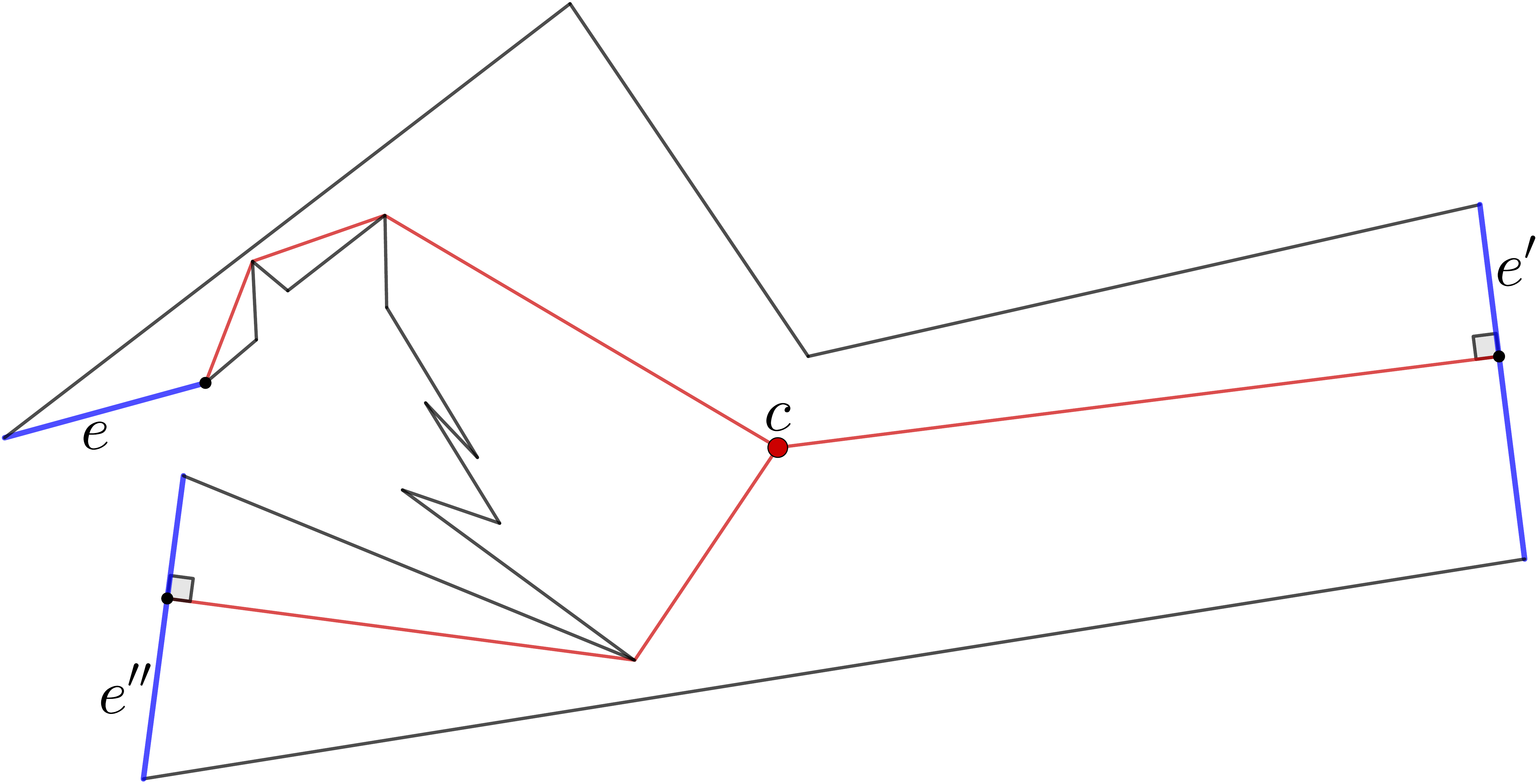}
  \caption{Point $c$ is the edge center of this polygon.
  Edges $e, e', e''$ (in blue)
  are geodesically farthest
  from $c$---the geodesic paths (in red) from $c$ to these edges all
have the same length.}
\label{fig:good_four_cell}
\label{fig:center-example}
\end{figure}

Our main result is a linear-time algorithm to find the geodesic edge center of a simple $n$-vertex polygon.  This improves %
\reviewerchange{our} previous 
$O(n \log n)$ time algorithm~\cite{lubiw2021visibility}.
The algorithm follows the strategies used to find the geodesic \emph{vertex} center, which is a point in the polygon that minimizes the maximum geodesic distance to a polygon vertex.  
In 1989, Pollack, Sharir and Rote~\cite{pollack_sharir} gave an $O(n \log n)$ time algorithm for the geodesic vertex center problem.  A main tool---which is used in all subsequent algorithms---is
a \reviewerchange{linear-time} \emph{chord oracle} that finds, given a chord, which side of the chord contains the center. 
In 2016, Ahn, Barba, Bose, De Carufel, Korman, and Oh~\cite{linear_time_geodesic} improved the runtime for the geodesic vertex center to $O(n)$.  
Their most important new contribution is the use of $\epsilon$-nets to perform a divide-and-conquer search. 
Our algorithm follows the approach of Ahn et al., modified to deal with farthest edges rather than farthest vertices.  We simplify some aspects and we repair some  
errors in their approach.  
\newchanged{The \reviewerchange{edge-center problem} is more general than the vertex center problem via the reduction of splitting each vertex into  two vertices joined by a very short edge.}

In general, the center of a set of sites can be determined from the farthest Voronoi diagram of those sites, but computing the Voronoi diagram can be more costly. 
As the first step of 
our center algorithm we give a \reviewerchange{linear-time algorithm} to compute the \reviewerchange{geodesic farthest-edge} Voronoi diagram restricted to the boundary of the polygon. 
\changed{Computing the whole \reviewerchange{geodesic farthest-edge} Voronoi diagram in linear time is an open problem.}

\subparagraph*{Background on Centers and Farthest Voronoi Diagrams.} 

\changed{
Megiddo~\cite{megiddo_linear} gave a \reviewerchange{linear-time} algorithm to find the center of a set of points in the plane (Sylvester's problem) using his ``prune-and-search'' technique, which is used in the final stages of all geodesic center algorithms.
However, computing the farthest Voronoi diagram of points in the plane takes $\Theta(n \log n)$ time~\cite{shamos1975closest}.
}

Our problem involves distances that are geodesic rather that Euclidean, and sites that are segments (edges) rather than points. \changed{These have been studied separately, although there is almost no work
combining them.}

For Euclidean distances, 
Megiddo's method extends to \reviewerchange{linear-time algorithm}s to find the center of %
line segments or lines in the plane~\cite{bhattacharya1994optimal}.
The farthest Voronoi diagram of segments in the plane was considered by Aurenhammer et al.~\cite{aurenhammer2006farthest},
who called it a
``stepchild in
the vast Voronoi diagram literature''.  
They gave an $O(n \log n)$ time algorithm which was improved to output-sensitive time $O(n \log h)$, where $h$ is the number of faces of the diagram~\cite{papadopoulou2013farthest}.

For geodesic 
distances with point sites Ahn et al. gave 
a \reviewerchange{linear-time algorithm} to find the geodesic center of the vertices of a polygon~\cite{linear_time_geodesic}.
The corresponding farthest Voronoi diagram 
can be found in time $O(n \log \log n)$~\cite{oh2020geodesic}, and in expected linear time~\cite{barba2019optimal}. 
More generally, for $m$ points inside an $n$-vertex polygon, an algorithm to find their farthest Voronoi diagram 
was first given by Aronov et al.~\cite{aronov1993farthest} with run-time $O((n+m)\log (n+m))$, and improved in a sequence of papers~\cite{oh2020geodesic,barba2019optimal,oh2020voronoi}, culminating in an optimal run time of   
$O(n + m \log m)$~\cite{wang2021optimal}. \changed{This is also the best-known bound for finding the center of $m$ points in a simple polygon.}
For sites more general than point sites inside a polygon, the only result we are aware of is our 
$O((n+m) \log (n+m))$ time algorithm to find the geodesic center of $m$ \emph{half-polygons}~\cite{lubiw2021visibility}, 
with edges being a special case. 

Finally, we mention a curious difference between nearest and farthest site Voronoi diagrams of edges
in a polygon.
The nearest Voronoi diagram of the edges of a polygon is the medial axis, 
one of the most famous and useful Voronoi diagrams.  The medial axis can be found in linear time~\cite{chin1999finding}.
By contrast,
the \emph{farthest} Voronoi diagram of edges in a polygon has received virtually no attention,
\changed{except for a convex polygon (which avoids geodesic issues)
where there is an $O(n \log n)$ time algorithm~\cite{drysdale2008nlogn}, and a recent 
expected \reviewerchange{linear-time algorithm}~\cite{khramtcova2014expected}.
}

\section{Overview of the Algorithm}
\label{sec:overview}

\changed{Before giving the overview of our algorithm, we outline the previous work that our algorithm builds upon, and explain what is novel about our contributions.}

Pollack et al.~\cite{pollack_sharir} gave an $O(n \log n)$ time algorithm to find the geodesic vertex center of a simple polygon. 
A main ingredient is to solve the problem one dimension down.
In particular, they develop an $O(n)$ time 
\emph{chord oracle}
that, given
a chord of the polygon, finds the \emph{relative center} restricted to the chord
and from that, determines whether the 
center of the polygon lies to left or right of the chord.  
By applying the chord oracle $O(\log n)$ times, they limit the 
search to a convex subpolygon where Euclidean distances can be used.
This reduces the problem to finding 
a minimum disc that encloses some disks, which Megiddo~\cite{megiddo_spanned_ball} solved in linear time using the same approach as  for his linear programming algorithm.
\reviewerchange{We extended the chord oracle to handle farthest \emph{edges} instead of vertices~\cite{lubiw2021visibility}.}

The idea used in the chord oracle algorithm is central to further developments. Expressed in general terms, 
the goal is to find a point in a domain (either a chord or the whole polygon) that minimizes the maximum distance to a %
site (a vertex or edge of the polygon). The idea is to first find what we will call a \defn{coarse cover} of the domain by a linear number of elementary regions $R$ (intervals or triangles), 
\changed{each with an associated easy-to-compute convex function $f_R$ that captures the geodesic distance to a potential farthest edge, and with the property that
the upper envelope of the functions is 
the geodesic radius function.} 
Thus, the goal is to find the point $x$ that minimizes the upper envelope of the  functions $f_R$. 
When the domain is a chord,
the chord oracle solves this in linear time.

\changed{When the domain is the whole polygon, and the sites are vertices, 
Ahn et al.~\cite{linear_time_geodesic} gave a \reviewerchange{linear-time algorithm}.
They find a coarse cover of the whole polygon starting from 
Hershberger and Suri's algorithm~\cite{hershberger1997matrix} to find the farthest vertex from each vertex.  
They then use divide-and-conquer based on $\epsilon$-nets---their big innovation---to reduce the domain to a triangle. After that, the vertex center is found using  
Megiddo-style prune-and-search techniques 
like those used by 
Pollack et al.}

Our algorithm uses a similar approach, modified to deal with farthest \emph{edges} rather than vertices. 
Another difference is that 
we give a simpler method of finding a coarse cover of the polygon by first finding 
the \reviewerchange{geodesic farthest-edge} Voronoi diagram on the polygon boundary.
There is a \reviewerchange{linear-time algorithm} to find the geodesic farthest \emph{vertex} Voronoi diagram on the polygon boundary by Oh, Barba, and Ahn~\cite{oh2020voronoi}.
Our algorithm is considerably simpler, and 
\reviewerchange{it is a novel idea to use the boundary Voronoi diagram to find the center.}

Other differences between our approach and that of Ahn et al. are introduced in order to 
repair 
some flaws in 
their paper.
\changed{
They use $\epsilon$-net techniques, but their range space does not have the necessary properties for finding $\epsilon$-nets in deterministic linear time. We remedy this by using a different range space, thereby repairing and generalizing their result.}

\paragraph*{Algorithm Overview}

\paragraph*{Phase I: Finding the \reviewerchange{Farthest-Edge} Voronoi Diagram Restricted to
the Polygon Boundary (Section~\ref{section:Phase-I})}
We first show that the \reviewerchange{linear-time algorithm} of Hershberger and Suri~\cite{hershberger1997matrix} that finds the farthest \emph{vertex} from each vertex can be modified to find the farthest \emph{edge} from each vertex.
A polygon edge $e$ whose endpoints have the same farthest edge $g$ is then part of the farthest Voronoi region of $g$.
To find the Voronoi diagram on a 
\emph{transition edge}
$e$ 
that has different farthest edges at its endpoints, we must find
the upper envelope of the coarse cover of $e$.   
\changed{We use the fact that the coarse cover of $e$ is 
constructed from two shortest path trees inside a smaller subpolygon called the \emph{hourglass} of $e$. The hourglasses of all transition edges can be found in linear time. 
In each hourglass, 
the shortest path trees \reviewerchange{allow} us to construct the upper envelope incrementally in linear time---this is a main new aspect of our work.}

\paragraph*{Phase II: Finding the Geodesic Edge Center (Section~\ref{section:PhaseII})}
We first find a coarse cover of the polygon by triangles, 
each bounded by two polygon chords plus a segment of an edge, and 
each with an associated convex function that captures the 
geodesic distance to a potential farthest edge---the potential farthest edges are 
those that have 
non-empty Voronoi regions
on the boundary of $P$.
The problem of finding the edge center is then reduced to the problem of finding
the point that minimizes the upper envelope of the coarse cover functions.

To find this point we use divide-and-conquer, reducing in each step to a smaller subpolygon with a constant fraction of the coarse cover elements. 
There are two stages.
In Stage 2, once the subpolygon is a triangle, the prune-and-search approach of Megiddo's can be applied.  
In Stage 1
every coarse cover triangle that intersects the subpolygon has a boundary chord crossing the subpolygon, and $\epsilon$-net techniques are used to reduce the number of such chords, and hence the number of coarse cover elements.
Our approach follows that of Ahn et al.~\cite{linear_time_geodesic} but 
we repair some flaws---another main new aspect of our work.
Ahn et al.~recurse on subpolygons called ``4-cells'' that are the intersection of four half-polygons (a \defn{half-polygon} is the subpolygon to one side of a chord).
\newchanged{We instead recurse on ``$3$-anchor hulls'' that are the geodesic convex hulls of at most three points or subchains of the polygon boundary.} 
These define a 
range space 
whose ground set is 
a set of chords and whose ranges are subsets of chords that cross a
$3$-anchor hull. 
We prove that our range space has finite VC-dimension, which repairs the faulty proof in Ahn et al.~for 4-cells.
Even more crucially, we give a ``subspace oracle'' that permits an $\epsilon$-net to be found in \emph{deterministic} linear time, 
\newchanged{something missing from their approach.}

\section{Preliminaries}
\label{section:prelims}

Although our algorithm follows the pattern of the geodesic vertex center algorithm by Ahn et al.~\cite{linear_time_geodesic}, we must re-do everything from the ground up to deal with farthest edges. 
In this section we summarize
\reviewerchange{some basic results,}
deferring proofs and details to the
appendix.

\subparagraph*{Notation and Definitions.}
\label{sec:notation}
A \defn{chord} of a polygon is a straight line segment in the (closed) polygon with both endpoints on the boundary, $\partial P$. 
For a point $p \in P$ and a point or line segment $s$ in $P$, \defn{$\pi(p,s)$} is the unique \defn{shortest} (or \defn{geodesic}) path from $p$ to $s$, and \defn{$t(p,s)$} is its \defn{terminal point}.
The length of $\pi(p,s)$, denoted \defn{$d(p,s)$}, is the \defn{geodesic distance} from $p$ to $s$.
\reviewerchange{For point $p$ in $P$, a \defn{farthest edge}, \defn{$F(p)$}, is an edge for which $d(p, F(p)) \geq d(p,e)$, for %
every edge $e$ of $P$.}
The \defn{geodesic radius} of point $p$ is \defn{$r(p)$} $:= d(p,F(p))$.
The \defn{geodesic edge center} is a point $p \in P$ that minimizes
$r(p)$.

\subparagraph*{\reviewerchange{General Position} Assumptions.}
\label{sec:general-position}

\changed{As is standard for Voronoi diagrams of segments, e.g., see~\cite{aurenhammer2006farthest}, we use the following tie-breaking rule %
to prevent 2-dimensional Voronoi regions with more than one farthest edge.}

\subparagraph*{Tie-Breaking Rule.} 
Suppose that $p$ is a point of $P$, $e$ and $f$ are two edges that meet at reflex vertex $u$, and
$\pi(p,e) = \pi(p,f) = \pi(p,u)$. 
\changed{Let line $b$ be the angle bisector of $u$.
For $p$ not on $b$, break the tie $d(p,e)=d(p,f)$
by saying that the distance to the edge on the opposite side of $b$ is greater.}

\medskip
\changed{We make the following \reviewerchange{general position} assumptions, which we claim can be effected by perturbing vertices.}

\changed{
\begin{assumptions}
\label{assumptions}
(1) No three vertices of 
    $P$ are collinear.
(2) After imposing the tie-breaking rule, no vertex is equidistant from two or more edges.
(3) No point on the polygon boundary has more than two farthest edges and no point in the interior of the polygon has more than a constant number
\reviewerchange{(six)}
of farthest edges.
\end{assumptions}
}
It follows that the set of points with more than one farthest edge 
is 1-dimensional, does not contain any vertex of $P$, and intersects $\partial P$ in isolated points; see
Lemma~\ref{lem:1D-Voronoi-edges} in Appendix~\ref{appendix:general-position}.

\subparagraph*{Properties of Farthest Edges.}
\label{sec:farthest-properties}
We need the following basic ``triangle property'' (proved in
Appendix~\ref{appendix:farthest-properties}) 
about shortest paths that cross. 

\begin{lem}
\label{lem:new_order}
Suppose the points $p$, $q$ and the edges $e$, $f$ occur in the order $p,q,e,f$ along the polygon boundary $\partial P$.
Then $d(p,e) + d(q,f) \geq d(p,f) + d(q,e)$.
\end{lem}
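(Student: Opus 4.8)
The plan is to use the standard ``crossing shortest paths'' argument. Consider the geodesic paths $\pi(p,e)$ and $\pi(q,f)$; they run from $p$ to its terminal point $t(p,e)\in e$ and from $q$ to its terminal point $t(q,f)\in f$, respectively, and these terminal points are well defined by the uniqueness of shortest paths in a simple polygon. Since $e$ precedes $f$ in the cyclic order $p,q,e,f$ along $\partial P$, the four boundary points $p$, $q$, $t(p,e)$, $t(q,f)$ occur in this same cyclic order on $\partial P$; in particular the pair $\{p,t(p,e)\}$ interleaves with the pair $\{q,t(q,f)\}$.

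First I would establish that $\pi(p,e)$ and $\pi(q,f)$ share a common point $x$. The path $\pi(p,e)$, together with one of the two boundary arcs between $p$ and $t(p,e)$, bounds a subregion of the simply connected set $P$; because $q$ and $t(q,f)$ lie on opposite boundary arcs determined by $p$ and $t(p,e)$, any path in $P$ from $q$ to $t(q,f)$ — in particular $\pi(q,f)$ — must meet $\pi(p,e)$. This is a Jordan-curve argument, and it holds even if the two geodesics merely touch or overlap along a subpath rather than crossing transversally; in that case I simply take $x$ to be any point of the shared portion.

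Next, from such a point $x$ I would build two detour paths. Let $\alpha$ be the subpath of $\pi(p,e)$ from $p$ to $x$ and $\beta$ its subpath from $x$ to $t(p,e)$; let $\gamma$ be the subpath of $\pi(q,f)$ from $q$ to $x$ and $\delta$ its subpath from $x$ to $t(q,f)$. Then the concatenation $\alpha\!\cdot\!\delta$ is a path in $P$ from $p$ to $t(q,f)\in f$, so $d(p,f)\le |\alpha|+|\delta|$, and $\gamma\!\cdot\!\beta$ is a path from $q$ to $t(p,e)\in e$, so $d(q,e)\le |\gamma|+|\beta|$. Adding these inequalities and using $|\alpha|+|\beta| = d(p,e)$ and $|\gamma|+|\delta| = d(q,f)$ yields $d(p,f)+d(q,e)\le d(p,e)+d(q,f)$, which is the claimed inequality.

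The main obstacle is making the crossing claim fully rigorous while covering degenerate configurations: $p$ or $q$ coinciding with an endpoint of $e$ or $f$, the edges $e$ and $f$ sharing a reflex vertex (so that the terminal points could even coincide), or the two geodesics overlapping along a subpath. I would handle all of these uniformly by observing that the argument needs only \emph{one} common point $x$ of the two geodesics, and that the interleaving of the four endpoints on $\partial P$ forces such a point to exist by the Jordan curve theorem regardless of these degeneracies. A minor point to address is the case where $x$ is a vertex of $P$ shared by $e$ and $f$, where the detour paths are formed in exactly the same way.
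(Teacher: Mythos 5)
Your proposal is correct and follows essentially the same route as the paper's proof: the ordering $p,q,e,f$ forces $\pi(p,e)$ and $\pi(q,f)$ to share a point $x$, and exchanging the path tails at $x$ together with the triangle inequality and the fact that the distance to an edge is at most the distance to any point of it (here the terminals $t(q,f)$ and $t(p,e)$) gives the inequality. Your Jordan-curve justification of the common point and the discussion of degeneracies only make explicit what the paper asserts directly, so there is no substantive difference.
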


We then characterize what two paths to farthest edges are like, see Figure~\ref{fig:correct_orderings} and
Appendix~\ref{appendix:farthest-properties}.  
In particular, we generalize the \reviewerchange{farthest-vertex} Ordering Property~\cite{aronov1993farthest} as follows.

\subparagraph*{The Ordering Property.}
As $p$ moves clockwise around $\partial  P$, so does $F(p)$.

\begin{figure}
  \centering
  \includegraphics[width=0.6\textwidth]{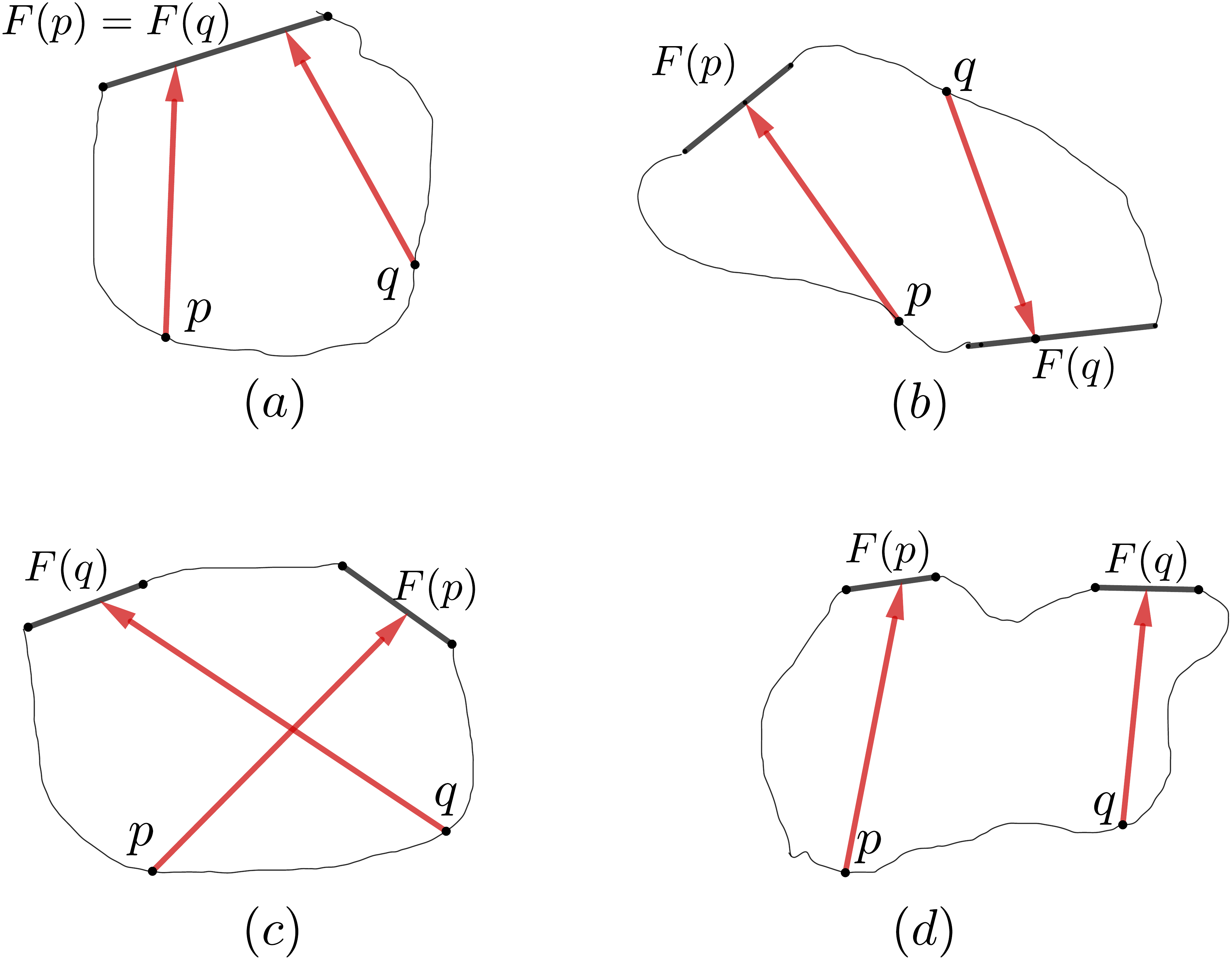}
  \caption{
Schematics for possible and impossible orderings of points $p$, $q$ and their farthest edges $F(p)$ and $F(q)$.  (a) The only possible ordering if $F(p) = F(q)$. 
(b),(c) The two possible orderings if $F(p) \ne F(q)$.
(d) The impossible ordering if $F(p) \ne F(q)$.  
}
\label{fig:correct_orderings}
\end{figure}

\subparagraph*{Shortest Paths To/From Edges.}
\label{sec:shortest-paths}

As basic tools, we need \reviewerchange{linear-time algorithm}s to find  shortest paths from a  given  point to all edges of the polygon, and to find shortest paths from a  given  edge to all  vertices of the polygon.
See 
Appendix~\ref{appendix:shortest-paths}.

\subparagraph*{Separators and Funnels.}
\label{sec:separators}

A geodesic path between two vertices of $P$ separates $\partial P$ into two parts, and when we focus on which vertices/edges are in opposite parts,
we call the geodesic path 
a ``separator''.
Separators, first introduced by Suri~\cite{suri1989}, are a
main tool for finding all farthest vertices in a polygon. 
In 
Appendix~\ref{appendix:separators} 
we extend the %
basic properties of separators to the case of farthest \emph{edges} and prove:
(1) 
If vertex $v$ and edge $e$
\reviewerchange{($e$ need not be farthest from $v$)}
are separated by a geodesic path $\pi(a,b)$, then the shortest path from $v$ to $e$ is contained, except for one edge, in the shortest path trees of $a$ and $b$;
(2) A constant number of separators suffice to separate every vertex from its farthest edge.

\subsection{Chord Oracles and Coarse Covers}\label{section:chord_oracle}

In this section we 
describe the chord oracle results that we need from previous work, 
and we give a unified explanation of those algorithms and our current algorithm in terms of coarse covers.
The basic function of a chord  oracle is to decide, given a chord $K$, whether the center lies to the left or right (or on) the chord.
Pollack et al.~\cite{pollack_sharir} gave a \reviewerchange{linear-time} chord oracle for the geodesic vertex center, 
which is at the heart of all further geodesic center algorithms.
\reviewerchange{We extended the chord oracle to the case of the geodesic \emph{edge} center~\cite{lubiw2021visibility}.}

In both cases, 
a main 
step is 
the ``one-dimension down''
problem of 
finding 
the \defn{relative center}, which is a point $c_K$ on $K$ that minimizes the 
geodesic radius function $r(x)$.
The directions of the first segments of the paths from $c_K$ to its farthest sites determine
whether the center of $P$ lies 
left/right/on $K$
(see 
Appendix~\ref{appendix:oracles}).

Algorithms to find the relative 
center of a chord 
or the 
center of a polygon
rely on a basic convexity property of the geodesic radius function (see
Lemma~\ref{lem:geodesically-convex} in Appendix~\ref{appendix:oracles}), 
and all
follow the same pattern, which can be 
formalized via the concept of a \emph{coarse cover} of the chord/polygon. 
The idea is that 
a \emph{coarse cover} for a domain (a chord/polygon)  
is a set of elementary regions $R$ (intervals/triangles) covering the domain,
where each region $R$ has 
an associated 
easy-to-compute convex function $f_R$, 
such that the upper envelope of the $f_R$'s is the geodesic radius function.
We give a precise definition 
for the case of farthest edges
(following~\cite{lubiw2021visibility} and specialized for our Assumptions~\ref{assumptions}).

\begin{definition}
\label{defn:coarse-cover}
A \defn{coarse cover} of chord $K$ [or polygon $P$] is  
a set of 
triples $(R,f,e)$
where 
\begin{enumerate}
\item 
$R$ is a subinterval of $K$ [or a triangle of $P$], $f$ is a function defined on domain $R$, and $e$ is an edge of $P$.

\item For all $x \in R$, $f(x) = d(x,e)$ and
either: $f(x) = d_2(x,v) + \kappa$ where $d_2$ is Euclidean distance, $\kappa$ is a constant
and
$v$ is 
a vertex of $P$; or
$f(x) = d_2(x, {\bar e})$, where $d_2$ is Euclidean distance and $\bar e$ is the line through $e$.

\item 
For any point $x \in K$ [or $P$], and any edge $e$ that is farthest from $x$, there is a triple $(R,f,e)$ in the coarse cover with $x \in R$.

\end{enumerate}
\end{definition}

Condition (3) 
implies that the upper envelope of the functions of the coarse cover is the geodesic radius function. Thus the
[relative] center problem breaks into two subproblems:
(1) find a coarse cover; and (2)  find the point $x$ that minimizes the upper envelope of the coarse cover functions.
The high-level idea for solving step (2) in linear time  (for a chord or polygon domain) is to 
recursively reduce the domain (the search space) to a subinterval or subpolygon while eliminating elements 
of the coarse cover whose functions are strictly dominated by others.
As for step (1)---constructing a coarse cover---see Section~\ref{section:Phase-I} for a chord and Section~\ref{section:Phase-II} for a polygon.

We call the chord oracle in Phase II when we use divide-and-conquer to search for the center in successively smaller subpolygons.  We actually need two variations of the basic chord oracle.  
First, we need a \emph{geodesic oracle} that tests which side of a geodesic contains the center.  Secondly, we 
do not construct a coarse cover of a chord/geodesic from scratch; rather, we intersect the triangles of the coarse cover of the subpolygon with the chord/geodesic, 
thus avoiding runtime dependence on $n$.
These variations are described in
Appendix~\ref{appendix:oracles}.

\section{Phase I: Finding the \reviewerchange{Farthest-Edge} Voronoi Diagram Restricted to the Polygon Boundary}
\label{section:Phase-I}

\changed{Based on Assumptions~\ref{assumptions}, the boundary of $P$ consists of chains with a single farthest edge, separated by %
points (not vertices) that have two farthest edges
\reviewerchange{(see Figure~\ref{fig:coarse-cover})}. 
Our goal is to find these points.}
The first step of the algorithm is to find the farthest edge from each vertex of the polygon in linear time.  To do this, we extend the algorithm of Hershberger and Suri~\cite{hershberger1997matrix} that finds the farthest \emph{vertex} from each vertex.  Details are in
Appendix~\ref{appendix:Hershberger-Suri}.
The next step is to fill in the Voronoi diagram along the polygon edges.
For an edge $ab$ where
vertices $a$ and $b$ have the same farthest edge, i.e., $F(a) = F(b)$,  all points on the edge $ab$ have the same farthest edge, by
the Ordering Property. 
An edge $ab$ with 
$F(a) \ne F(b)$ is a \defn{transition edge}. 
We will find the \reviewerchange{farthest-edge} Voronoi diagram on one transition edge in linear time.  To handle \emph{all} the transition edges in linear time, we will show that for each transition edge $ab$ we can restrict our attention to 
\reviewerchange{the \defn{hourglass} $H(a,b)$ which is
the subpolygon of $P$ bounded by $ab$, $\pi(a,F(b))$, $\pi(b,F(a))$ and the %
portion of $\partial P$ between the terminals $t(a,F(b))$ and $t(b,F(a))$.}
In 
Appendix~\ref{appendix:hourglasses}
we show that the hourglasses of all transition edges can be found in linear time and that the sum of their sizes is linear.

In this section, we show how to construct the 
\reviewerchange{farthest-edge} Voronoi diagram along one polygon edge $ab$ 
in time linear in the size of the polygon.
We do not assume that the polygon is an hourglass.
For purposes of description, imagine $ab$ horizontal with $a$ at the left, and the polygon interior above $ab$.
We use the \defn{coarse cover} (Definition~\ref{defn:coarse-cover}) of the edge $ab$,
which can be found in linear time (Lubiw and Naredla~\cite{lubiw_et_al:LIPIcs.ESA.2021.65}). 
Elements of the coarse cover are triples $(I,f,e)$ where $I$ is a subinterval of $ab$ and $f(x) = d(x,e)$ for any $x \in I$. 
By resolving overlaps of coarse cover intervals $I$, we find
the upper envelope of the coarse cover functions $f$, which immediately gives the Voronoi diagram on $ab$. 
This is easy if we  
sort the endpoints of the intervals $I$, but 
we cannot afford to sort. Instead, we will insert the coarse cover elements %
one by one, maintaining a list $M$ of [pairwise internally] disjoint subintervals of $ab$ together with an associated distance function $f_M(x)$. 
\changed{An efficient insertion order depends on the fact that}
elements of the coarse cover of edge $ab$ are associated with edges of the shortest path trees ${T}_a$ and ${T}_b$ (that consist of the shortest paths from $a$ and $b$, respectively, to all the edges of $P$). 
We will use the ordering of the trees as embedded in the plane.

Oh, Barba, Ahn~\cite{oh2020geodesic} gave a \reviewerchange{linear-time algorithm} to find the farthest \emph{vertex} Voronoi diagram on the boundary of $P$. 
\changed{The approach is similar, but they add coarse cover elements by iterating over the sites (the vertices in their case), which involves a complicated algorithm to sweep back and forth along $M$ maintaining a shortest path to the current vertex, and a tricky amortized analysis (see~\cite[Lemma 7]{oh2020geodesic}).
Our approach is simpler and more general.}

\subsection{\reviewerchange{Farthest-edge} Voronoi Diagram on One Edge}
\label{sec:one-edge-Vor}

\reviewerchange{In previous work~\cite{lubiw2021visibility, lubiw_et_al:LIPIcs.ESA.2021.65}
we constructed}
a coarse cover
(see Definition~\ref{defn:coarse-cover})
of an edge $ab$ from the shortest path trees $T_a$ and $T_b$.  
\reviewerchange{The trees are first augmented}
with $0$-length edges so that the paths to every 
polygon edge $e$ end with  
a tree edge perpendicular to $e$. In particular, every polygon edge corresponds to a leaf in each tree.

Direct edges of $T_a$ and $T_b$ away from their roots. 
Each edge $uv$ of $T_a \setminus T_b$ with $u \ne a$
corresponds to an \defn{$a$-side} coarse cover element $(I,f,e)$ where $e$ corresponds to the farthest leaf of $T_a$ descended from $v$.
For example, in Figure~\ref{fig:example_2_e4}, see edge $a_3$ of $T_a$ and interval $I_{a_3}$.
There are symmetrically defined \defn{$b$-side} coarse cover elements.
Each edge $uv$ of $T_a \cap T_b$ with $u$ visible from $ab$ corresponds to a \defn{central triangle} coarse cover element $(I,f,e)$ where $e$ corresponds to the farthest leaf of $T_a$ descended from $v$.
For example, see edge $a_5=b_5$ and interval $I_{a_5}$.
Each polygon edge $e$ that has an interior point visible from $ab$ corresponds to a \defn{central trapezoid} coarse cover element $(I,f,e)$ where $I$ consists of the points on $ab$ whose shortest paths to $e$ arrive perpendicularly.
For example, see edge $e_4$ and interval $I_{e_4}$.

\begin{lem} 
(proved in
Appendix~\ref{appendix:one-edge-Vor})
\label{lem:consecutive-coarse-cover}
\label{cor:consecutive-coarse-cover}
For any edge $e$ of $P$, let $C(e)$ be the set of coarse cover elements $(I,f,e)$ for $e$.
If $C(e)$ is nonempty, then its elements correspond to a (possibly empty) path in $T_a$ directed towards a leaf, followed by a central triangle or trapezoid, followed by a (possibly empty) path in $T_b$ directed towards the root.  Furthermore, the corresponding intervals on $ab$ appear in order, are [internally] disjoint, and their union is an interval.
\end{lem}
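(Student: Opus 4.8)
The plan is to split $C(e)$ into its $a$-side elements, $b$-side elements, and the (at most one) central element, and to analyze each group using the position of the terminal of $e$ in the trees $T_a$ and $T_b$. First I would observe that, by definition, an $a$-side element of $C(e)$ comes from a directed edge $uv$ of $T_a\setminus T_b$ with $u\ne a$ and with $e$ the farthest leaf of $T_a$ descended from $v$; since $e$ descends from $v$, the vertex $v$, and hence the edge $uv$, lies on $\pi(a,e)$. Write $\pi(a,e)=(a=w_0,w_1,\dots,w_k=t(a,e))$. Then the $a$-side elements of $C(e)$ correspond to a subset $S_a$ of the edges $w_{i-1}w_i$ of $\pi(a,e)$, symmetrically the $b$-side elements correspond to a subset $S_b$ of the edges of $\pi(b,e)$, a central-triangle element for $e$ comes from an edge of $\pi(a,e)$ lying in $T_a\cap T_b$, and a central-trapezoid element is unique to $e$. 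So the statement reduces to showing that $S_a$ is a contiguous sub-path of $\pi(a,e)$ running toward $t(a,e)$, that $S_b$ is a contiguous sub-path of $\pi(b,e)$, that there is exactly one central element, and that the intervals of these elements abut in the stated order along $ab$.

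The technical heart is two monotonicity observations. (1) The indices $i$ for which $e$ is the farthest leaf of $T_a$ descended from $w_i$ form a suffix $\{i_0,\dots,k\}$: as $i$ decreases the subtree rooted at $w_i$ only gains leaves, so once it acquires a leaf farther from $a$ than $e$ it always has one, and for $i=k$ the only descended leaf is (the augmented leaf representing) $e$. (2) The edges of $\pi(a,e)$ lying in $T_b$ form a terminal sub-path: by the funnel property of the shortest-path trees rooted at the two endpoints of $ab$ (standard; cf.\ the discussion of separators and funnels), $\pi(a,e)$ and $\pi(b,e)$ coincide from the apex $z$ of the funnel between $ab$ and $e$ all the way to $e$, so $\pi(a,e)\cap T_b$ is exactly the set of edges of that common tail, and it is empty precisely when that funnel is open, i.e.\ when $e$ has an interior point visible from $ab$. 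Intersecting ``$e$ is the farthest descended leaf of $w_i$'' with ``$w_{i-1}w_i\notin T_b$'' and ``$w_{i-1}\ne a$'' then exhibits $S_a$ as a contiguous sub-path of $\pi(a,e)$ directed toward the leaf $t(a,e)$, and the mirror argument handles $S_b$. I would also record that if $S_a$ (or $S_b$) is non-empty then $e$ is the farthest leaf descended from the first vertex $z'$ of the common tail, so that a central-triangle element, when present, is associated with $e$.

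For the central element I would argue by the open/closed dichotomy. In the closed case the first edge $zz'$ of the common tail is a central-triangle element: its vertex $z$ is the funnel apex and is visible from $ab$, while $z'$ is not, because every $ab$-to-$e$ shortest path passes through $z$, so no point of $ab$ can see $z'$; hence no later common-tail edge qualifies, and by the previous remark this element is associated with $e$; moreover there is no central-trapezoid element, since no point of $ab$ sees a point of $e$. In the open case the central-trapezoid element for $e$ exists by the definition of ``open'' and is the only central-type element, since a central-triangle element for $e$ would force an edge of $\pi(a,e)$ into $T_b$. So there is always exactly one central element, sitting between $S_a$ and $S_b$: the $a$-side path climbs $\pi(a,e)$ to $z$, the central edge is $z$'s step into the common tail, and the $b$-side path descends $\pi(b,e)$ from $z$ back to $b$. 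Finally, the interval on $ab$ of the element for a tree edge $w_{i-1}w_i$ of $\pi(a,e)$ is bounded by the positions where $x,w_{i-1},w_i$ become collinear (segment $xw_i$ about to be blocked by the reflex vertex $w_{i-1}$) and where $x,w_i,w_{i+1}$ become collinear (the turn at $w_i$ straightens), so consecutive elements of $S_a$ share an endpoint and are internally disjoint, the last one abuts the central interval (whose endpoints are the analogous collinearity condition at $z$, or perpendicularity to $e$ in the trapezoid case), and symmetrically on the $b$-side. Hence, walking from $a$ toward $b$, we traverse $S_a$ toward the leaf, then the single central element, then $S_b$ toward the root $b$ (the leaf/root asymmetry is only because along $\pi(b,e)$ the order induced by $ab$ runs from the common tail back toward $b$); the intervals are internally disjoint, appear in this order, and their union is a single interval.

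I expect the central-element analysis to be the main obstacle: making the open/closed dichotomy precise, proving that the central-triangle element is exactly the first common-tail edge and is associated with $e$, verifying that the funnel apex is visible from $ab$, and correctly gluing in the degenerate augmented $0$-length edges (when $t(a,e)$ is a vertex of $e$, or a reflex vertex reaches $e$ perpendicularly) all need care. The monotonicity arguments of the second step and the collinearity description of the intervals should then be routine.
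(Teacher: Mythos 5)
Your proposal is, in substance, the paper's own argument: the paper proves the lemma via Observation~\ref{obs:consecutive-coarse-cover} (consecutive tree edges yield intervals sharing the endpoint $x_a(v)$, which is exactly your collinearity description) together with Lemma~\ref{lem:coarse-cover-path}, whose key step is the same monotonicity you state as (1) -- if $e$ is the farthest leaf descended from $v$ then, by $d(v,e')=d(v,w)+d(w,e')$ for $w$ further along $\pi(v,e)$, it is the farthest leaf descended from every such $w$ -- and whose second ingredient is the monotonicity of visibility along the path, which you state as (2). Where you diverge is in how (2) and the central element are justified. The paper works purely at the tree level: visibility from $ab$ is equivalent to $p_a(v)\ne p_b(v)$ and is closed under taking ancestors in $T_a$, so the non-visible vertices of $\pi(a,e)$ form a suffix; the central element is then read off from whether the last visible vertex is a leaf (central trapezoid) or not (the next edge lies in $T_a\cap T_b$ and is a central triangle), with no open/closed hourglass analysis needed. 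Your route through the funnel dichotomy is workable but two of your supporting claims are stated too strongly. First, the identity ``$\pi(a,e)\cap T_b$ equals the common tail of $\pi(a,e)$ and $\pi(b,e)$'' is not just ``the funnel property'': an edge $uv$ of $\pi(a,e)$ with $p_b(v)=p_a(v)$ is not obviously on $\pi(b,e)$; the clean statement (and all you need) is the ancestor-closedness of visibility, which gives the suffix property directly. Second, ``in the closed case no point of $ab$ sees a point of $e$'' is false as stated: when the two walls merge (e.g.\ both terminate at a common endpoint of $e$ tucked in a corner), points of $e$, including interior points, can still be visible from $ab$; what rules out a central trapezoid for $e$ is that no shortest path from $ab$ arrives perpendicularly to $e$ (the trapezoid's interval $[x_a(t(a,e)),x_b(t(b,e))]$ is empty or degenerate), which is how the definition, and the paper's leaf/non-leaf case split, handles it. These are exactly the spots you flag as needing care, so I would not call this a fatal gap, but you should replace the visibility-of-$e$ argument by the perpendicular-arrival criterion and derive the suffix property from the parent characterization rather than from the asserted equality with the common tail; with those repairs your proof coincides with the paper's.
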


\changed{We next construct a single tree $T$ whose edges correspond to coarse cover elements of $ab$.
Then we incrementally construct the \reviewerchange{farthest-edge} Voronoi diagram on $ab$ by adding coarse cover elements in a depth first search (DFS) order of $T$.
}

\subparagraph*{Constructing 
tree $T$.}
Starting with $T_a$,
attach an edge for each central trapezoid element to the associated leaf vertex of $T_a$; add the path of $b$-side triangle elements for each polygon edge $e$ after the central triangle or trapezoid for $e$; and contract original edges of $T_a$ that are not associated with coarse cover elements.
See Figure~\ref{fig:example_2_e4}(right).
We give more detail of these steps in 
Appendix~\ref{appendix:one-edge-Vor}.
The resulting tree $T$ can be constructed in linear time and its edges are in one-to-one correspondence with the coarse cover elements. 

\begin{obs}
\label{obs:consecutive-coarse-cover-T}
If $uv$ and $vw$ are edges of $T$, then the
corresponding coarse cover intervals
$I_1$ and $I_2$
appear in that order along $ab$ and intersect in a single point. 
\end{obs}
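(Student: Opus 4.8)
The plan is to trace through the construction of $T$ step by step and check that each operation preserves the two claimed properties: that the coarse cover intervals of consecutive tree edges appear in the stated order along $ab$, and that they meet in exactly one point. The key structural input is Lemma~\ref{lem:consecutive-coarse-cover}, which already tells us that, \emph{for a fixed edge $e$}, the coarse cover elements $C(e)$ correspond to a directed-to-leaf path in $T_a$, then a central triangle or trapezoid for $e$, then a directed-to-root path in $T_b$, with intervals appearing in order along $ab$, internally disjoint, and with union an interval. So the entire claim is really about consecutive edges of $T$ that belong to \emph{different} edges $e$, or to the seam where an $a$-side path meets a central element meets a $b$-side path. First I would set up the invariant precisely: orient $ab$ from $a$ to $b$, and recall that the coarse cover of $ab$ constructed from $T_a$ and $T_b$ has the global property (needed for the algorithm anyway) that scanning the tree edges of $T_a$ in the natural left-to-right planar order sweeps the associated intervals left-to-right across $ab$, with consecutive intervals sharing an endpoint; symmetrically for $T_b$, and the central trapezoids for visible edges $e$ interleave monotonically as well.

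Next I would handle the three construction steps in turn. (i) \emph{Attaching a central trapezoid edge to the leaf of $T_a$ corresponding to $e$:} the leaf of $T_a$ for $e$ is the end of the $a$-side path of $C(e)$, whose last interval's right endpoint is, by the perpendicularity characterization of the trapezoid interval $I_{e}$ (the points on $ab$ whose shortest path to $e$ arrives perpendicularly), exactly the left endpoint of $I_e$ — so the new edge and the last $a$-side edge have intervals meeting in a single point, in the correct order. (ii) \emph{Appending the $b$-side path for $e$ after the central element for $e$:} this is precisely the ordering asserted inside Lemma~\ref{lem:consecutive-coarse-cover} for $C(e)$, so nothing new is needed; consecutive $b$-side edges of $C(e)$ also meet in single points by that lemma. (iii) \emph{Contracting original $T_a$ edges that carry no coarse cover element:} contraction can only make two coarse-cover-bearing edges of $T$ adjacent that were previously separated by a chain of contracted (hence empty-interval) edges; because the left-to-right sweep property of $T_a$ is over \emph{all} its edges and the contracted ones contribute degenerate/absent intervals, the surviving neighbours still have intervals that are consecutive in the sweep and therefore share exactly one endpoint. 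Finally I would note that the only remaining type of adjacency in $T$ is between the last edge of one $C(e)$ (a $b$-side edge, or the central element if the $b$-side path is empty) and the first edge of $C(e')$ for the \emph{next} edge $e'$ in the farthest-edge cyclic order around $\partial P$; here I would invoke the Ordering Property together with Condition (3) of Definition~\ref{defn:coarse-cover} (coverage) to argue the union of all coarse cover intervals is $ab$ itself with no gaps, forcing the right endpoint of the last interval of $C(e)$ to coincide with the left endpoint of the first interval of $C(e')$, and nothing more.

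The step I expect to be the main obstacle is (iii), the contraction argument, together with making the global left-to-right sweep property of $T_a$ (and $T_b$) fully rigorous: one has to be careful that contracting a path of empty-interval edges never "jumps over" a nondegenerate interval, i.e.\ that the planar order of $T_a$'s edges really does correspond monotonically to the order of interval left endpoints on $ab$, including the $0$-length augmenting edges and the leaves for central trapezoids. I would prove this sweep property as a small separate lemma (or cite it from the coarse cover construction in \cite{lubiw2021visibility, lubiw_et_al:LIPIcs.ESA.2021.65}), phrased as: the map sending a tree edge $uv$ of $T_a$ with $u$ visible from $ab$ to the left endpoint of its interval $I$ is monotone along any root-to-leaf traversal consistent with the planar embedding, and consecutive such intervals abut. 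Once that lemma is in hand, all four adjacency cases above collapse to one-line verifications, and the "single point" conclusion follows from internal disjointness (Lemma~\ref{lem:consecutive-coarse-cover}) plus abutting endpoints.
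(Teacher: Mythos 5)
Your overall skeleton (case analysis over the construction of $T$, with Lemma~\ref{lem:consecutive-coarse-cover} handling the seam a-side $\to$ central $\to$ b-side within one $C(e)$, and an abutment property of consecutive $T_a$-edges handling the rest) matches what the paper relies on, since the paper treats this as an observation following from Observation~\ref{obs:consecutive-coarse-cover} (intervals of parent--child edges of $T_a$ are $[x_a(u),x_a(v)]$ and $[x_a(v),x_a(w)]$, so they abut at $x_a(v)$, and likewise for an edge into a leaf followed by the central trapezoid), Lemma~\ref{lem:coarse-cover-path}, and the three construction steps. However, the auxiliary statement you lean on is wrong as formulated: there is no ``global left-to-right planar sweep of $T_a$ with consecutive intervals sharing an endpoint.'' Two sibling edges $vw_1$, $vw_2$ of $T_a$ have intervals $[x_a(v),x_a(w_1)]$ and $[x_a(v),x_a(w_2)]$, which share their \emph{left} endpoint and overlap in a whole subinterval; so intervals that are consecutive in DFS/planar order can overlap in far more than a point. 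The only true (and only needed) statement is the per-path, parent--child one, i.e.\ exactly Observation~\ref{obs:consecutive-coarse-cover}. This matters precisely at the step you flag as the crux, (iii): your contraction argument invokes the false global sweep to claim that two associated edges separated by a chain of contracted edges still abut. The correct resolution is different and simpler: by Lemma~\ref{lem:coarse-cover-path}, once a path of $T_a$ passes its last visible vertex (i.e.\ past the central triangle edge), no descendant edge carries a coarse cover element, so a contracted edge $uv$ with $u$ not visible has no associated edges below it at all; the only other contracted edges are incident to the root $a$, which have no associated edge above them. Hence contraction never creates a new parent--child pair of associated edges, and there is nothing to check in case (iii).

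Your final case --- an adjacency in $T$ between the last element of $C(e)$ and the first element of $C(e')$ for the next farthest edge $e'$ --- does not occur: the b-side path of $C(e)$ is attached \emph{below} the central element and terminates at a leaf of $T$, so nothing of $C(e')$ ever hangs off it. The genuine cross-$e$ adjacencies are consecutive a-side edges of $T_a$ whose associated polygon edges differ (the associated edge changes at branch vertices), and these are already covered by the interval formulas, with no appeal to which polygon edge is attached. Moreover, the argument you propose for the spurious case is unsound: the fact that the union of all coarse cover intervals is $ab$ does not force two particular intervals to meet in a single point, because coarse cover intervals overlap (as the sibling example shows). So the proposal needs repair at exactly the two places where it departs from the parent--child abutment fact: replace the global sweep invariant by Observation~\ref{obs:consecutive-coarse-cover}, handle contraction via the no-associated-descendants structure of Lemma~\ref{lem:coarse-cover-path}, and drop the cross-$C(e)$ case.
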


\begin{figure}
    \centering
    \begin{subfigure}[t]{0.43\textwidth}
        \centering
        \includegraphics[width=\textwidth]{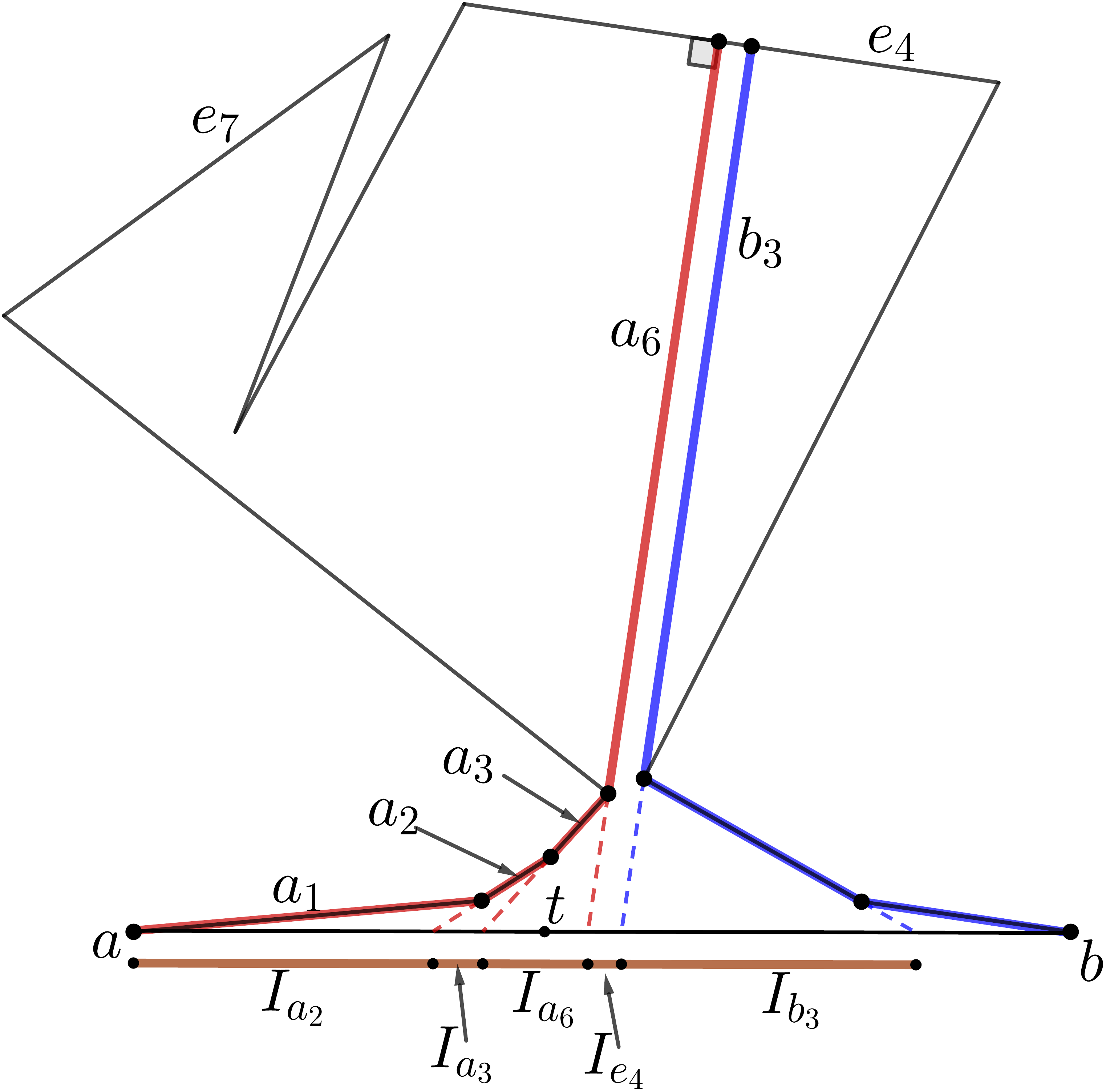}
    \end{subfigure}%
    \begin{subfigure}[t]{0.43\textwidth}
        \centering
        \includegraphics[width=\textwidth]{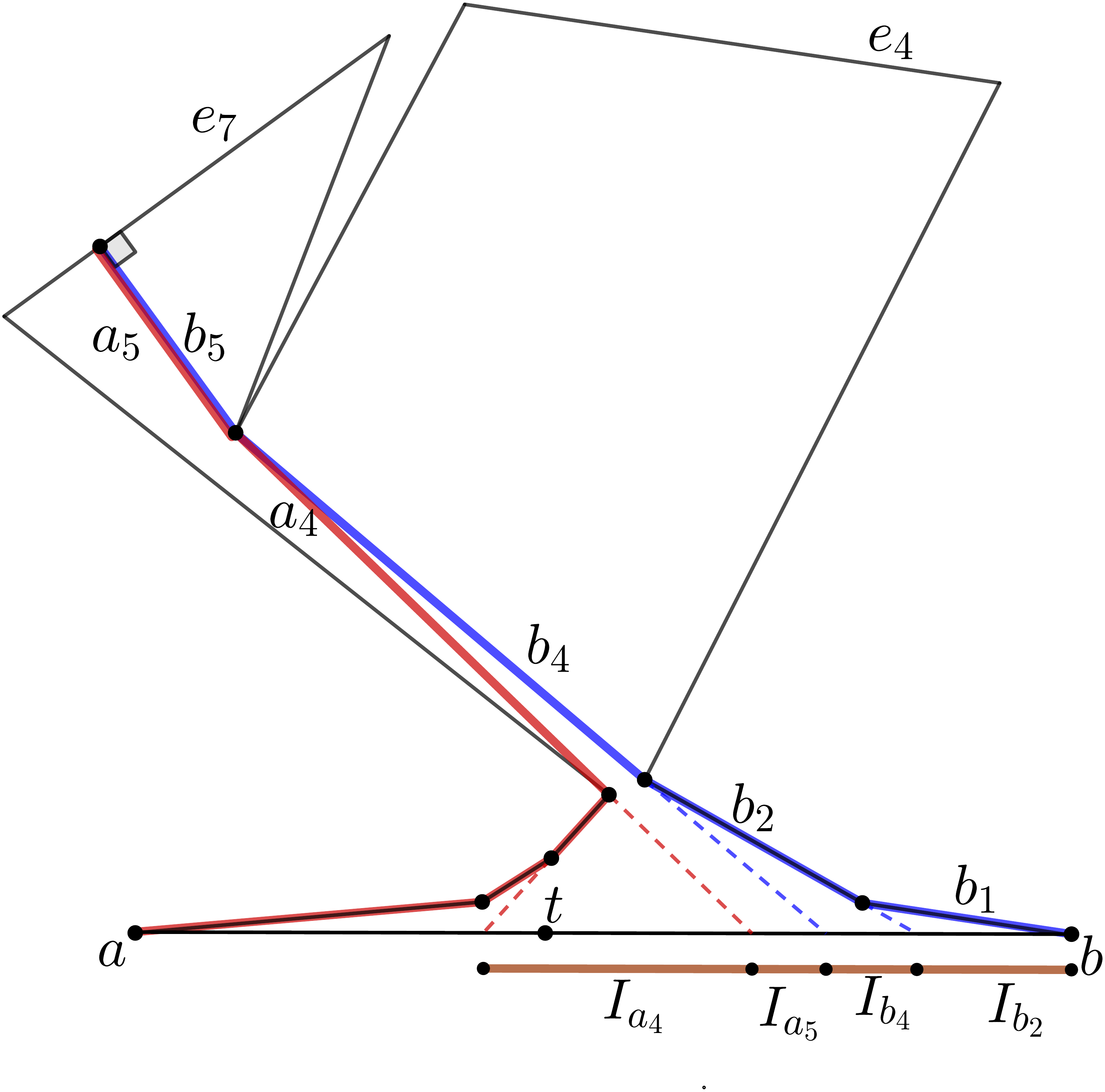}
    \end{subfigure}
    \begin{subfigure}[t]{0.13\textwidth}
        \centering
        \includegraphics[width=\textwidth]{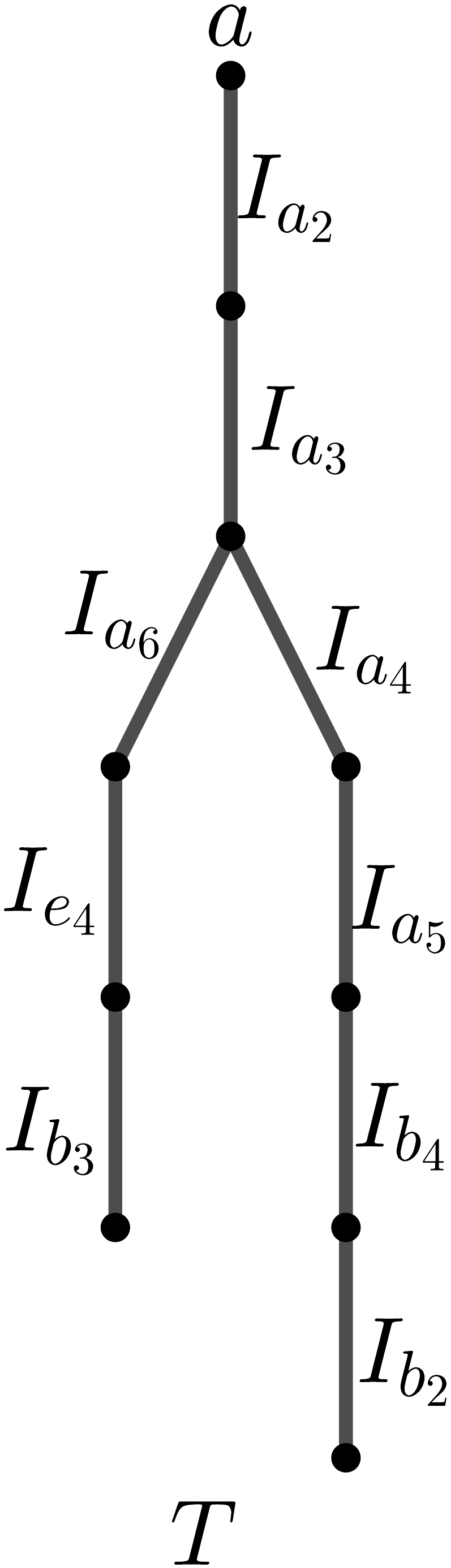}
    \end{subfigure}
    \caption{Coarse cover elements corresponding to some (not all) edges of $T_a$ (red) and $T_b$ (blue):  (left) coarse cover elements for $e_4$; 
(middle) coarse cover elements for $e_7$;
(right) the corresponding part of tree $T$.
\reviewerchange{When $I_{a_6}$ is handled by {\tt Insert} it wins the comparison with $I_{a_4}$ so it replaces $I_{a_4}$ up to the cross-over point $t$, and the algorithm discards the rest of $I_{a_6}$, together with $I_{e_4}$, and $I_{b_3}$.
}
}
\label{fig:example_2_e4}
\end{figure}

\subparagraph*{DFS Algorithm for the Voronoi Diagram.}
We add the coarse cover elements following a DFS of $T$ with children of a node in clockwise order. 
We maintain a list $M$ of interior disjoint subintervals of $ab$ whose union  is an interval starting at $a$. Each subinterval in $M$ records the coarse cover element it came from. 
Define $f_M$ to be the distance function determined by the intervals of $M$.
\changed{Initially, $M$ is the single point $a$, and $f_M$ is $- \infty$.}
At the end $M$ will be the upper envelope of the coarse cover functions (though this property is not guaranteed throughout). To handle edge $uv$ of $T$ with associated coarse cover element $(I,f,e)$, we compare $f$ to $f_M$ beginning at the left endpoint of $I$. We maintain a pointer $p_u$ that gives an interval of $M$ containing this endpoint. 
The recursive routine 
{\tt Insert}$(u,p_u)$ inserts into $M$ the portions of coarse cover elements that are associated with $u$'s subtree and that define the upper envelope.  
At the top level, we call {\tt Insert}$(a,p_a)$, where $p_a$ points to $a$.

\begin{algorithm}[H]
{\tt Insert}$(u, p_u)$ \ \ \ \# $u$ is a node of $T$ and $p_u$ is a pointer to an interval of $M$\\
\ \ \ \ {\bf for} each child $v$ of $u$ in clockwise order {\bf do}\\
\ \ \ \ \ \ $(I,f,e) :=$ the coarse cover element associated with the edge $uv$ of $T$\\
\ \ \ \ \ \ $l :=$ left endpoint of $I$; \ $r :=$ right endpoint of $I$\\
\ \ \ \ \ \ {\bf Invariant:} $p_u$ points to an interval of $M$ that contains $l$\\
\ \ \ \ \ \ {\bf if} $f(l^+) > f_M(l^+)$ where $l^+$ is just to the right of $l$  {\bf then}\\
\ \ \ \ \ \ \ \ replace intervals of $M$ starting at $p_u$ with a subinterval of $I$ ending at\\
\ \ \ \ \ \ \ \ \ \ the ``cross-over'' point $t < r$ where $f_M$ starts to dominate $f$, or at $r$\\
\ \ \ \ \ \ \ \  {\bf if} $f$ dominates until $r$ and $v$ is not a leaf of $T$ {\bf then}\\
\ \ \ \ \ \ \ \ \ \ \ {\bf call} {\tt Insert}$(v,p_v)$, where $p_v$ is a pointer to interval $I$ in $M$
\end{algorithm}

\subparagraph*{Runtime:}
Each edge of ${T}$ is handled once, and causes at most one new interval to be inserted into $M$, so the total number of endpoints inserted into $M$ is $O(n)$.
We can access $f_M(l^+)$ in constant time using the pointer $p_u$. 
Then 
the endpoints of intervals of $M$ that we traverse as we do the insertion vanish from $M$. 
Thus the runtime is $O(n)$.

\subparagraph*{Correctness:}
\changed{The following lemma implies that the final 
$M$ is the upper envelope of the coarse cover functions.}
\begin{lem}
\label{lem:DFS-correct}
The algorithm only discards pieces of coarse cover elements that do not form part of the final upper envelope.
\end{lem}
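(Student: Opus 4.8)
The plan is to argue by induction on the order in which edges of $T$ are processed by the DFS, showing that whenever {\tt Insert} discards a piece of a coarse cover element, that piece is pointwise dominated on its interval by some other coarse cover function, hence cannot appear on the final upper envelope. There are two places where pieces get discarded: (i) when $f(l^+) \le f_M(l^+)$, the whole element $(I,f,e)$ is skipped and we do not recurse into $v$; and (ii) when $f$ wins only up to a cross-over point $t < r$, the sub-interval $[t,r]$ of $I$ is discarded along with (because we do not recurse into $v$) the entire subtree of $v$. I would treat these uniformly using the geodesic convexity of the radius function (Lemma~\ref{lem:geodesically-convex}) together with the structural fact, from Lemma~\ref{cor:consecutive-coarse-cover} and Observation~\ref{obs:consecutive-coarse-cover-T}, that the intervals of a fixed edge's coarse cover elements are consecutive along $ab$ and that consecutive edges of $T$ give consecutive, single-point-overlapping intervals.

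The key step is a \emph{crossing lemma}: if two coarse cover functions $f$ (on interval $I$) and $g$ (on interval $J$) are both convex (by Definition~\ref{defn:coarse-cover}(2) each is a Euclidean distance to a point plus a constant, or to a line, so convex), and if $g(x_0) \ge f(x_0)$ at the left end $x_0$ of $I\cap J$, then either $g \ge f$ on all of $I \cap J$, or they cross exactly once and $g < f$ to the right of the cross-over. This is just convexity of $f-g$ — wait, $f-g$ need not be convex; instead I use that each is a norm-like convex function and invoke the standard fact that two convex functions on an interval cross at most twice, combined with the stronger property established for geodesic distance functions along a chord (two such functions, each the distance to a fixed site, cross at most once on the sub-chord where both are defined — this is exactly the property that makes the chord-oracle upper-envelope computable and is implicit in the cited chord oracle of~\cite{pollack_sharir,lubiw2021visibility}). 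Granting single crossings, the induction runs cleanly: $f_M$ is at each moment a lower bound on the true upper envelope restricted to the interval $[a, r']$ processed so far, and is equal to it on the part already ``finalized''; when we skip $(I,f,e)$ in case (i), $f(l^+)\le f_M(l^+)$ forces, by the single-crossing property against \emph{each} function currently realizing $f_M$ near $l$, that $f$ stays below $f_M$ throughout $I$, so $f$ is dominated; and in case (ii), on $[t,r]$ the function $f$ is below the $f_M$-function that overtook it, so that tail is dominated. For the discarded subtree of $v$: by Lemma~\ref{cor:consecutive-coarse-cover} and Observation~\ref{obs:consecutive-coarse-cover-T}, every interval in $v$'s subtree lies to the right of $l$ and, more importantly, the element associated with $uv$ already "shields" them in the sense that $f$ is an upper bound for the contributions of that subtree on the relevant overlap — this is where I need the ordering/monotonicity built into $T$'s construction: descendants correspond to edges farther along $T_a$ toward a leaf (or back up $T_b$), whose distance functions are dominated by $f$ wherever their intervals meet $I$.

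\textbf{Main obstacle.} The crux — and the step I expect to be delicate — is justifying that discarding the \emph{entire subtree} of $v$ (not just the tail of $I$) loses nothing. Skipping the tail $[t,r]$ of $I$ is clearly safe, but the subtree rooted at $v$ may have intervals extending beyond $r$, and a priori one of those could re-emerge on the upper envelope even though its ``parent'' interval $I$ was beaten at $t$. Ruling this out requires the structural monotonicity from Lemma~\ref{cor:consecutive-coarse-cover}: along a fixed root-to-leaf direction in $T$, the associated distance functions are nested in a way that a later (deeper) function, wherever it overlaps an earlier one, is no larger — intuitively because deeper edges of $T_a$ lie ``behind'' $f$'s site as seen from $ab$, so the function only grows as we go deeper only on new territory, never reclaiming lost ground. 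I would isolate this as a separate lemma about $T$ (perhaps folding it into the statement of Lemma~\ref{cor:consecutive-coarse-cover} or proving it in Appendix~\ref{appendix:one-edge-Vor}) and then the correctness proof becomes a short induction invoking it plus the single-crossing property. A secondary, more routine, obstacle is handling the ``boundary'' bookkeeping: the invariant that $p_u$ always points to an interval of $M$ containing $l$, which must be re-established for each child $v$ after $M$ is modified by processing the previous child; this follows from Observation~\ref{obs:consecutive-coarse-cover-T} since consecutive children's intervals share only the single point $r=l'$, but it needs to be stated as part of the invariant maintained by {\tt Insert}.
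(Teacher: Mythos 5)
There is a genuine gap, and it sits exactly at the point you flag as the ``main obstacle'': justifying that the \emph{entire subtree} of $v$ may be discarded. The paper's proof does not resolve this with a shielding/monotonicity lemma about deeper functions, and in fact your proposed lemma is both insufficient as stated and doubtful if strengthened. As stated (``a deeper function, wherever it overlaps an earlier one, is no larger''), it is nearly vacuous: by Observation~\ref{obs:consecutive-coarse-cover-T} the intervals of the subtree elements lie beyond $r$ and meet $I$ only in a point, so the real question is what happens to the right of $r$, where $f$ is not defined as a coarse cover element. If you instead strengthen it to $d(x,e_i)\ge d(x,e_j)$ for all $x$ to the right of $r$ (where $e_j$ is an edge associated with the subtree), that no longer follows from the farthest-leaf property, since for such $x$ the shortest path to $e_j$ need not pass through $v$. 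The paper's argument instead brings in a third edge: it identifies the polygon edge $e_k$ whose element realizes $f_M(t^+)$, proves the ordering claim $k<j$ for every edge $e_j$ of the discarded subtree (ancestors of $u$ cannot realize $f_M(t^+)$ because their intervals lie left of $l$, and everything else processed so far is counterclockwise of the subtree by the clockwise-DFS order), and then propagates the pointwise inequality $d(t^+,e_k)>d(t^+,e_j)$ to every $x$ right of $t^+$ via Lemma~\ref{lem:new_order}. Since $d(x,e_k)$ is a lower bound on the radius function, no element for $e_j$ (nor the tail of $I$) can appear on the final envelope.

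Your substitute for this machinery --- an undirected ``single-crossing'' property --- is also not enough on its own. Two coarse cover functions of the form $d_2(\cdot,v)+\kappa$ can cross twice on a line, and the property that saves the day is not generic convexity but precisely Lemma~\ref{lem:new_order}/Corollary~\ref{cor:ordering-property-paths}, whose conclusion is \emph{directional}: once one geodesic distance dominates, it keeps dominating only as you move away from the two edges in the correct boundary order. Whether ``the function that overtook $f$ stays above to the right'' therefore depends on where the overtaking edge $e_k$ sits on $\partial P$ relative to $e_i$ and the subtree's edges --- which is exactly the $k<j$ claim your proposal never establishes (and which needs both the DFS processing order and the interval positions from Observation~\ref{obs:consecutive-coarse-cover-T}). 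Your local observations (removed pieces of $M$ are dominated because $f\ge f_M$ on $[l,t]$; on $I$ the $a$-side farthest-leaf property gives $d(x,e_i)\ge d(x,e_j)$; the pointer invariant for $p_u$) are correct and match the paper, but without the identification of $e_k$, the ordering argument, and the rightward propagation via Lemma~\ref{lem:new_order}, the crucial step of the proof is missing rather than merely ``delicate.''
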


\begin{proof}
We examine the behaviour of the algorithm for edge $uv$ of $T$ with associated coarse cover element $(I,f,e)$, where $I=[l,r]$.   
We insert the subinterval $[l,t]$ into $M$ (or no subinterval). 
Because  $f(x) \ge f_M(x)$ for $x \in [l,t]$, any subintervals of $M$ that are removed due to the insertion do not determine the upper envelope, so their removal is correct. 

If we insert all of interval $I$ into $M$ and recursively call Insert$(v,p_v)$, then this is correct by induction.  
So suppose we insert a proper subinterval of $I$ or none of $I$.  
We must prove that no later part of $I$, and no element of the coarse cover associated with edges of the subtree rooted at $v$ determines the upper envelope. 
Let $t^+$ be a point just to the right of $t$ (or just to the right of $l$ if we insert no part of $I$).  Then $f_M(t^+) > f(t^+)$.
Number the polygon edges $e_1, e_2, \ldots, e_m$ clockwise from $a$ to $b$.  
Suppose that $e = e_i$, so $f(x) = d(x,e_i)$ for $x \in I$, in particular, $f(t^+) = d(t^+,e_i)$.
Suppose that 
$f_M(t^+) = d(t^+,e_k)$. 
Then $d(t^+,e_k) > d(t^+, e_i)$.  

Now consider the edges of $T$ descended from $v$ plus the edge $uv$.  Consider the corresponding coarse cover elements, $C_v$, and let $e_j$ be 
any polygon edge associated with any element in $C_v$. 
Note that the intervals on $ab$ associated with coarse cover elements of $C_v$ lie to the right of $r$, except for $I$ associated with $uv$.
We will prove that for any point $x \in ab$ to the right of $t^+$, $d(x,e_k) > d(x,e_j)$, which implies that none of the coarse cover elements in $C_v$ determines the upper envelope, nor does any part of $I$ to the right of $t$. Thus the algorithm is correct to discard them.

\changed{We first prove the result for $x = t^+$.}
If $uv$ corresponds to a central triangle/trapezoid for $e_i$
or a $b$-side triangle, then $T$ has a single path descending from $v$, all of whose edges are associated with $e_i$, i.e., $j=i$.  Otherwise, 
\changed{by the definition of
an $a$-side coarse cover element, $e_i$ corresponds to the farthest leaf of $T_a$ descended from $v$, which implies that  
}
$d(x,e_i) \ge d(x,e_j)$ for all $x \in I$, and in particular for $x=t^+$. 
Thus, in either case we have 
$d(t^+,e_k) > d(t^+, e_i) \ge d(t^+,e_j)$.

We next claim that $k < j$.  
The current $f_M$ values arise from tree edges already processed. 
These consist of: (1) edges on the path from $a$ to $u$; and (2)  edges of $T$ counterclockwise from this path.  
Edges on the path from $a$ to $u$ have coarse cover intervals on $ab$ to the left of $l$, by Observation~\ref{obs:consecutive-coarse-cover-T}.
Thus type (1) edges do not determine $f_M(t^+)$. 
By the depth-first-search order, type (2) edges have coarse cover elements corresponding to polygon edges counterclockwise from $e_j$.  Thus $k<j$.

To complete the proof of Lemma~\ref{lem:DFS-correct},
consider any point $x \in ab$ to the right of $t^+$. The clockwise ordering around the polygon boundary is $x,t^+, e_k, e_j$, so by
Lemma~\ref{lem:new_order}
and the fact that  $d(t^+,e_k) > d(t^+,e_j)$, we get $d(x,e_k) >  d(x,e_j)$, as required.  
\end{proof}

\section{Phase II: Finding the Geodesic Edge Center}
\label{section:final_edge_center}
\label{section:PhaseII}
\label{section:Phase-II}

The first step of Phase II is 
to construct a 
\defn{coarse cover} (Definition~\ref{defn:coarse-cover}) 
of the polygon
in linear time.
As shown in Figure~\ref{fig:coarse-cover} the
\reviewerchange{\defn{funnel} $Y(e)$ that consists}
of shortest paths between a chain on $\partial P$ with farthest edge $e$ and $e$ itself can be partitioned into its shortest path map.  If the result includes trapezoids, we partition each one into two triangles\footnote{Thus our triangles are not necessarily ``apexed'' triangles as in~\cite{linear_time_geodesic}.}. 
Each triangle is bounded by two polygon chords and a segment of a polygon edge, and the distance to $e$ has the form required by Definition~\ref{defn:coarse-cover}
(see 
the 
Appendix~\ref{appendix:coarse-cover} 
for details).
We seek the point inside $P$ that minimizes the upper envelope of the functions of the coarse cover. 
\changed{Note that Phase I can detect if the edge center lies \emph{on} $\partial P$ so we may assume that the center is interior to $P$.}

\begin{figure}
\centering
\includegraphics[trim={0 0cm 0 0}, width=.5\textwidth]{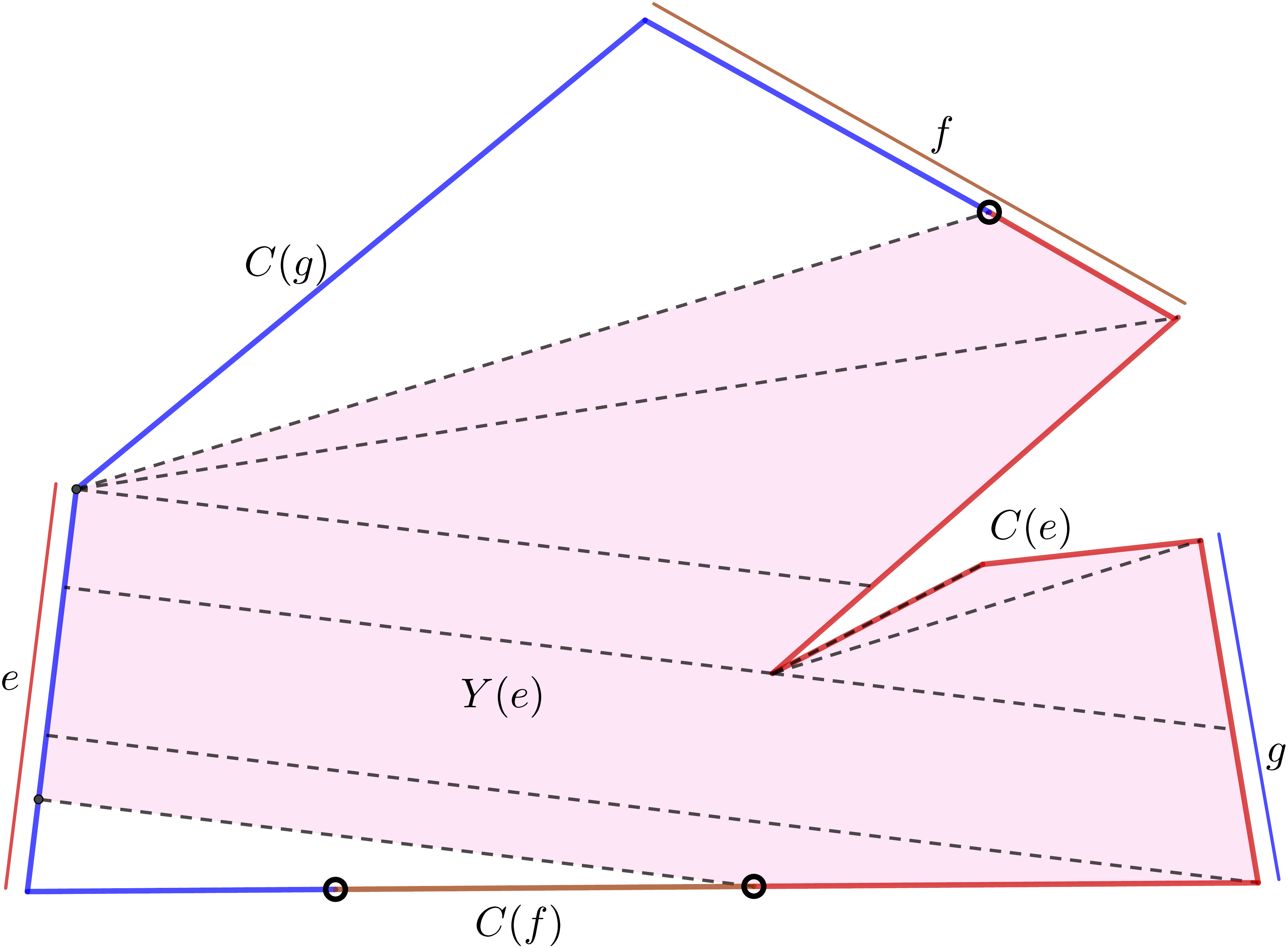}
\caption{
\reviewerchange{The farthest-edge Voronoi diagram restricted to the polygon boundary consists of chains $C(e),C(f),C(g)$ 
farthest from edges $e,f,g$, respectively.
To construct the coarse cover,
the funnel 
$Y(e)$ (shaded) is partitioned (by dashed segments) into a shortest path map from $e$ to $C(e)$.}
}
\label{fig:voronoi_split}
\label{fig:coarse-cover}
\label{fig:flower}
\end{figure}

Our final divide-and-conquer algorithm follows the vertex center algorithm of Ahn et al.~\cite{linear_time_geodesic}, generalized to farthest edges, and repairing flaws in their approach.
At each step of the algorithm we have a subpolygon $Q$ whose interior contains the center together with the coarse cover elements  
\newchanged{needed to compute the edge center}
and we shrink the subpolygon and eliminate a constant fraction of the 
coarse cover.
Each recursive step takes time linear 
in the size of the subproblem (the size of $Q$ plus the size of its coarse cover).
The subpolygons we work with \reviewerchange{are
\defn{$3$-anchor hulls}} defined as follows 
\reviewerchange{(see Figure~\ref{fig:shatter}).}
An \defn{anchor} is a point inside $P$, or a subchain of $\partial P$.  
A \defn{$3$-anchor hull} is the geodesic convex hull of at most three anchors.
These are weakly simple in general, but 
we only recurse on \defn{simple} $3$-anchor hulls.

The algorithm has two stages.
In Stage 1
no triangle of the coarse cover contains $Q$ (this is true initially when $Q=P$), so every triangle
has a chord crossing $Q$ and we
use $\epsilon$-net techniques on the set of such chords to reduce to a smaller cell $Q'$ that is crossed by a fraction of the chords, and hence by a %
fraction of the coarse cover triangles.
Once $Q$ is contained in a triangle of the coarse cover
we show (see
Lemma~\ref{lem:floating_cells_constant_size} in Appendix~\ref{appendix:Q-bounds})
that 
\reviewerchange{the size of $Q$, denoted $|Q|$, is at most 6.}
In fact, we will exit Stage 1 as soon as 
\reviewerchange{$|Q| \le 6$.}
It is then easy to reduce $Q$ to a triangle. 
After that, we switch to Stage 2, where the convexity of $Q$ allows us to use 
a Megiddo-style prune-and-search technique
\newchanged{(as Ahn et al.~do)}
to recursively reduce the size of the subproblem. 
Stage 2 is deferred to
Appendix~\ref{section:constantQ}.

\subsection{Stage 1: Algorithm for Large $Q$}
\label{section:largeQ}

Consider a subproblem corresponding to a 
\newchanged{simple $3$-anchor hull}
$Q$ with $|Q|>6$. %
We give an algorithm that either finds the edge center or reduces to a subproblem with $|Q| \le 6$, which is handled by Stage 2. 
In Stage 1, 
 no triangle of the coarse cover
contains $Q$ 
(this is proved in 
Lemma~\ref{lem:floating_cells_constant_size} in Appendix~\ref{appendix:Q-bounds}), 
so each one has a chord crossing $Q$---we denote this set of 
chords by ${\cal K}(Q)$.

To apply $\epsilon$-net techniques
we define a
\newchanged{\defn{$3$-anchor range space} as follows.
The ground set is 
a set $\cal K$ of chords of $P$,
and for each $3$-anchor hull $H$ of $P$ 
there is a range ${\cal K}(H)$ 
consisting of all chords of 
$\cal K$ that \defn{cross} $H$.
Here a chord \defn{crosses} a set  if both %
open half-polygons 
\reviewerchange{of the chord} contain points of the set.
}

The algorithm finds a constant size $\epsilon$-net of the \newchanged{$3$-anchor range space} on ${\cal K}(Q)$, 
which is 
a 
set $N \subseteq {\cal K}(Q)$ such that any \newchanged{$3$-anchor hull} 
not intersected by a chord of $N$ is intersected by only a constant fraction of the chords of ${\cal K}(Q)$---this is the important property that allows us to discard a fraction of the chords.
The set of chords $N$ forms an arrangement that partitions $Q$ into cells.
We use the chord oracle to determine which cell contains the center. We then 
add geodesic paths to 
subdivide this cell into 
a constant number of
$3$-anchor hulls
and use a \defn{geodesic oracle}
(see 
Appendix~\ref{appendix:geodesic-oracle})
to 
\newchanged{find which $3$-anchor hull contains the center, and to shrink it to a simple $3$-anchor hull 
$Q'$.} The algorithm recurses on $Q'$, whose coarse cover is a fraction of the size.

More details of the algorithm can be found in
Appendix~\ref{appendix:Algorithm-Stage1}.  
For now, we expand on the aspects of the algorithm that differ from the approach of Ahn et al.~\cite{linear_time_geodesic}.
Instead of 
$3$-anchor hulls,
their algorithm works with \emph{$4$-cells}, 
\newchanged{formed by taking the intersection of at most four half-polygons, where a half-polygon is the part of $P$ to one side of a chord}.
The number (three versus four) is not significant, but we 
\newchanged{bound our regions by geodesics instead of chords} 
in order 
to obtain the following two results. 

\subparagraph{1.} The
\newchanged{$3$-anchor}
range space has finite VC-dimension.  This implies that constant-sized $\epsilon$-nets exist. Furthermore, there is a ``subspace oracle'' 
that allows us to find an $\epsilon$-net $N$ in deterministic linear time~\cite[Chapter 47, Theorem 47.4.3]{toth2017handbook}. For 
further 
background see
Appendix~\ref{appendix:epsilon-net-overview}.

Ahn et al.~claim that their range space 
\newchanged{(of chords crossing $4$-cells)}
has finite VC-dimension but their proof is flawed.
Our proof shows that their range space does in fact have finite VC-dimension.
\newchanged{They do not mention subspace oracles, without which their algorithm runs in expected linear time rather than deterministic linear time as claimed.}
We expand on these aspects in Section~\ref{section:epsilon-net-results} below.

\subparagraph{2.} A cell of the arrangement of $N$ can be partitioned into 
constantly many
$3$-anchor hulls.

The method used by Ahn et al.~to 
subdivide a cell of $N$ into $4$-cells by adding a constant number of chords is
incomplete, 
\reviewerchange{see Figure~\ref{fig:decomposition-1}.}
We see how to repair their partition step
but we find 
$3$-anchor hulls
more natural.

\begin{figure}[ht!]
\centering
\includegraphics[width=.15\textwidth]{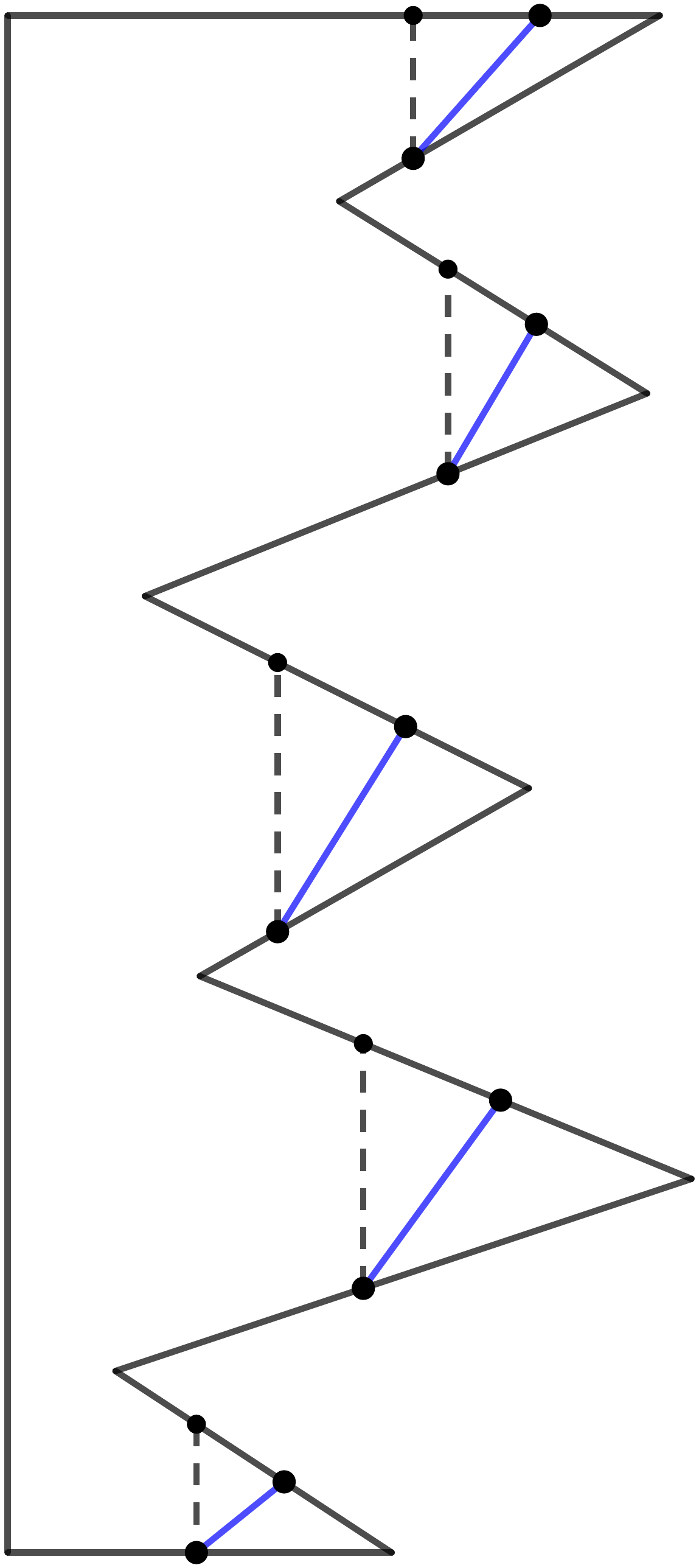}
\caption{
\reviewerchange{Ahn et al.~\cite{linear_time_geodesic} subdivide a cell of $N$ by adding 
vertical chords (dashed) at endpoints and intersection points of chords of $N$ (blue), which leaves a $5$-cell in this example.} 
}
\label{fig:decomposition-1}
\end{figure}

\subsection{$\epsilon$-Net Results for Stage 1}
\label{section:epsilon-net-results}

In this section we expand on the $\epsilon$-net results that are needed for Stage 1 of the algorithm as described above.  We also give details of the flaws in the approach of Ahn et al.~\cite{linear_time_geodesic}.
For an overview of $\epsilon$-nets as used for geometric divide-and-conquer, see
Appendix~\ref{appendix:epsilon-net-overview}.
To show that $\epsilon$-nets of
\reviewerchange{constant} size %
exist we need the following result.

\begin{lem}
\label{lem:constant_shattering_dimension}
The \newchanged{$3$-anchor}
range space has 
VC-dimension less than 259.
\end{lem}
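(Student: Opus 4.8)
The plan is to bound the VC-dimension of the $3$-anchor range space by exhibiting, for a hypothetical shattered set of chords, enough structural constraints to force a contradiction once the set is large. Recall the setup: the ground set is a family of chords of $P$, and each range is ${\cal K}(H)$, the set of chords crossing a fixed $3$-anchor hull $H$ (a chord \emph{crosses} $H$ if both open half-polygons of the chord contain points of $H$). So to say a set $S$ of chords is shattered means: for every subset $S' \subseteq S$ there is a $3$-anchor hull $H$ with ${\cal K}(H) \cap S = S'$, i.e.\ $H$ is crossed by exactly the chords in $S'$ and avoids being crossed by the chords in $S \setminus S'$.

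First I would reduce ``$H$ is not crossed by chord $k$'' to a cleaner geometric statement. A chord $k$ partitions $P$ into two open half-polygons; $H$ fails to be crossed by $k$ iff $H$ lies (weakly) entirely in the closure of one side, equivalently $H$ avoids the interior of one of the two half-polygons. Since $H$ is a geodesically convex set (a $3$-anchor hull), and a chord is a straight segment, I want to argue that the ``side constraints'' imposed on $H$ behave like halfplane constraints in a bounded-complexity way. The key leverage is that $H$ is the geodesic convex hull of at most three anchors, where each anchor is a point or a subchain of $\partial P$; so $H$ is determined by a bounded amount of data. I would parametrize the possible hulls $H$ by their anchors and show that the condition ``$k$ crosses $H$'' is, as a predicate on this bounded-dimensional parameter space, a Boolean combination of a bounded number of semialgebraic (in fact polynomial-sign) conditions. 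Then a standard argument (the ``linearization'' / bounded-complexity predicate bound, e.g.\ via the Milnor–Thom / Warren bound on sign patterns, or the primal-shatter-function bound) gives a finite VC-dimension depending only on the number of parameters and the degree/number of predicates. Getting the explicit constant below $259$ means carefully counting: at most three anchors, each contributing a bounded number of real coordinates (a point is two coordinates; a subchain is specified by its two endpoints along $\partial P$), so $O(1)$ parameters, and for each of the $|S|$ chords a bounded number of sign conditions determining the crossing predicate; the shattering of $d$ chords then requires $2^d$ distinct sign patterns, which is impossible once $d$ exceeds the Warren-type bound, and pushing the arithmetic through yields the stated numeric ceiling.

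Concretely, the steps in order: (1) fix a shattered set $S$ of chords of size $d$ and, for each $S' \subseteq S$, a witnessing simple $3$-anchor hull $H_{S'}$; (2) encode each $H_{S'}$ by the coordinates of its at most three anchors, giving a point in $\mathbb{R}^{O(1)}$; (3) show the predicate ``chord $k$ crosses the $3$-anchor hull with given anchor-parameters'' is expressible by a bounded-size Boolean formula over polynomial inequalities in those parameters (this is where geodesic convexity of the hull and straightness of the chord are used: a chord crosses $H$ iff some anchor lies strictly on each side, or iff the hull's boundary crosses $k$, which one reduces to finitely many ``which side of the line through $k$'' tests on the anchor data and on a bounded set of ``extreme'' hull vertices); (4) apply the sign-pattern / VC bound for bounded-complexity predicates to conclude $2^d \le$ (explicit function of the parameter count and predicate complexity), and solve for the largest $d$ admitting shattering, obtaining $d < 259$.

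The main obstacle I expect is Step (3): certifying that ``$k$ crosses $H$'' really is a bounded-complexity predicate in the anchor parameters. The subtlety is that a $3$-anchor hull's boundary consists of geodesic paths whose combinatorial structure (which reflex vertices of $P$ they bend around) can vary as the anchors move, so the predicate is only \emph{piecewise} polynomial; one must argue that the number of combinatorial pieces is irrelevant for VC-dimension — what matters is that within the closure of each piece the predicate is bounded-degree polynomial, and that crossing vs.\ non-crossing is captured by finitely many such tests regardless of the piece. I would handle this by reformulating the crossing condition purely in terms of the anchors and the chord, bypassing the hull's interior structure: a chord $k$ crosses the geodesic hull of anchors $A_1,A_2,A_3$ iff $k$ crosses the hull of some two of the anchors, or separates one anchor from the geodesic path between the other two — and each such elementary test is a bounded-complexity condition. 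This makes the predicate complexity genuinely constant and lets the counting go through to the explicit bound.
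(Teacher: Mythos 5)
Your proposal takes a genuinely different route from the paper (a semialgebraic parametrization of the hulls plus a Warren/Milnor--Thom sign-pattern bound, rather than a direct bound on the primal shatter function), but it has a real gap at exactly the step you flag as the main obstacle, and the gap is not repaired by your proposed fix. The crossing predicate is \emph{not} a bounded-complexity polynomial predicate in the anchor parameters, independently of $n$. Even after your reduction (which is essentially correct: since each closed half-polygon of a chord is geodesically convex, $k$ crosses the hull iff each open half-polygon of $k$ contains a point of some anchor), the elementary test ``anchor point $p$ lies in a given open half-polygon of $k$'' is not a sign test against the line through $k$: the half-polygon is a component of $P \setminus k$ whose boundary chain can wind back across that line, so membership depends on the geometry of $P$ and in the worst case needs $\Theta(n)$ polynomial conditions to express in Cartesian coordinates (similarly for the tests involving geodesic paths between anchors). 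Your claim that ``the number of combinatorial pieces is irrelevant for VC-dimension'' is exactly where this breaks: in the sign-pattern / bounded-complexity-predicate framework the bound depends on the number of polynomial conditions per predicate, so with $\Theta(n)$ conditions you get a VC-dimension bound of order $\log n$, not an absolute constant, and in particular no route to the specific number $259$. Relatedly, ``pushing the arithmetic through'' is never done; $259$ is not a number your counting would produce.

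The paper's proof supplies precisely the discretization your approach is missing, and does so combinatorially rather than algebraically. It bounds the shattering dimension by $6$: for any $m$ chords $\cal K$, it shows (Lemma~\ref{lem:anchor-ranges}) that every range ${\cal K}(Q)$ equals ${\cal K}(\psi(Q))$ for an ``expanded $3$-anchor hull'' whose anchors are vertices, edges, or faces of the arrangement $A({\cal K})$, or boundary chains with endpoints among the chord endpoints --- the key point being that an expanded anchor crosses no chord of $\cal K$, so lying in a closed half-polygon is preserved when anchors are replaced by their expansions (Claim~\ref{claim:same-crossing-chords}), again using geodesic convexity of half-polygons. Since there are only $O(m^2)$ possible expanded anchors, there are $O(m^6)$ ranges, i.e., shattering dimension $6$, and then Har-Peled's bound $12d\ln(6d)$ with $d=6$ gives the constant below $259$. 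If you want to salvage your route, you would need to replace the continuous anchor parametrization by this kind of arrangement-relative combinatorial data; at that point you have essentially reconstructed the paper's argument, and the algebraic machinery adds nothing.
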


Our proof of Lemma~\ref{lem:constant_shattering_dimension} works equally well for
\newchanged{$4$-anchor hulls}---the
bound becomes $372$. 
\newchanged{A $4$-cell is a special case of a $4$-anchor hull %
so}
our proof implies finite VC-dimension ($\le 372$) for the $4$-cell range space, which repairs the claim by Ahn et al.\footnote{In response to our enquiries, Eunjin Oh independently suggested a similar remedy.}.
We explain the flaw in 
their proof.
Let us refer to the set of chords intersecting a $4$-cell as a ``$4$-cell range''. 
Ahn et al.~prove that the $1$-cell range space has 
VC-dimension at most 65,535. 
They note that a $4$-cell is the intersection of four $1$-cells, and then claim in their Lemma 9.1 that
this implies finite VC-dimension for the $4$-cell range space.  As justification, they refer to 
Proposition 10.3.3 of Matousek's text~\cite{matousek2002}, which 
states 
that the VC-dimension is bounded for any family whose sets can be defined by a formula of Boolean connectives (union, intersection, set difference).
However, Matousek's proposition cannot be applied in this situation because, although a $4$-cell is the intersection of four $1$-cells,  
it is not true that 
a $4$-cell \emph{range} is the intersection of four $1$-cell \emph{ranges}.
In particular, a chord can intersect two $1$-cells, but not intersect the intersection of the two $1$-cells.  For example, a line of slope $-1$ can intersect the $+x$ half-plane and the $+y$ half-plane without intersecting the $+x,+y$ quadrant.

\begin{proof}[Proof of Lemma~\ref{lem:constant_shattering_dimension}]
We will prove that the shattering dimension is 6 and then apply the result that a range 
space with shattering dimension $d$ has VC-dimension bounded by $12d \ln{(6d)}$ (Lemma 5.14 from Har-Peled~\cite{har2011geometric}). 
For $d=6$, this is \reviewerchange{less than $259$}.

We must show that for
a set $\cal K$ of chords with 
$| {\cal K}| = m$, the number of distinct ranges is $O(m^6)$.
We prove that 
the range space for ${\cal K}$ is the same 
if we replace 
\newchanged{$3$-anchor hulls}
by   ``expanded $3$-anchor hulls'' that are defined in terms of $\cal K$, more precisely, 
in terms of the arrangement $A({\cal K})$
of the chords ${\cal K}$ plus the edges of $P$.
\newchanged{Define an \defn{expanded anchor} to be an internal face, edge, or vertex of $A({\cal K)}$, or a 
polygon chain
with endpoints in 
$V({\cal K})$, the set of endpoints of chords $\cal K$.
An \defn{expanded $3$-anchor hull} is the geodesic convex hull of at most three 
expanded anchors.}

\begin{figure}
  \centering
  \includegraphics[width=0.7\textwidth]{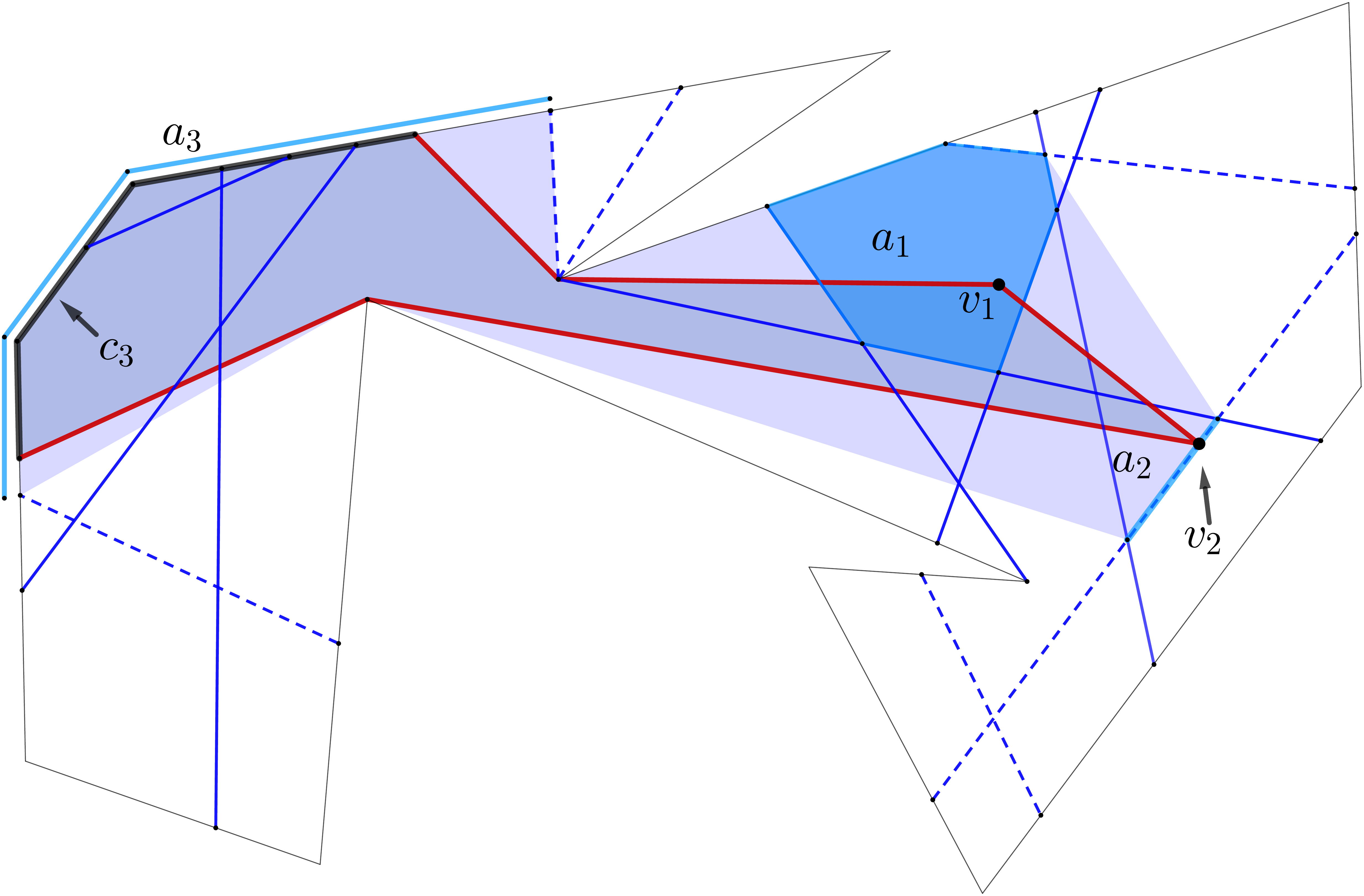}
  \caption{
  A simple 3-anchor hull $Q$ \reviewerchange{(outlined in red)} with anchors $v_1, v_2$ and the polygon chain $c_3$ \reviewerchange{(thick black)}. 
Solid blue chords cross $Q$, while dashed blue chords do not.  The expanded 3-anchor hull $\psi(Q)$ (lightly shaded)
has expanded anchors: $a_1$, the face in the chord arrangement containing $v_1$;
  $a_2$, the edge with $v_2$ in its interior; and
  $a_3$, the polygon chain extending $c_3$ to chord endpoints.
  The same  
  chords cross $Q$ and $\psi(Q)$.
}
\label{fig:shatter}
\end{figure}

\begin{lem}
\label{lem:anchor-ranges}
The set of ranges ${\cal R} = \{ {\cal K}(Q) \mid Q \text{ is 
a $3$-anchor hull}\,\}$ is the same as 
the set of ranges
$\overline{\cal R} = \{ {\cal K}({Q}) \mid {Q} \text{ is 
an expanded $3$-anchor hull}\,\}$.
\end{lem}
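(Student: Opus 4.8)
\textbf{Proof proposal for Lemma~\ref{lem:anchor-ranges}.}

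The plan is to prove the two set-containments $\overline{\cal R} \subseteq {\cal R}$ and ${\cal R} \subseteq \overline{\cal R}$ separately. The first containment is essentially immediate: an expanded anchor is a special case of an anchor (an internal face, edge, or vertex of $A({\cal K})$ is a point or — degenerately — a subchain of $\partial P$ only when it lies on the boundary, and a polygon chain with endpoints in $V({\cal K})$ is literally a subchain of $\partial P$), so every expanded $3$-anchor hull is a $3$-anchor hull, and its range already appears in ${\cal R}$. The work is in the reverse direction: given an arbitrary simple $3$-anchor hull $Q$ with anchors $a_1, a_2, a_3$, I must produce an expanded $3$-anchor hull $\psi(Q)$ with ${\cal K}(\psi(Q)) = {\cal K}(Q)$.

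The construction of $\psi(Q)$ proceeds anchor-by-anchor. First I would handle point anchors: if $a_i$ is a point inside $P$, replace it by the (relatively open) face, edge, or vertex of $A({\cal K})$ that contains $a_i$ — call this the expanded anchor $\widehat{a_i}$. If $a_i$ is a subchain of $\partial P$, extend it in each direction along $\partial P$ to the nearest endpoint of a chord in ${\cal K}$ (an element of $V({\cal K})$), obtaining the polygon chain $\widehat{a_i}$; if there is no chord endpoint on one side before wrapping around, one can take the maximal such chain, and I will argue this degenerate case does not arise for the chords actually crossing $Q$ (or handle it by noting $V({\cal K})$ is nonempty whenever $\cal K$ is). Then set $\psi(Q)$ to be the geodesic convex hull of $\widehat{a_1}, \widehat{a_2}, \widehat{a_3}$. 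By construction each $\widehat{a_i}$ is an expanded anchor, so $\psi(Q)$ is an expanded $3$-anchor hull, and since $a_i \subseteq \widehat{a_i}$ we have $Q \subseteq \psi(Q)$, hence every chord crossing $Q$ crosses $\psi(Q)$: ${\cal K}(Q) \subseteq {\cal K}(\psi(Q))$.

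The heart of the proof, and the step I expect to be the main obstacle, is the reverse inclusion ${\cal K}(\psi(Q)) \subseteq {\cal K}(Q)$ — that is, no chord $K \in {\cal K}$ that misses $Q$ can cross $\psi(Q)$. Recall $K$ misses $Q$ means one open half-polygon of $K$ is disjoint from $Q$; I must show the same open half-polygon is disjoint from $\psi(Q)$. The key observation is that $K$, being a chord in ${\cal K}$, is part of the arrangement $A({\cal K})$, so each expanded anchor $\widehat{a_i}$ lies entirely in the closed half-polygon of $K$ containing $a_i$: a face/edge/vertex of $A({\cal K})$ cannot be split by $K$, and a chord endpoint of $V({\cal K})$ lying on $\partial P$ is on the same side of $K$ (or on $K$) as the portion of $\partial P$ it bounds, so the extended polygon chain stays in one closed half-polygon. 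Now I use geodesic convexity of half-polygons: since a chord splits $P$ into two geodesically convex pieces (the geodesic between two points on the same side of $K$ stays on that side — this is a standard fact about shortest paths in simple polygons), the geodesic convex hull $\psi(Q)$ of three sets each contained in the closed half-polygon not meeting $Q$'s excluded side... wait — here is the subtlety: the three $\widehat{a_i}$ might lie on different sides of $K$ if the original $a_i$ did, but then $K$ already crossed $Q$. So if $K$ misses $Q$, all of $a_1,a_2,a_3$ lie in one closed half-polygon $h$ of $K$; I then argue each $\widehat{a_i}$ also lies in $h$ (using that $a_i\subseteq\widehat{a_i}$ and $\widehat{a_i}$ is unsplit by $K$, so it cannot poke into the open opposite half-polygon), and finally geodesic convexity of $h$ gives $\psi(Q)\subseteq h$, so $K$ does not cross $\psi(Q)$. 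The delicate points to nail down carefully are: (i) the precise sense in which an expanded anchor is ``unsplit'' by a chord of $\cal K$ (faces and edges of $A({\cal K})$ are relatively open and $K$ is a union of edges of the arrangement, so indeed no chord of $\cal K$ passes through the interior of such a cell); (ii) the boundary-chain case, where I must ensure extending to $V({\cal K})$ does not cross $K$ — which holds because the extension endpoints are themselves on $\partial P$ and on the boundary of $h$ or inside $h$; and (iii) confirming that half-polygons are geodesically convex, which I would cite as a standard property of shortest paths in simple polygons (it underlies the whole chord-oracle machinery already invoked in the paper).
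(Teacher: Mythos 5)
Your hard direction (${\cal R} \subseteq \overline{\cal R}$) is essentially the paper's own argument: you build $\psi(Q)$ by taking the arrangement cell containing each point anchor and extending each chain anchor to the nearest points of $V({\cal K})$, get ${\cal K}(Q)\subseteq{\cal K}(\psi(Q))$ from $Q\subseteq\psi(Q)$, and for the converse argue that a chord $K$ missing $Q$ has all anchors in one closed half-polygon $H$, that the expanded anchors cannot be split by $K$ (cells of $A({\cal K})$ are unsplit; the chain extension stops at chord endpoints, in particular at the endpoints of $K$), and that geodesic convexity of $H$ then traps $\psi(Q)$ in $H$. That is exactly the paper's Claim on crossing preservation, at a comparable level of detail.

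The genuine gap is in the direction you dismiss as ``essentially immediate.'' Your justification is that every expanded anchor is a special case of an anchor, so every expanded $3$-anchor hull is a $3$-anchor hull. This is false: an expanded anchor may be an internal \emph{edge} or a two-dimensional internal \emph{face} of $A({\cal K})$, which is neither a point of $P$ nor a subchain of $\partial P$, and the geodesic convex hull of such a face is in general not the hull of at most three points or boundary chains. (Only internal vertices of $A({\cal K})$ and polygon chains are anchors in the original sense.) So $\overline{\cal R} \subseteq {\cal R}$ does not follow by inspection; you need to exhibit, for each expanded $3$-anchor hull $Q$, a genuine $3$-anchor hull with the \emph{same} crossing set. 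The paper does this by shrinking: replace each vertex/edge/face anchor of $Q$ by a point in its relative interior (chain anchors unchanged), and let $\gamma(Q)$ be the $3$-anchor hull of these. Because the smallest arrangement cell containing that interior point is the original expanded anchor, $\psi(\gamma(Q)) = Q$, and your own crossing-preservation argument then gives ${\cal K}(Q) = {\cal K}(\gamma(Q))$. So the missing step is repairable with machinery you already set up, but as written the containment $\overline{\cal R}\subseteq{\cal R}$ rests on a false claim.
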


\begin{proof}
To prove ${\cal R} \subseteq \overline{\cal R}$, consider a 
\newchanged{$3$-anchor hull $Q$.
Replace any point anchor $p$}
by the 
smallest (by containment) internal vertex, edge, or face of $A({\cal K)}$ that contains $p$.
See Figure~\ref{fig:shatter}.
Replace any 
polygon chain anchor $C$  
by the smallest chain of 
$\partial P$ 
containing $C$ and with endpoints in $V({\cal K})$.
Let 
$\psi(Q)$ be the geodesic convex hull of these expanded anchors.  
Then $\psi(Q)$ is an expanded $3$-anchor hull that contains $Q$, and it is straight-forward to prove that 
${\cal K}(Q) = {\cal K}(\psi(Q))$
(see 
Claim~\ref{claim:same-crossing-chords} in Appendix~\ref{appendix:epsilon-nets}).

For the other direction, let $Q$ be an expanded $3$-anchor hull.
Replace an expanded anchor that is a face, edge, or vertex of $A({\cal K})$ by a point anchor in the interior of that face, edge, or vertex. An expanded anchor that is a polygon chain remains unchanged.
Let $\gamma(Q)$ be the
geodesic convex hull of the resulting anchors.
Observe that
\newchanged{$\gamma(Q)$ is a $3$-anchor hull and}
$\psi(\gamma(Q)) = Q$.
\reviewerchange{As above, this implies that}
${\cal K}(Q) = {\cal K}(\gamma(Q))$.
\end{proof}

\changed{
To complete the proof of Lemma~\ref{lem:constant_shattering_dimension} we claim that the number of expanded $3$-anchor hulls of ${\cal K}$ is $O(m^6)$.
An expanded anchor may be an
internal vertex, edge, 
\reviewerchange{or}
face
of $A(\cal K)$,  of which there are $O(m^2)$ possibilities.
Otherwise, an expanded anchor is 
a chain of $\partial P$ between vertices of $V({\cal K})$, 
also with
$O(m^2)$ possibilities.
Thus the number of expanded $3$-anchor hulls 
is $O((m^2)^3) = O(m^6)$.
}
\end{proof}

\paragraph*{Subspace Oracle}

\reviewerchange{To prove that the $3$-anchor range space has a subspace oracle, we present a deterministic algorithm that, given a subset ${\cal K}' \subseteq {\cal K}$ with $|{\cal K}'| = m$, computes the set
of ranges ${\cal R} = \{ {\cal K}'(Q) \mid Q \text{ is 
a $3$-anchor hull}\}$
in time $O(m^7)$.
The idea is 
to use Lemma~\ref{lem:anchor-ranges} and
to construct $A({\cal K}')$ minus the edges of $P$, 
and find, for each 
chord $K \in {\cal K}'$, which of the $O(m^2)$ expanded anchors in $A({\cal K}')$ \reviewerchange{intersects} each side of $K$, and then, for each of the $O(m^6)$ expanded $3$-anchor hulls, eliminate the chords that have all three expanded anchors to one side, leaving the chords that cross the hull. For further details, see Appendix~\ref{appendix:epsilon-nets}.
}

\bibliography{RS_refs}

\newpage

\appendix

\section{Extra Material for Section~\ref{section:prelims}, Preliminaries}

\subsection{\reviewerchange{Details on General Position Assumptions}}
\label{appendix:general-position}

We restate Assumptions~\ref{assumptions} in order to refer to the parts individually.

\begin{assumption}\label{assumption:no_three_points_collinear}
No three vertices of 
$P$ are collinear.
\end{assumption}

\begin{assumption}
\label{assumption:unique_farthest_neighbor}
After imposing the tie-breaking rule, no vertex is equidistant from two or more edges.
\end{assumption}

Figure~\ref{fig:twod_voronoi} shows the reason for Assumption~\ref{assumption:unique_farthest_neighbor}.

\begin{figure}[h!]
  \centering
  \includegraphics[width=.6\textwidth]{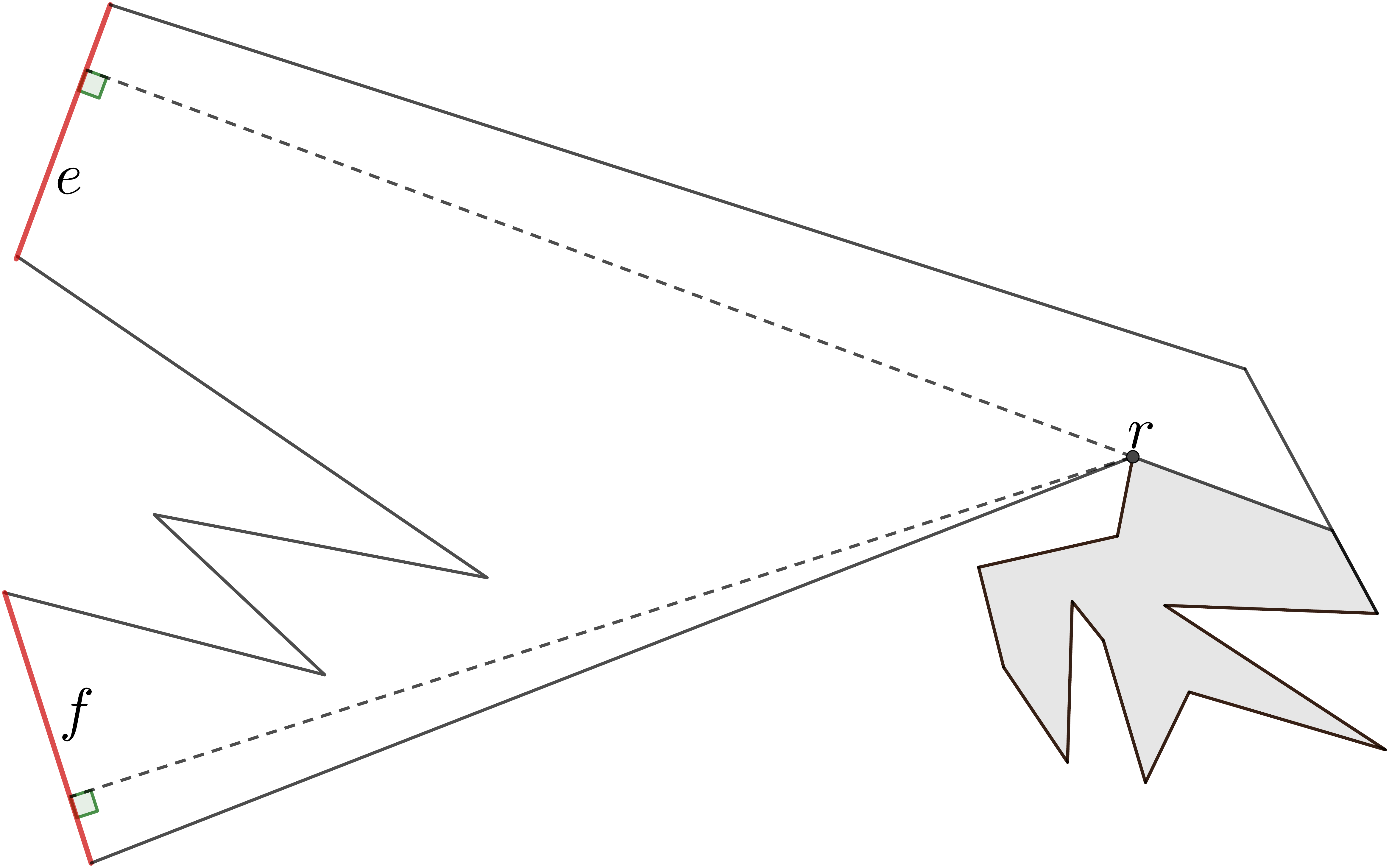}
  \caption{
  Vertex $r$ 
  \changed{is equidistant from edges $e$ and $f$ (colored red), and so is every point in the shaded region.
  Assumption~\ref{assumption:unique_farthest_neighbor} forbids this situation.
  }
  }
\label{fig:twod_voronoi}
\end{figure}

\begin{lem}
\label{lem:1D-Voronoi-edges}
Let $D$ be the  set of points in $P$ with more than one farthest edge (after imposing the tie-breaking rule).  Then $D$ does not contain a 2-dimensional ball, does not contain a vertex of $P$, and intersects $\partial P$ in isolated points. 
\end{lem}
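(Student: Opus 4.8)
The plan is to prove the three assertions of Lemma~\ref{lem:1D-Voronoi-edges} in turn, deriving each from the general position Assumptions~\ref{assumptions} (which we are allowed to assume give a consistent perturbation). The key observation throughout is that the geodesic distance function $d(\cdot,e)$ to a fixed edge $e$ is geodesically convex (this is essentially Lemma~\ref{lem:geodesically-convex} referenced in the excerpt), and more concretely that on any region where the shortest path structure (the last vertex before reaching $e$, and whether the path arrives perpendicularly or at an endpoint of $e$) is fixed, $d(x,e)$ is either $d_2(x,v)+\kappa$ for a fixed vertex $v$ and constant $\kappa$, or $d_2(x,\bar e)$ for the line $\bar e$ through $e$ --- exactly the two forms appearing in Definition~\ref{defn:coarse-cover}. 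So locally the Voronoi bisector between two candidate farthest edges is a piece of an algebraic curve of low degree (a line, a conic, or a branch of a hyperbola), and in particular is one-dimensional unless the two distance functions agree on an open set.

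First, to show $D$ contains no 2-dimensional ball: suppose some open ball $B$ lies in $D$. By Assumption~\ref{assumptions}(3) every point of $B$ has at most six farthest edges, so by a pigeonhole/Baire argument there is a pair of edges $e\neq f$ that are simultaneously farthest on a subset of $B$ with nonempty interior; shrinking, we get an open ball $B'$ on which $d(x,e)=d(x,f)=r(x)$. Refining $B'$ further so that the shortest-path combinatorial type to each of $e$ and $f$ is constant, $d(\cdot,e)$ and $d(\cdot,f)$ are each one of the two explicit forms above on $B'$, and their equality on an open set forces the ``anchor'' data to coincide: either the same vertex $v$ with the same $\kappa$ (contradicting that $e\neq f$), or, in the perpendicular-arrival case, the lines $\bar e$ and $\bar f$ to be equidistant on an open set, which (after matching up with the tie-breaking rule, for the case that $e,f$ share a reflex vertex) forces $\bar e = \bar f$ or a configuration ruled out by Assumption~\ref{assumptions}(1). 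This contradiction establishes the first claim, and in fact the same argument shows $D$ is contained in a finite union of algebraic arcs, hence is one-dimensional.

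Second, $D$ contains no vertex of $P$: this is immediate from Assumption~\ref{assumptions}(2), which says that after imposing the tie-breaking rule no vertex is equidistant from two or more edges --- i.e.\ every vertex has a unique farthest edge, hence lies outside $D$. Third, $D\cap\partial P$ is a set of isolated points: by the analysis above, along each polygon edge $ab$ the restriction $d(\cdot,e)|_{ab}$ is a piecewise-analytic function (piecewise of one of the two forms, with breakpoints where the combinatorial type changes), so for any two edges $e,f$ the equation $d(x,e)=d(x,f)$ either holds on a whole subinterval of $ab$ or only at finitely many points; the former is excluded because it would put a $1$-dimensional piece of $\partial P$ into $D$, and combining with Assumption~\ref{assumptions}(3) (at most two farthest edges per boundary point, so finitely many relevant pairs) and the no-vertex claim, $D\cap\partial P$ is finite, in particular isolated. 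I expect the main obstacle to be the first claim: one must carefully rule out the possibility of two distinct edges having \emph{identical} geodesic distance functions on an open set, which requires case analysis on how the two shortest paths arrive (endpoint vs.\ perpendicular interior point), invoking the tie-breaking rule precisely in the case of two edges sharing a reflex vertex, and using Assumption~\ref{assumptions}(1) to prevent coincidental collinearities that would make two distinct line-distances agree. The other two parts are short once this is in hand.
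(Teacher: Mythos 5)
Your overall route is essentially the paper's: reduce to the tie set of a fixed pair of edges $e\neq f$, use that locally both distance functions have the form $d_2(x,v)+\kappa$ or $d_2(x,\bar e)$, so tied points lie on a weighted point/point, point/line, or line/line bisector (a line, conic, or hyperbola branch, hence $1$-dimensional), and treat the degenerate shared-reflex-vertex case via the tie-breaking rule and the angle bisector; the paper phrases this pointwise via the first vertices $s_e,s_f$ of the two paths rather than via coincidence of functions on an open ball, but the content is the same. However, two steps need repair. First, in the case ``same vertex $v$ with the same $\kappa$'' the contradiction is not with $e\neq f$ (two distinct edges can both be at distance $\kappa$ from $v$); equality of the two functions gives $d(v,e)=d(v,f)$, which violates Assumption~\ref{assumption:unique_farthest_neighbor} \emph{unless} the two paths from $v$ coincide because $e$ and $f$ meet at a reflex vertex $u$ and $\pi(v,e)=\pi(v,f)=\pi(v,u)$ --- exactly the situation where the tie-breaking rule, not Assumption~\ref{assumption:unique_farthest_neighbor}, must be invoked. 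In that situation the rule also applies to every $x$ in your ball $B'$ (its paths to $e$ and $f$ are both $\pi(x,u)$), so after tie-breaking only points on the angle bisector of $u$ remain tied, which is $1$-dimensional; this is the paper's first case. As written, you attach the tie-breaking discussion to the perpendicular-arrival case, where the rule never applies (the paths end in the interiors of $e$ and $f$), so this subcase is left unhandled.

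Second, your argument that $D\cap\partial P$ consists of isolated points is circular where you exclude the possibility that $d(x,e)=d(x,f)$ holds on a whole subinterval of a polygon edge $ab$ ``because it would put a $1$-dimensional piece of $\partial P$ into $D$'': that is precisely the assertion being proved, and it does not follow from the no-$2$-ball claim (a $1$-dimensional tie set along $\partial P$ is consistent with $D$ containing no ball). What is needed is that the relevant bisector curve cannot contain a segment of $\partial P$: for the shared-reflex-vertex case this follows from Assumption~\ref{assumption:no_three_points_collinear} (the angle bisector through $u$ containing a polygon edge would give three collinear vertices), and for the weighted point/point and point/line bisectors it is a degeneracy excluded by the perturbation that realizes Assumptions~\ref{assumptions}, which is how the paper disposes of it. A minor further remark: your appeal to Assumption~\ref{assumption:Voronoi-vertices} in the pigeonhole step is unnecessary; finitely many edges together with closedness of the tie sets already gives a pair tied on a set with nonempty interior.
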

\begin{proof}
It suffices to show that the conditions hold for the set of points equidistant from two edges $e$ and $f$.
Let $p$ be a point with $d(p,e) = d(p,f)$. 
The paths $\pi(p,e)$ and $\pi(p,f)$ do not share a vertex other than the terminal point, by Assumption~\ref{assumption:unique_farthest_neighbor}---in particular, $p$ cannot be a vertex.  

If $\pi(p,e)$ and $\pi(p,f)$ share 
a terminal vertex $u$, then the Tie-Breaking Rule would apply unless $p$ is on the bisector of the angle at $u$, which is 1-dimensional and intersects $\partial P$ in a single point because no three vertices are collinear by Assumption~\ref{assumption:no_three_points_collinear}.

Otherwise, the paths $\pi(p,e)$ and $\pi(p,f)$ diverge at $p$. 
Let $s_e$ be the first vertex on the path $\pi(p,e)$---or let $s_e=e$ in case there are no vertices.  Define $s_f$ similarly.  Then $p$ must be on the weighted bisector between $s_e$ and $s_f$, which is 1-dimensional, and and intersects $\partial P$ in isolated points.
\end{proof}

\begin{assumption}
\label{assumption:Voronoi-vertices}
No point on the polygon boundary has more than two farthest edges. No point in the interior of the polygon has more than a constant number of farthest edges.
\end{assumption}

We note that our
assumptions
can be effected by perturbing vertices, since, in the $2n$-dimensional space of allowed vertex perturbations, the configurations we must avoid are lower-dimensional.

\subsection{Details on Properties of Farthest Edges}
\label{appendix:farthest-properties}

In this section we give some  basic properties of shortest paths from points on $\partial P$ to their farthest edges in a polygon, with a focus on when and how such paths cross---more formally, we examine the ordering of the points and their farthest edges around the polygon boundary.

We first prove Lemma~\ref{lem:new_order}.
\begin{proof}[Proof of Lemma~\ref{lem:new_order}]

\begin{figure}
\centering
\includegraphics[trim={0 0cm 0 0cm}, width=.6\textwidth]{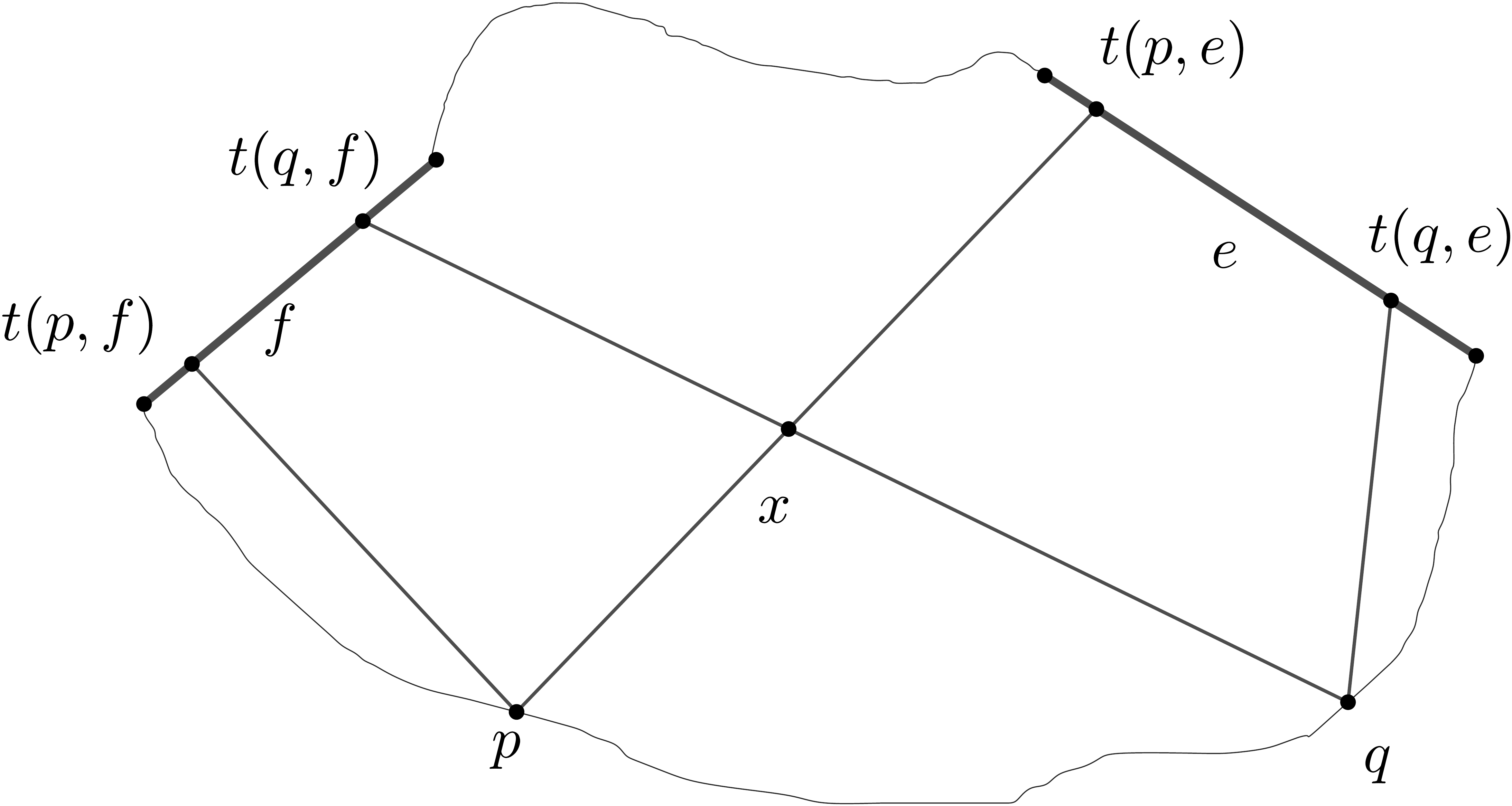}
\caption{Illustration for Lemma~\ref{lem:new_order}.
}
\label{fig:total_monotonicity_lemma}
\end{figure}
Suppose points
$p$, $q$ and edges $e$, $f$ occur in the order $p,q,e,f$ along the polygon boundary $\partial P$.
See Figure~\ref{fig:total_monotonicity_lemma}. 
We must prove that $d(p,e) + d(q,f) \geq d(p,f) + d(q,e)$.

Due to the ordering of $p,q,e,f$ on $\partial P$, the paths $\pi(p,e)$ and $\pi(q,f)$ must have a common point which we label $x$.   Then:
\begin{align*}
&d(p,e) + d(q,f) \\
  & = d(p,t(p,e)) + d(q,t(q,f)) \text{\quad [Distance to an edge is distance to the terminal]}\\
  & = d(p,x) + d(x,t(p,e)) + d(q,x) + d(x,t(q,f)) \\
  & = (d(p,x) + d(x,t(q,f))) + (d(q,x) + d(x,t(p,e))) \\
  & \geq d(p, t(q,f)) + d(q,t(p,e)) \text{\quad [Triangle Inequality]}\\
  & \geq d(p,t(p,f)) + d(q, t(q,e))  \text{\quad [Definition of a terminal]}\\
  & = d(p,f) + d(q,e) \text{\quad\quad\quad\quad\quad\ [Distance to an edge is distance to the terminal]}
\end{align*}
\end{proof}

We often use Lemma~\ref{lem:new_order} in the following form.

\begin{coro}
\label{cor:ordering-property-paths}
Under the same assumptions, if $d(p,f) > d(p,e)$, then $d(q,f) > d(q,e)$.
\end{coro}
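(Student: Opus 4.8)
The plan is to derive the corollary directly from Lemma~\ref{lem:new_order} by an elementary rearrangement; no new geometric input is needed. Since $p,q,e,f$ occur in that cyclic order on $\partial P$, Lemma~\ref{lem:new_order} applies and gives
\[
d(p,e) + d(q,f) \;\geq\; d(p,f) + d(q,e).
\]

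Moving the two ``$p$'' terms to one side and the two ``$q$'' terms to the other, this is equivalent to
\[
d(q,f) - d(q,e) \;\geq\; d(p,f) - d(p,e).
\]
Now assume $d(p,f) > d(p,e)$, so the right-hand side is strictly positive. Then the left-hand side is strictly positive as well, i.e.\ $d(q,f) > d(q,e)$, which is exactly the claim.

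There is essentially no obstacle here: the corollary is just the ``monotone'' restatement of the total-monotonicity-style inequality in Lemma~\ref{lem:new_order}, and the only thing to be careful about is that the hypothesis on the cyclic order $p,q,e,f$ is exactly what is required to invoke that lemma. (This is the form in which the lemma is used later, e.g.\ to push an established strict inequality $d(p,f)>d(p,e)$ from one point to a clockwise-later point, as in the correctness argument for the DFS algorithm of Section~\ref{sec:one-edge-Vor}.)
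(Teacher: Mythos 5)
Your proof is correct and matches the paper's (implicit) argument: the corollary is stated as an immediate rearrangement of Lemma~\ref{lem:new_order}, exactly as you derive it. Nothing further is needed.
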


We now return to farthest paths.
Let $p$ and $q$ be two  points on  the polygon boundary.
Let $F(p)$ be a farthest edge from  $p$ and let $F(q)$ be a farthest edge from $q$.
Note that these farthest edges need  not be unique since there are (isolated) points on the polygon boundary with two farthest edges.
Let $\pi_p = \pi(p,F(p))$ and let $\pi_q = \pi(q,F(q))$.
Let $t_p$ be the terminal point of the path $\pi_p$ and let $t_q$ be the terminal point of the path $\pi_q$.
If $F(p) \ne F(q)$, we 
 say that $\pi_p$ and $\pi_q$ \defn{cross} if the ordering around the polygon boundary (in either clockwise or counterclockwise order) is $p,q,F(p),F(q)$.  
Note that this allows the possibility that $t_p = t_q$ at a reflex vertex. 
If $F(p) = F(q)$, we say that $\pi_p$ and $\pi_q$ \defn{cross} if the ordering around the polygon boundary (in either clockwise or counterclockwise order) is $p,q,t_p,t_q$, with $t_p \ne t_q$.

\begin{lem}
\label{lem:ordering-properties}
With the above setup,
the paths $\pi_p$ and $\pi_q$ have the  following properties.

\leavevmode
\begin{enumerate}[label=\bf{(P\arabic*)}]

\setlength{\itemindent}{.15in}

\item \label{prop:path-to-same-edge}
If $F(p) = F(q)$, then 
$\pi_p$ and $\pi_q$ do not cross, 
i.e., 
the ordering of points around the boundary of $P$ is $p,q,t_q,t_p$, possibly with $t_q = t_p$ if the paths merge.
(See Figure~\ref{fig:correct_orderings}(a)).

\item \label{prop:anti-parallel}
If $F(p) \ne F(q)$
then the possible orderings are: $p,F(p), q,F(q)$ (see
Figure~\ref{fig:correct_orderings}(b));
or $p, q$, $F(p), F(q)$, i.e., the paths cross 
(see Figure~\ref{fig:correct_orderings}(c)).
Equivalently, 
the only other ordering, $p,q, F(q),F(p)$, 
cannot occur (see Figure~\ref{fig:correct_orderings}(d)).

\item 
\label{prop:ordering-property}
{\bf The Ordering Property.} 
As $p$ moves clockwise around $\partial  P$, so does $F(p)$.

\item 
\label{prop:no-shared-chord}

If the  paths $\pi_p$ and $\pi_q$ cross,
then they  
do  not share a directed 
 polygon chord. 
The paths may cross at a vertex or at internal points of chords.  They  may  share  a chord  in opposite directions.

\end{enumerate}
\end{lem}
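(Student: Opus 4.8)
The plan is to derive (P1) and (P2) directly from Lemma~\ref{lem:new_order} and the uniqueness of geodesics, to obtain the Ordering Property (P3) as a limiting case of (P2), and to treat (P4) by a separate planarity argument.

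For (P1) I would argue by contradiction: if $F(p)=F(q)=g$ but $\pi_p,\pi_q$ cross, the boundary order is $p,q,t_p,t_q$ with $t_p\ne t_q$ on $g$, and the ordering forces $\pi(p,g)$ and $\pi(q,g)$ to meet at a point $x$. Combining the triangle inequality at $x$ with $d(p,x)+d(x,t_p)=d(p,g)$ and $d(q,x)+d(x,t_q)=d(q,g)$ yields $d(p,t_q)+d(q,t_p)\le d(p,g)+d(q,g)$; since $t_p,t_q\in g$ we also have $d(p,t_q)\ge d(p,g)$ and $d(q,t_p)\ge d(q,g)$, so every inequality is tight and $p\to x\to t_q$ is a shortest path from $p$ to $g$ ending at $t_q\ne t_p$, contradicting the uniqueness (guaranteed by Assumptions~\ref{assumptions}) of the shortest path from a point to an edge.

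For (P2), assume $F(p)\ne F(q)$. Up to orientation the only cyclic orders of the four distinct objects $p,q,F(p),F(q)$ are the two claimed possibilities and $p,q,F(q),F(p)$; I must exclude the last. Suppose it occurs. Reading it as the order $p,q,e,f$ of Lemma~\ref{lem:new_order} with $e=F(q)$, $f=F(p)$, Corollary~\ref{cor:ordering-property-paths} shows that $d(p,F(p))>d(p,F(q))$ would force $d(q,F(p))>d(q,F(q))$, impossible as $F(q)$ is farthest from $q$; hence $d(p,F(p))=d(p,F(q))$, and applying the Corollary to the same four objects read in the opposite orientation gives $d(q,F(q))=d(q,F(p))$. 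So $p$ and $q$ each have both $F(p)$ and $F(q)$ as farthest edges. The hard part is this degenerate branch, where the Corollary yields only equality. Here I would revisit the equality case of the proof of Lemma~\ref{lem:new_order}: tightness of every step, with uniqueness of geodesics and of closest points, forces the common point $y$ of $\pi(p,F(q))$ and $\pi(q,F(p))$ to lie on all four paths $\pi(p,F(p)),\pi(p,F(q)),\pi(q,F(p)),\pi(q,F(q))$, with $\pi(p,F(p))$ and $\pi(p,F(q))$ diverging at $y$ and $\pi(q,F(p))$ and $\pi(q,F(q))$ diverging at $y$. As $y$ is a divergence point of two shortest paths from $p$ and also of two from $q$, and $p\ne q$, $y$ cannot be both sources, so $y$ is a reflex vertex of $P$; then $d(y,F(p))=d(p,F(p))-d(p,y)=d(p,F(q))-d(p,y)=d(y,F(q))$, a vertex equidistant from two edges, contradicting Assumption~\ref{assumption:unique_farthest_neighbor} (or the tie-breaking rule when $F(p),F(q)$ meet at $y$). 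Everything else in (P2) is immediate; see Figure~\ref{fig:correct_orderings}.

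For (P3), I would apply (P2) to $p$ and a point $q$ of $\partial P$ immediately clockwise of $p$: then $p,q$ are consecutive on $\partial P$, (P2) forbids the order $p,q,F(q),F(p)$, and the surviving orders place $F(q)$ clockwise of, or equal to, $F(p)$, i.e.\ $F$ moves clockwise; the case in which $F(p)$ would lie on the short arc between $p$ and $q$ does not occur because a point's farthest edge is not incident to it. Finally, for (P4), if two crossing paths shared a directed chord with tail $u$ and head $v$, then their portions from $v$ are both shortest paths out of $v$, so they coincide until a common divergence vertex, and their portions ending at $u$ are shortest paths into $u$, so they coincide from a common vertex; hence $\pi_p$ and $\pi_q$ share a maximal common subpath $\tau$ whose two endpoints are the split points toward $\{p,q\}$ and toward $\{F(p),F(q)\}$. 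Treating $\tau$ as a Jordan arc that separates $P$ into two regions and recording on which side each of $p,q,F(p),F(q)$ lies contradicts the crossing order $p,q,F(p),F(q)$ along $\partial P$; sharing a chord in the two opposite directions is compatible with this and does occur.
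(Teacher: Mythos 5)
Your (P1) and (P3) are fine and essentially the paper's arguments, and the first half of your (P2) (deriving $d(p,F(p))=d(p,F(q))$ and $d(q,F(q))=d(q,F(p))$ from Corollary~\ref{cor:ordering-property-paths}) is exactly what the paper does. But the degenerate branch of (P2) has a leap: tightness in the proof of Lemma~\ref{lem:new_order} only shows that a common point $y$ of $\pi(p,F(q))$ and $\pi(q,F(p))$ lies on all four paths; it does not show that the two paths out of $p$ (or out of $q$) diverge \emph{at} $y$. Those two crossing paths may share a whole subpath, the divergence points can lie strictly beyond $y$, or be the source $p$ or $q$ (which need not be vertices), or be a terminal on one of the edges (the $t_p=t_q$ situation, which the paper disposes of first via the tie-breaking rule). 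The route is probably repairable (take an extreme point of the common subpath and do the case analysis), but as written "so $y$ is a reflex vertex equidistant from $F(p)$ and $F(q)$" is not established. The paper avoids this: from the two equalities it shows that \emph{every} boundary point between $p$ and $q$ has both $F(p)$ and $F(q)$ as farthest edges, contradicting Lemma~\ref{lem:1D-Voronoi-edges}.

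The genuine gap is (P4): a purely topological argument cannot prove it. After your merge/split analysis, $\pi_p\cup\pi_q$ is a tree whose four leaves $p,q,t_p,t_q$ lie on $\partial P$, with $p,q$ attached at one end of the shared subpath $\tau$ and $t_p,t_q$ at the other. In the crossing order $p,q,F(p),F(q)$ the points $p,q$ are consecutive and $t_p,t_q$ are consecutive, so this tree embeds in the disk with exactly that boundary order; recording which side of the Jordan arc $\tau$ each of $p,q,F(p),F(q)$ lies on yields no contradiction, since "$p,q$ on one side, $F(p),F(q)$ on the other" is precisely consistent with that cyclic order. Indeed, the configuration you want to exclude genuinely occurs for shortest paths to \emph{non-farthest} edges: in an H-shaped polygon whose central corridor contains a zigzag forcing every left-to-right geodesic through the same directed chord, the paths from the two left arms to the two opposite right arms cross in the boundary-order sense and share that chord. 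So any correct proof must use farthestness together with the general-position assumption, as the paper's does: since the head $v$ of the shared chord is a vertex, Assumption~\ref{assumption:unique_farthest_neighbor} gives, say, $d(v,F(q))>d(v,F(p))$, and splicing the prefix of $\pi_p$ up to the chord with the suffix of $\pi_q$ after it gives a taut, hence shortest, path from $p$ to $F(q)$ that is strictly longer than $\pi_p$, contradicting that $F(p)$ is farthest from $p$ (the $t_p=t_q$ case is handled separately by the tie-breaking rule). This metric splicing argument is missing from your proposal, and without it (P4) is unproved.
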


The Ordering Property \ref{prop:ordering-property} was proved by Aronov et al.~\cite{aronov1993farthest} for the case of farthest vertices.
To prove Lemma~\ref{lem:ordering-properties}, we use
(as they did) the  ``triangle inequality'' Lemma~\ref{lem:new_order}.
Later on in our paper, we will appeal not only to the Ordering Property and the other parts of Lemma~\ref{lem:ordering-properties}, but also to the triangle inequality, since it  applies more generally to paths that do not go to farthest edges.
In fact, even for the basic problem of finding the farthest vertex from each vertex in a convex polygon, the triangle property is the key to \reviewerchange{linear-time algorithm}s.  The first such \reviewerchange{linear-time algorithm} was given by Aggarwal et al.~\cite{aggarwal1987geometric} using a technique called matrix searching in a totally monotone matrix.  They point out that assuming only the Ordering Property, there is a super-linear lower bound on the time to find all farthest vertices.  However, the triangle inequality implies a ``totally monotone'' matrix and allows a \reviewerchange{linear-time algorithm}.  The matrix searching technique is discussed further in 
Appendix~\ref{appendix:Hershberger-Suri}.

\begin{proof}[Proof of Lemma~\ref{lem:ordering-properties}]
\ 

\noindent
\ref{prop:path-to-same-edge}.
Suppose the ordering is $p,q,t_p,t_q$.  Then the paths must have a common point $x$. 
The shortest path from $x$ to the edge $F(p)=F(q)$ is unique, so the paths $\pi_p$ and $\pi_q$ are the same after $x$. 
Thus the ordering is $p,q,t_q,t_p$, possibly with $t_q=t_p$.

\smallskip\noindent
\ref{prop:anti-parallel}. 
We must show that 
the ordering $p,q, F(q),F(p)$ cannot occur.  Suppose  it does.  
First note that if $t_p = t_q$ then the tie-breaking rule would not allow the ordering $p,q,F(q),F(p)$.  Thus we may assume that $t_p \ne t_q$.

Since $F(p)$ is a farthest edge from $p$, $d(p,F(p))  \ge d(p,F(q))$.  Since $F(q)$ is a farthest edge from $q$, 
$d(q,F(q)) \ge d(q,F(p))$. 
By  Lemma~\ref{lem:new_order}, $d(p,F(q)) + d(q,F(p) \ge d(p,F(p)) + d(q,F(q))$.   
Therefore $d(p,F(p)) = d(p,F(q))$ and $d(q,F(q)) = d(q,F(p))$.

\begin{claim}
Both $p$ and $q$ have $F(p)$ and $F(q)$ as tied farthest edges.
\end{claim}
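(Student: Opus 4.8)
The claim is to be read off directly from the (in)equalities already derived just above its statement. Recall that, by definition, $r(p) = d(p,F(p)) = \max\{d(p,e) : e\in E\}$, and likewise $r(q) = d(q,F(q))$. The preceding argument combines the two ``farthest'' inequalities $d(p,F(p)) \ge d(p,F(q))$ and $d(q,F(q)) \ge d(q,F(p))$ with the triangle inequality of Lemma~\ref{lem:new_order} applied to the ordering $p,q,F(q),F(p)$, namely $d(p,F(q)) + d(q,F(p)) \ge d(p,F(p)) + d(q,F(q))$; the only way all three can hold simultaneously is with equality throughout, which yields $d(p,F(p)) = d(p,F(q))$ and $d(q,F(q)) = d(q,F(p))$.

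\textbf{Key steps.} First, from $d(p,F(q)) = d(p,F(p)) = r(p)$, the edge $F(q)$ attains the geodesic radius of $p$, so $F(q)$ is a farthest edge of $p$; together with $F(p)$, which is farthest by definition, and using $F(p)\ne F(q)$ in case (P2), this shows $p$ has the two distinct edges $F(p)$ and $F(q)$ as tied farthest edges (and, since $p\in\partial P$, Assumptions~\ref{assumptions}(3) makes these exactly its farthest edges). Symmetrically, from $d(q,F(p)) = d(q,F(q)) = r(q)$ the edge $F(p)$ is farthest from $q$, so $q$ likewise has $F(p)$ and $F(q)$ as tied farthest edges. This establishes the claim.

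\textbf{Main obstacle.} There is essentially none: all the work is in the inequality chain that precedes the claim, and the claim is just its interpretation. The only point to keep in mind is that ``tied farthest edges'' here means both edges realize the geodesic radius at the point in question; whether the tie-breaking rule subsequently distinguishes $F(p)$ from $F(q)$ (it will, once we bring their common reflex vertex, if any, into play) is not part of this claim and is dealt with in the remainder of the proof of (P2).
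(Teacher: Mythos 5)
The distance equalities $d(p,F(p)) = d(p,F(q))$ and $d(q,F(q)) = d(q,F(p))$ are derived \emph{before} the claim is stated, so they are not the content of the claim's proof; restating them is not what is at stake here. What the claim actually asserts---and what the paper's proof establishes---is that these equalities survive the Tie-Breaking Rule, i.e., that $F(p)$ and $F(q)$ are genuinely tied as farthest edges under the tie-broken distance comparison. Your proposal dismisses this as ``not part of this claim,'' and even suggests the rule ``will'' distinguish $F(p)$ from $F(q)$ once their common reflex vertex is brought into play; this is exactly backwards. If the rule applied at $p$ (which happens precisely when $\pi(p,F(p))$ and $\pi(p,F(q))$ both terminate at a reflex vertex $u$ common to $F(p)$ and $F(q)$), then one of the two edges would be declared strictly farther from $p$: if $F(q)$ won, the definition of $F(p)$ as a farthest edge would be violated, and if $F(p)$ won, the claimed tie at $p$ would be false. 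Either way, the claim requires showing the rule does \emph{not} apply, and this matters downstream as well: the contradiction at the end of the proof of Property~\ref{prop:anti-parallel} invokes Lemma~\ref{lem:1D-Voronoi-edges}, which counts points with two farthest edges only \emph{after} imposing the Tie-Breaking Rule.

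The paper closes this gap with a short but essential argument that your proposal is missing: if the rule applied at $p$, then $\pi(p,F(p))$ and $\pi(p,F(q))$ both end at the common reflex vertex $u$; since shortest paths to the same edge cannot cross, $\pi(q,F(p))$ must also end at $u$, and then $\pi(q,F(q))$ ends at $u$ as well (there cannot be two equal-length shortest paths from $q$ to distinct points of $F(q)$). Hence $t_p = t_q$, which was already ruled out at the start of the proof of Property~\ref{prop:anti-parallel}. You would need to supply this (or an equivalent) argument; without it, the claim as it is used in the paper is not established.
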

\begin{proof}
The distances are the same but we must be careful about the tie-breaking rule.  
If the tie-breaking rule applies to $\pi(p,F(p))$ and $\pi(p,F(q))$, then these two paths both terminate at a reflex vertex $u$ common to $F(p)$ and $F(q)$, but in this case $\pi(q,F(p))$ must also terminate at $u$ (since it cannot cross $\pi(p,F(p))$).
Then $\pi(q,F(q))$ also terminates at $u$, since we cannot have two equal-length paths from $q$ to different points on edge $F(q)$.  Thus the original paths $\pi_p$ and $\pi_q$ terminate at the same point, which we already ruled out.
\end{proof}

We claim that any  point $r$ that lies on $\partial P$ between $p$ and $q$ also has $F(p)$ and  $F(q)$ as farthest edges. 
Consider  any  edge $e$ that lies 
on the polygon chain from $r$ to $F(q)$ (the part containing $p$).
Note that $F(p)$ is one  such edge.  Applying Lemma~\ref{lem:new_order} to $q,r,e,F(q)$, gives $d(r,F(q)) + d(q,e) \ge d(r,e) + d(q,F(q))$.  Since $d(q,F(q)) \ge  d(q,e)$ this implies $d(r,F(q)) \ge d(r,e)$.  In particular, $d(r,F(q)) \ge d(r, F(p))$.  A symmetric argument shows that $d(r,F(p)) \ge d(r,e)$ for any  edge $e$ 
that lies on the polygon chain from $r$ to $F(p)$ (the part containing $q$).
In particular, $d(r,F(p)) \ge  d(r,F(q))$.  Since all edges $e$ are included in the  two ranges, this proves that $r$ has $F(p)$ and $F(q)$ tied for farthest edge.
(We can again show that the tie-breaking rule does not apply.)
By Lemma~\ref{lem:1D-Voronoi-edges},
only isolated points on $\partial  P$ can have 
$F(p)$ and $F(q)$ tied for farthest edge.
Therefore the ordering $p,q,F(p),F(q)$ cannot occur.

\smallskip\noindent
\ref{prop:ordering-property}. 
Consider a point $p$ with a farthest edge $F(p)$ and let $q$ be the first point after $p$ moving clockwise around $\partial P$ that has a farthest edge $F(q)$ 
that is not a farthest edge from $p$. 
Note that $q$ comes before $F(p)$.  Since the ordering $p,q, F(q), F(p)$ is prohibited, $F(q)$ must lie after $F(p)$ in clockwise order.

\smallskip\noindent
\ref{prop:no-shared-chord}.  Suppose $\pi_p$ and $\pi_q$ cross.
We first suppose that
$t_p = t_q$. Then the terminal point is a reflex vertex $u$ common to $F(p)$ and $F(q)$.  If the paths share a directed chord $(a,b)$, then the paths are identical after vertex $a$ and therefore identical on their last segment which is a chord from some vertex $v$ to $u$.  
The tie-breaking rule would not allow $p$ and $q$ to have farthest edges $F(p)$ and $F(q)$ unless $v$ lies on the bisector of $u$
which is excluded by Assumption~\ref{assumption:unique_farthest_neighbor}. 

Thus we may assume that
$t_p \ne t_q$ and the paths share the directed 
chord $(a,b)$.  
Consider the portion of $\pi_p$ from $b$ to  $F(p)$ and the  portion of $\pi_q$ from $b$ to $F(q)$.   Those are both shortest paths. 
Because of the \reviewerchange{general position} Assumption~\ref{assumption:unique_farthest_neighbor} (note that $b$ is a vertex), one of the paths must be longer, say the one to $F(q)$.  Now we claim that $d(p,F(q)) > d(p,F(p))$, contradiction to $F(p)$ being   a farthest edge from $p$.   
To show this, construct a path $\sigma$ from $p$ to  $F(q)$ by following $\pi_p$ from $p$ to $a$, then traversing chord $(a,b)$, then  following $\pi_q$ from $b$ to $F(q)$.  Then $\sigma$ is longer than $\pi_p$.  We are done if  we  can show  $\sigma$ is a shortest  path.  But the part  up  to $b$ is locally shortest and the  part after $a$ is locally shortest,  thus none of the  bends in the path  can be shortened, so $\sigma$ is  a geodesic  path  and thus a  shortest path.   
\end{proof}

\subsection{Details on Shortest Paths To/From Edges}
\label{appendix:shortest-paths}

In this section we give \reviewerchange{linear-time algorithm}s to find  shortest paths from a  given  point to all edges of the polygon, and to find shortest paths from a  given  edge to all  vertices of the polygon. In fact, in both cases, we will augment to a \defn{shortest path map} that divides the polygon into regions (triangles and trapezoids) in which the shortest paths are combinatorially the same.

\medskip
\noindent{\bf Shortest Paths from a Point to all Edges and Vertices.}
For a point $p$ in polygon $P$, define \defn{$T_p$} be the \defn{shortest path tree} that consists of shortest paths from $p$ to all the edges and vertices of the polygon.  
In some situations we 
will only care about the shortest paths to edges, but we will still use the notation $T_p$ and just clarify what we mean.

\begin{lem}\label{SPT_lemma}
There is a \reviewerchange{linear-time algorithm} to find, given a point $p$ in a polygon, the shortest path tree $T_p$
and its augmentation to a shortest path map.
\end{lem}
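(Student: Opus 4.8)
\textbf{Proof proposal for Lemma~\ref{SPT_lemma}.}

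The plan is to reduce everything to the classical shortest-path-tree algorithm of Guibas, Hershberger, Leiserson, Tarjan, and others (or Lee--Preparata plus triangulation), which, given a triangulated simple polygon and a source \emph{point} $p$, computes in linear time the shortest path tree from $p$ to all vertices, together with the associated shortest path map --- the decomposition of $P$ into triangular cells inside each of which the sequence of vertices on the shortest path from $p$ is fixed. The polygon can be triangulated in linear time by Chazelle's algorithm, so this part is off the shelf. What remains is to extend the output from ``all vertices'' to ``all edges,'' i.e., to add, for each polygon edge $e$, the shortest path $\pi(p,e)$ and its terminal point $t(p,e)$, and to refine the shortest path map accordingly.

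First I would recall the structure of $\pi(p,e)$ for an edge $e=uv$: the terminal $t(p,e)$ is the point of $e$ closest to $p$ in geodesic distance, and $\pi(p,e)$ is $\pi(p,w)$ followed by the segment $w\,t(p,e)$, where $w$ is the last vertex on the path (the ``anchor'' of $e$) --- here $w$ could be $u$ or $v$ themselves. The anchor $w$ is the apex of the funnel from $p$ to $e$, and $t(p,e)$ is the foot of the perpendicular from $w$ to the line through $e$ if that foot lies on the segment $e$ and is visible from $w$; otherwise $t(p,e)$ is the endpoint $u$ or $v$ that is geodesically closer. Concretely, I would process each edge $e$ using the shortest path map already computed: walk along $e$ and determine which cells of the map it passes through (this can be charged so the total work over all edges is linear, since each edge lies in $O(1)$ cells amortized --- or more simply, since we only need the cell(s) containing the relevant portion, we can locate $u$ and $v$ in the map and use the funnel between $\pi(p,u)$ and $\pi(p,v)$). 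Given the funnel with apex $w$, computing the perpendicular foot and checking visibility/position is $O(1)$ per edge after an $O(|{\rm funnel}|)$ funnel construction, but the funnels of all edges together have linear total size because they are nested along the boundary --- this is exactly the standard amortization used in the GHLST shortest-path-tree-from-an-edge style arguments. I would then add, for each edge, the single new tree edge $w\,t(p,e)$ (after inserting a zero-length vertex at $w$ when $w$ is already a vertex of $T_p$, matching the augmentation convention stated just before Lemma~\ref{lem:consecutive-coarse-cover}), and refine each shortest path map cell that an edge-terminal perpendicular passes through by the corresponding segment; since each such refinement adds $O(1)$ new cells per edge, the refined map still has linear size and is built in linear time.

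The main obstacle I anticipate is the amortization of the per-edge funnel/visibility computation: one must argue that the funnels from $p$ to the successive polygon edges, although each individually may be large, have total size linear in $n$, and that the computation of perpendicular feet and endpoint-visibility checks can be threaded through a single traversal of the boundary (or of the shortest path map) without rescanning. The cleanest way to get this is to observe that the shortest path \emph{map} from $p$ already encodes, for every point of $\partial P$, its anchor vertex $w$ on the path from $p$; reading off the anchor of each edge's endpoints from the map and then doing the $O(1)$ perpendicular-foot computation per edge gives the terminals directly, with a global linear bound coming from the linear size of the map. I would therefore phrase the proof as: (1) triangulate and run the classical point-source shortest path tree/map algorithm in linear time; (2) from the map, for each edge $e$ identify its anchor $w=w(e)$ and compute $t(p,e)$ by one $O(1)$ perpendicular-foot-and-visibility test; (3) augment $T_p$ with the zero-length vertices and the edges $w\,t(p,e)$, and refine the map by the $O(n)$ new segments; conclude linear time and size. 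Everything here is standard except for carefully matching the augmentation conventions used elsewhere in the paper, which is a bookkeeping matter rather than a difficulty.
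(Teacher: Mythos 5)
Your overall plan is the paper's plan: run the classical point-source shortest-path-tree/map algorithm of Guibas et al.~\cite{SPT_linear}, and then extract the per-edge terminals by constant-time local tests against the map, which is exactly how the paper proves Lemma~\ref{SPT_lemma} (the paper additionally notes that an interior source $p$ can be handled by cutting $P$ along a chord through $p$, a point you gloss over but which is minor since the point-source algorithm itself tolerates interior sources; the paper needs the remark because it defers details to a subroutine that assumes a boundary source).

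However, the final form of your step (2) has a genuine flaw. You assert that the last vertex $w$ of $\pi(p,e)$ is the apex of the funnel from $p$ to $e$, and that it suffices to ``read off the anchor of each edge's endpoints from the map'' and do a single $O(1)$ perpendicular-foot test per edge. Neither claim is right in general: the last vertex of $\pi(p,e)$ can be \emph{any} vertex on one of the two convex chains of the funnel, not the apex, and it need not coincide with the anchor of either endpoint $u$ or $v$ of $e$ (picture a funnel whose chains have many vertices; the perpendicular last segment can emanate from a vertex in the middle of a chain, whose shortest-path-map cell meets $e$ only in its interior). So one test per edge keyed to the endpoints' anchors can return the wrong terminal. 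The correct linear-time accounting is the one you sketch earlier and the paper uses: each triangle of the shortest path map has on its boundary a piece of at most one polygon edge, and for each such triangle you check in $O(1)$ whether the perpendicular from its apex to that edge lands inside the triangle (equivalently, walk along $e$ through the cells it crosses, which is the same charge summed over all edges and is $O(n)$ in total because the map has linear size). Convexity of $x \mapsto d(p,x)$ along $e$ then guarantees the minimum is either such a perpendicular foot or an endpoint, so this per-cell scan finds every terminal. If you replace your ``cleanest way'' shortcut with this per-cell check, the rest of your argument (augmenting $T_p$ with the zero-length conventions and refining the map by the $O(n)$ new perpendicular segments) goes through and matches the paper.
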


\begin{proof}
The idea is simple.  Construct the shortest path tree from $p$ to all vertices and augment to the shortest path map using the algorithm by Guibas et al.~\cite{SPT_linear}. 
Regions of the shortest path map are triangles. 
Check each triangle in $O(1)$ time to see if it contains the last segment of a shortest path from $p$ to an edge. 
For further details see~\cite[Section 4.1.2]{lubiw_et_al:LIPIcs.ESA.2021.65, lubiw2021visibility} , which solves the more general case of shortest paths to a set of chords in a polygon when no two chords nest.
Note that these algorithms assume $p$ is on the boundary of the polygon, but we can handle an interior point $p$ by first cutting the polygon at a chord through $p$ in linear time and then finding shortest paths on each side of the chord.
\end{proof}

\medskip
\noindent{\bf Shortest Paths from an Edge to all Vertices.}
For an edge $e= ab$ of polygon $P$, define \defn{$T(e)$} to be the forest of shortest paths from $e$ to all vertices of the polygon.  %
We will use the shortest path trees $T_a$ and $T_b$ at the endpoints of $e$ to construct $T(e)$.

A vertex $v$ is \defn{visible} from $e$ if there is a line segment $xv$ inside $P$ for some point $x \in e$, and $v$ is \defn{orthogonally visible} from $e$ if $xv$ can be orthogonal to $e$.
Pollack et al.~\cite{pollack_sharir} note that vertex
$v$ is visible from $e$ iff $v$ has different parents $p_a(v)$ and $p_b(v)$ in $T_a$ and $T_b$.
Furthermore, if $p_a(v) \ne p_b(v)$ then $v$ is visible from the interval $[x_a(v),x_b(v)]$ in $e$ where $x_a(v)$ and $x_b(v)$ are the intersections of $e$ with the lines from $v$ to  $p_a(v)$ and $p_b(v)$ respectively.

\begin{figure}
\begin{subfigure}{\textwidth}
  \centering
  \includegraphics[width=\textwidth]{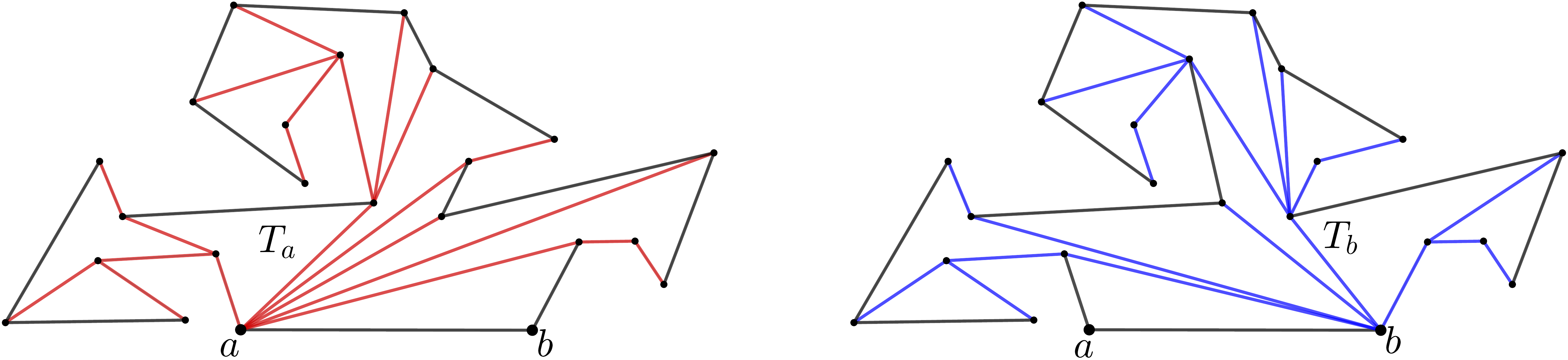}
%  \caption{Shortest Path Trees from the endpoints of edge $e=ab$}
  \label{fig:sfig1}
\end{subfigure}%
\newline
\begin{subfigure}{\textwidth}
  \centering
  \includegraphics[width=.8\textwidth]{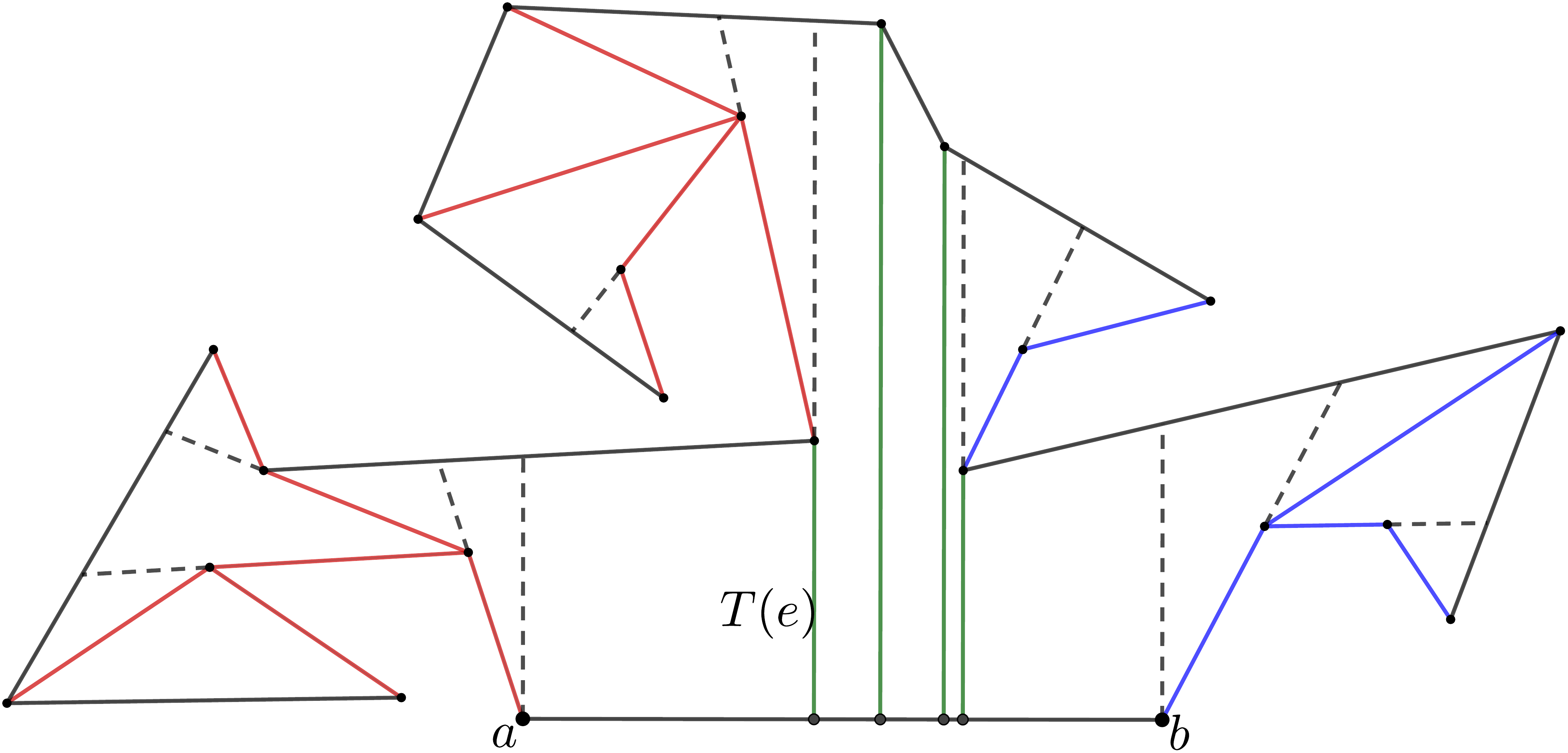}
%  \caption{The resulting shortest path map for the edge $e$}
\label{fig:sfig2}
\end{subfigure}
\caption{The shortest path trees from the endpoints of an edge can be used to construct its shortest path forest and map.
The color of 
an edge
indicates the shortest path tree it originates from. 
Green edges 
indicate orthogonal visibility from $e$.}
\label{fig:setvtree}
\end{figure}

\begin{lem}\label{ShortestPathForest_lemma}
There is a \reviewerchange{linear-time algorithm} to find, given an edge $e$ of a polygon, the forest $T(e)$ that consists of shortest paths from $e$ to all the vertices of the polygon, 
and to augment this to the shortest path map.
\end{lem}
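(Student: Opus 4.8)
The plan is to build the forest $T(e)$ and its shortest path map by merging the two shortest path trees $T_a$ and $T_b$ at the endpoints of $e=ab$ together with a set of perpendicular ``green'' edges, following Pollack, Sharir, and Rote~\cite{pollack_sharir} (see Figure~\ref{fig:setvtree}). \textbf{Step 1.} By Lemma~\ref{SPT_lemma}, compute $T_a$, $T_b$, and their shortest path maps in linear time. This gives, for every vertex $v$, the distances $d(a,v)$ and $d(b,v)$, the parents $p_a(v)$ and $p_b(v)$, and hence (by the criterion of Pollack et al.) whether $v$ is visible from $e$ and, if so, the visibility interval $[x_a(v),x_b(v)]\subseteq e$. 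For each visible $v$, test in $O(1)$ time whether the foot of the perpendicular from $v$ to the line $\bar e$ lies in $[x_a(v),x_b(v)]$; if so, $v$ is orthogonally visible and we record the perpendicular segment $[v,x_v]$ as a \emph{green} edge.

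\textbf{Step 2 (structure of $\pi(v,e)$).} Since $d(v,e)=\min_{x\in e}d(v,x)$ and $x\mapsto d(v,x)$ is convex along $e$, the minimizing point $x^{*}$ is either an endpoint of $e$ --- in which case $\pi(v,e)=\pi(v,a)$ or $\pi(v,b)$ --- or an interior point of $e$ that $\pi(v,e)$ reaches perpendicularly. In the latter case, consider the funnel of shortest paths from $v$ to $e$: its two sides are $\pi(v,a)$ and $\pi(v,b)$, which share a common prefix up to an apex $\alpha$; the path $\pi(v,e)$ follows this prefix, then wraps along one funnel chain up to a reflex, orthogonally visible vertex $w$, and finally takes the green edge $[w,x_w]$ to $e$. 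Hence $\pi(v,e)$ is the path from $v$ towards the root up to $w$ in $T_a$ or in $T_b$, followed by $[w,x_w]$ (with $w=v$ when $v$ itself is orthogonally visible), and $d(v,e)=d(v,w)+|wx_w|$. In particular every edge of $T(e)$ is an edge of $T_a$, an edge of $T_b$, or a green edge.

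\textbf{Step 3 (assembly in linear time).} The green edges are pairwise non-crossing, and together with $e$ they cut $P$ into the region $P_0$ orthogonally visible from $e$ and, behind each green edge $[w,x_w]$, a subpolygon $P_w$; the $P_w$ form a laminar family. For a vertex $v$, let $w$ be the innermost green-edge vertex whose subpolygon contains $v$ (if none, $v\in P_0$ and $\pi(v,e)$ is its own green edge, or the segment to $a$ or $b$ in the endpoint case detected by the perpendicular-foot test failing). Then $\pi(v,e)$ is $\pi(v,w)$ --- read off from $T_a$ or $T_b$, whichever realizes the funnel side through $w$ --- followed by $[w,x_w]$. So $T(e)$ is built by a single simultaneous outward traversal of $T_a$ and $T_b$ through this laminar family, attaching green edges and re-rooting, in time linear in $|P|$. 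Finally, as in the proof of Lemma~\ref{SPT_lemma}, augment $T(e)$ to the shortest path map by subdividing the cells obtained from the maps of $T_a$, $T_b$ and the green edges and checking each apex in $O(1)$ time, for $O(n)$ cells total.

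\textbf{Main obstacle.} The delicate part is Step 3: proving that the innermost enclosing orthogonally visible vertex $w$ is well defined, that the corresponding subpath genuinely coincides with a branch of $T_a$ or $T_b$, and then scheduling the traversal so that no long path is rewalked, so the overall running time is linear rather than quadratic. The endpoint situations near $a$ and $b$ --- where the convex function on $e$ attains its minimum at an endpoint, or the perpendicular foot falls just outside $[x_a(v),x_b(v)]$ --- need separate careful handling, as does the general-position bookkeeping ensuring the resulting map has only $O(n)$ cells.
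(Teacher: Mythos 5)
There is a genuine gap, and you flag it yourself: Step 3 is the heart of the construction, and you leave unproven both its correctness (that the ``innermost enclosing orthogonally visible vertex $w$'' is well defined and that the subpath from $v$ to $w$ really is a branch of $T_a$ or $T_b$) and its linear running time (that the simultaneous outward traversal through the laminar family of subpolygons behind the green edges never rewalks long paths). As written, the assembly could plausibly be implemented quadratically, and the endpoint cases near $a$ and $b$ are explicitly deferred. So the proposal is a plan with the decisive step missing, not a proof.

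The gap dissolves once you notice that the forest $T(e)$ is fully specified by a parent pointer per vertex, and that this parent can be assigned \emph{locally} in $O(1)$ time from $T_a$ and $T_b$ alone --- this is exactly what the paper does. If $p_a(v)=p_b(v)$, then $v$ is not visible from $e$ (the criterion of Pollack et al.~\cite{pollack_sharir} you already invoke) and its parent in $T(e)$ is unchanged. If $v$ is visible, test the two angles $\angle v\,x_a(v)\,b$ and $\angle v\,x_b(v)\,a$: if both are at most $\pi/2$ (equivalently, your perpendicular-foot test succeeds), the parent is the foot of the perpendicular on $e$; if one is obtuse, the first edge of $\pi(v,e)$ leaves $v$ along the funnel wall on that side, so the parent is whichever of $p_a(v)$ or $p_b(v)$ yields the obtuse angle. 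No global traversal, no laminar family, and no identification of the vertex $w$ is needed --- $w$ is recovered implicitly by following parent pointers, and linearity is immediate. Your Step 2 analysis of the structure of $\pi(v,e)$ is essentially correct and is the right ingredient; what is missing is turning it into this per-vertex rule instead of a global assembly. The augmentation to the shortest path map then proceeds much as you sketch: build the vertex shortest path maps for the subtrees rooted at $a$, $b$, and the orthogonally visible vertices, and extend the perpendiculars from those vertices (and from the endpoints of $e$) to $\partial P$, splitting the polygon into triangles and trapezoids in linear time.
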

\begin{proof}
See Figure~\ref{fig:setvtree}.
To construct $T(e)$,  we define the parent of each vertex of $P$. If $p_a(v) = p_b(v)$ then $v$ has the same parent in $T(e)$.  Otherwise, $v$ is visible from $e$.  If the angles $\angle v x_a(v) b$ and $\angle v x_b(v) a$ are both $\le \pi /2$ then $v$ is orthogonally visible from $e$, and we define the parent of $v$ to be the foot of the perpendicular from $v$ to $e$.   
And if one of the angles is obtuse, then the parent of $v$ in $T(e)$ is whichever of $p_a(v)$ or $p_b(v)$ that leads to the obtuse angle.

We now have the shortest path forest $T(e)$.
To augment to the shortest path map, we first 
construct the vertex shortest path maps for subtrees rooted at $a$, $b$, and all the orthogonally visible vertices.
This %
takes linear time.
Finally, we can extend the perpendiculars
from orthogonally visible vertices (plus the endpoints of $e$) until they intersect $\partial P$.
\changed{This splits
the polygon into} trapezoids and triangles giving the required shortest path map.
\changed{The runtime is linear.}
\end{proof}

Note that we can easily modify the algorithm in Lemma~\ref{ShortestPathForest_lemma} to construct the shortest path forest from any %
chord of a given simple polygon in linear time.

\subsection{Details on Separators and Funnels}
\label{appendix:separators}

Any geodesic path between two vertices of $P$ separates the boundary of $P$ into two parts, and when we focus on which vertices/edges are in opposite parts,
we call the geodesic path 
a ``separator''.
Separators are a 
main tool for finding all farthest vertices in a polygon. 
They were first introduced by Suri~\cite{suri1989} (although he called them ``connectors'' rather than  ``separators'') in his $O(n \log n)$ time algorithm to find farthest
vertices of all vertices, and then they were used by Hershberger and Suri~\cite{hershberger1997matrix} who improved the runtime to $O(n)$.  
Vertex separators (called ``separating paths'') were also used by Ahn et al.~\cite{linear_time_geodesic}, both when they appealed to Hershberger-Suri, and in more direct ways.  
We need edge separators in similar ways. 

The basic properties that Suri~\cite{suri1989} proved for 
separators for farthest vertices are as follows:
\begin{enumerate}
\item 
If two vertices $x$ and $y$ are separated by a geodesic path $\pi(a,b)$, then the shortest path from $x$ to $y$ is contained, except for one edge, in the shortest path trees of $a$ and $b$~\cite[Lemma 4]{suri1989}.  Thus, after constructing the shortest path trees from $a$ and $b$, it is easy to find shortest paths for any pair $x,y$ that is separated by $\pi(a,b)$.
\item A constant number of separators suffice to separate every vertex from its farthest vertex~\cite[Section 4]{suri1989}.

\end{enumerate}

In this section we develop the analogous theory of separators for farthest edges.

\begin{definition}
A \defn{farthest edge separator} is a directed geodesic path $\gamma = \pi(a,b)$ from some vertex $a$ to some vertex $b$ of $P$ such that for every point $p \in \delta P$ to the %
right of $\gamma$, all of $p$'s farthest edges lie to the %
left of $\gamma$.
\end{definition}

Note that we define separators via the strong property that \emph{all} points to one side have their farthest edge on the other side.  Although this property is not part of Suri's original definition, his construction produces vertex separators with the property.

In this section we will prove that Suri's two properties hold for our farthest edge separators.
We first note an even more basic property that is the main reason for using separators: 

\begin{claim}
\label{claim:across-separator}
If $\gamma = \pi(a,b)$ is a farthest edge separator and points $p$ and $q$ lie to the right
of $\gamma$ and their farthest edges $F(p)$ and $F(q)$ are distinct, then the paths to their farthest edges cross.
\end{claim}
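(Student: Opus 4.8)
The plan is to argue by contradiction: suppose $p$ and $q$ lie to the right of the separator $\gamma = \pi(a,b)$, their farthest edges $F(p)$ and $F(q)$ are distinct, but the paths $\pi(p,F(p))$ and $\pi(q,F(q))$ do \emph{not} cross. By property \ref{prop:anti-parallel} of Lemma~\ref{lem:ordering-properties}, when $F(p)\ne F(q)$ the only orderings around $\partial P$ are the ``crossing'' ordering $p,q,F(p),F(q)$ or the ``anti-parallel'' ordering $p,F(p),q,F(q)$; by assumption the paths do not cross, so the ordering must be $p,F(p),q,F(q)$ (up to relabelling $p\leftrightarrow q$ and reversing orientation).

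Now I would invoke the defining property of the separator. Since $\gamma$ runs between the vertices $a$ and $b$, it splits $\partial P$ into a ``right'' chain and a ``left'' chain, with $p$ and $q$ both on the right chain. The separator property says every point on the right chain has all its farthest edges on the left chain; in particular $F(p)$ and $F(q)$ both lie on the left chain. But the endpoints $a,b$ of $\gamma$ separate $\{p,q\}$ (on the right) from $\{F(p),F(q)\}$ (on the left). So going around $\partial P$, the cyclic order must interleave as (right-chain stuff) then (left-chain stuff): we cannot have $p$, then $F(p)$, then $q$, then $F(q)$, because that would place a left-chain point $F(p)$ strictly between two right-chain points $p$ and $q$ — contradicting that $a$ and $b$ are the only transition points between the two chains and that $p,q$ are interior to the right chain. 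Hence the ordering $p,F(p),q,F(q)$ is impossible, so the paths must cross.

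The main obstacle I anticipate is handling the boundary degeneracies cleanly: the points $p$ or $q$ could coincide with $a$ or $b$, or $F(p)$ / $F(q)$ could contain $a$ or $b$ as an endpoint, so ``to the right of $\gamma$'' needs to be interpreted as strictly-right (open half-polygon) to make the interleaving argument airtight. Assumption~\ref{assumption:unique_farthest_neighbor} and Lemma~\ref{lem:1D-Voronoi-edges} already tell us that vertices (such as $a,b$) are not equidistant from two edges and that points with two farthest edges are isolated on $\partial P$, which rules out pathological ties; I would use these to ensure $F(p),F(q)$ genuinely lie in the open left chain. Once that is pinned down, the interleaving contradiction is immediate and the claim follows directly from Lemma~\ref{lem:ordering-properties}\ref{prop:anti-parallel} together with the separator definition.
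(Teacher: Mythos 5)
Your argument is correct and is essentially the paper's proof: both use Property~\ref{prop:anti-parallel} to rule out the ordering $p,q,F(q),F(p)$ and the separator property to rule out the interleaved ordering $p,F(p),q,F(q)$, leaving only the crossing ordering $p,q,F(p),F(q)$. The only difference is that you phrase it as a contradiction while the paper eliminates the two non-crossing orderings directly.
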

\begin{proof}
The ordering $p,F(p),q,F(q)$ is excluded by the separator.  The ordering $p,q,F(q),F(p)$ cannot occur by Property~\ref{prop:anti-parallel}.  Thus the ordering must be $p,q,F(p),F(q)$.  See also Figure~\ref{fig:correct_orderings}.
\end{proof}

\subsubsection{Funnels and Shortest Paths Across a Separator}
\label{sec:funnels-paths}

We first address Suri's property (1) by examining how a shortest path crosses a geodesic $\gamma(a,b)$.
In this subsection
the geodesic need not be a farthest edge separator, and the shortest path need not go to a farthest edge.
Hershberger and Suri~\cite{hershberger1997matrix} expanded on Suri's result and showed how a shortest vertex-to-vertex path that crosses $\gamma$ is related to the \emph{funnels} of the vertices.  We follow their analysis.

Suppose that vertex $v$ lies to the 
right of $\gamma$ and edge $e$ lies to the 
left of $\gamma$.  Then $\pi(v,e)$ crosses $\gamma$, either at a single point, or by sharing chords with $\gamma$.  
See Figure~\ref{fig:funnels-and-tangents}.
We show that $\pi(v,e)$ lies in the \emph{funnels} of $v$ and $e$ which are defined in terms of the shortest path trees $T_a$ and $T_b$.
Note that $\gamma$ is in both $T_a$ and $T_b$ since it is the shortest path from $a$ to $b$.

The \defn{funnel of $v$}, denoted \defn{$Y(v)$}, 
is bounded by $\pi(a,v)$,  $\pi(b,v)$ and $\gamma$,
where $\pi(a,v)$ and  $\pi(b,v)$ are called the \defn{walls} of the funnel.
The vertex where $\pi(a,v)$ diverges from $\gamma$ is \defn{$\gamma_a(v)$}, defined to be the lowest common ancestor of $v$ and $b$ in the tree $T_a$.  Similarly, the vertex where $\pi(b,v)$ diverges from $\gamma$ is \defn{$\gamma_b(v)$}, the lowest common ancestor of $v$ and $a$ in the tree $T_b$. 
The vertex where $\pi(v,a)$ diverges from $\pi(v,b)$ is called the \defn{apex} of the funnel. Observe that the path between the apex and $\gamma_a(v)$ [or $\gamma_b(v)$] is reflex.

Similarly, the \defn{funnel of $e$}, \defn{$Y(e)$} is bounded by $\pi(a,e)$, $\pi(b,e)$, $\gamma$, together with the piece of $e$ between the terminals $t(a,e)$ and $t(b,e)$ if those terminals are distinct. The lowest common ancestors \defn{$\gamma_a(e)$} and \defn{$\gamma_b(e)$} and the \defn{apex} can be defined analogously, where we allow the apex to be the piece of edge $e$ between $t(a,e)$ and $t(b,e)$ when those terminals are distinct. 
Funnels have been used in many shortest path algorithms,
and there are variations on how they are defined (as a subpolygon or a set of edges; including the edges common to two paths or not, etc.).

\begin{figure}
    \centering
    \includegraphics[width=\textwidth]{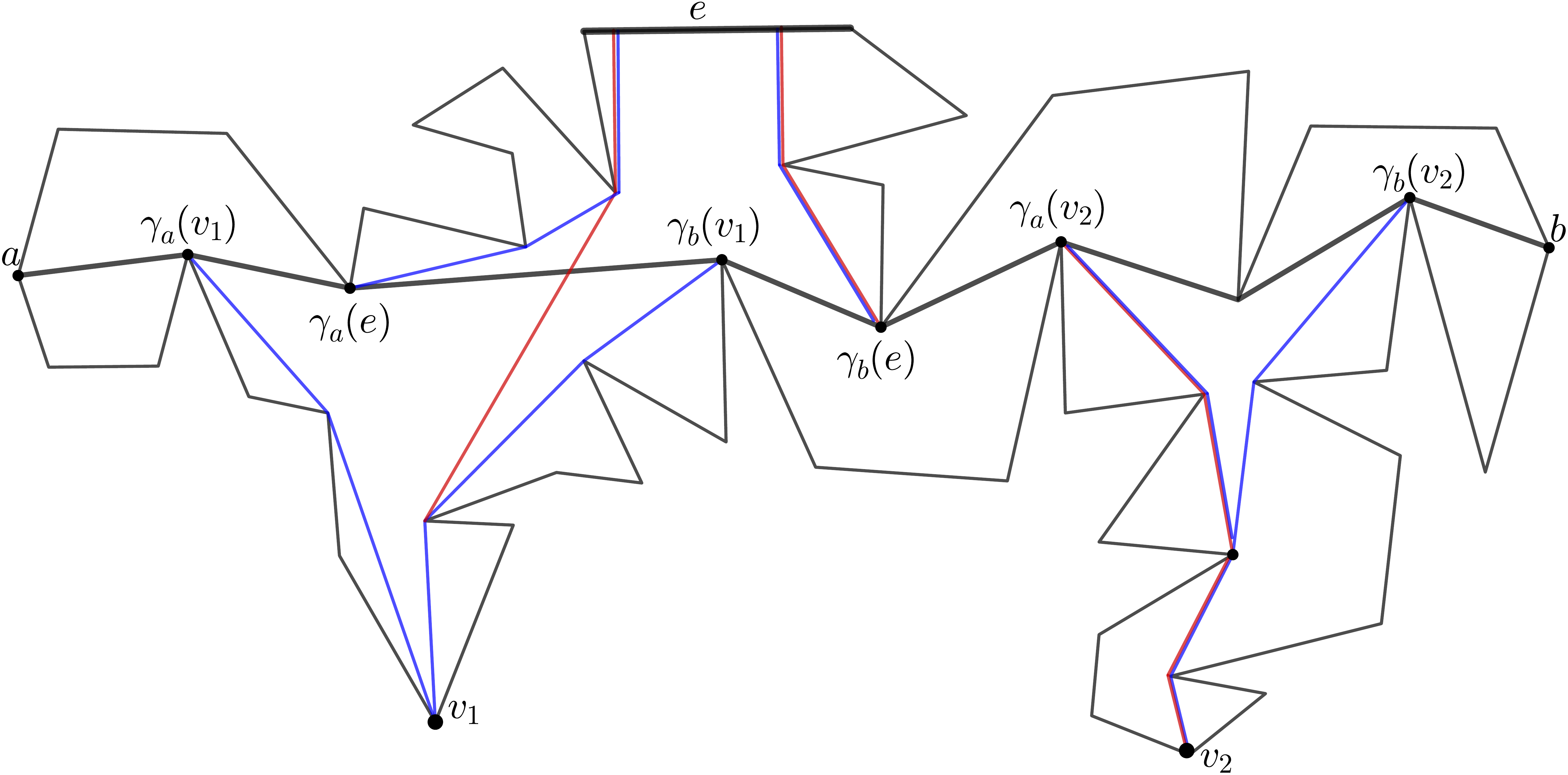}
    \caption{A geodesic $\gamma = \pi(a,b)$, the funnels $Y(v_1), Y(v_2)$, and $Y(e)$ (in blue) and the paths $\pi(v_1, e)$ and $\pi(v_2, e)$ (in red).}
    \label{fig:funnels-and-tangents}
\end{figure}

The pair of funnels $Y(v),Y(e)$ is \defn{closed} if 
the paths  $\pi(\gamma_a(v),\gamma_b(v))$ and  $\pi(\gamma_a(e),\gamma_b(e))$ are internally disjoint, see $Y(v_2)$ and $Y(e)$ in Figure~\ref{fig:funnels-and-tangents}.  Otherwise the pair of funnels  is \defn{open}, see $Y(v_1)$ and $Y(e)$ in the figure.
Hershberger and Suri 
dealt with the case where 
$e$ is replaced by a vertex $u$.
They showed that if the pair $Y(v), Y(u)$ is closed, then the edges of $\pi(v,u)$ are edges of the funnels.  In particular, suppose $\gamma_a(u)$ and $\gamma_b(u)$ are closer to $a$ than  $\gamma_a(v)$ and $\gamma_b(v)$ (the other ordering is analogous).  Then $\pi(v,u)$ consists of the paths 
$\pi(v, \gamma_a(v))$, $\pi(\gamma_a(v),\gamma_b(u))$, 
$\pi(\gamma_b(u), u)$.
On the other hand, if the pair $Y(v),Y(u)$ is open, then $\pi(v,u)$ consists of part of a wall of $Y(v)$ and part of a wall of $Y(u)$ joined by  
a \defn{tangent} edge \defn{$\ell(v,u)$} that crosses $\gamma$.

To deal with an edge funnel $Y(e)$, we abuse the notation and say that a segment that meets $e$ at right angles is tangent to $Y(e)$ (this makes   sense if we imagine that the edges that meet $e$ at right angles extend off to infinity). 

The above results are used to prove the following two lemmas and will also be used in
Appendix~\ref{appendix:Hershberger-Suri}
when we show how to extend Hershberger and Suri's algorithm for finding farthest vertices of all vertices to the case of farthest edges.

\begin{lem}
\label{lem:separator-paths}
\changed{Consider a geodesic path $\gamma = \pi(a,b)$ with vertex $v$ to the right and edge $e$ to the left.  If the pair of funnels $Y(v),Y(e)$ is closed then the edges of $\pi(v,e)$ are contained in the shortest path trees $T_a$ and $T_b$.
If the pair of funnels $Y(v),Y(e)$ is open then the edges of $\pi(v,e)$ are contained in the shortest path trees $T_a$ and $T_b$, except for one edge $\ell(v,e)$ that crosses $\gamma$ and is tangent to $Y(v)$ and $Y(e)$.
}

\end{lem}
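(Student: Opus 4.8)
\textbf{Proof proposal for Lemma~\ref{lem:separator-paths}.}
The plan is to follow the Hershberger--Suri analysis of how a shortest path crosses a geodesic, adapting it from the vertex-to-vertex case to the vertex-to-edge case by treating the edge funnel $Y(e)$ as a ``generalized funnel'' whose apex is allowed to be the subsegment of $e$ between the terminals $t(a,e)$ and $t(b,e)$, with the convention that a segment meeting $e$ perpendicularly counts as tangent to $Y(e)$. First I would fix notation: let $\gamma_a(v),\gamma_b(v)$ be the points where the walls $\pi(a,v),\pi(b,v)$ diverge from $\gamma$, and similarly $\gamma_a(e),\gamma_b(e)$ for $Y(e)$; these are lowest common ancestors in $T_a$ and $T_b$, so the walls of both funnels are subpaths of $T_a$ and $T_b$. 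Since $v$ is to the right and $e$ to the left of $\gamma$, the path $\pi(v,e)$ must cross $\gamma$, and the key structural fact is that $\pi(v,e)$ cannot leave the union $Y(v)\cup Y(e)$: it enters $Y(v)$ at $v$, must exit through the part of $\gamma$ bounding $Y(v)$ (any exit through a wall would contradict that the wall is itself a shortest path from $a$ or $b$), and symmetrically it enters $Y(e)$ through $\gamma$ and reaches $e$ inside $Y(e)$.

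The closed case: when $\pi(\gamma_a(v),\gamma_b(v))$ and $\pi(\gamma_a(e),\gamma_b(e))$ are internally disjoint subpaths of $\gamma$, one of them lies closer to $a$; say $\gamma_a(e),\gamma_b(e)$ are between $a$ and $\gamma_a(v),\gamma_b(v)$ along $\gamma$ (the other subcase is symmetric). I would argue that $\pi(v,e)$ is then forced to be the concatenation of $\pi(v,\gamma_a(v))$, the subpath of $\gamma$ from $\gamma_a(v)$ to $\gamma_b(e)$, and $\pi(\gamma_b(e),e)$ — each of these three pieces being locally shortest (the first is a wall of $Y(v)$, the middle is a subpath of the geodesic $\gamma$, the last is a wall of $Y(e)$), and the turns at $\gamma_a(v)$ and at $\gamma_b(e)$ being convex because the walls are reflex toward $\gamma$. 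Hence $\pi(v,e)$ is a geodesic, so it is \emph{the} shortest path, and every edge of it lies in $T_a$ or $T_b$ (the walls are in both trees, and $\gamma\subseteq T_a\cap T_b$). A small point to check here is that $\gamma_b(e)$ (or the relevant endpoint) might coincide with a terminal of $e$, or the apex of $Y(e)$ might be a nondegenerate segment of $e$; in that case $\pi(\gamma_b(e),e)$ is a single perpendicular segment, which is still an edge of $T_a$ after the $0$-length augmentation described in Section~\ref{sec:one-edge-Vor}.

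The open case: when the two reflex chains overlap, $\pi(v,e)$ cannot run along $\gamma$; instead it leaves a wall of $Y(v)$ along a segment tangent to that reflex chain, crosses $\gamma$, and arrives tangent to the reflex chain bounding $Y(e)$, i.e.\ it consists of a prefix of a wall of $Y(v)$, a single ``bridge'' edge $\ell(v,e)$, and a suffix of a wall of $Y(e)$. I would establish this by the standard bitangent/funnel argument: the shortest path from $v$ to (a point of) $e$ that must cross $\gamma$ follows the convex side of the $Y(v)$ funnel until it can see into $Y(e)$, then jumps along the common tangent; $\ell(v,e)$ is tangent to $Y(v)$ at a vertex of its reflex chain and to $Y(e)$ either at a vertex of its reflex chain or perpendicularly to $e$ (the convention above). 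All edges of $\pi(v,e)$ except $\ell(v,e)$ lie on walls, hence in $T_a$ or $T_b$. The main obstacle, and the place where care beyond a routine transcription is needed, is precisely handling the edge-apex of $Y(e)$: unlike a vertex funnel, $Y(e)$ can have a ``flat'' apex (the segment of $e$ between distinct terminals), the relevant tangency can be a right-angle incidence rather than a vertex tangency, and one must verify that the perpendicular-tangency convention makes the open/closed dichotomy and the bitangent argument go through exactly as in the vertex case; everything else is a direct adaptation of Hershberger--Suri's funnel analysis.
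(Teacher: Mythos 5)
Your proposal is correct in substance, but it takes a genuinely different route from the paper. You re-derive the Hershberger--Suri funnel structure directly for the vertex-to-edge case: in the closed case you exhibit $\pi(v,e)$ explicitly as a wall piece of $Y(v)$, a subpath of $\gamma$, and a wall piece of $Y(e)$, justified by local shortestness (each turn is a turn of one of the genuine shortest paths $\pi(a,v)$, $\gamma$, $\pi(b,e)$), and in the open case you invoke the standard bitangent argument for a ``generalized'' funnel whose apex may be a segment of $e$, with perpendicular incidence counting as tangency. The paper avoids redoing any funnel analysis: it examines only the last segment $s$ of $\pi(v,e)$ --- if the terminal $t(v,e)$ is a vertex of $P$, the vertex-to-vertex result applies verbatim; otherwise $s$ meets $e$ perpendicularly, and applying the vertex-to-vertex result to $v$ and the vertex $u$ at the start of $s$ finishes the proof ($u$ left of $\gamma$ makes $s$ an edge of $Y(e)$; $u$ right of $\gamma$ makes the pair open with $s$ itself the tangent edge). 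That short reduction disposes of exactly the point you flag as delicate (the flat apex of $Y(e)$ and right-angle tangency), which your writeup leaves as a to-be-verified adaptation of the bitangent argument, whereas your direct derivation buys a more explicit description of the path (the three-piece decomposition in the closed case). Two small slips: the middle portion of the closed-case path runs along $\gamma$ rather than strictly inside $Y(v)\cup Y(e)$ (harmless, since $\gamma\subseteq T_a\cap T_b$), and the perpendicular final segment is an edge of $T_a$ or $T_b$ simply because these trees contain shortest paths to all edges (Lemma~\ref{SPT_lemma}); the $0$-length augmentation of Section~\ref{sec:one-edge-Vor} concerns a different pair of trees and is not what is needed here.
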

\begin{proof}
Consider the terminal point $t(v,e)$ of the path $\pi(v,e)$.  If $t(v,e)$ is a vertex of $P$, then the previous results apply. 
Otherwise, 
\changed{let $s$ be the last segment of the path $\pi(v,e)$.  Segment $s$ meets $e$ at a right angle at point $t(v,e)$. Let $u$ be the vertex at the start of $s$. 
If $u$ is to the left of $\gamma$ then $s$ is an edge of $Y(e)$ and the result follows from the previous result for $v$ and $u$.
Otherwise, $u$ is to the right of $\gamma$, the pair of funnels is open, and $s$ is the tangent edge that crosses $\gamma$.}
\end{proof}

\begin{lem}\label{lemma:walls_can_be_constructed_fast}
Let $\gamma = \pi(a,b)$ be a geodesic path.  
After \reviewerchange{linear-time} preprocessing (to compute the trees 
$T_a$ and $T_b$
and preprocess them for answering lowest common ancestor queries in constant time), the shortest path $\pi(v,e)$ from any vertex $v$ to the 
right %
of $\gamma$ to any edge $e$ to the 
left %
of $\gamma$
can be computed in 
time proportional to the number of vertices in $\pi(v,e)$.
\end{lem}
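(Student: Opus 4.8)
The plan is to lean on Lemma~\ref{lem:separator-paths}, which already pins down the combinatorial shape of $\pi(v,e)$: in the closed case it is a subpath of $T_a$ followed by a subpath of $\gamma$ followed by a subpath of $T_b$ (up to swapping the roles of $a$ and $b$), and in the open case it is the same with one extra tangent edge $\ell(v,e)$ inserted. So all that remains is to \emph{locate} the relevant pieces in output-sensitive time. To the stated linear-time preprocessing I would add one ingredient: traverse $\gamma = \pi(a,b)$ once and give each of its vertices a \emph{rank} in the $a$-to-$b$ order. Since $\gamma$ is exactly the path from $a$ to $b$ in both $T_a$ and $T_b$, this also lets us walk any subpath of $\gamma$ via parent pointers.

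Given a query $(v,e)$, first compute $\gamma_a(v),\gamma_b(v),\gamma_a(e),\gamma_b(e)$, each with one LCA query: in $T_a$ take the LCA of $b$ with $v$ (resp.\ with the leaf that the augmented tree $T_a$ attaches to edge $e$), and symmetrically in $T_b$ with $a$. All four lie on $\gamma$, so we read off their ranks in $O(1)$ and compare them; this decides whether the funnel bases $[\gamma_a(v),\gamma_b(v)]$ and $[\gamma_a(e),\gamma_b(e)]$ are internally disjoint, i.e.\ whether the pair $Y(v),Y(e)$ is closed or open, and in either case which walls of the two funnels face each other. In the closed case, say with $e$'s funnel on the $a$-side of $v$'s, the edge version of Hershberger and Suri's decomposition gives $\pi(v,e)$ as the concatenation of $\pi(v,\gamma_a(v))$, the subpath of $\gamma$ from $\gamma_a(v)$ back toward $a$ to $\gamma_b(e)$, and $\pi(\gamma_b(e),e)$; we obtain the first piece by walking from $v$ rootward in $T_a$ until we hit $\gamma_a(v)$, the middle by walking parent pointers along $\gamma$, and the last by walking from $e$'s leaf rootward in $T_b$ until we hit $\gamma_b(e)$. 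Each step visits a vertex of $\pi(v,e)$, so the total cost is $O(|\pi(v,e)|)$; the mirror-image subcase is identical.

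In the open case, $\pi(v,e)$ leaves $v$, runs up part of $Y(v)$, crosses the tangent edge $\ell(v,e)$ (which crosses $\gamma$), and runs down part of $Y(e)$ to its terminal on $e$, exactly as in Lemma~\ref{lem:separator-paths}; the two relevant (facing) walls are identified from the ranks, and each is a convex chain reachable by walking rootward in $T_a$ or $T_b$ from $v$, resp.\ from $e$'s leaf. I would find $\ell(v,e)$ by the standard simultaneous ``walk to a common tangent'': advance a pointer outward from $v$ along its wall and one outward from $e$ along its wall, each step advancing on whichever side currently violates its local tangency test (on the $Y(e)$ side, ``tangent'' means perpendicular to $e$, as in the text), stopping when both local conditions hold. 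Every edge a pointer passes lies below the eventual tangent point on its wall and is therefore an edge of $\pi(v,e)$, so the walk halts within $O(|\pi(v,e)|)$ steps, and assembling the output (the two traversed wall-portions plus $\ell(v,e)$) costs the same.

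The main obstacle is exactly this open case: one must verify that the walk-to-tangent procedure is correct and, crucially, output-sensitive --- that it never advances past the tangent point, so its running time is bounded by the length of the path it produces rather than by the (possibly far longer) full funnel walls. This is essentially Hershberger and Suri's tangent-finding argument, adapted to the situation where one of the two funnels is an edge funnel. The remaining degeneracies --- a tangent endpoint coinciding with a funnel apex or with $v$ itself, the two paths meeting at a shared reflex vertex $t(v,e)$, or the terminal being the foot of a perpendicular on $e$ --- are routine to handle once the generic case is in place.
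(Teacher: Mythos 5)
Your closed case and your constant-time open/closed test (LCAs plus ranks along $\gamma$) match the paper's proof, and your concern about output-sensitivity of the tangent walk is the right one to have. The genuine gap is in the open case, in the sentence ``the two relevant (facing) walls are identified from the ranks.'' This is not justified, and in general it is false: the ranks of $\gamma_a(v),\gamma_b(v),\gamma_a(e),\gamma_b(e)$ only record how the two funnel mouths overlap on $\gamma$, but which wall of $Y(v)$ and which wall of $Y(e)$ the tangent $\ell(v,e)$ actually touches depends on the metric shape of the walls, not on this combinatorial pattern. For instance, take the two mouths to be (essentially) the same interval of $\gamma$; by placing the apex of $Y(v)$ near the left end and the apex of $Y(e)$ near the right end and letting the $b$-side wall of $Y(v)$ and the $a$-side wall of $Y(e)$ bulge into the middle of the window, the taut path threads between those two bulges, so the tangent joins the $b$-side wall of $Y(v)$ to the $a$-side wall of $Y(e)$; the mirror configuration, with identical ranks, forces the opposite pair. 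If your single two-pointer walk is launched on the wrong pair of reflex chains, it either fails or traverses vertices that are not on $\pi(v,e)$, and both correctness and the $O(|\pi(v,e)|)$ bound are lost.

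The paper avoids deciding the pair at all: it runs the tangent-finding walk on \emph{all four} candidate wall pairs in parallel (dovetailed, one step of each per round), so the successful search—whose traversed vertices do lie on $\pi(v,e)$—terminates within $O(|\pi(v,e)|)$ steps and the other three are stopped at that moment, giving the same bound up to a factor $4$ (this is the argument of Ahn et al., Lemma 3.5, cited in the proof). It also treats separately the case where $\ell(v,e)$ meets $e$ perpendicularly, by an analogous search between $e$ and each wall of $Y(v)$; you mention perpendicular tangency only as a local test and as a ``routine degeneracy,'' but it needs its own search branches for the same reason. So your proposal is repairable—replace the rank-based wall selection by the four-way (plus perpendicular) dovetailed search—but as written it contains a step that would fail.
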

\begin{proof}
Compute the least common ancestors $\gamma_a(v)$, $\gamma_b(v)$, $\gamma_a(e)$, and $\gamma_b(e)$ in constant time. \reviewerchange{(This can be done after linear-time preprocessing using the algorithm of Harel and Tarjan~\cite{harel1984fast})}.
Test if the pair of funnels $Y(v),Y(e)$ is closed or open in constant time using least common ancestor queries. 
If the pair is closed, the path $\pi(v,e)$ consists of subpaths that can be found in time linear in the number of vertices.

Otherwise, we must find the tangent edge $\ell(v,e)$ of the two funnels.
The edge $\ell(v,e)$ 
may meet $e$ at right angles, which is a special case we deal with later.
Note that it suffices to search between the apexes of the funnels---to ease notation, we will just suppose that that those apexes are $v$ and $e$ themselves.
Then $\ell(v,e)$ is tangent to two reflex curves where one is a wall of the funnel $Y(v)$ and one is a wall of the funnel $Y(e)$.  
There are four possible choices for the two reflex curves, and for each choice, we are essentially finding the common tangent of two disjoint convex polygons, a very well-solved problem (see~\cite{abrahamsen2018common} for some history).  A simple search that walks from the two apexes along the chosen paths towards $\gamma$ will find the tangent in time proportional to the number of vertices traversed---and those vertices are part of the output path.  Doing this in parallel over the four choices, we can find $\ell(v,e)$ and $\pi(v,e)$ in time linear in the number of vertices of $\pi(v,e)$.
(This is the same argument as given by Ahn et al.~\cite[Lemma 3.5]{linear_time_geodesic}.)
Finally, to address the possibility that
$\ell(v,e)$ meets $e$ at right angles, we can perform a similar search between $e$ and each of the walls of $v$'s funnel.
\end{proof}

\subsubsection{Constant Number of Separators}
\label{sec:constant-separators}

We now turn to Suri's property (2)---finding a constant number of separators.

\begin{lem}
\label{lem:constant_splits}
\label{lem:constant-separator-set}
There is a set of at  most five farthest edge separators such that 
every  point $p  \in   \partial P$ 
(and consequently, every edge of $P$) lies to the 
\reviewerchange{right}
of at least one  of the separators. 
Furthermore, such  a set of separators can be found in linear time.
\end{lem}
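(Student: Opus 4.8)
\textbf{Proof proposal for Lemma~\ref{lem:constant-separator-set}.}
The plan is to adapt Suri's argument for constructing a constant number of vertex separators to the case of farthest edges, using the Ordering Property~\ref{prop:ordering-property} and Claim~\ref{claim:across-separator}. First I would fix a starting point $p_0 \in \partial P$ and one of its farthest edges $F(p_0)$, and let $\gamma_0 = \pi(p_0, u)$ where $u$ is an endpoint of $F(p_0)$; a small perturbation/choice of endpoint makes $\gamma_0$ a geodesic between two vertices. I claim $\gamma_0$ is (essentially) a farthest edge separator: as $p$ moves clockwise from $p_0$, by the Ordering Property $F(p)$ also moves clockwise, so points on one side of $\gamma_0$ have their farthest edges on the other side. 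More carefully, I would define the separator so that it splits $\partial P$ into an arc $A$ and its complement, with the property that every point of $A$ has all its farthest edges outside $A$; establishing exactly this ``all farthest edges on the other side'' property (matching the strong definition of separator used in the paper) is where I would need to be careful, invoking Property~\ref{prop:anti-parallel} to rule out the bad ordering, exactly as in Claim~\ref{claim:across-separator}.

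Next I would iterate: starting from the separator $\gamma_0$, which has some vertex $a_0$ as one endpoint, follow the boundary clockwise and repeatedly ``jump'' to the next separator whose right side picks up the portion of the boundary not yet covered. Concretely, given the current separator $\gamma_i = \pi(a_i,b_i)$, I would look at a point just clockwise of $a_i$ on $\partial P$, take one of its farthest edges $g$, and form $\gamma_{i+1} = \pi(a_i', v)$ for an appropriate vertex $v$ on $g$; the key monotonicity claim is that the right side of $\gamma_{i+1}$ covers a fresh boundary arc of positive ``length'' in the combinatorial sense (it contains at least one vertex not to the right of any previous separator). Because each step covers a contiguous arc and the farthest-edge map is monotone (Ordering Property), after a constant number of steps the right sides of $\gamma_0, \gamma_1, \ldots$ together cover all of $\partial P$. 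I expect the bound to come out to five by a counting argument essentially identical to Suri's: each separator ``uses up'' a boundary arc, the arcs overlap in a controlled way, and because the correspondence $p \mapsto F(p)$ wraps around $\partial P$ exactly once as $p$ wraps around once, at most five arcs suffice; I would not belabor the exact constant, only that it is at most five.

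For the running time, each separator is a shortest path between two vertices, which can be extracted from a shortest path tree in linear time; since there are only a constant number of them, and we only need one fixed farthest edge per ``seed'' point (computable via the linear-time farthest-edge-from-each-vertex routine already established in Phase~I), the whole construction is linear. I would also note that to identify the seed points and verify coverage we can walk along $\partial P$ maintaining the current farthest edge, again in linear time.

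The main obstacle, I expect, is the first step: proving that the constructed geodesic genuinely has the \emph{strong} separator property that \emph{every} point on the right side has \emph{all} its farthest edges on the left side, rather than merely the weaker property for the seed point $p_0$. This requires combining the Ordering Property with Property~\ref{prop:anti-parallel} (no $p,q,F(q),F(p)$ ordering) and handling the isolated boundary points with two tied farthest edges and the reflex-vertex tie-breaking cases, much as in the proof of Lemma~\ref{lem:ordering-properties}. The iteration and counting that follow are then routine adaptations of Suri's vertex argument.
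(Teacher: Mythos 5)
Your opening claim---that the geodesic from a seed point $p_0$ to an endpoint of $F(p_0)$ is ``essentially'' a farthest edge separator---is false, and the problem is not tie-breaking or the isolated two-farthest-edge points. Property~\ref{prop:anti-parallel} only excludes the clockwise ordering $p_0, q, F(q), F(p_0)$; it does not exclude $p_0, F(q), q, F(p_0)$, which is the perfectly legal configuration of Figure~\ref{fig:correct_orderings}(b). So a point $q$ on the right side of your $\gamma_0$ may have its farthest edge on the \emph{same} side, in the arc between $p_0$ and $q$. The Ordering Property does not rescue this: as $q$ sweeps along the right side, $F(q)$ sweeps clockwise through a full turn, so it can (and in general does) enter the right side behind $q$. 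In the paper a geodesic is certified to be a separator only when it is sandwiched between \emph{two crossing} (anti-parallel) farthest-edge paths, one anchored near each endpoint of the candidate separator, so that either possible placement of $F(q)$ on the wrong side forces a forbidden ordering with one of the two paths; a single point-to-farthest-edge path never suffices, which is exactly why the paper needs its Case~2.

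The second half of your plan---iterate around the boundary, let each new separator pick up a ``fresh'' arc, and conclude ``at most five'' by a counting argument like Suri's---is not an argument: monotonicity of $p \mapsto F(p)$ gives no lower bound on how much new boundary one step covers, so nothing bounds the number of iterations by a constant, and Suri's proof is not a greedy sweep in the first place. The paper instead follows a short explicit chain: a vertex $u$, its farthest edge $e=F(u)$ with endpoints $e^-,e^+$, then $F(e^-)$ and $F(e^+)$; if both paths $\pi(e^\pm,F(e^\pm))$ cross $\pi(u,e)$, three separators suffice, and otherwise one more hop $p=t(u,F(e^-))$, $g=F(p)$ is taken, yielding five separators (including the degenerate ones $\pi(e^+,e^-)$ and $\pi(f^+,f^-)$ that cover the edges $e$ and $f$ themselves, a detail your proposal also omits). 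Crucially, one remaining configuration (the paper's Case~2b, where $g$ lands between $u$ and $f^-$) cannot be excluded by ordering properties at all; the paper rules it out with a length-exchange argument summing three triangle inequalities along $\pi(u,f)$, $\pi(e^-,g)$, $\pi(p,e)$ to contradict strict farthest-distance maximality. The crossing-pair certification, the extra hop, and this impossibility argument are the substance of the proof, and none of them appears in your proposal.
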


This lemma 
is extremely important because it reduces farthest edge problems to a constant number of ``bipartite'' cases where the source vertices are separated from the target edges.
Lemma~\ref{lem:constant-separator-set} will be used in 
Appendix~\ref{appendix:Hershberger-Suri}
to find farthest edges from all vertices.  It will also be used in
Appendix~\ref{appendix:hourglasses}
to find 
hourglasses in $P$
and in 
Appendix~\ref{appendix:coarse-cover}
to construct the coarse cover of $P$.

\begin{proof}[Proof of Lemma~\ref{lem:constant_splits}]
We first note the 
consequence that for every polygon edge $(x,y)$, one of the five separators has both $x$ and $y$ to its 
\reviewerchange{right.}
This is because separator
endpoints are
vertices so a farthest edge separator for the midpoint of edge $(x,y)$ must have $x$ and $y$ to its 
\reviewerchange{right}.
Thus it suffices to prove that there are five farthest edge separators such that every point $p \in \partial P$ lies to the 
\reviewerchange{right}
of at least one separator.

The plan in Suri's proof for the case of farthest vertices, was to follow a chain $v_1, v_2, v_3,v_4$ where $v_{i+1}$ is the farthest vertex from $v_i$, and argue that $\pi(v_3,v_4)$ crosses $\pi(v_1,v_2)$, and that this provides three separators, namely the %
three paths.  Our plan is similar, but a bit trickier because our paths go from a vertex/point to a farthest edge, so we must then choose a point in the edge to continue the chain.

Take an arbitrary vertex $u$ and find its farthest edge $F(u)$. Note that $F(u)$ is unique by Lemma~\ref{lem:1D-Voronoi-edges}.
This can be achieved in linear time by constructing the shortest path tree
$T_u$
(Lemma~\ref{SPT_lemma}) and finding a leaf furthest from $u$ in this tree.
Suppose $F(u)$ is the  edge $e$ with endpoints $e^-,  e^+$ in  clockwise order.
Find the farthest edges $F(e^-)$ and  $ F(e^+)$ in linear time.

\textbf{Case 1.} First, we suppose that the geodesics $\pi(e^-, F(e^-))$ and $\pi(e^+,F(e^+))$ both cross $\pi(u,e)$. See Figure~\ref{fig:possible_case_2a}.  We claim that 
the geodesics 
$\gamma_1 = \pi(u,e^+)$,  and $\gamma_2 = \pi(e^-,u)$
are farthest edge separators.
To  prove this,  consider a point $p$ to  the 
right
of $\gamma_1$, i.e., a point in the clockwise chain $C$ from  $e^+$ to $u$, and suppose that $p$ has a farthest edge  $F(p)$ on the same chain.  Note that $F(p) \ne e$,  since $e$ is not part of the  chain.   If $F(p)$ occurs before $p$ along the chain $C$, then $p,u,F(u),F(p)$ occur in that clockwise order and violate Property~\ref{prop:anti-parallel}. 
Otherwise $F(p)$  occurs after $p$ along the chain  $C$ in which case $e^+,p,F(p),F(e^+)$  occur in that clockwise order and violate Property~\ref{prop:anti-parallel}.
A symmetric argument shows that 
$\gamma_2$ is a farthest edge separator.

The two geodesics $\gamma_1$ and $\gamma_2$ separate all  points  of $\delta P$ from their farthest edges except the  points  of edge $e$.  We separate those points by  adding  one more  geodesic $\pi(e^+,e^-)$.  Note that this kind of degenerate %
separator is allowed by the definition, and 
is a farthest edge separator 
since every point in  the edge $e$ has its farthest edge outside $e$.

This gives a set of three farthest edge separators.  Note that they can be found in linear time.

\textbf{Case 2.} 
Otherwise at least 
one of the geodesics $\pi(e^-, F(e^-))$ and $\pi(e^+,F(e^+))$ does not cross $\pi(u,e)$.  
We will consider the case when 
the geodesic $\pi(e^-,F(e^-))$ does not cross $\pi(u,e)$---the other case is symmetric.
Suppose $F(e^-)$ is the edge $f = (f^-,f^+)$ in clockwise order. 
Find the shortest path from $u$ to edge $f$, and let
point $p:=t(u,f)$ %
be the terminal of that path.  Find the farthest edge $g := F(p)$ and suppose $g = (g^-, g^+)$ in clockwise order.
We claim that $g$
cannot lie in the 
clockwise chain from $f^+$ to $e^-$. 
Suppose it does.
Then $g \ne e$, which implies that $p \ne u$ (since $u$ has the unique farthest edge $e$). But then $u,p, F(p), F(u)$ violate Property~\ref{prop:anti-parallel}.
Therefore, the edge $g$ lies either: (a) in the clockwise chain
from $e^-$ to $u$, in which case we find separators; or 
(b) in the clockwise chain from $u$ to $f^-$, which we prove is impossible. We consider the two cases (a) and (b).

\textbf{Case 2a.} The edge $g = F(p)$ lies in the clockwise chain from $e^-$ to $u$. 
The situation is depicted in Figure~\ref{fig:possible_case_2a}.
We claim that the geodesics $\gamma_1 =  \pi(u,e^+)$, $\gamma_2 =  \pi(f^-,g^+)$ and $\gamma_3 = (e^-,f^+)$ are 
farthest edge separators. Note that $\gamma_2$ is redundant if $f^- = u$, and  $\gamma_1$ is redundant if $g=e$. 
To prove that $\gamma_1$ is a farthest edge separator, 
note that because of the ``anti-parallel'' pair $\pi(u,e)$ and $\pi(e^-,f)$, no point $p \in \partial P$ to the 
right %
of $\gamma_1$ has a farthest edge to the 
right %
of $\gamma_1$ (otherwise the path from $p$ to  such  a farthest edge 
must go in the same direction as one of 
$\pi(u,e)$ and $\pi(e^-,f)$, thus violating 
Property~\ref{prop:anti-parallel}).
Similarly, $\gamma_2$ is a farthest edge separator because of the anti-parallel pair $\pi(p,g)$ and $\pi(e^-,f)$, and $\gamma_3$  is a farthest edge separator because of the same anti-parallel pair. 

The three geodesics $\gamma_1, \gamma_2$ and $\gamma_3$ separate all points of $\partial P$
from their farthest edges except the points of edges $e$ and $f$.   We can separate those points by adding the geodesics $\pi(e^+,e^-)$ and $\pi(f^+,f^-)$.
This gives a set of five farthest edge separators.  Note that they can be found in linear time.

\begin{figure}[tb]
\centering
\includegraphics[width=0.9\textwidth]{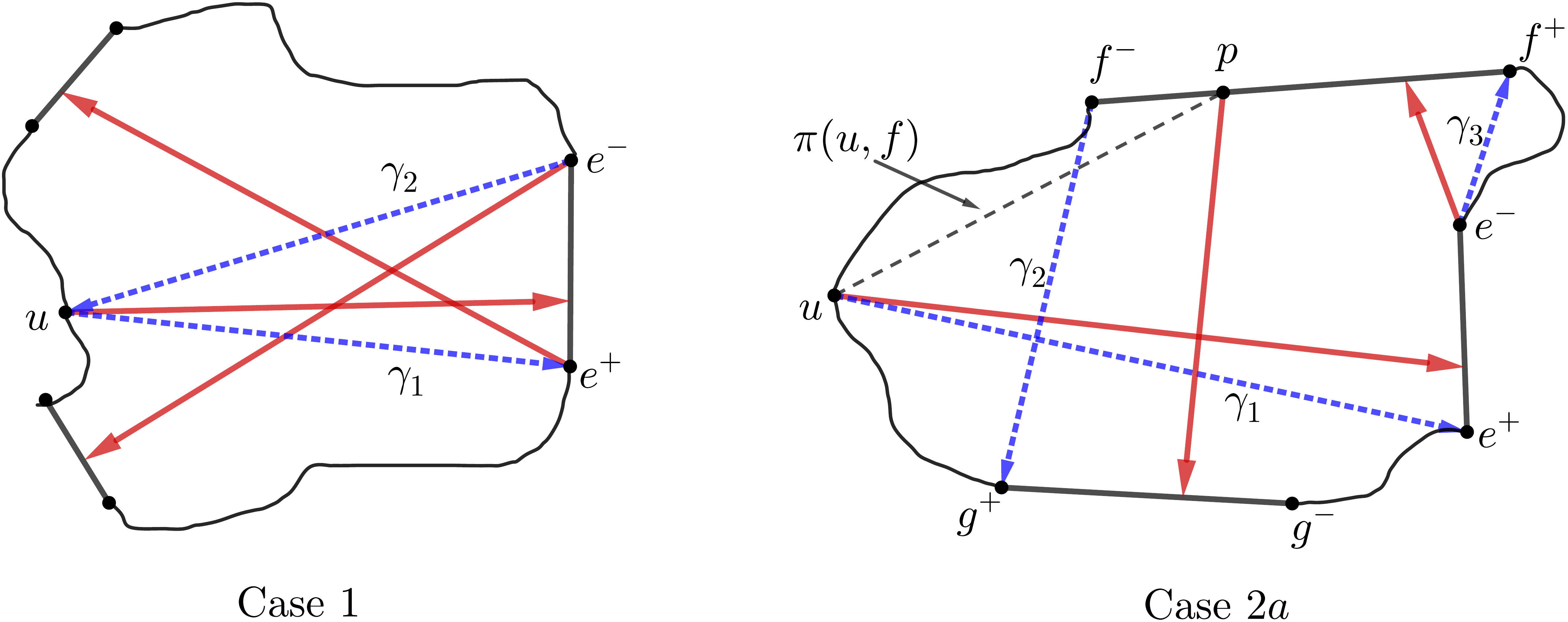}
\caption{
Case 1 and Case 2a from the proof of Lemma~\ref{lem:constant_splits}, showing (schematically) the paths from points to farthest edges (in red) and the separators (in blue).
}
\label{fig:possible_case_2a}
\end{figure}

\textbf{Case 2b.} The edge $g = F(p)$ lies in the clockwise chain from $u$ to $f^-$. 
See Figure~\ref{fig:impossible_case_2b}.
We will prove that this case cannot occur. 
To show this, 
consider the geodesic paths 
$\sigma_1 := \pi(u,f) = \pi(u,p)$, 
$\sigma_2 :=  \pi(e^-,g)$ and $\sigma_3 := \pi(p,e)$.
Note that because $e$ is the unique farthest edge from $u$, $d(u,e) > |\sigma_1|$.
Similarly, $d(e^-, f) > |\sigma_2|$, and $d(p,g) \ge |\sigma_3|$ ($p$ need not have a unique farthest edge).
Adding these together, %
we obtain
\begin{align}
d(u,e) +d(e^-,f) + d(p,g) > |\sigma_1| + |\sigma_2| + |\sigma_3|.
\label{eqn:path-lengths}
\end{align}

\begin{figure}
\centering
\includegraphics[width=0.55\textwidth]{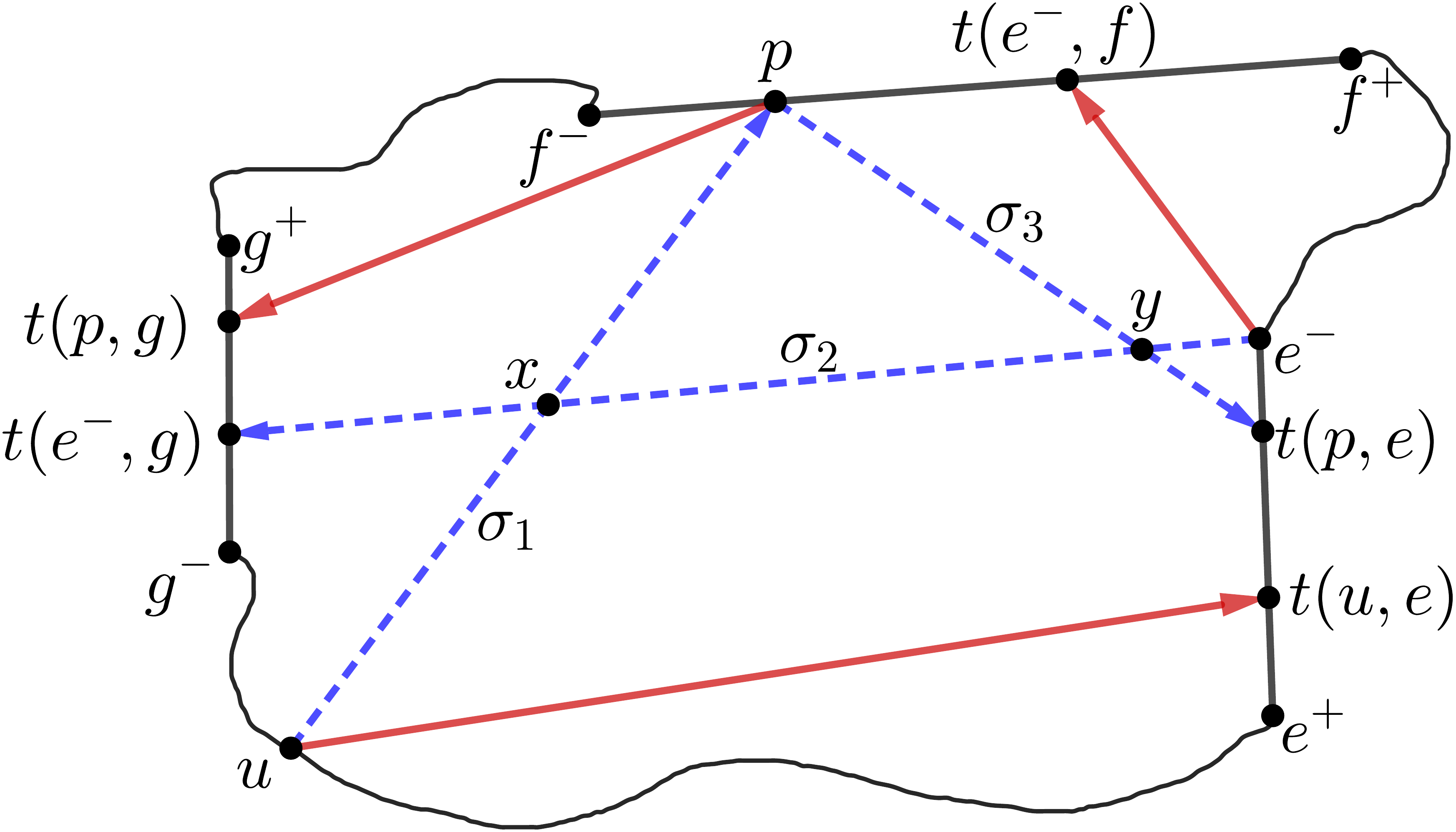}
\caption{
Case 2b from the proof of  Lemma~\ref{lem:constant_splits}, showing (schematically) the paths from points to farthest edges (in red) and the shortest paths $\sigma_i$ (in blue).
}
\label{fig:impossible_case_2b}
\end{figure}

Recall that for vertex $v$ and edge $h$,  $t(v,h)$ is the terminal point of the path $\pi(v,h)$.
Observe that in clockwise order $p \le t(e^-,f)$ on edge $f$, and $t(e^-,g) \le t(p,g)$ on edge $g$.
Let $x$ be the intersection point of $\sigma_1$ and $\sigma_2$ (possibly at one of their endpoints).  Let $y$ be the intersection point of $\sigma_2$ and $\sigma_3$ (possibly at one of their endpoints). Observe that along $\sigma_2$, $y$ precedes (or is equal to $x$).
See Figure~\ref{fig:impossible_case_2b}.
We get the following inequality from the definition of a terminal and the triangle inequality:
\begin{align*}
  d(u,e) & = d(u,t(u,e)) \leq d(u,t(p,e)) \\
   &\leq d(u,x) + d(x,y) + d(y,t(p,e)) 
\end{align*}
\noindent Reasoning as above, we also get the following two inequalities:
\begin{align*}
& d(e^-,f) \leq d(e^-,y) + d(y,p)\\
& d(p,g) \leq d(p,x) + d(x,t(e^-,g))
\end{align*}
\noindent Adding the three inequalities 
and noting that we have used each subpath of each $\sigma_i, i=1,2,3$ exactly once, we obtain:
$$d(u,e) +d(e^-,f) + d(p,g) \leq |\sigma_1| + |\sigma_2| + |\sigma_3|.$$
which contradicts Equation~\ref{eqn:path-lengths}.
\end{proof}

\subsection{Details for Section~\ref{section:chord_oracle}, Chord Oracles and Coarse Covers}
\label{appendix:oracles}

The algorithms to find the relative center on a chord and to find the center of a polygon depend on a crucial convexity property.
Define the \defn{geodesic radius function}, $r(x)$, for $x \in P$
to be the maximum geodesic distance from $x$ to a site (a vertex or edge). 
Thus the center is the point $x$ that minimizes $r(x)$.
A function is \defn{geodesically convex} on $P$ if the function is convex on every geodesic path in $P$.
The following result was proved for vertex sites by Pollack et al.~\cite{pollack_sharir} and for edge sites by Lubiw and Naredla~\cite{lubiw2021visibility}.

\begin{lem}
\label{lem:geodesically-convex}
The geodesic radius function $r(x)$ is geodesically convex.
\end{lem}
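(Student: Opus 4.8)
The plan is to reduce the statement to the claim that, for a single site $s$ (a vertex or an edge of $P$), the function $x \mapsto d(x,s)$ is geodesically convex; then $r(x) = \max_s d(x,s)$ is geodesically convex because the maximum of geodesically convex functions is again geodesically convex (restricting to any geodesic $\pi$ in $P$, each $d(x,s)|_\pi$ is convex, hence so is their pointwise maximum). So the real work is the single-site case, and the key observation is that it suffices to prove convexity along an arbitrary geodesic path $\pi$ in $P$; by a standard subdivision argument we may assume $\pi$ is a single segment contained in a triangle of a shortest-path decomposition with respect to $s$, i.e.\ a region in which the combinatorial structure of $\pi(x,s)$ is fixed.

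Inside such a region, the distance $d(x,s)$ has one of two closed forms, exactly as recorded in Definition~\ref{defn:coarse-cover}: either $d(x,s) = d_2(x,v) + \kappa$, the Euclidean distance from $x$ to the last ``turn'' vertex $v$ of the path plus a constant accounting for the fixed tail from $v$ to $s$; or, when the path reaches $s$ directly and perpendicularly with no intervening reflex vertex (the case where $s$ is an edge seen ``head on''), $d(x,s) = d_2(x,\bar s)$, the Euclidean distance from $x$ to the supporting line $\bar s$ of the edge. In the first case, $x \mapsto d_2(x,v)$ is a convex function of $x$ in the plane, hence convex when restricted to the segment $\pi$, and adding the constant $\kappa$ preserves convexity. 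In the second case, $x \mapsto d_2(x,\bar s) = |\langle x - p_0, n\rangle|$ for a unit normal $n$ and any point $p_0$ on $\bar s$, which is a convex function of $x$; again it stays convex restricted to $\pi$. So $d(x,s)$ is convex on each piece of the subdivided geodesic.

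Finally I would glue the pieces: $d(x,s)$ is continuous on all of $\pi$ and convex on each subinterval of a finite subdivision, and at each breakpoint the left and right derivatives satisfy the correct inequality because the breakpoint corresponds to the shortest path changing (and one can only gain length by being forced through an extra reflex vertex), so the slopes are nondecreasing across the whole of $\pi$; hence $d(x,s)$ is convex on $\pi$. Taking the maximum over all sites $s$ then gives geodesic convexity of $r$.

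The step I expect to be the main obstacle is the gluing argument at the breakpoints of the subdivision of $\pi$: one must argue that the one-sided slopes of $d(\cdot,s)$ match up monotonically where the combinatorial type of $\pi(x,s)$ changes. The cleanest way is to invoke the fact that $d(x,s)$ is the length of the \emph{shortest} path, so it is the pointwise minimum, over all combinatorial path types, of the (convex, smooth) candidate-length functions, and a shortest path is locally shortest; this forces the slope to be continuous and nondecreasing across breakpoints. (This is exactly the kind of standard shortest-path-in-a-polygon convexity fact, and the paper attributes the vertex-site version to Pollack et al.~\cite{pollack_sharir} and the edge-site version to Lubiw and Naredla~\cite{lubiw2021visibility}; the argument above is the reason it holds.)
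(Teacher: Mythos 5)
You should note at the outset that the paper does not actually prove Lemma~\ref{lem:geodesically-convex}: it quotes it from Pollack, Sharir and Rote~\cite{pollack_sharir} (vertex sites) and Lubiw and Naredla~\cite{lubiw2021visibility} (edge sites). So your attempt has to stand on its own, and as written it has two genuine gaps. The first is the gluing principle. The pointwise minimum of convex candidate functions is in general not convex, and at a transversal crossing of two candidates the minimum kinks in the \emph{concave} direction (the slope drops), which is the opposite of what you need; so ``$d(\cdot,s)$ is the min over combinatorial path types, and shortest paths are locally shortest'' does not deliver slope monotonicity. What actually saves the argument at a shortest-path-map breakpoint is a tangency, not a minimization principle: the boundary between two adjacent cells is a window collinear with the last edge of the path to the old apex (or, in the edge-site case, the perpendicular ray at an endpoint or blocking reflex vertex), so the two cell functions $d_2(\cdot,a_1)+\kappa_1$ and $d_2(\cdot,a_2)+\kappa_2$ have \emph{equal} gradients along the window, and $d(\cdot,s)$ is $C^1$ across cell boundaries in the interior of $P$. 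That is the fact you must state and prove for your gluing to work on a straight segment.

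The second, and larger, gap is that geodesic convexity means convexity, as a function of arc length, along every geodesic of $P$, and geodesics bend at reflex vertices. Your subdivision only produces breakpoints where the combinatorial type of $\pi(x,s)$ changes along a straight piece; at a bend vertex $v$ of the geodesic the direction of travel changes, so the one-sided derivatives of $t \mapsto d(\gamma(t),s)$ jump even when $d(\cdot,s)$ is perfectly smooth near $v$, and you must show the jump is upward. This is where the real content of the lemma lies: for a naive ``same apex on both sides of the bend'' configuration the desired slope inequality can fail for an arbitrary apex direction, and what rules out the bad configurations is precisely that the reflex wedge forcing the bend also blocks visibility of such an apex from the far side of the bend (so the cell, and hence the slope formula, changes there in your favour). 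You would need to carry out this case analysis, or, as in the cited proofs and the standard CAT(0)-comparison argument, replace the local slope bookkeeping by a global step: for $x_1,x_2$ on the geodesic and the intermediate point at the corresponding convex combination of arc lengths, construct a path to the site of length at most the convex combination of $d(x_1,s)$ and $d(x_2,s)$ by interpolating between the two shortest paths inside the simply connected region they bound. Without one of these ingredients, ``convex on each piece plus glue at breakpoints'' does not get you across the bends of the geodesic.
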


For our extensions of the chord oracle in the following subsection, we need some details of the
$O(n)$ time chord oracle algorithms of Pollack et al.~\cite[Section 3]{pollack_sharir} for the vertex center and of Lubiw and Naredla~\cite[Section 4.1]{lubiw2021visibility} for the edge center. 

\medskip
\noindent
\ \ \textbf{Chord Oracle}

\smallskip
\noindent
\ \ \ \ {\bf Input:} a chord $K$ of polygon $P$ on $n$ vertices.\\
\ \ \ \ {\bf Output:} whether the center of $P$ lies left/right/on $K$.
\vskip -1in
\setlength{\leftmargini}{.5in}
\begin{enumerate}
\squeezelist
\item Find a coarse cover of $K$.
\item Find
the point on $K$ that minimizes the upper envelope of the coarse cover functions---this is the relative center $c_K$. 
\item Examine the 
maximum values of the coarse cover functions
at $c_K$ to determine
whether the center of $P$ lies 
left/right/on $K$.
\end{enumerate}
\setlength{\leftmargini}{\parindent}

The details of these steps (none of which is trivial) can be found in~\cite{pollack_sharir,lubiw2021visibility}.
Step 1 runs in time $O(n)$ and produces a coarse cover $\cal T$ of size $O(n)$.

Step 2 runs in time $O(|{\cal T}|)$ using divide-and-conquer to reduce the search space to a subinterval of $K$ while eliminating elements of the coarse cover.  It uses a basic test of whether the relative center lies to the left or right of a point on $K$.  The correctness of this test depends on 
convexity of the upper envelope function on $K$ (Lemma~\ref{lem:geodesically-convex}), and on the fact that 
the coarse cover captures the first segments of paths to the farthest sites.  
If the first segments pull the test point in opposite directions on $K$, then the point is locally optimum and therefore is the relative center; and otherwise, we know which direction the test point should move.

Step 3 
similarly relies on Lemma~\ref{lem:geodesically-convex} and uses the first segments of paths from $c_K$ to its farthest sites.  From those segments, we can detect if $c_K$ is locally---and hence globally---optimal, and otherwise decide which side of $K$ to move to.
Step 3 takes time $O(|{\cal T}|)$.

\subsubsection{Extensions of the Chord Oracle}
\label{appendix:geodesic-oracle}

In this section we give two extensions of the chord oracle that we use in 
Phase II.  
The divide and conquer algorithm in Phase II recurses on subpolygons that are \newchanged{simple $3$-anchor hulls.} 
The first extension of the chord oracle is
a version that works on a chord of a subpolygon (a \newchanged{simple $3$-anchor hull}) and uses the coarse cover of the subpolygon to get a coarse cover of the chord.
The runtime will be linear in the size of the subpolygon's coarse cover, which decreases during the divide-and-conquer algorithm.
The second extension is 
a generalization of the chord oracle to a \emph{geodesic oracle}.
\newchanged{A geodesic path divides a polygon into regions, and the geodesic oracle tells us which region contains the center.}
We need this because 
our subpolygons are bounded by geodesics.

Let $Q$ be a 
\newchanged{simple $3$-anchor hull}
in $P$.
By Observation~\ref{obs:3-geo-cell}, $Q$ is geodesically convex in $P$
so the intersection of $Q$ with a chord [geodesic] of $P$ is a chord [geodesic] of $Q$.

We say that a subset ${\cal T}$ of the coarse cover of $P$ is a \defn{coarse cover of $Q$ in $P$} if condition 3 of the definition of a chord cover (Definition~\ref{defn:coarse-cover}) holds for all points in 
the interior of
$Q$, i.e., for any point $x$ in the interior of $Q$
and any edge $e$ of $P$ that is farthest from $x$, there is a triple $(R,f,e)$ in the coarse cover with $x \in R$.
\newchanged{(In particular, we get a coarse cover of $Q$ by  taking all the coarse cover elements whose triangles intersect the interior of $Q$.)}

Recall that each triangle $T$ of the coarse cover of $P$ is bounded by a segment of an edge of $P$ and two chords, and we store the endpoints of the chords on $\partial P$.  
As we recurse on subpolygons $Q$ we will maintain the endpoints of these chords on $\partial Q$.

\begin{lem}
\label{lem:generalized-chord-oracle} 
For the geodesic \reviewerchange{edge center} problem, 
there is an algorithm that
takes as input 
a 
\newchanged{simple $3$-anchor hull}
$Q$ known to contain the center of $P$ in its interior,
a coarse cover $\cal T$ of $Q$ in $P$,
and a chord $K$ of $Q$, 
and decides whether the center lies left/right/on $K$. The runtime is 
$O(|{\cal T}|)$. 
\end{lem}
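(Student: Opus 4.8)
The plan is to reduce to the linear-time chord oracle of Pollack et al.~\cite{pollack_sharir} and Lubiw and Naredla~\cite{lubiw2021visibility}, whose three-step structure is recalled above: build a coarse cover of the chord, find the relative center $c_K$ that minimizes the upper envelope of the coarse cover functions, and then read off from the first segments of the geodesics from $c_K$ to its farthest edges which side of the chord contains the center. Steps 2 and 3 run in time linear in the number of coarse cover elements and depend only on the geodesic convexity of $r$ (Lemma~\ref{lem:geodesically-convex}) together with the defining property of a coarse cover, so I would invoke them as a black box. The only thing requiring new work is Step 1: we need a coarse cover of $K$ of size $O(|{\cal T}|)$, computed in $O(|{\cal T}|)$ time, so that the total running time depends on $|{\cal T}|$ rather than on $n$.

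First I would build a coarse cover of $K$ directly from $\cal T$. For each triple $(T,f,e)\in{\cal T}$, where $T$ is a triangle of the coarse cover of $P$, compute $R := T\cap K$ in $O(1)$ time; if it is nonempty it is a subinterval of $K$, and I output the triple $(R,\,f|_R,\,e)$. Conditions 1 and 2 of Definition~\ref{defn:coarse-cover} are immediate: the associated function is literally unchanged (only its domain shrinks), so it is still of one of the two required forms and still convex. Condition 3 is inherited: for a point $x$ in the relative interior of $K$---which lies in the interior of $Q$---and any edge $e$ farthest from $x$, the hypothesis that $\cal T$ is a coarse cover of $Q$ in $P$ gives a triple $(T,f,e)\in{\cal T}$ with $x\in T$, hence $x\in T\cap K=R$. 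The construction examines each element of $\cal T$ once and produces at most $|{\cal T}|$ elements, so Step 1 takes $O(|{\cal T}|)$ time, and Steps 2 and 3 then run in $O(|{\cal T}|)$ time as well, producing the relative center $c_K$ and the side on which the center of $P$ lies.

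The delicate points, which I expect to be the main obstacle, are the boundary cases. Since $Q$ is geodesically convex in $P$ (Observation~\ref{obs:3-geo-cell}), the chord $K$ of $Q$ equals $\hat K\cap Q$ for a unique chord $\hat K$ of $P$, and ``left/right/on $K$'' is to be read as ``left/right/on $\hat K$''; because the center of $P$ lies in the interior of $Q$, it is strictly on one side of $\hat K$ unless it lies on $\hat K$. When $c_K$ falls in the relative interior of $K$ it is also the relative center of $\hat K$, and the Step 3 decision rule applies verbatim. When $c_K$ falls at an endpoint of $K$ (a point of $\partial Q$), a slightly more careful argument is needed---as is already the case in the original chord oracle when the relative center of a chord lands at its endpoint---using the geodesic convexity of $r$ along the geodesic inside $Q$ from that endpoint toward the true center, together with the fact that the center lies in $\mathrm{int}(Q)$, to conclude that the first segments of the paths from $c_K$ to its farthest edges (all captured by the derived coarse cover) point to the side of $\hat K$ containing the center. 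Carrying out this case analysis and confirming that the ``on $K$'' outcome is detected correctly is where the real care lies; everything else is a direct reduction to the known oracle with the coarse cover of $P$ restricted to $K$.
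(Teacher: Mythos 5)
Your proposal is correct and follows essentially the same route as the paper's own proof: intersect each coarse cover triangle with $K$ in constant time to obtain a coarse cover of $K$ of size $O(|{\cal T}|)$, then run Steps 2 and 3 of the existing chord oracle in time linear in that coarse cover. The boundary-case concerns you raise (the relative center landing at an endpoint of $K$, and reading ``left/right/on'' for a chord of $Q$ rather than of $P$) are reasonable points of care, but the paper's proof handles them no more explicitly than you do, simply invoking the known oracle machinery.
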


\begin{proof}
We first construct a coarse cover ${\cal T}_K$ of $K$. 
For each triple $(T,f,e)$
in ${\cal T}$, let $T_K$ be the intersection of triangle $T$ with $K$.
Note that $T_K$ is a subsegment of $K$ and can be found in constant time from the boundary chords of $T$.
Add the triple $(T_K,f,e)$ to ${\cal T}_K$.  The resulting set ${\cal T}_K$ is a coarse cover of $K$ of size at most $O(|{\cal T}|)$.

Next, we follow Steps 2 and 3 of the 
chord oracle algorithm---Step 2 finds the relative center on $K$, and Step 3 decides whether the center lies to the left or right (or on) $K$.
As noted above,
each step takes time $O(|{\cal T}_K|)$.  
\end{proof}

Next we generalize Lemma~\ref{lem:generalized-chord-oracle} to a geodesic path.

\begin{lem}
\label{lem:geodesic-oracle} For the geodesic \reviewerchange{edge center} problem, 
there is an algorithm that
takes as input 
a 
\newchanged{simple $3$-anchor hull} %
$Q$ known to contain the center of $P$
in its interior, a coarse cover $\cal T$ of $Q$ in $P$,
and a geodesic $\gamma = \pi(a,b)$ in $Q$ with $a,b \in \partial Q$, 
\newchanged{and finds which subregion formed by $\gamma$ contains the center, or if the center lies on the geodesic.}
The runtime is 
$O(|Q| + |{\cal T}|)$.
\end{lem}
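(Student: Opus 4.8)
The plan is to imitate the three-step chord oracle of Lemma~\ref{lem:generalized-chord-oracle}, working over the polygonal path $\gamma$ in place of a chord. The enabling observation is that, since $Q$ is geodesically convex in $P$ (Observation~\ref{obs:3-geo-cell}), the path $\gamma=\pi(a,b)$ inside $Q$ is also a geodesic of $P$; hence, by Lemma~\ref{lem:geodesically-convex}, the geodesic radius function $r$ restricted to $\gamma$ and parametrized by arc length is convex, so $\gamma$ has a unique \emph{relative center} $c_\gamma$, the point of $\gamma$ minimizing $r$, and the center of $P$ lies on $\gamma$ iff it equals $c_\gamma$.

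Step~1 builds a coarse cover of $\gamma$ from $\cal T$. Each triangle $T$ of $\cal T$ is convex, so it meets each of the at most $|Q|$ segments of $\gamma$ in a single subsegment; and since $\gamma$ is a shortest path it crosses each of the two chords bounding $T$ at most twice, so $T$ meets $\gamma$ in only a constant number of maximal pieces in all. Intersecting the triangles of $\cal T$ with $\gamma$ (traversing $\gamma$ and the triangles it meets) therefore yields, in $O(|Q|+|{\cal T}|)$ time, a coarse cover of $\gamma$ of size $O(|Q|+|{\cal T}|)$; its upper envelope is $r|_\gamma$ by condition~(3) of Definition~\ref{defn:coarse-cover}. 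Step~2 finds $c_\gamma$ by the same divide-and-conquer as Step~2 of the chord oracle, now over the one-dimensional domain $\gamma$: repeatedly pick a point of $\gamma$ and use the coarse-cover functions active there to decide, by convexity of $r|_\gamma$ and the fact that the coarse cover records the first segments of the paths to farthest edges, whether $\min r|_\gamma$ lies before or after it; then recurse on that side, discarding coarse-cover elements whose intervals lie wholly on the other side. This runs in time linear in the size of the coarse cover of $\gamma$.

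Step~3 decides the side. Compute the constantly many edges $e_1,\dots,e_k$ farthest from $c_\gamma$ and the directions $u_1,\dots,u_k$ of the first segments of the paths $\pi(c_\gamma,e_i)$ (for an edge reached perpendicularly, $u_i$ points from $c_\gamma$ perpendicularly toward that edge); the subdifferential $\partial r(c_\gamma)$ is the convex hull of $\{-u_1,\dots,-u_k\}$. If $0$ lies in the convex hull of $u_1,\dots,u_k$ then $0\in\partial r(c_\gamma)$, so the convex function $r$ is minimized at $c_\gamma$ and $c_\gamma$ is the edge center. Otherwise the $u_i$ lie in an open halfplane through $0$; any normal $n$ with $u_i\cdot n>0$ for all $i$ is a descent direction of $r$ at $c_\gamma$, and since $c_\gamma$ minimizes $r$ along $\gamma$ this $n$ is not tangent to $\gamma$ at $c_\gamma$, so it points strictly into one of the two regions into which $\gamma$ locally separates $Q$ near $c_\gamma$, and that region contains the edge center. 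Identifying the region, and all of Step~3, takes $O(1)$ time; with Steps~1 and~2 the total time is $O(|Q|+|{\cal T}|)$.

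I expect the main obstacle to be the geometry of Step~3 when $c_\gamma$ is a bend vertex of $\gamma$ --- necessarily a reflex vertex of $Q$. There a neighbourhood of $c_\gamma$ in $Q$ is a reflex wedge rather than a half-disc, and $\gamma$ arrives and departs along two segments whose supporting rays subdivide this wedge; one must show, using the tautness of $\gamma$ at the reflex vertex together with the convexity of $r$ on $Q$, that the open cone of descent directions of $r$ at $c_\gamma$ lies inside a single one of the two global regions cut out by $\gamma$, and correctly match it with the left or right region. A secondary technical point is the bound asserted in Step~1 --- that a coarse-cover triangle meets the geodesic $\gamma$ in only constantly many maximal pieces, and that building the coarse cover of $\gamma$ from that of $Q$ (in particular, the point location along $\gamma$) fits within the stated linear time bound.
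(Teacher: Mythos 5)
Your route (compute the relative center $c_\gamma$ of the whole geodesic, then do a local side decision at $c_\gamma$) is not the paper's, and the two difficulties you flag at the end are genuine gaps, not routine details. The side decision of your Step~3 is only justified when $c_\gamma$ lies in the relative interior of a straight segment of $\gamma$. If $c_\gamma$ is a bend vertex $v$ of $\gamma$, then $v$ is a reflex vertex of $P$ lying on $\partial Q$, and two things break. First, the linearization ``directional derivative of $d(\cdot,e_i)$ equals $-u_i\cdot d$'' is false for directions $d$ blocked by the reflex wedge at $v$: for such $d$ one has $d(v+\epsilon d,e_i)=\epsilon+d(v,e_i)+o(\epsilon)$, which is increasing, so the cone $\{n : u_i\cdot n>0 \text{ for all } i\}$ is not the descent cone of $r$ at $c_\gamma$. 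Second, near $v$ the path $\gamma$ together with $\partial Q$ can bound up to three local regions (the sector of angle greater than $\pi$ between the two branches of $\gamma$, plus up to two slivers between the branches and the polygon edges at $v$), so an arbitrary $n$ with $u_i\cdot n>0$ may point out of $Q$ entirely or into a local region other than the one containing the center; your argument that the descent direction selects ``one of the two regions'' does not go through here. The same issue already affects your Step~2, since the relative-center machinery you import is proved for straight chords and its left/right test must be re-argued at bend vertices using one-sided derivatives along two different segments. A further gap is the time bound of Step~1: intersecting each triangle with each of the up to $|Q|$ segments of $\gamma$ naively costs $O(|Q|\cdot|{\cal T}|)$; one needs the device of Lemma~\ref{lem:intersection-with-geodesic}, namely that a chord crossing a segment $s$ of $\gamma$ must have an endpoint in the subchain of $\partial Q$ associated with $s$, so that a single pass over the chords suffices.

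The paper's proof is designed precisely to avoid the analysis at reflex vertices, and you could adopt it wholesale: every straight segment $s$ of $\gamma$ is a chord of $Q$ (its endpoints, including the bend vertices, lie on $\partial Q$), so one runs the chord oracle of Lemma~\ref{lem:generalized-chord-oracle} on each segment separately, with coarse cover ${\cal T}_s$ consisting of the elements whose triangles meet the interior of $s$. Since a coarse-cover triangle is convex and $\gamma$ is a shortest path, each triangle meets the interior of at most one segment (otherwise the straight shortcut inside the triangle, which lies in $P$, would contradict uniqueness of shortest paths), so the total work over all segments is $O(|{\cal T}|)$, on top of the $O(|Q|+|{\cal T}|)$ intersection step via Lemma~\ref{lem:intersection-with-geodesic}. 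Knowing left/right/on for every segment then identifies the region of $Q$ cut off by $\gamma$ that contains the center (or detects that the center lies on $\gamma$), with no relative center of $\gamma$ and no descent-cone argument at a reflex vertex ever needed.
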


The idea is similar to that of  Lemma~\ref{lem:generalized-chord-oracle}.  We must first describe how to intersect the triangles of the coarse cover of $Q$ with the geodesic $\gamma$.
Since coarse cover triangles are bounded by chords, we can 
use the following result.

\begin{lem}
\label{lem:intersection-with-geodesic}
There is an algorithm that takes as input a \newchanged{simple} subpolygon $Q$, a geodesic $\gamma = \pi(a,b)$ in $Q$ with $a,b \in \partial Q$, and a set of chords $\cal K$ of $Q$, and finds the intersections of the chords of $\cal K$ with $\gamma$.  Each chord of $\cal K$ is given by its endpoints together with the identity of the edge of $Q$ containing the endpoint.
The runtime is $O(|Q| +  |{\cal K}|)$.
\end{lem}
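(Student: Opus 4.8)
The plan is to exploit the fact that a geodesic path $\gamma = \pi(a,b)$ in a simple polygon $Q$, together with the boundary $\partial Q$, divides $Q$ into two subpolygons $Q_1$ and $Q_2$, and that a chord $K \in {\cal K}$ crosses $\gamma$ if and only if its two endpoints lie in (the closures of) $Q_1$ and $Q_2$ on opposite sides of $\gamma$; if both endpoints lie on the same side, $K$ does not meet the interior of $\gamma$. First I would compute $\gamma$ itself (linear time, as it is a shortest path between two boundary points) and with it the bipartition of the vertices and edges of $\partial Q$ into the ``$Q_1$-side'' and the ``$Q_2$-side''; the boundary is split into two arcs by $a$ and $b$, so this classification is immediate once we walk $\partial Q$ once. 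Since each chord $K$ is given together with the identity of the edge of $Q$ containing each of its endpoints, I can read off in $O(1)$ per chord which side each endpoint lies on, and thereby discard in total time $O(|{\cal K}|)$ all chords whose two endpoints are on the same side — these contribute no intersection point.

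For each remaining chord $K$, which genuinely crosses from one side to the other, I need the \emph{single} point where it meets $\gamma$. Here the key structural fact is that $\gamma$ is a convex chain as seen from either side (a shortest path bends only at reflex vertices of $Q$, and all its bends point consistently toward one side), so $Q_1$ (say) is bounded by a reflex chain $\gamma$ and a piece of $\partial Q$. A straight segment $K$ entering $Q_1$ through $\partial Q$ and leaving through $\gamma$ crosses $\gamma$ exactly once. To locate that crossing without a per-chord logarithmic search, I would process the crossing chords in the cyclic order of their $\partial Q$-endpoints along one of the two boundary arcs; by the Ordering-Property-style monotonicity of where consecutive such chords hit $\gamma$ (two crossing chords of a convex region hit the reflex chain $\gamma$ in the same relative order as their boundary endpoints, up to the usual crossing case), the hit points march monotonically along $\gamma$. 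Thus a single synchronized walk — one pointer advancing along $\gamma$, one along the sorted list of crossing chords — finds all intersection points in total time $O(|Q| + |{\cal K}|)$, amortizing the traversal of $\gamma$ over all chords.

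The one subtlety I would need to pin down carefully is the sorted order of the $\partial Q$-endpoints of the crossing chords: I must obtain it in $O(|Q| + |{\cal K}|)$ without a comparison sort. This is fine because the endpoints lie on $\partial Q$ and each comes annotated with the containing edge of $Q$; bucketing the chords by their containing edge (linear time) and then, within each edge, sorting the at-most-a-few endpoints along that straight edge, gives the cyclic order along $\partial Q$ in linear total time — and two chord endpoints on the \emph{same} edge of $Q$ are rare enough (or can be ordered directly by coordinate along the edge) that this costs only $O(|{\cal K}|)$ overall. I expect the main obstacle to be the bookkeeping of the monotonicity argument: handling the case where two crossing chords themselves intersect inside $Q_1$ before reaching $\gamma$, so that their hit points on $\gamma$ occur in the opposite order from their boundary endpoints. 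This is the same phenomenon as crossing shortest paths and is handled exactly as in the proof of the Ordering Property (Lemma~\ref{lem:new_order} / Lemma~\ref{lem:ordering-properties}): a single transposition in the order, which a merge-style walk absorbs with a constant-factor slowdown. Everything else is routine geometry.
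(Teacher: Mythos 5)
Your high-level reduction (classify the two boundary arcs determined by $a$ and $b$, keep only chords with endpoints on opposite arcs) is fine, but the core of your runtime argument rests on two claims that do not hold. First, $\gamma=\pi(a,b)$ is \emph{not} in general a chain that bends consistently toward one side: a geodesic between two boundary points may turn left at some reflex vertices and right at others, weaving so that its interior vertices lie on both boundary arcs, and the region on one side of it is pinched into several faces rather than being bounded by a reflex chain. (The fact you actually need---that a straight chord meets $\gamma$ in a single connected set---is true, but for a different reason: if a chord contained two points of $\gamma$, then by uniqueness of shortest paths the subpath of $\gamma$ between them would coincide with the straight subsegment of the chord.) Second, and more seriously, the monotonicity your merge walk relies on is neither proved nor true as stated: the order of the crossing points along $\gamma$ can be an essentially arbitrary permutation of the order of the chords' endpoints along the chosen boundary arc---already a family of pairwise-crossing chords produces a full reversal, not ``a single transposition''---and when $\gamma$ has many vertices the synchronized walk can be forced to sweep back and forth over $\Omega(|Q|)$ vertices of $\gamma$ per chord, so $O(|Q|+|{\cal K}|)$ does not follow. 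The preprocessing step is also shaky: bucketing endpoints by the containing edge of $Q$ gives the cyclic order only between buckets; arbitrarily many chords of ${\cal K}$ may have endpoints on a single edge of $Q$, and ordering them along that edge is a comparison sort, i.e.\ $\Theta(|{\cal K}|\log|{\cal K}|)$ in the worst case. (A smaller point: a chord with both endpoints on one arc can still touch $\gamma$, e.g.\ at a reflex vertex of $\gamma$, without crossing it, so discarding such chords is only safe if ``intersection'' is taken to mean proper crossing.)

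The paper's proof avoids all of this with one localization observation that your proposal is missing. Each segment $s=uv$ of $\gamma$ is itself a chord of $Q$ (its endpoints are $a$, $b$, or reflex vertices of $Q$), so the $g$ segments of $\gamma$ split $\partial Q$ into $g+1$ subchains; directing $\gamma$ from $a$ to $b$, one associates with each segment $s=uv$ the subchain $c_s$ ending at $v$, and the key fact is that any chord of ${\cal K}$ crossing $s$ must have an endpoint in $c_s$. Consequently, after a single $O(|Q|)$ traversal of $\partial Q$ labelling every edge of $Q$ with its subchain, each chord is handled in $O(1)$ time: look up the subchains containing its two endpoints (the containing edge of $Q$ is given in the input) and test the chord against the at most two associated segments of $\gamma$. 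No sorting of endpoints, no ordering of crossing points along $\gamma$, and no monotonicity argument is needed. To rescue your outline you would have to prove a localization statement of exactly this kind; the global sort-and-merge route, as written, does not give the claimed linear bound.
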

\begin{proof}
Suppose the 
geodesic $\gamma$ has $g$ segments.  Then it divides the boundary of $Q$ into $g+1$ subchains, and we can traverse $\partial Q$ once to identify, for each edge of $Q$, which subchain contains it.

Direct $\gamma$ from $a$ to $b$.  This also directs the subchains of $Q$.  Identify each segment $s = uv$ of $\gamma$ with the subchain $c_s$ that ends at $v$. 
The subchain $c_s$ is unique except for the last segment incident to 
vertex $b$ where two subchains end---use one of the two subchains and ignore the other one. 

Observe that if chord $K \in {\cal K}$ crosses segment $s$, then $K$ has an endpoint in $c_s$.
Thus 
we can iterate through the chords $K$, finding which subchain contains each endpoint, and testing whether the associated segment of $\gamma$ intersects $K$.

The total time is $O(|Q| + |{\cal K}|)$.
\end{proof}

With Lemma~\ref{lem:intersection-with-geodesic} in hand, we can prove Lemma~\ref{lem:geodesic-oracle}.

\begin{proof}[Proof of Lemma~\ref{lem:geodesic-oracle}]
For each triangle of the coarse cover ${\cal T}$, the endpoints of its defining chords on $\partial Q$ are known.
Denoting the set of these defining chords by ${\cal K}$, we apply Lemma~\ref{lem:intersection-with-geodesic} to determine the intersections of the chords in ${\cal K}$ with $\gamma$. 
This takes $O(|Q| + |{\cal K}|)$ time, or equivalently, $O(|{Q}| + |{\cal T}| )$ time.

From the chord intersections, we can determine the intersections of the triangles of the coarse cover $\cal T$ with the segments of $\gamma$.
Each segment $s$ of $\gamma$ is a chord of $Q$.
Let ${\cal T}_s$ be the coarse cover elements whose triangles intersect the interior of $s$.
Each intersection is an interval of $s$ and the set of these intersections gives a coarse cover of $s$ of size $O(|{\cal T}_s|)$.
The chord oracle of Lemma~\ref{lem:generalized-chord-oracle} then determines whether the edge center lies left/right/on the segment $s$
in time 
$O(|{\cal T}_s|)$. 
Running the algorithm for \textit{all} the segments of $\gamma$ will take $O(|{\cal T}|)$ time in total
because each triangle of $\cal T$ intersects the interior of at most one segment of $\gamma$.
If the center lies on one of the segments, then it lies on $\gamma$.
Otherwise, 
since the segments  partition $Q$ into disjoint regions, knowing which side of each segment  contains the center tells us the region that contains the center.

The algorithm takes $O(|{Q}| + |{\cal T}|)$ time.
\end{proof}

\section{Extra Material for  Section~\ref{section:Phase-I}, Phase I
}

\subsection{Finding the Farthest Edge from each Vertex}
\label{appendix:Hershberger-Suri}

Phase I is to find the farthest edge Voronoi diagram restricted to the polygon boundary.  In this section we give the first step of Phase I:

\begin{thm}
There is a \reviewerchange{linear-time algorithm} to find the farthest edge from each vertex of a simple polygon.
\end{thm}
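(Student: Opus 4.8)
The plan is to adapt the matrix-searching algorithm of Hershberger and Suri~\cite{hershberger1997matrix} (which finds the farthest \emph{vertex} from each vertex in linear time) to the case of farthest \emph{edges}, using the machinery of farthest edge separators developed in Lemma~\ref{lem:constant-separator-set}. First I would invoke Lemma~\ref{lem:constant-separator-set} to obtain a set of at most five farthest edge separators $\gamma_1, \ldots, \gamma_5$ in linear time, so that every vertex $v$ lies to the right of some $\gamma_i$ with its farthest edge $F(v)$ to the left. This reduces the problem to the following ``bipartite'' subproblem, which we must solve for each separator: given a separator $\gamma = \pi(a,b)$, and given that the candidate vertices lie to its right while the candidate edges lie to its left, find for each right-side vertex $v$ an edge $e$ to the left maximizing $d(v,e)$ (restricted to left-side edges); the true farthest edge of $v$ is then the maximum over the (at most five) separators for which $v$ is on the right.

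The heart of the matter is solving one such bipartite instance in linear time. Following Hershberger and Suri, I would first precompute the shortest path trees $T_a$ and $T_b$ (Lemma~\ref{SPT_lemma}), augment them with the zero-length perpendicular edges so that every polygon edge is a leaf, and preprocess for constant-time lowest common ancestor queries via Harel--Tarjan~\cite{harel1984fast}. By Lemma~\ref{lem:separator-paths}, for a right-side vertex $v$ and left-side edge $e$, the path $\pi(v,e)$ lies in the funnels $Y(v)$ and $Y(e)$, and is either a concatenation of subpaths of the two walls (closed case) or two wall pieces joined by a tangent edge $\ell(v,e)$ crossing $\gamma$ (open case); in either case $d(v,e)$ is a simple expression in the apexes, the lowest common ancestors $\gamma_a(\cdot), \gamma_b(\cdot)$, and (in the open case) the tangent. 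Order the right-side vertices $v_1, \ldots, v_p$ and the left-side edges $e_1, \ldots, e_q$ as they appear along $\partial P$, and form the matrix $A$ with $A[i][j] = d(v_i, e_j)$ for left-side edges and $-\infty$ otherwise. The key structural fact is that $A$ is totally monotone: this is exactly the content of Lemma~\ref{lem:new_order} / Corollary~\ref{cor:ordering-property-paths} applied to the ordering $v_i, v_{i'}, e_j, e_{j'}$ along $\partial P$ --- if $d(v_i, e_{j'}) > d(v_i, e_j)$ then $d(v_{i'}, e_{j'}) > d(v_{i'}, e_j)$. The SMAWK-style matrix searching algorithm then finds the row maxima of $A$ with only $O(p+q)$ entry evaluations.

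The main obstacle --- and the reason one cannot just cite Hershberger--Suri as a black box --- is that each matrix entry $d(v_i, e_j)$ must be evaluated in amortized constant time during the matrix search, even though naively computing $\pi(v_i,e_j)$ (e.g.\ finding the common tangent of the two funnel walls as in Lemma~\ref{lemma:walls_can_be_constructed_fast}) costs time proportional to the path length. Hershberger and Suri handle this by observing that the matrix search accesses entries in a structured order (the recursion only ever compares ``staircase'' positions), which lets them amortize the tangent searches: consecutive queried entries share funnel structure, so a pointer walking along each wall advances monotonically over the whole search, paying $O(p+q)$ total. I would re-derive this amortization for the edge case, checking that the tangent-finding search of Lemma~\ref{lemma:walls_can_be_constructed_fast} (including the special sub-case where $\ell(v,e)$ meets $e$ perpendicularly) still has the monotone-pointer behaviour; the funnel walls $\pi(a,v)$, $\pi(b,v)$, $\pi(a,e)$, $\pi(b,e)$ are all subpaths of $T_a$ or $T_b$, so the walks stay inside structures we have already built. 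Summing over the at most five separators, and over the constantly many tangent-curve choices per open funnel pair, everything remains linear, and taking the per-vertex maximum over the separators yields $F(v)$ for every vertex $v$. Full details are routine once the total monotonicity and the amortization are in place, so I would relegate the careful bookkeeping to the appendix and present the reduction to bipartite matrix searching as the conceptual core.
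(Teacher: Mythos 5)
Your proposal is correct and follows essentially the same route as the paper: reduce via the five farthest-edge separators of Lemma~\ref{lem:constant-separator-set} to bipartite instances, prove total monotonicity of the implicit distance matrix from Corollary~\ref{cor:ordering-property-paths}, and run SMAWK with Hershberger--Suri's funnel machinery (extended to edge funnels and perpendicular tangents) supplying amortized constant-time entry evaluation. One small caveat: the amortization in \cite{hershberger1997matrix} is not merely a monotone pointer walk along the walls but relies on their supernode representation of the shortest path trees rebuilt at each SMAWK phase (plus the funnel-difference lemma, which the paper checks still holds for edge funnels); like the paper, you defer to that machinery, so this is a matter of description rather than a gap.
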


Hershberger and Suri~\cite{hershberger1997matrix} gave a \reviewerchange{linear-time algorithm} to find the farthest \emph{vertex} from each vertex in linear time.
We show that their algorithm extends to finding the farthest \emph{edge} from each vertex in linear time.
Hershberger and Suri build upon an algorithm called SMAWK due to Aggarwal et al.~\cite{aggarwal1987geometric} that finds row maxima in a totally monotone matrix in linear time. 
The SMAWK algorithm immediately solves the problem of finding the farthest vertex from each vertex in a convex polygon in linear time, but Hershberger and Suri need substantial new ideas to extend to general simple polgons. In order to extend Hershberger and Suri's algorithm to find the farthest edge from each vertex, we must examine their algorithm in more detail. 

We structure this section as follows:
\begin{enumerate}
\squeezelist
\item Use separators to reduce the farthest vertex/edge problem to a problem of finding all row maxima in a totally monotone matrix. 
The matrix is given implicitly---each entry in the matrix represents the distance from one vertex to a vertex/edge, and this distance is computed only when needed.
\item An overview of the SMAWK algorithm to find row maxima in a totally monotone matrix.
Together with item 1, this solves the problem of finding the farthest vertex from each vertex in a \emph{convex} polygon, because then each matrix entry (the distance between two vertices) can be computed in constant time.
\item An overview of the Hershberger-Suri algorithm that solves the problem of finding the farthest vertex from each vertex in a general simple polygon.  To do this, they show how to compute each matrix entry needed in the SMAWK algorithm in constant amortized time. 
\item The modifications required for finding the farthest \emph{edge} from each vertex.
\end{enumerate}

\paragraph*{Reducing farthest vertices/edges to row maxima in a matrix.}  We use
the notion of separators from Appendix~\ref{appendix:separators}.
Suri~\cite{suri1989} proved that there are a constant number of separators that separate every vertex from its farthest vertex.  In Lemma~\ref{lem:constant-separator-set} we extended this result to the farthest \emph{edge} from each vertex.
Thus, in either case, to find the farthest vertex/edge from each vertex it suffices to solve the following problem:  given a separator $\pi(a,b)$, find, for each vertex to the right of the separator, the farthest vertex/edge that lies  to the left of the separator.  

Consider a \defn{distance matrix} $M$ with rows indexed by the vertices to the right of the separator in counterclockwise
order and columns indexed by either the vertices or the edges to the left of the separator in counterclockwise order, and with $M(v,s)$ defined to be the geodesic distance from vertex $v$ to vertex/edge $s$.  Then we seek the maximum in each row of the matrix.  

A matrix $M$ is \defn{totally monotone} if for any $2 \times 2$ submatrix 
$\begin{bmatrix}
a & b \\
c & d
\end{bmatrix}$, 
if $b>a$ then $d > c$.

Hershberger and Suri prove that the distance matrix for farthest vertices is totally monotone.  We prove the analogous result for farthest edges.

\begin{claim}
The distance matrix $M$ for farthest edges as described above is totally monotone.
\end{claim}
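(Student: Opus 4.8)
The plan is to verify the total monotonicity condition directly, by taking an arbitrary $2\times 2$ submatrix indexed by two vertices and two edges on the appropriate sides of the separator, and showing that the defining implication holds. The key tool will be the ``triangle inequality'' Lemma~\ref{lem:new_order}, exactly as Hershberger and Suri used it in the farthest-vertex case; the only new wrinkle is that the columns now correspond to polygon edges rather than vertices, but Lemma~\ref{lem:new_order} is already stated for edges, so the argument goes through with essentially no change.

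\textbf{Setup of the submatrix.} Let $\pi(a,b)$ be the separator. Rows are indexed by vertices to the right of the separator in counterclockwise order, columns by edges to the left in counterclockwise order. Take two rows $p, q$ and two columns $e, f$ with $p$ before $q$ in the row ordering and $e$ before $f$ in the column ordering. The $2\times 2$ submatrix is
\[
\begin{bmatrix}
M(p,e) & M(p,f)\\
M(q,e) & M(q,f)
\end{bmatrix}
=
\begin{bmatrix}
d(p,e) & d(p,f)\\
d(q,e) & d(q,f)
\end{bmatrix}.
\]
I must show: if $d(p,f) > d(p,e)$ then $d(q,f) > d(q,e)$.

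\textbf{Establishing the boundary order.} The crux is to check that $p, q, e, f$ (or its reverse) occurs in this cyclic order along $\partial P$, so that Lemma~\ref{lem:new_order}---and more conveniently its Corollary~\ref{cor:ordering-property-paths}---applies. Here is where the separator structure and the chosen orderings of rows and columns matter: the rows $p,q$ all lie on the right side of $\pi(a,b)$ listed counterclockwise, and the columns $e,f$ all lie on the left side listed counterclockwise. Walking counterclockwise around $\partial P$ one traverses first the right-side chain (encountering $p$ before $q$), then crosses to the left-side chain (encountering $e$ before $f$); hence the cyclic order of the four objects is $p, q, e, f$. (One should note that ``$e$ before $f$ among the left-side edges, counterclockwise'' is precisely what makes the four appear consecutively in this cyclic order; this is the reason Hershberger and Suri index rows and columns the way they do, and I would spell this out explicitly.) Then Corollary~\ref{cor:ordering-property-paths} with this order gives exactly: $d(p,f) > d(p,e) \implies d(q,f) > d(q,e)$, which is total monotonicity.

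\textbf{Main obstacle.} The genuinely nontrivial point is not the inequality---that is handed to us by Lemma~\ref{lem:new_order}, which was already proven for edges---but rather being careful that the combinatorial order of $p,q,e,f$ on $\partial P$ is the one the lemma requires, given the specific (counterclockwise) conventions for indexing the rows and columns of $M$ and given that the separator is a geodesic path that may share chords with shortest paths rather than being a single chord. I would therefore devote the bulk of the write-up to a clean statement of the orderings and a short argument that traversing $\partial P$ counterclockwise visits $p$, then $q$, then $e$, then $f$; once that is pinned down the rest is a one-line appeal to Corollary~\ref{cor:ordering-property-paths}. A minor additional remark: degenerate cases where two of the four objects coincide or where an edge endpoint equals a row vertex do not arise because rows are to the strict right and columns to the strict left of the separator, so the four are genuinely distinct and the cyclic order is unambiguous.
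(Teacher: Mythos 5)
Your proposal is correct and follows essentially the same argument as the paper: it verifies total monotonicity on an arbitrary $2\times 2$ submatrix by noting that the counterclockwise row/column orderings together with the separator force the cyclic order of the two vertices and two edges on $\partial P$, and then applies Corollary~\ref{cor:ordering-property-paths} (derived from Lemma~\ref{lem:new_order}). The paper states the ordering step more tersely, but the content and the key lemma are identical.
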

\begin{proof}
Consider a $2 \times 2$ submatrix with rows indexed by vertices $u,v$ and columns indexed by edges $e,f$:
\[
\begin{blockarray}{ccc}
    & e & f \\
\begin{block}{c[cc]}
  u & a & b \\
  v & c & d \\
\end{block}
\end{blockarray}
 \]
Because the row order and column order are counterclockwise, and because $u,v$ are to the right of the separator and $e,f$ are to the left of the separator, $u,v,e,f$ occur in counterclockwise order around the polygon.
By Corollary~\ref{cor:ordering-property-paths}, if $d(u,f) > d(u,e)$ then $d(v,f) > d(v,e)$, i.e., if $b>a$ then $d>c$.
\end{proof}

Thus the problem of finding the farthest vertex/edge from each vertex is reduced in linear time to the problem of finding row maxima in an totally monotone distance matrix (where we must take into account the time required to access matrix entries).

\paragraph*{The SMAWK algorithm to find row maxima in a totally monotone matrix.}
Let $M$ be an $n \times m$ totally monotone matrix.
Break ties for the maximum value in a row by choosing the leftmost maximum. 
The positions of these row maxima progress rightward and downward---more precisely, if the maximum in row $i$ occurs in column $k$, then the maximum in row $j>i$ occurs in column $l \ge k$. 
The SMAWK algorithm~\cite{aggarwal1987geometric} finds the (leftmost) maximum in each row as follows:
\begin{enumerate}
\squeezelist
\item Delete columns (without eliminating any row maxima) to reduce to an $n \times m'$ matrix $M'$, where $m' \le n$.
This is accomplished by a routine called REDUCE that accesses $2m-n$ matrix entries. 

\item Let $M''$ consist of the even numbered rows of $M'$.  Recursively find the  row maxima in $M''$.  This gives us the row maxima for all even-numbered rows of $M'$.

\item Fill in the row maxima for the odd numbered rows of $M'$.  Observe that the column of the maximum in row number $2i+1$ occurs between the columns of the maxima in row numbers $2i$ and $2i+2$, which means that this step accesses $n + m'$ matrix entries, where the next access is below or to the right of the current one.
\end{enumerate}

Aggarwal et al.~\cite{aggarwal1987geometric} prove that the SMAWK algorithm runs in time $O(n+m)$ assuming that matrix entries can be accessed and compared in constant time.
The number of recursive calls (``phases'') is $O(\log n)$. An important property is that in step 1 and step 3 
each successive matrix entry access is to the right, or up, or down from the current one---in particular there are no left moves.
This is stated as Equation (2.3) by Hershberger and Suri~\cite{hershberger1997matrix}.

\paragraph*{The Hershberger-Suri Algorithm and Its Extension to Farthest Edges.}
As noted above, the SMAWK algorithm gives a \reviewerchange{linear-time algorithm} to find 
the farthest vertex from each vertex in a convex polygon, because in that case each entry in the distance matrix can be computed in constant time.
However, for a general simple polygon, each matrix access involves finding the distance between two vertices $v$ and $u$ on opposite sides of the separator.  
Hershberger and Suri show that this can be done in $O(1)$ amortized time per matrix access.  Their algorithm relies on the order of matrix accesses in the SMAWK algorithm as mentioned above, and on the properties of shortest paths that cross the separator $\gamma(a,b)$, as discussed in Appendix~\ref{sec:funnels-paths}.   We use the terminology and notation from Appendix~\ref{sec:funnels-paths}.
The shortest path $\pi(v,u)$ consists of edges of the funnels $Y(v)$ and $Y(u)$, with one additional tangent edge $\ell(u,v)$ in case the pair of funnels is open. 
The shortest path trees $T_a$ and $T_b$ can be preprocessed in linear time to allow constant time queries for least common ancestors, and for lengths of paths to $a$ or $b$.  Then the length of $\pi(u,v)$ can be found in constant time if the pair of funnels $Y(v),Y(u)$ is closed.
The same applies to our case of the shortest path from  vertex $v$ to edge $e$---for example, in Figure~\ref{fig:funnels-and-tangents}, the funnels $Y(v_2)$ and $Y(e)$ are closed and $d(v_2,e) = (d(a,v_2) - d(a,\gamma_a(v_2)) + (d(b,e) - d(b,\gamma_a(v_2))$.

When a pair of funnels is open the only hard part is finding their tangent edge. 
Given the tangent edge, the length of the path can be found in constant time.
For example, in Figure~\ref{fig:funnels-and-tangents}, the funnels $Y(v_1)$ and $Y(e)$ are open with tangent edge $\ell(v_1,e) = (x,y)$ of length $d_2(x,y)$ so  $d(v_1,e) = (d(b,v_1) - d(b,x)) + (d(a,e) - d(a,y)) + d_2(x,y)$.
Hershberger and Suri give a data structure to find the tangent between a pair of open funnels 
in constant amortized time
by
storing and  maintaining the 
walls of the funnels $Y(v)$ and $Y(u)$ during each phase of the algorithm as $v$ moves counterclockwise and $u$ moves in either direction.
In fact, it suffices to maintain the parts of the walls from the apex of the funnel to $\gamma$.

Binary search along the walls can be used to find the tangent edge
but this is too inefficient for a linear-time algorithm.
Therefore, a more complex data structure that modifies the shortest path trees ($T_a$ and $T_b$) at each phase is used.
Paths in the trees are broken into subpaths, and each subpath is represented by a \emph{supernode} that supports fast searching. 
Supernodes are stored as binary trees with the original polygon vertices at their leaves, and internal nodes representing the edge joining the subtrees below.
Any path of the shortest path tree in the $k$-th phase is a list of supernodes connected by superedges, such that every supernode has at most $2^k$ vertices of the original polygon.
Finally, Hershberger and Suri provide a method for obtaining the supernode representation for the trees before the $k$-th phase in time $O(k n / 2^k)$.
This takes $O(n)$ time for all the $O(\log n)$ phases and also ensures that tangents between open funnels in the $k$-th phase can be determined efficiently.
The maintenance of the supernode representation between phases is quite involved and we do not describe more details here.

The data structure permits them to find the tangent edge $\ell(v,u)$ and to update the funnels, in $O(1)$ amortized time per operation. 
Their algorithm and its  analysis depend on a lemma about the difference between two funnels. 
For two sites (vertices/edges) $s_i$ and $s_j$ on the same side of the separator $\gamma(a,b)$, the \defn{funnel-difference} is the set of edges in $Y(s_i)$ that do not occur in $Y(s_j)$. 
We observe that their result about funnel differences~\cite[Lemma 3.3]{hershberger1997matrix} extends to our situation and is crucial for the amortized analysis.

\begin{lem}
\label{lemma:funnel_difference}
The funnel difference of $s_i,s_j$ forms a path that includes the apex of $Y(s_i)$, and 
is edge-disjoint from $Y(s_k)$, for any vertex/edge $s_k$ 
that appears in the  order $s_i,s_j,s_k$ on 
the same side of the separator.
\end{lem}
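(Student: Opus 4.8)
The plan is to follow the proof of Hershberger and Suri's Lemma~3.3~\cite{hershberger1997matrix}, with edge sites handled via the $0$-length augmentation of the shortest path trees $T_a$ and $T_b$: after augmentation every polygon edge, exactly like every vertex, is a leaf of both $T_a$ and $T_b$, and for any site $s$ (vertex or edge) the two walls $\pi(a,s)$ and $\pi(b,s)$ of the funnel $Y(s)$ are precisely the root-to-leaf paths to $s$ in $T_a$ and in $T_b$. Hence everything about the funnels, and about how $Y(s_i)$, $Y(s_j)$, $Y(s_k)$ relate, can be read off from the combinatorial structure of these two trees. First I would record the relevant planarity: since shortest paths in a simple polygon do not cross, the leaves of $T_a$ (and of $T_b$), read in the planar embedding, occur in the same cyclic order as the sites along $\partial P$; consequently, for sites $s,s'$ occurring in this order on the given side of $\gamma$, the paths $\pi(a,s)$ and $\pi(a,s')$ coincide on the prefix from $a$ to $\beta_a(s,s'):=\mathrm{lca}_{T_a}(s,s')$ and are edge-disjoint beyond it, and symmetrically $\pi(b,s)$ and $\pi(b,s')$ coincide up to $\beta_b(s,s')$.

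From this, the funnel difference $Y(s_i)\setminus Y(s_j)$ is exactly the union of the portion of $\pi(a,s_i)$ lying strictly beyond $\beta_a(s_i,s_j)$ (towards $s_i$) and the portion of $\pi(b,s_i)$ lying strictly beyond $\beta_b(s_i,s_j)$, with the common handle of $Y(s_i)$ counted once; when this set is nonempty it is a single arc that runs up one wall to the apex of $Y(s_i)$ (the meeting point of the two walls) and back down the other, so it is a path containing the apex. For the disjointness claim I would use the three-leaf fact that, because $s_i,s_j,s_k$ occur in this order, $\mathrm{lca}_{T_a}(s_i,s_k)$ is a weak ancestor of $\mathrm{lca}_{T_a}(s_i,s_j)$ in $T_a$, and likewise in $T_b$. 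Hence the wall $\pi(a,s_i)$ meets $Y(s_k)$ only in the prefix up to $\mathrm{lca}_{T_a}(s_i,s_k)$, which is contained in the prefix up to $\beta_a(s_i,s_j)$ and is therefore disjoint from the $a$-wall part of $Y(s_i)\setminus Y(s_j)$; symmetrically for the $b$-wall; and $Y(s_k)$ shares no other edges with $Y(s_i)$. Combining, $Y(s_k)$ is edge-disjoint from $Y(s_i)\setminus Y(s_j)$.

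The main obstacle I expect is the edge-site bookkeeping rather than the structural core. When $s_i$ is an edge its apex is the sub-segment of $s_i$ between the terminals $t(a,s_i)$ and $t(b,s_i)$ rather than a single vertex, the $0$-length perpendicular edges can force $t(a,s_i)\neq t(b,s_i)$, and one must check that the notion of the common handle, the wall-monotonicity statements, and the three-leaf lca statement survive these degeneracies. I would handle this by treating the terminal sub-segment of an edge site as part of the funnel boundary from the outset and verifying the prefix/lca arguments directly on the augmented trees, where the $0$-length edges never participate in a crossing and the planar leaf order is preserved; the vertex-site argument then goes through essentially verbatim with "apex vertex" replaced by "apex (vertex or terminal sub-segment)".
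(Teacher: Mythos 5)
Your proposal matches the paper's treatment: the paper gives no independent proof of this lemma, but simply observes that Hershberger--Suri's Lemma~3.3 carries over once edge sites become leaves of the augmented trees $T_a$ and $T_b$ and the apex is allowed to be the terminal sub-segment of an edge site --- exactly the reduction you describe. The planar leaf-order and lca facts you spell out are the standard ingredients of the Hershberger--Suri argument, so your write-up is essentially a more detailed rendering of the same approach.
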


In summary, the Hershberger-Suri algorithm extends in a straightforward manner to farthest edges. 
The only modifications needed are the extension to edge funnels (instead of funnels based on vertices) and the different number of separators.

\subsection{
Hourglasses}
\label{appendix:hourglasses}

\changed{In this section we show that to find the Voronoi diagram on a transition edge $ab$ it suffices to look at the \emph{hourglass} of $ab$, and we show that all the hourglasses can be found in linear time, and the sum of their sizes is linear.}

Hourglasses were first used in algorithms for shortest paths~\cite{SPT_linear,GUIBAS1989126,chazelle1989visibility}, and then used in algorithms to find the farthest vertex geodesic Voronoi diagram (Aronov et al.~\cite{aronov1993farthest}) and in algorithms to find the geodesic [vertex] center (Ahn et al.~\cite{linear_time_geodesic}).

\begin{figure}
\centering
\includegraphics[width=0.6\textwidth]{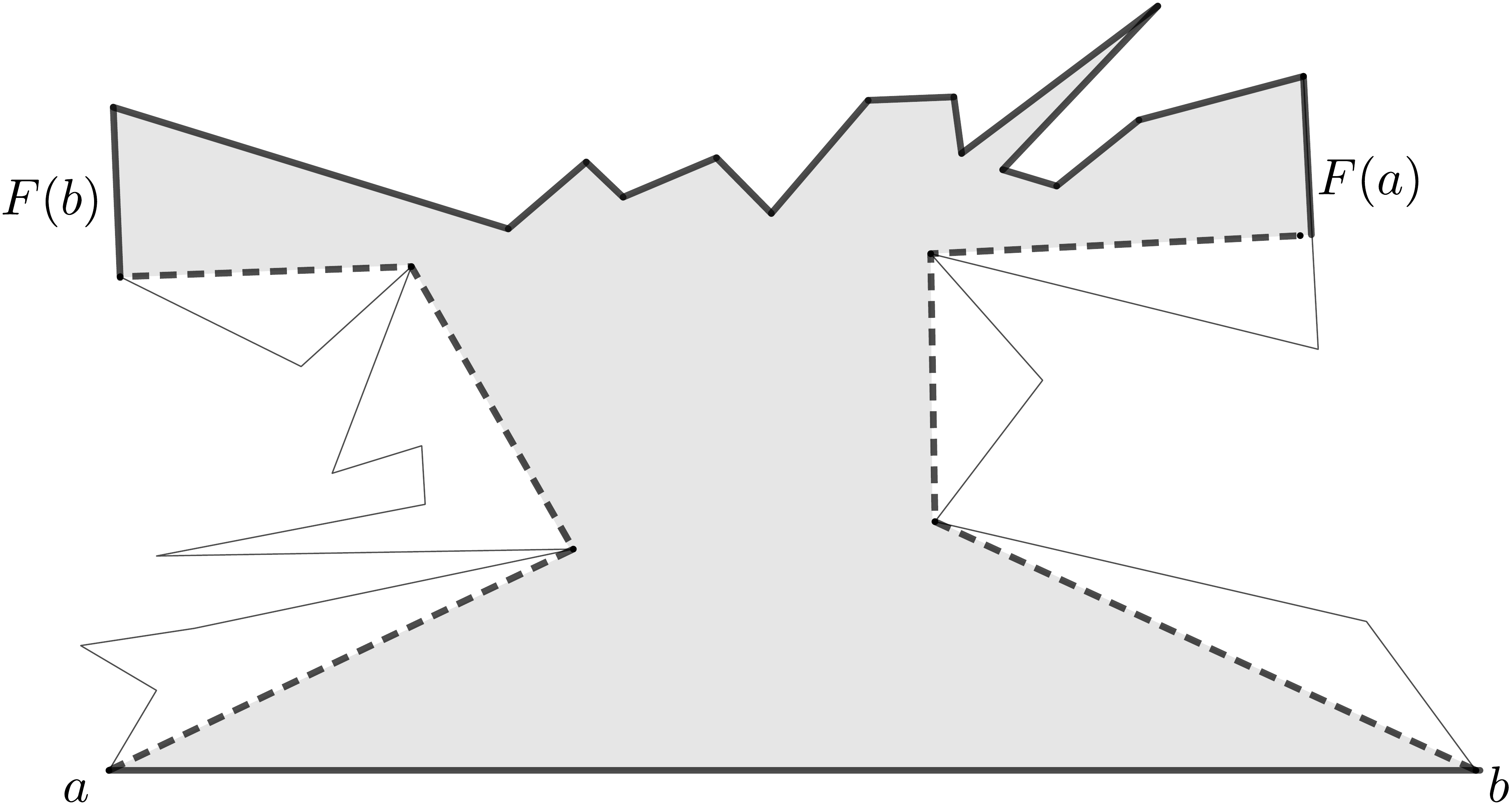}
\caption{
The hourglass for a transition edge $ab$.
The walls $\pi(a,F(b))$ and $\pi(b,F(a))$ are represented by dashed polylines.
}
\label{fig:hourglass}
\end{figure}

Let $ab$
be 
a transition edge directed counterclockwise.
Note that
$\pi(a,F(a))$ and $\pi(b,F(b))$ cross each other by
the Ordering Property~\ref{prop:anti-parallel} of Lemma~\ref{lem:ordering-properties}. 
The \defn{hourglass} $H(a,b)=H(e)$ is
the subpolygon of $P$ bounded by $ab$, $\pi(a,F(b))$, $\pi(b,F(a))$ and the clockwise portion of $\partial P$ between the terminals $t(a,F(b))$ and $t(b,F(a))$. See Figure~\ref{fig:hourglass}. 
Recall that: every vertex of $P$ has a unique farthest edge by Assumption~\ref{assumption:unique_farthest_neighbor}; and    $t(a,F(b))$ is the terminal point of the geodesic path from vertex $a$ to the edge $F(b)$.
The geodesics $\pi(a,F(b))$ and $\pi(b,F(a))$  are called the \defn{walls} of the hourglass $H(a,b)$, and the part of $\partial P$ clockwise from $t(b,F(a))$ to $t(a,F(b))$ is called the \defn{chain} of the hourglass.
The \defn{size} of an hourglass is its number of vertices.

The following lemma justifies restricting attention to the hourglass of a transition edge $ab$ in order to find the farthest edge Voronoi diagram restricted to $ab$.
It is a consequence of the Ordering Property from Lemma~\ref{lem:ordering-properties}.

\begin{lem}\label{lemma:hourglasses_have_necessary_information}
Let $p$ be a point on the transition edge $ab$ and  
let $e$ be a farthest edge from $p$ in $P$.  Then $e$ lies in the chain of the hourglass $H(a,b)$.
\end{lem}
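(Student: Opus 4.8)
The plan is to read the statement off from the Ordering Property, in the way suggested by the paper's remark that this lemma "is a consequence of the Ordering Property." Recall that the chain of $H(a,b)$ is the arc of $\partial P$ between the terminals $t(a,F(b))$ (a point of edge $F(b)$) and $t(b,F(a))$ (a point of edge $F(a)$) that, together with $ab$ and the walls $\pi(a,F(b))$, $\pi(b,F(a))$, bounds the hourglass; in particular it is the arc between $F(a)$ and $F(b)$ that does \emph{not} contain the interior of $ab$. So it suffices to show that a farthest edge $e=F(p)$ of any $p\in ab$ lies on that arc.

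First I would record the cyclic order of the relevant objects. Since $ab$ is a transition edge we have $F(a)\neq F(b)$, and, as noted just before the statement, $\pi(a,F(a))$ and $\pi(b,F(b))$ cross, so $a,b,F(a),F(b)$ occur in this cyclic order around $\partial P$ (the only alternative permitted by Property~\ref{prop:anti-parallel} of Lemma~\ref{lem:ordering-properties}, namely $a,F(a),b,F(b)$, is impossible: $a$ and $b$ are adjacent on $\partial P$, so that order would force the edge $F(b)$ onto the arc between $b$ and $a$ that is just the edge $ab$, whereas $d(b,ab)=0$). The main step is then immediate: moving $p$ along $ab$ from $a$ to $b$ is a sub-traversal of $\partial P$, so by the Ordering Property $F(p)$ moves monotonically around $\partial P$ in the same direction; comparing $p$ with $a$ and with $b$ shows that $F(p)$ lies on the arc from $F(a)$ to $F(b)$ taken in the $a\to b$ direction. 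By the cyclic order $a,b,F(a),F(b)$, that arc begins just after $F(a)$, ends at $F(b)$, and contains neither $a$ nor $b$; this is precisely the chain side of the hourglass, so $e$ lies on the chain.

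The delicate point — and where I expect the actual work to lie — is the two "extreme" edges: the chain meets $F(a)$ only from $t(b,F(a))$ onward and meets $F(b)$ only up to $t(a,F(b))$, so when $e=F(a)$ (which happens for $p$ in the sub-interval of $ab$ — bounded by the isolated two-farthest-edge points of Lemma~\ref{lem:1D-Voronoi-edges} — still having $F(a)$ as a farthest edge) one must still certify that the portion of $e$ relevant to $p$ lies inside the chain. For this I would invoke Property~\ref{prop:path-to-same-edge}: the paths $\pi(a,F(a))$, $\pi(p,F(a))$ and $\pi(b,F(a))$ to the common edge $F(a)$ pairwise do not cross, which pins $t(p,F(a))$ between $t(b,F(a))$ and $t(a,F(a))$ on $F(a)$, hence on the chain; symmetrically for $F(b)$. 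Since $d(p,e)=d(p,t(p,e))$, this shows that the part of $e$ determining the distance from $p$ lies on the chain — all the later use of the lemma requires — and gives full containment $e\subseteq\text{chain}$ whenever $e$ is neither $F(a)$ nor $F(b)$. The other ingredients (the Ordering Property itself, and the description of $\partial P$ as chains of constant farthest edge separated by isolated points) are already available.
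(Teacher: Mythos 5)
Your proof is correct and takes essentially the same route as the paper: the paper simply applies the forbidden ordering of Property~\ref{prop:anti-parallel} twice, ruling out the clockwise orderings $p,a,F(a),e$ and $b,p,e,F(b)$, which is exactly the rigorous content of your ``compare $p$ with $a$ and with $b$'' step (the Ordering Property~\ref{prop:ordering-property} you invoke is itself just Property~\ref{prop:anti-parallel} applied along the traversal). Your additional discussion of the terminals $t(b,F(a))$ and $t(a,F(b))$ correctly handles the boundary subtlety that only portions of $F(a)$ and $F(b)$ lie on the chain, a point the paper's proof does not spell out.
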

\begin{proof}
Suppose $e$ lies in the clockwise chain of $\partial P$ from $F(a)$ to $b$. Then the clockwise ordering around $\partial P$ is $p,a,F(a),e$ in contradiction to the Ordering Property (Property~\ref{prop:anti-parallel} of Lemma~\ref{lem:ordering-properties}).  Similarly, if $e$ lies in the clockwise chain $\partial P$ from $a$ to $F(b)$, the clockwise ordering  $b,p,e, F(b)$ contradicts the Ordering Property.   
\end{proof}

Let $\cal H$ be the set of  hourglasses of all the transition edges of $P$. 
In the remainder of this subsection we show how to find 
$\cal H$ in linear time.

\begin{lem}\label{Total_Size_of_All_Hourglasses_Lemma}
All the hourglasses of $\mathcal{H}$ can be constructed in $O(n)$ time.
In particular, the sum of 
their sizes
is $O(n)$.
\end{lem}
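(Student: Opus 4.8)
The plan is to reduce the construction of all hourglasses to a constant number of bipartite subproblems, one per farthest-edge separator, and then to control the total size by a charging argument. First I would run the linear-time algorithm to find $F(v)$ for every vertex $v$; this identifies the transition edges and the maximal boundary chains on which $F$ is constant. Enumerate the transition edges as $a_1b_1, a_2b_2, \dots$ in counterclockwise order, and let $g_i$ denote the common farthest edge of the chain strictly between $b_i$ and $a_{i+1}$, so that $F(b_i)=g_i$ and $F(a_i)=g_{i-1}$. Next, compute a set of at most five farthest-edge separators as in Lemma~\ref{lem:constant_splits}, and for each of their (at most ten) endpoints build the shortest-path tree and preprocess it for constant-time lowest-common-ancestor and subpath-length queries; all of this takes $O(n)$ time.

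To build $H(a_i,b_i)$ I would use the consequence of Lemma~\ref{lem:constant_splits} that some separator $\gamma=\pi(s,t)$ has both $a_i$ and $b_i$ on its right; by the definition of a separator together with the Ordering Property (Property~\ref{prop:ordering-property}), the edges $F(a_i)=g_{i-1}$ and $F(b_i)=g_i$ then lie on its left. Hence each wall $\pi(a_i,F(b_i))$ and $\pi(b_i,F(a_i))$ is a shortest path from a vertex on the right of $\gamma$ to an edge on its left, so Lemma~\ref{lemma:walls_can_be_constructed_fast} computes each of them in time proportional to its number of vertices (using the preprocessing above). The chain of the hourglass is the boundary arc from $t(a_i,F(b_i))$ to $t(b_i,F(a_i))$, obtained in time proportional to its length by walking along $\partial P$; assembling the edge $a_ib_i$, the two walls, and this arc yields $H(a_i,b_i)$.

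It remains to show $\sum_i |H(a_i,b_i)| = O(n)$, which also bounds the running time (preprocessing plus work proportional to the output). For the chain parts, the chain of $H(a_i,b_i)$ spans the boundary arc between the edges $g_{i-1}$ and $g_i$; by the Ordering Property the edges $g_0,g_1,\dots$ occur in this cyclic order around $\partial P$ with no repetitions, so consecutive such arcs overlap only within the single edge $g_i$, and their total size is $O(n)$. For the wall parts, each wall crosses one of the $O(1)$ separators, so by Lemma~\ref{lem:separator-paths} all but one of its edges lie in the two shortest-path trees of that separator's endpoints; the exceptional tangent edges number at most one per wall, hence $O(n)$ in total, and it remains to bound the in-tree pieces. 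Here I would argue that each edge of each of the $O(1)$ shortest-path trees is traversed by only $O(1)$ of the walls, so that the in-tree pieces also total $O(n)$.

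The main obstacle is exactly this last bound on the in-tree pieces. A clean way to try to get it is to show that two distinct walls never traverse the same directed chord, which for paths that go to a farthest edge of their source and that cross is Property~\ref{prop:no-shared-chord}; however the walls $\pi(a_i,F(b_i))$ go to the farthest edge of the neighbouring vertex $b_i$ rather than of $a_i$ and need not all pairwise cross, so this property has to be re-established in the present setting — using, as in the proof of Property~\ref{prop:no-shared-chord}, a rerouting/exchange argument, together with the facts that $d(a_i,g_i)$ differs from the true farthest distance $d(a_i,g_{i-1})$ by at most $2|a_ib_i|$ and that farthest edges vary monotonically along the separator. Alternatively one could relate each wall $\pi(a_i,F(b_i))$ to the wall $\pi(b_i,F(b_i))$ of the funnel from the chain $C(g_i)$ to $g_i$ — whose total complexity over all $i$ is already $O(n)$ from the construction of the polygon's coarse cover in Phase~II (Section~\ref{section:PhaseII}) — by showing that $\pi(a_i,F(b_i))$ deviates from that funnel wall only near its two ends; making either of these arguments precise is the technical heart of the lemma.
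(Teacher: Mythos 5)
Your overall framework matches the paper's: five separators from Lemma~\ref{lem:constant_splits}, walls computed via Lemma~\ref{lemma:walls_can_be_constructed_fast} in time proportional to their length, internally disjoint chains, and walls lying in the two shortest-path trees except for one crossing edge per wall. But the step you yourself flag as ``the technical heart of the lemma''---that each tree edge is used by only $O(1)$ walls---is exactly the content of the paper's Claim~\ref{claim:hourglass-wall}, and you do not prove it; you only sketch two candidate routes, and neither works as stated. Your first route (show that two distinct walls never traverse the same directed chord) aims at a statement the paper does not establish and in fact disavows: the paper explicitly notes that walls of transition edges that are close together in the boundary order \emph{may} share chords, so a blanket no-shared-chord claim for walls is the wrong target. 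Your second route (bounding walls by the funnel walls $\pi(b_i,F(b_i))$ whose total complexity is $O(n)$ from the Phase~II coarse cover) is circular: the funnels $Y(e)$ are built from the farthest-edge Voronoi diagram on $\partial P$, which is the output of Phase~I, and the hourglasses are needed \emph{inside} Phase~I to compute that diagram.

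The paper's actual argument is a weaker but sufficient statement proved by a ``squeezing'' argument: order the transition edges $t_1,\dots,t_k$ to the right of the separator, and show that if $j-i\ge 4$ then no wall of $H(t_i)$ shares a chord with a wall of $H(t_j)$. If they did share a chord $f=xy$, consider the intermediate transition edge $t_{i+2}=(u,v)$, whose endpoints have farthest edges distinct from those of the endpoints of $t_i$ and $t_j$. Since shortest paths to a common destination do not cross, the path from $u$ to $y$ passes through $x$ and the path from $x$ to $F(u)$ passes through $y$; their concatenation is locally shortest, hence $\pi(u,F(u))$ (and likewise $\pi(v,F(v))$) is forced through $f$. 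These are genuine crossing paths to distinct farthest edges sharing a directed chord, contradicting Property~\ref{prop:no-shared-chord}. Consequently each chord appears in walls of at most a constant number of consecutive transition edges, giving the $O(n)$ bound. Without this (or an equivalent) charging argument, your proposal does not establish the lemma.
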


\begin{proof}
Recall that by Lemma~\ref{lem:constant_splits} there are five farthest edge separators such that for every edge $ab$, one of the separators has $a$ and $b$ to its right and $F(a)$ and $F(b)$ to its left.
Let $\gamma = \pi(p,q)$ be a farthest edge separator and let 
${\cal H}_\gamma$ be the set of hourglasses of transition edges that lie to the right of $\gamma$.
It suffices to 
prove the lemma for one set ${\cal H}_\gamma$.
Each hourglass in $\mathcal{H}_{\gamma}$ consists of a transition edge $ab$ to the right of $\gamma$, two walls, and a chain to the left of $\gamma$.
In 
Appendix~\ref{appendix:Hershberger-Suri}
we found the farthest edge from each vertex in linear time, 
so we 
know $F(a)$ and $F(b)$.
Because the hourglass chains are internally disjoint, we just need to show that we can find all the walls of the hourglasses in $\mathcal{H}_{\gamma}$ in linear time.  

Each wall is a shortest path from a vertex to the right of the separator $\gamma=\pi(p,q)$ to an edge to the left of $\gamma$, so by Lemma~\ref{lem:separator-paths} each wall consists of edges of  the shortest path trees $T_p$ and $T_q$, except for at most one edge crossing $\gamma$. 
The set of crossing edges has size $O(n)$ because there are $O(n)$ hourglasses. 
By Lemma~\ref{SPT_lemma} the shortest path trees $T_p$ and $T_q$ can be found in 
time $O(n)$ and have size $O(n)$.
By Lemma~\ref{lemma:walls_can_be_constructed_fast} we can find each wall in time proportional to the size of the wall. 
Thus we can find all the walls in time $O(n)$ so long as we show that each polygon chord is in $O(1)$ walls.
(Note that walls of hourglasses are not paths to farthest edges, so we cannot simply apply Property~\ref{prop:no-shared-chord} that crossing paths to farthest edge do not share directed chords.)

\begin{claim}
Any chord of the polygon is in $O(1)$ walls of hourglasses of $\mathcal{H}_{\gamma}$. 
\label{claim:hourglass-wall}
\end{claim}

\begin{proof}
Let $t_1, \ldots, t_k$ be the transition edges to the right of $\gamma$ in clockwise order. 
If two transition edges are close together in this ordering, then their walls may have common chords, but we will show that if $t_i$ and $t_j$ are separated by at least three transition edges, i.e., 
$j-i \ge 4$, then the walls of the hourglasses $H(t_i)$ and $H(t_j)$ have no common chords.  Note that this proves the Claim.

So, consider $t_i$ and $t_j$ with $j-i \ge 4$, and suppose for a contradiction that a wall of $H(t_i)$ and a wall of $H(t_j)$ share a common chord $f$.  Take the intermediate transition edge $t_k = (u,v)$, where $k = i+2$.  Then the farthest edges of the endpoints of $t_i, t_j$ and $t_k$ are all distinct (this is the reason for choosing $i,j,k$ as we did), and all lie to the right of $\gamma$. 
We will show that the paths $\pi(u,F(u))$ and $\pi(v,F(v))$ also share the chord $f$.  This means that we have crossing paths to distinct farthest edges and the paths share a chord, which contradicts Property~\ref{prop:no-shared-chord} from Lemma~\ref{lem:ordering-properties}.

\begin{figure}[ht]
\centering
\includegraphics[width=.75\textwidth]{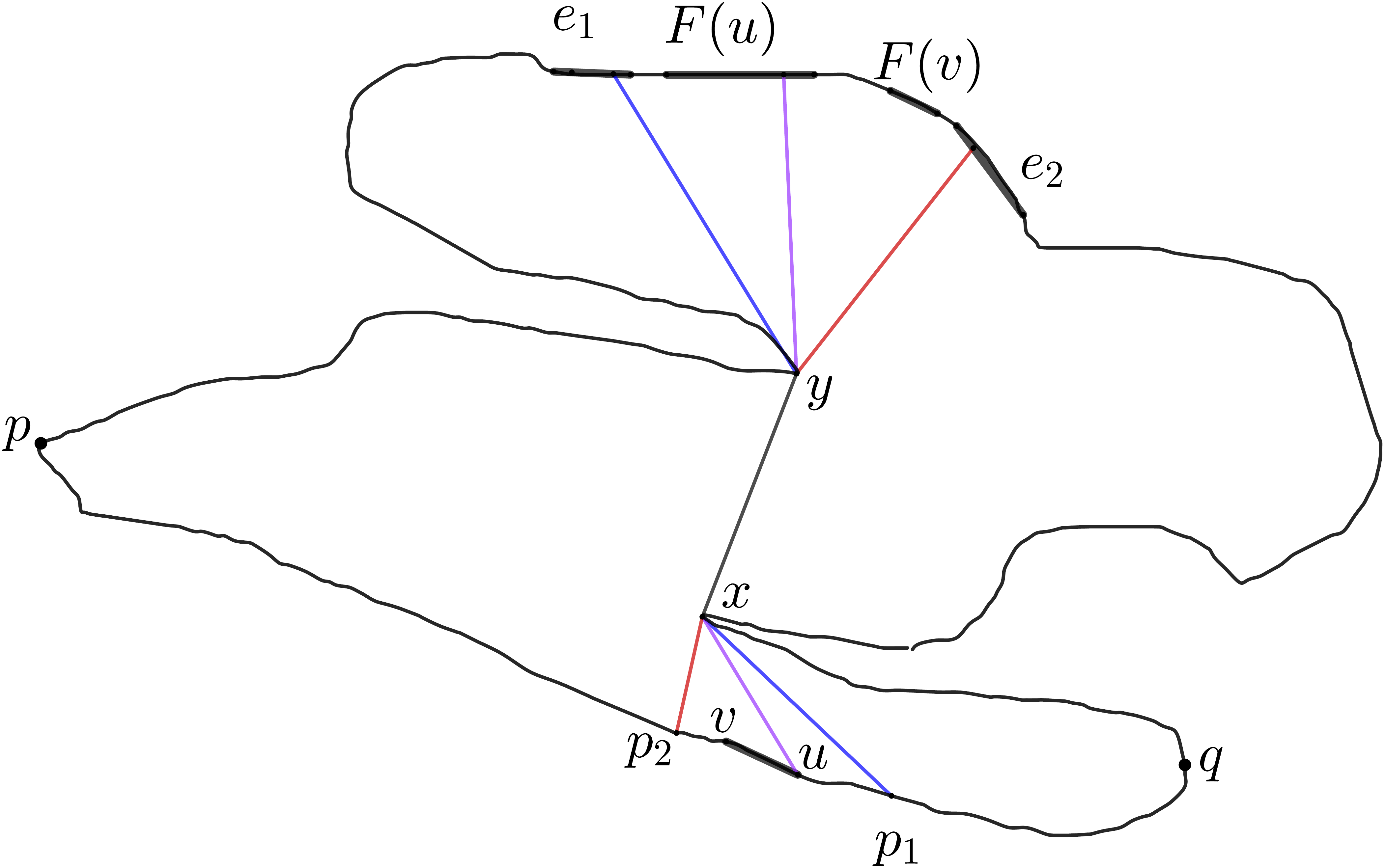}
\caption{Illustration for Claim~\ref{claim:hourglass-wall}. The walls $\pi(p_1,e_1)$ and $\pi(p_2,e_2)$ share the chord $f = xy$, which forces the path $\pi(u,F(u))$ to also use $f$.
}
\label{fig:shared-walls}
\end{figure}

It remains to show that $\pi(u,F(u))$ uses the chord $f$. (The case of $\pi(v,F(v))$ is exactly the same.)  The idea is that this path is ``squashed'' between the two walls that use $f$.
Suppose that the wall $\pi(p_1, e_1)$ of $H(t_i)$ and the wall $\pi(p_2,e_2)$ of $H(t_j)$ both use chord $f$. 
See Figure~\ref{fig:shared-walls}.
Here $p_1$ and $p_2$ are distinct vertices on the right %
of the separator $\gamma$ and $e_1$ and $e_2$ are distinct edges on the 
left %
of $\gamma$.  Vertex $u$ lies between $p_1$ and $p_2$ in clockwise order, and $F(u)$ lies between $e_1$ and $e_2$ in clockwise order, and all are distinct.  
Let $f = xy$ where $x$ and $y$ are vertices of the polygon.
Because shortest paths to the same destination do not cross, 
the shortest path from $u$ to $y$ goes through $x$.  Similarly, the shortest path from $x$ to the edge $F(u)$ goes through $y$.  The union of these two shortest paths is a geodesic (locally shortest) path from $u$ to $F(u)$ and must therefore be the shortest path from $u$ to $F(u)$.  Thus $\pi(u,F(u))$ uses the edge $f$. 
\end{proof}

This completes the proof of Lemma~\ref{Total_Size_of_All_Hourglasses_Lemma}.
\end{proof}

\subsection{Details for Section~\ref{sec:one-edge-Vor}, Farthest Edge Voronoi Diagram on One Edge}
\label{appendix:ab-course-cover}
\label{appendix:one-edge-Vor}

We need some more results to 
prove Lemma~\ref{lem:consecutive-coarse-cover}. 
First we need more details of the construction of the coarse cover~\cite{lubiw2021visibility}.
Define \defn{$p_a(u)$} and \defn{$p_b(u)$} to be the parents of node $u$ in $T_a$ and $T_b$, respectively.
As noted by 
Pollack et al.~\cite{pollack_sharir}, a vertex $u$ is visible from some point on $ab$
if and only if $p_a(u) \ne p_b(u)$. 
If $u$ is visible from some point on $ab$, then extending the edge from $u$ through $p_a(u)$ reaches a point
\defn{$x_a(u)$}
on $ab$ from which $u$ is visible.
Similarly, extending the edge from $u$ through $p_b(u)$ reaches a point
\defn{$x_b(u)$}
on $ab$ from which $u$ is visible.
According to the definition \reviewerchange{in\cite{lubiw2021visibility}}, if
edge $uv$ of $T_a$
has an associated 
$a$-side coarse cover element $(I,f,e)$, then 
$I=[x_a(u), x_a(v)]$.
Similarly for $b$-side elements.
\reviewerchange{If
edge $uv$ of $T_a \cap T_b$
has an associated 
central triangle coarse cover element $(I,f,e)$, then 
$I=[x_a(u), x_b(u)]$.
And if polygon edge $e$ has an associated central trapezoid coarse cover element $(I,f,e)$, then $I$ consists of the points of $ab$ whose shortest paths to $e$ arrive perpendicularly, and with the added $0$-length edges this is
$I = [x_a(t(a,e)),x_b(t(b,e))]$.
}

\begin{obs}
\label{obs:consecutive-coarse-cover}
If $uv$ and $vw$ are edges of $T_a$ that have associated coarse cover elements $(I_1, f,e)$ and $(I_2,f',e')$, then
the right endpoint of $I_1$ is $x_a(v)$ and the left endpoint of $I_2$ is $x_a(v)$, i.e., 
\reviewerchange{$I_1$ and $I_2$ appear in that order along $ab$ and intersect in a single point.  
This observation is also true for an edge $uv$ and a central trapezoid at $v$ if $v$ happens to be a leaf.}
A similar property holds for $T_b$.
\end{obs}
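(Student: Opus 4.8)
The plan is to unfold the explicit interval formulas recalled immediately above. The first point is that the hypothesis already pins down the type of $uv$. Since $uv$ carries a coarse-cover element, it is either an $a$-side element or a central-triangle element; but if $uv$ were a central-triangle element at an edge of $T_a\cap T_b$, then $v$ would be invisible from $ab$ (its parent is the same vertex $u$ in $T_a$ and in $T_b$), and in that case its child edge $vw$ could carry no coarse-cover element at all: an $a$-side element at $vw$ needs the point $x_a(v)$ to exist, i.e. $v$ visible; a central-triangle element at $vw$ needs its lower vertex $v$ visible; and a central trapezoid attached at $v$ needs $v$ to be the terminal $t(a,e)$ of a polygon edge visible from $ab$. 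This contradicts the hypothesis that $vw$ also carries an element, so $uv$ must be an $a$-side element. By the recalled formula $I_1=[x_a(u),x_a(v)]$, so the right endpoint of $I_1$ is $x_a(v)$; in particular $v$ is visible from $ab$.

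Second, whatever the type of $vw$, the left endpoint of $I_2$ is also $x_a(v)$: if $vw$ is an $a$-side element then $I_2=[x_a(v),x_a(w)]$; if $vw$ is a central-triangle element then $I_2=[x_a(v),x_b(v)]$; and if $v$ is a leaf of $T_a$ with a central trapezoid attached then, by the $0$-length augmentation, $v=t(a,e)$, so that element's interval is $[x_a(t(a,e)),x_b(t(b,e))]=[x_a(v),\,\cdot\,]$. Hence $I_1$ has right endpoint $x_a(v)$ and $I_2$ has left endpoint $x_a(v)$, so $I_1\cap I_2=\{x_a(v)\}$. That $I_1$ lies on the $a$-side of $x_a(v)$ and $I_2$ on the far side---so the two intervals appear along $ab$ in this order---is the standard monotonicity of visibility points along a root-to-leaf path of $T_a$: writing $a=v_0,v_1,v_2,\dots$ for such a path, the points $a=x_a(v_0),x_a(v_1),x_a(v_2),\dots$ occur in that order along $ab$ (this is part of the reason the coarse-cover intervals tile $ab$), which I would quote from the construction of the coarse cover. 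The statement for $T_b$ follows by exchanging the roles of $a$ and $b$.

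The step I expect to need the most care is the type bookkeeping of the first paragraph---checking that the hypothesis genuinely rules out $uv$ being a central-triangle element, and that in each of the remaining cases the shared endpoint is literally $x_a(v)$---together with a clean citation for the monotonicity of the points $x_a(v_i)$ along $ab$. Neither ingredient is deep, but both depend on keeping straight the definitions of $x_a$, $x_b$ and the $0$-length augmentation that identifies a leaf of $T_a$ with an edge terminal.
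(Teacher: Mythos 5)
Your proposal is correct and follows essentially the same route as the paper, which states this observation without an explicit proof, treating it as an immediate consequence of the interval formulas recalled just before it ($I=[x_a(u),x_a(v)]$ for $a$-side elements, $[x_a(u),x_b(u)]$ for central triangles, and $[x_a(t(a,e)),x_b(t(b,e))]$ for central trapezoids). Your additional bookkeeping---using the hypothesis that $vw$ also carries an element to force $uv$ to be an $a$-side element, and appealing to the monotonicity of the points $x_a(\cdot)$ along root-to-leaf paths from the coarse-cover construction of~\cite{lubiw2021visibility}---is a faithful, slightly more careful unfolding of that same implicit argument.
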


\begin{lem}
\label{lem:coarse-cover-path}
Suppose edge $uv$ of $T_a$ has an associated coarse cover element
\changed{$(I,f,e)$ for polygon edge $e$.}
Then:

\begin{enumerate}
\item 
On the path $\pi(a,u)$ all edges except the first one have associated coarse cover elements.

\item On the path $\pi(v,e)$ let $x$ be the last vertex visible from $ab$.  All edges on $\pi(v,x)$ have associated coarse cover elements for the polygon edge $e$.  Furthermore, if $x$ is a leaf then there is a central trapezoid associated with $e$, and otherwise there is an edge $xy$ in $\pi(v,e)$ and it is associated with a central triangle for $e$.
\end{enumerate}
A similar property holds for $T_b$.
\end{lem}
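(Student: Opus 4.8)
The plan is to unwind the definitions of the coarse-cover elements of $ab$ from~\cite{lubiw2021visibility} and combine them with two monotonicity facts about the tree $T_a$; write $\mathrm{leaf}_a(e)$ for the leaf of $T_a$ representing a polygon edge $e$ (so $\mathrm{leaf}_a(e)=t(a,e)$). Recall from Pollack et al.~\cite{pollack_sharir} that a vertex $w\neq a$ is visible from $ab$ iff $p_a(w)\neq p_b(w)$, so the edge $p_a(w)\,w$ of $T_a$ also lies in $T_b$ exactly when $w$ is \emph{not} visible from $ab$. The two facts I would use are: \textbf{(F1)} the vertices of $P$ visible from $ab$ form a connected subtree of $T_a$ rooted at $a$ -- equivalently, the non-visible vertices are closed under taking $T_a$-children; and \textbf{(F2)} if $w$ lies on the $T_a$-path from $v$ to $\mathrm{leaf}_a(e)$, then $\mathrm{leaf}_a(e)$ is still the farthest-from-$a$ descendant leaf of $w$, since the descendant leaves of $w$ form a subset of those of $v$ and contain $\mathrm{leaf}_a(e)$. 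I also recall that an edge $wz$ of $T_a$ (directed away from $a$, so $w=p_a(z)$) carries a coarse-cover element exactly when it is an $a$-side element ($wz\in T_a\setminus T_b$ and $w\neq a$, i.e.\ $z$ visible and $w\neq a$) or a central-triangle element ($wz\in T_a\cap T_b$ and $w$ visible, i.e.\ $z$ not visible and $w$ visible), and in both cases the polygon edge it is associated with is the one whose $T_a$-leaf is the farthest descendant leaf of $z$.

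For part~1: the hypothesis that $uv$ carries an element forces $u$ to be visible from $ab$ and $u\neq a$ -- for an $a$-side element because then $v$ is visible and (F1) applied to $v$ makes $u=p_a(v)$ visible, and for a central-triangle element by definition. Writing $\pi(a,u)=a,z_1,\dots,z_k=u$, repeated use of (F1) from $z_k$ up to the root shows every $z_i$ is visible, so for each $i\ge1$ the edge $z_iz_{i+1}$ has visible child endpoint and parent endpoint $z_i\neq a$; hence it lies in $T_a\setminus T_b$ and is an $a$-side coarse-cover element. The only edge of $\pi(a,u)$ left out is $az_1$, whose parent is the root -- ``the first one''.

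For part~2 the substantive case is $uv$ an $a$-side element, so $v$ is visible. By the coarse-cover setup $\pi(v,e)$ is the $T_a$-path from $v$ to $\mathrm{leaf}_a(e)$ (possible since $\mathrm{leaf}_a(e)$ descends from $v$); write it $v,w_1,\dots,w_m=\mathrm{leaf}_a(e)$. By (F1) its visible vertices form a prefix $v,w_1,\dots,w_j=x$, and since $uv$ carries an element for $e$, $\mathrm{leaf}_a(e)$ is the farthest descendant leaf of $v$, hence by (F2) of each $w_i$. So for $i<j$ the edge $w_iw_{i+1}$ has both endpoints visible, child endpoint $\neq a$, and farthest descendant leaf $\mathrm{leaf}_a(e)$: it is an $a$-side element for $e$, giving the first assertion. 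For the ``furthermore'': if $x=w_m$ is a leaf then $x=\mathrm{leaf}_a(e)=t(a,e)$ is a visible point of $e$, so under Assumptions~\ref{assumptions} $e$ has a visible interior point and hence a central-trapezoid element; otherwise $x$ has a child $y=w_{j+1}$ on $\pi(v,e)$ that is not visible (as $x$ is the last visible vertex), whence $xy\in T_a\cap T_b$ with $x$ visible and farthest descendant leaf $\mathrm{leaf}_a(e)$, i.e.\ a central-triangle element for $e$. The statement for $T_b$ is symmetric, and the remaining case where $uv$ is itself a central-triangle element ($v$ not visible, so $\pi(v,e)$ carries no $a$-side elements) is immediate, with $uv$ in the role of the central triangle for $e$.

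I expect the only genuine obstacle to be fact (F1): unlike the rest it is not a formal consequence of the coarse-cover definitions. A self-contained proof would use the pocket decomposition of $P\setminus\mathrm{Vis}(ab)$ -- each pocket is entered from $a$ only by bending around the reflex vertex anchoring its window, so a $T_a$-path that leaves the visible region cannot return -- and I would either include this argument or cite it from the shortest-path-tree literature.
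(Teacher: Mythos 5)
Your proposal is correct and takes essentially the same route as the paper's proof: part 1 rests on the same fact that visibility from $ab$ propagates up $T_a$ (your F1, which the paper also simply asserts), and part 2 rests on the observation that the farthest descendant leaf of $v$ remains the farthest descendant leaf of every node on the path toward it (your F2 is just a rephrasing of the paper's explicit computation $d(v,e') = d(v,w) + d(w,e')$), followed by the same leaf/non-leaf case split yielding the central trapezoid or the central triangle $xy$. The only cosmetic difference is in the leaf case, where the paper argues via distinctness of the terminals $t(a,e) \ne t(b,e)$ (from the $0$-length-edge convention) while you argue via a visible interior point of $e$; both are terse but adequate.
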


\begin{proof}
The first statement just depends on the fact that if $u$ is visible from $ab$ (i.e., has different parents in $T_a$ and $T_b$) then the same is true for every vertex on the path $\pi(a,u)$.

For the second statement, note that $e$ is the farthest edge from $v$ in the subtree of $v$.
Let $w$ be any vertex on $\pi(v,x)$, and $e'$ any polygon edge for which the terminal of the path $\pi(a,e')$ 
lies in the subtree of $w$.
We have $d(v,e') = d(v,w) + d(w,e')$.
Since $d(v,e) > d(v,e')$, the previous equality implies $d(w, e) > d(w,e')$.
So the farthest edge from $w$ is $e$,
and the coarse cover element for the tree edge joining $w$ to its parent $p_a(w)$ is also associated with $e$.

If $x$ is a leaf, the terminal points 
$t(a,e)$ and $t(b,e)$ 
are distinct (due to the introduction of 0-length segments).
From the definition of the coarse cover elements, this means there is a central trapezoid associated with $e$.
If $x$ is not a leaf, let $xy$ be the first edge on the path $\pi(x,e)$ (which is a subpath of $\pi(v,e)$).
The vertex (or terminal point) $y$ is not visible from $ab$ and $xy$ is a tree edge in both $T_a$ and $T_b$.
The coarse cover element for $xy$ is a central triangle associated with $e$.
\end{proof}

Lemma~\ref{lem:consecutive-coarse-cover} follows immediately from Lemma~\ref{lem:coarse-cover-path}.

\subparagraph*{Details on Constructing Tree $T$.}

\begin{enumerate}
\squeezelist
\item For each central trapezoid coarse cover element, say associated with polygon edge $e$, there is a leaf $l$
of $T$ corresponding to $e$. Attach a new edge in $T$ descending from $l$ and associate the central trapezoid element with it.

\item 
For each polygon edge $e$ that has $b$-side triangle elements associated with it, those triangles 
correspond to a  path $\pi$ in $T_b$, that is directed in leaf-to-root order by Lemma~\ref{lem:consecutive-coarse-cover}.  The tree $T$ currently has an edge, say $g$, associated with the central triangle/trapezoid for $e$.  Attach the path $\pi$ at end of the edge $g$.

\item 
Finally, we  contract any original edge of $T_a$ that is not associated with a coarse cover element.
\changed{These are edges
$uv$ such that $u$ is not visible from $ab$ plus the original edges incident to $a$.
}
\end{enumerate}

\section{Extra Material for Section~\ref{section:PhaseII}, Phase II
}

\subsection{Finding a Coarse Cover of the Polygon by Triangles} \label{appendix:coarse-cover}

This section contains the
first step of Phase II, which is 
to construct in linear time a \defn{coarse cover} 
of the polygon
as specified in Definition~\ref{defn:coarse-cover}.

Let $X$ be the set of vertices of the farthest edge Voronoi diagram on the boundary of $P$.  These are points on $\partial P$ that each have two farthest edges. The points of $X$ partition $\partial P$ into \defn{chains $C(e)$}, where $C(e)$ consists of the points on $\partial P$ whose farthest edge is $e$. We begin by expanding each chain $C(e)$  to a subpolygon
$Y(e)$
that contains the Voronoi region of the edge $e$.
After that, we will partition each polygon $Y(e)$ into triangles to obtain the coarse cover.

Suppose the chain $C(e)$ goes clockwise from $p(e) \in X$ to $q(e) \in X$.  The \defn{edge funnel} \defn{$Y(e)$} is the polygon  bounded by the chain $C(e)$, the \defn{walls} $\pi(p(e),e)$ and  $\pi(q(e),e)$, and the \defn{base} $b(e)$, which is the part
of $e$ between the terminal points $t(p(e),e)$ and $t(q(e),e)$.
See Figure~\ref{fig:flower}.
The \defn{size} of the edge funnel $Y(e)$ is its number of  vertices.
By Property~\ref{prop:path-to-same-edge}, the walls of an edge funnel may merge but never cross, so each edge funnel is a weakly simple polygon.

Edge funnels are an extension of the well-studied vertex funnels that are used for computing shortest paths (see~\cite{lee1984euclidean,SPT_linear}), and that were used by Ahn et al.~\cite{linear_time_geodesic} to compute the geodesic vertex center.
To build their coarse cover of the polygon, Ahn et al.~needed hourglasses as well as [vertex] funnels, so their method was more complicated. 
By contrast, our coarse cover is constructed from edge funnels alone because we did extra work ahead of time using hourglasses to compute the farthest edge Voronoi diagram restricted to the polygon boundary.

\begin{lem}
\label{lem:voronoi_containment}
For any point $p \in P$, if $e$ is a farthest edge from $p$, then $p \in Y(e)$.
\end{lem}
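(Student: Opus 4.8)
The plan is to prove $\pi(p',e)\subseteq Y(e)$ for a suitable boundary point $p'$ whose shortest path to $e$ passes through $p$; then $p$, lying on that path, is in $Y(e)$. If $p\in\partial P$ the claim is immediate, since $p$ then lies on the chain $C(e)$ or is one of its endpoints $p(e),q(e)$, all of which are in the closed funnel $Y(e)$; so from now on assume $p$ is interior to $P$.

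First I would extend the shortest path $\pi(p,e)$ away from $e$. Let $[p,w]$ be the first segment of $\pi(p,e)$, so $w$ is the next vertex of the path, or $w=t(p,e)\in e$ if $p$ sees $e$ directly. Extend the segment $[w,p]$ past $p$ and let $p'$ be the first point of $\partial P$ it meets; since $p$ is interior, the open segment $(p,p')$ lies in the interior of $P$, so $[p',p]\subseteq P$, and $p',p,w$ are collinear with $p$ between $p'$ and $w$. Hence $[p',p]\cup\pi(p,e)$ has no bend at $p$ and is locally shortest at every other vertex; being a geodesic in a simple polygon it is \emph{the} shortest path, so $\pi(p',e)=[p',p]\cup\pi(p,e)$, giving $d(p',e)=d(p',p)+d(p,e)$ and $p\in\pi(p',e)$.

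Next I would check that $e$ is a farthest edge from $p'$. For any edge $f$, the triangle inequality gives $d(p',f)\le d(p',p)+d(p,f)$, and since $e$ is farthest from $p$ we have $d(p,f)\le d(p,e)$; combining, $d(p',f)\le d(p',p)+d(p,e)=d(p',e)$. So $e$ is a farthest edge from $p'$, which places $p'$ on the chain $C(e)$ (or at an endpoint $p(e)$ or $q(e)$, if $p'$ is a transition point with $e$ as one of its two farthest edges). It then remains to observe that $\pi(p',e)\subseteq Y(e)$: this is essentially the statement that the funnel is the union of the geodesics from its chain to $e$. Concretely, shortest paths with common target $e$ do not cross, so $\pi(p',e)$ crosses neither wall $\pi(p(e),e)$ nor $\pi(q(e),e)$; its source $p'$ lies on $C(e)$ between $p(e)$ and $q(e)$, and its terminal $t(p',e)$ lies on the base $b(e)$ between $t(p(e),e)$ and $t(q(e),e)$ because terminals are monotone along $C(e)$ (again by non-crossing). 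Hence $\pi(p',e)$ stays inside the closed region $Y(e)$, and in particular $p\in Y(e)$.

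The main obstacle is this last containment step rather than the extension argument, since $Y(e)$ is only weakly simple when the two walls merge; one must argue the containment using the closed funnel and the non-crossing property of geodesics to $e$, and handle the degenerate case in which $p'$ coincides with an endpoint $p(e)$ or $q(e)$ of $C(e)$. Both are routine once the non-crossing property and the monotonicity of terminals along $C(e)$ are in place.
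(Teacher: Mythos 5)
Your proposal is correct and follows essentially the same route as the paper's proof: extend the first segment of $\pi(p,e)$ backwards to a boundary point $p'$, note the concatenation is locally shortest and hence $\pi(p',e)$, use the triangle inequality to show $e$ is farthest from $p'$, and invoke the non-crossing property of shortest paths to $e$ to conclude $\pi(p',e)\subseteq Y(e)$. The extra care you take with the boundary case, the endpoints $p(e),q(e)$, and the weakly simple funnel is a welcome refinement of the same argument, not a different approach.
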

\begin{proof}
Consider the shortest path $\pi(p,e)$ from the 
point $p$ to the edge $e$
and extend the first segment of the path backwards 
until it intersects the boundary $\partial P$ at point $p'$.
Since the result is a locally shortest path, it must be the shortest path from $p'$ to $e$.
Thus the distance from $p'$ to $e$ is $|p'p| + d(p,e)$.
We now show that $e$ is a farthest edge from $p'$.
Consider any other edge $e'$. We have  $d(p,e') \leq d(p,e)$.
Then
$d(p',e') \leq |p'p|  + d(p,e') \leq |p'p| + d(p,e)  = d(p',e)$.
Thus $e$ is a farthest edge from $p'$ so  $p'$ 
lies on the chain $C(e)$.
By Property~\ref{prop:path-to-same-edge} of Lemma~\ref{lem:ordering-properties},
$\pi(p',e)$ 
does not cross the walls of the edge funnel $Y(e)$, so it lies inside $Y(e)$.
Therefore, the point $p$ lies in $Y(e)$ since $p$ is a point on $\pi(p',e)$.
\end{proof}

\begin{lem}
\label{lemma:flowers_total_linear}
The set of  edge funnels $Y(e)$ corresponding to all the edges $e$ of the polygon can be constructed in $O(n)$ time.  The sum of all their sizes is $O(n)$.
\end{lem}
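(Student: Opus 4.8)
The plan is to decompose the size of an edge funnel $Y(e)$ into the contributions of its chain $C(e)$, its base $b(e)$, and its two walls $\pi(p(e),e)$ and $\pi(q(e),e)$, and bound each separately; the walls are the only non-trivial part. The chains $C(e)$ are exactly the arcs of the farthest-edge Voronoi diagram restricted to $\partial P$ produced in Phase~I, so there are $O(n)$ of them, they are internally disjoint, and they partition $\partial P$; hence $\sum_e|C(e)| = O(n)$ and the chains (together with their endpoints $p(e),q(e)$, which by Assumption~\ref{assumption:unique_farthest_neighbor} are interior points of transition edges) are available in $O(n)$ time. Each funnel has a single base $b(e)$, a subsegment of the polygon edge $e$ determined by the terminals of its two walls, so the bases contribute $O(n)$ in total as well. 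The crux is therefore to show that $\sum_e\bigl(|\pi(p(e),e)|+|\pi(q(e),e)|\bigr) = O(n)$ and that all these walls can be built in $O(n)$ time.

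For the walls I would mimic the proof of Lemma~\ref{Total_Size_of_All_Hourglasses_Lemma}, using the five farthest-edge separators of Lemma~\ref{lem:constant-separator-set}. Since the separator property holds for \emph{every} point of $\partial P$, if a Voronoi vertex $p(e)$ lies to the right of a separator $\gamma=\pi(a,b)$ then $e$ (a farthest edge of $p(e)$) lies to the left, so the wall $\pi(p(e),e)$ is a shortest path crossing $\gamma$; and since every boundary point is to the right of at least one separator, every wall is handled by some $\gamma$. By Lemma~\ref{lem:separator-paths} (applied with the point $p(e)$ as funnel apex in place of a vertex $v$ — an immediate extension, since the source only plays the role of the apex) such a wall is made of edges of the shortest path trees $T_a$ and $T_b$, plus $O(1)$ extra segments (the first segment from the point to its first polygon vertex and at most one tangent edge crossing $\gamma$). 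The extra segments number $O(1)$ per wall and there are $O(n)$ walls, contributing $O(n)$ in total, so it remains to show that, for each separator, every directed polygon chord lies in only $O(1)$ walls.

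To prove this, I would argue exactly as in Claim~\ref{claim:hourglass-wall}. List the funnels with non-empty chain to the right of $\gamma$ as $Y(e_1),\dots,Y(e_k)$ in clockwise order of their chains. Suppose a wall of $Y(e_i)$ and a wall of $Y(e_j)$ share a directed chord $f=xy$ with $j-i$ at least a fixed constant (five suffices). Then $e_i\ne e_j$, and choosing interior chain points $u\in C(e_m)$, $u'\in C(e_{m'})$ with $i<m<m'<j$ so that $F(u)=e_m$ and $F(u')=e_{m'}$ are distinct, the shortest path map forces both $\pi(u,e_m)$ and $\pi(u',e_{m'})$ to traverse $f$ from $x$ to $y$ as well: the paths from $u$ and $u'$ to $y$ pass through $x$ because the two shared walls do, and symmetrically the paths from $x$ to $e_m$ and $e_{m'}$ pass through $y$, and concatenating these subpaths yields geodesics (this is the ``squashing'' argument). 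But $u,u',e_m,e_{m'}$ occur in that clockwise order around $\partial P$ — the sources lie on the right of $\gamma$, their farthest edges on the left, and the latter are monotone by the Ordering Property (Lemma~\ref{lem:ordering-properties}) — so $\pi(u,e_m)$ and $\pi(u',e_{m'})$ cross, go to distinct farthest edges, and share a directed chord, contradicting Property~\ref{prop:no-shared-chord}. Hence a directed chord can appear only in walls of a constant number of consecutive funnels, so the walls to the right of a fixed separator use each of the $O(n)$ tree edges $O(1)$ times, giving $O(n)$ total wall length per separator and $O(n)$ over all five.

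For the running time, I would compute the five separators and, for each, the trees $T_a,T_b$ by Lemma~\ref{SPT_lemma} and preprocess them for constant-time lowest-common-ancestor queries, all in $O(n)$; then by Lemma~\ref{lemma:walls_can_be_constructed_fast} (again with a point source) each wall is constructed in time proportional to its own size, so all walls are built in $O(n)$ time, and the chains and bases come for free from Phase~I. Together this constructs every $Y(e)$ in $O(n)$ time with $\sum_e|Y(e)|=O(n)$. The main obstacle is the per-chord bound on the walls in the third paragraph, which forces the squashing/crossing argument and the use of Property~\ref{prop:no-shared-chord}; the remaining steps are bookkeeping built on Phase~I, Lemma~\ref{lem:voronoi_containment}, and the separator toolkit, the only subtlety being the (routine) extension of the vertex-source lemmas to point sources.
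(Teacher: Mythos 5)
Your decomposition (chains and bases from Phase~I, then the walls via the five separators of Lemma~\ref{lem:constant_splits}, Lemma~\ref{lem:separator-paths} and Lemma~\ref{lemma:walls_can_be_constructed_fast}) matches the paper's proof in its overall skeleton, and your argument is correct; the difference is in the key counting step that bounds how often a chord can appear in walls. You transplant the ``squashing'' argument of Claim~\ref{claim:hourglass-wall}: if walls of two well-separated funnels shared a directed chord, two intermediate boundary points with distinct farthest edges would be forced to route their farthest-edge paths through that chord, contradicting Property~\ref{prop:no-shared-chord}. That works (the sandwiching of the intermediate chains between the two wall sources on the right side of the separator goes through just as in the hourglass case), but it is heavier than necessary. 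The paper instead exploits a fact you did not use: a funnel wall $\pi(p(e),e)$ emanates from a Voronoi vertex $p(e)\in X$, for which $e$ \emph{is} a farthest edge, so the walls are themselves paths from boundary points to their farthest edges. Hence for two sources in $X$ to the right of the same separator with distinct farthest edges, Claim~\ref{claim:across-separator} says the walls cross and Property~\ref{prop:no-shared-chord} immediately says they share no directed chord at all---no intermediate points or geodesic-concatenation argument needed. (The squashing detour in Claim~\ref{claim:hourglass-wall} exists precisely because hourglass walls are \emph{not} farthest-edge paths, as the paper remarks there.) So your route buys generality it does not need and pays with extra case analysis and a weaker ``$O(1)$ walls per chord'' bound, while the paper's observation yields the stronger disjointness statement and a shorter proof; both give the $O(n)$ total size and linear construction time, and your treatment of the point-source (rather than vertex-source) issue in Lemmas~\ref{lem:separator-paths} and~\ref{lemma:walls_can_be_constructed_fast} is at the same level of detail as the paper's own.
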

\begin{proof}
The farthest edge Voronoi diagram on $\partial P$ gives us the chains $C(e)$, so we only need to find the walls of the edge funnels, which are the shortest paths from the endpoints of $C(e)$ to $e$.
Equivalently, we must find,
for each Voronoi vertex $p \in X$, the shortest paths to $p$'s farthest edges. Note that there are $O(n)$ points in $X$, and 
by Lemma~\ref{lem:1D-Voronoi-edges}, 
each $p \in X$ has two farthest edges.

Recall that by Lemma~\ref{lem:constant_splits} there is a \reviewerchange{linear-time algorithm} to find a set of five farthest edge separators such that for every point $p \in \partial P$, one of the separators has $p$ to its 
right  
and, by definition of a farthest edge separator, has all farthest edges of $p$ to the 
left. . 
It therefore suffices to focus on one of these farthest
edge separators $\gamma = \pi(a,b)$, 
and give a \reviewerchange{linear-time algorithm} to find the shortest path from each point $p \in X$ that lies to the right of $\gamma$ to each of $p$'s farthest edges. 
By Lemma~\ref{lem:separator-paths}, each such path $\pi(p,F(p))$ is contained (except for one edge) in the shortest path trees $T(a)$ and $T(b)$. By Lemma~\ref{lemma:walls_can_be_constructed_fast},  
after a linear time preprocessing of the trees $T(a)$ and $T(b)$, each path $\pi(p,F(p))$ can be found in time proportional to its
number of edges, which we denote by $| \pi(p,F(p)) |$. 
Finally, we note that 
the two walls of one edge funnel may share edges, but we claim that walls of different edge funnels do not share edges if they originate from points in $X$ to the right of $\gamma$. 
\newchanged{Consider $p, q \in X$ and the paths $\pi(p,F(p))$, $\pi(q,F(q))$ with $F(p) \ne F(q)$. By Claim~\ref{claim:across-separator}, the paths cross, and then Property~\ref{prop:no-shared-chord} implies that the paths do not share any edges (chords).
}

We therefore have 
$\sum | \pi(p,F(p)) | \in O(|T(a)| + |T(b)| + n)$, 
where the last term accounts for the one edge of each path that is not in the trees. Thus the total run time to find all the shortest paths is $O(n)$.
\end{proof}

\subparagraph*{Defining the coarse cover $\cal T$ of polygon $P$.}
The idea is to partition each funnel $Y(e)$ into triangles in time linear in $O(|Y(e)|)$, and then take the union over all funnels.
We first use Lemma~\ref{ShortestPathForest_lemma} to   partition  $Y(e)$ in linear time into its \emph{shortest path map} from its base edge $b(e)$.  Recall that
the shortest path map 
partitions $Y(e)$ into regions such that shortest paths to $e$  from points in the same region are combinatorially the same. 
In addition,
if a region of the shortest path map contains any vertex $v$ whose shortest path to $e$ splits the region, then we subdivide the region at the path. All these subdivisions can be found in linear time, and the resulting subdivided regions $T$  
are either triangles or trapezoids; see Figure~\ref{fig:coarse-cover}.

Next we define distance functions on the triangles and trapezoids.
If $T$ is a triangle, then the shortest path to $e$ from any point $p \in T$ goes through an apex $v$ of the triangle, and the distance from $p$ to $e$ is $f_T(p) := d_2(p,v) + \kappa$ where $d_2$ is Euclidean distance (ignoring $P$) and  $\kappa$ is $d(v,e)$ which is independent of $p$.
If $T$ is a trapezoid, then the shortest path to $e$ from any point $p$ of $T$ is a straight line segment meeting $e$ at right angles, and the distance from  $p $ to $e$ is $f_T(p) := d_2(p,{\bar e})$, where ${\bar e}$ is the line through $e$ and $d_2$ is Euclidean distance (ignoring $P$). 
For the convenience of not having to say ``triangles and trapezoids,'' we will further partition each trapezoid into two triangles, each inheriting a distance function of the form $d_2(p,{\bar e})$.

Define \defn{${\cal T}(e)$} to contain
the triple $(T, f_T, e)$ for each triangle $T$ in the partition of $Y(e)$.
Along with the triples we store the following:

\begin{enumerate}
\squeezelist
\item Each triangle $T$ 
is given by its three sides: one side is a subsegment of an edge and the other two are chords (recall that a chord may include, or be, an edge of $P$).  A chord is given by its two endpoints and the vertex/edge containing each endpoint.  
\item Furthermore, we store a list of chords used as  sides of triangles of ${\cal T}(e)$, and for each chord, list the one or two triangles it is a side of.
Each chord is given by its two endpoints on $\partial P$.
\end{enumerate}

\begin{claim}
\label{claim:cover-one-funnel}
${\cal T}(e)$ can be computed in time $O(|Y(e)|)$, and has size $O(|Y(e)|)$.
\end{claim}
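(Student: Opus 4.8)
\textbf{Proof proposal for Claim~\ref{claim:cover-one-funnel}.}
The plan is to track the number of triangles produced and the time spent, following the construction step by step, and to check that every quantity is bounded in terms of $|Y(e)|$ alone.

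First I would apply Lemma~\ref{ShortestPathForest_lemma} with the ambient polygon taken to be $Y(e)$ and the distinguished edge taken to be the base $b(e)$. The funnel $Y(e)$ is only weakly simple (its walls may merge but never cross, by Property~\ref{prop:path-to-same-edge} of Lemma~\ref{lem:ordering-properties}), so strictly speaking Lemma~\ref{ShortestPathForest_lemma} is stated for a simple polygon; I would dispatch this either by an infinitesimal perturbation that separates the merged wall portions without changing the combinatorial structure of the shortest path map, or by observing directly that the linear-time shortest path map machinery of Guibas et al.~\cite{SPT_linear} is insensitive to coincident boundary edges. The key point is that all the work---the trees $T_a,T_b$ at the endpoints of $b(e)$, the perpendiculars dropped from orthogonally visible vertices, the extension of those perpendiculars to $\partial Y(e)$---takes place inside $Y(e)$, so the running time is $O(|Y(e)|)$ with no dependence on $|P|$, and the resulting shortest path map of $Y(e)$ from $b(e)$ has $O(|Y(e)|)$ triangular and trapezoidal faces.

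Next I would account for the two refinements. When a face of the shortest path map is split by the shortest path $\pi(v,e)$ of a vertex $v$ it meets, one chord is inserted; since each such chord can be charged to a distinct vertex of $Y(e)$ (as in the construction of a refined shortest-path map), at most $O(|Y(e)|)$ chords are added and the face count at most doubles, and each split is located in $O(1)$ time while scanning the faces, so this step costs $O(|Y(e)|)$. Splitting each surviving trapezoid into two triangles at most doubles the count again. Hence after all refinements there are $O(|Y(e)|)$ triangles $T$, and for each one the associated function $f_T$ has exactly the form demanded by Definition~\ref{defn:coarse-cover}: either $f_T(p)=d_2(p,v)+\kappa$ with $v$ the apex of $T$ and $\kappa=d(v,e)$, or $f_T(p)=d_2(p,\bar e)$; both $v$ (hence $\kappa$, read off the shortest path forest) and $\bar e$ are obtained in $O(1)$ time per triangle.

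Finally I would build the stored incidence data within the same bound: each triangle records three sides---one subsegment of the polygon edge $e$ and two chords, each chord given by its two endpoints on $\partial P$ together with the vertex/edge containing each endpoint---which is $O(1)$ per triangle, and the list of chords, with the one or two triangles incident to each, is assembled incrementally as triangles are created, for a total of $O(|Y(e)|)$. Summing the steps gives ${\cal T}(e)$ computed in $O(|Y(e)|)$ time and of size $O(|Y(e)|)$. I expect the only genuinely delicate point to be the first one---confirming that the linear-time shortest path map construction tolerates the weak simplicity of the funnel and, crucially, runs in time governed by $|Y(e)|$ rather than by the size of $P$---and this is handled by the perturbation argument above together with the observation that $Y(e)$ is self-contained for the purpose of shortest paths to its base.
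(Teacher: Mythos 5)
Your proposal is correct and follows essentially the same route as the paper, which states the claim without a separate proof because it follows directly from the construction described just before it (Lemma~\ref{ShortestPathForest_lemma} applied to $Y(e)$ from its base $b(e)$, the subdivision by vertex shortest paths, and the trapezoid splits, all within $Y(e)$). Your extra care about the weak simplicity of the funnel is a reasonable point the paper glosses over, and your resolution of it is fine.
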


Define 
\defn{${\cal T}$} 
to be $ \bigcup_e {\cal T}(e)$. 
We prove that $\cal T$ is a coarse cover of $P$ according to Definition~\ref{defn:coarse-cover}.
(Note that a chord may appear as a side of triangles in more than one ${\cal T}(e)$.  We could, in fact, identify these, but instead we simply allow duplicates in the list of chords.)

\begin{lem} 
\label{lem:coarse-cover}
The set $\cal T$ 
is a coarse cover of $P$.
Furthermore, $\cal T$
has size $O(n)$ and  can be computed in time $O(n)$.  
\end{lem}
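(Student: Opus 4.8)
The plan is to verify the three conditions of Definition~\ref{defn:coarse-cover} for $\cal T$, and then to bound its size and construction time using Lemma~\ref{lemma:flowers_total_linear} and Claim~\ref{claim:cover-one-funnel}.

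First I would check Conditions (1) and (2). By construction every element of $\cal T$ is a triple $(T, f_T, e)$ where $T$ is a triangle of $P$ obtained by refining the shortest path map of the edge funnel $Y(e)$ from its base $b(e)$, and $f_T$ has exactly the two allowed forms: $f_T(p) = d_2(p,v) + \kappa$ with $v$ an apex vertex and $\kappa = d(v,e)$ when $T$ is a genuine triangle of the map, and $f_T(p) = d_2(p,\bar e)$ when $T$ is one of the two triangles split off a trapezoid. Here every apex $v$ is a bend point of a wall or a vertex of $C(e)$, hence a vertex of $P$, as Definition~\ref{defn:coarse-cover} requires. The one point that needs an argument is that $f_T(p)$ really equals the geodesic distance $d(p,e)$ \emph{inside $P$} (not merely inside $Y(e)$). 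For this I would observe that $Y(e)$ is bounded by the subchain $C(e)\subseteq\partial P$, the base $b(e)\subseteq e$, and the two walls $\pi(p(e),e)$ and $\pi(q(e),e)$, which are shortest paths to $e$. A shortest path in $P$ from a point $p\in Y(e)$ to $e$ cannot cross $\partial P$, and cannot properly cross either wall, since two shortest paths to the common target edge $e$ do not cross---if they met at an interior point $z$ they would give two shortest paths from $z$ to $e$, so by general position they merge rather than cross. Hence such a path stays in $Y(e)$, its terminal lies on $b(e)$, and it coincides with the shortest path from $p$ to $e$ computed within $Y(e)$. Consequently the shortest path map of $Y(e)$ from $b(e)$ records the true geodesic distance $d(p,e)$ and the same argument applied to the apex $v$ shows $\kappa = d(v,e)$ is the geodesic distance in $P$; on each triangle $T$ this distance is therefore given by the closed form $f_T$.

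Next, Condition (3): given any point $x\in P$ and any farthest edge $e$ of $x$, Lemma~\ref{lem:voronoi_containment} gives $x\in Y(e)$; since the triangles of the refined shortest path map partition $Y(e)$, the point $x$ lies in some triangle $T$ of this partition, and the triple $(T,f_T,e)$ belongs to ${\cal T}(e)\subseteq{\cal T}$, as required. This establishes that $\cal T$ is a coarse cover. Finally, for the size and running time, Lemma~\ref{lemma:flowers_total_linear} gives that all the edge funnels $Y(e)$ can be built in $O(n)$ time with $\sum_e |Y(e)| = O(n)$, and Claim~\ref{claim:cover-one-funnel} gives that each ${\cal T}(e)$ is computed in $O(|Y(e)|)$ time and has $O(|Y(e)|)$ elements; summing over all edges $e$ yields $|{\cal T}| = \sum_e |{\cal T}(e)| = O\bigl(\sum_e |Y(e)|\bigr) = O(n)$ and total construction time $O(n)$.

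I expect the main obstacle to be the containment argument inside Condition (2)---making precise that a shortest path to $e$ from a point of $Y(e)$ never leaves $Y(e)$, so that ``geodesic distance in $P$'' and ``distance recorded by the shortest path map of $Y(e)$'' agree---together with the minor nuisance that $Y(e)$ is only weakly simple, since its two walls may share an initial subpath. This last point is harmless: a shortest path to $e$ that reaches such a shared subpath simply follows it, so the shortest path map of the (possibly weakly simple) funnel, and hence the triangulation and the functions $f_T$, are still well defined.
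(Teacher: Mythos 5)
Your proposal is correct and follows essentially the same route as the paper: condition (3) via Lemma~\ref{lem:voronoi_containment}, the forms of the functions by construction, and the $O(n)$ size and time by summing Lemma~\ref{lemma:flowers_total_linear} and Claim~\ref{claim:cover-one-funnel} over all edge funnels. The extra argument you supply---that a shortest path in $P$ from a point of $Y(e)$ to $e$ cannot cross the walls (two shortest paths to the same edge merge rather than cross), so the shortest path map of $Y(e)$ records the true geodesic distance $d(p,e)$---is a correct elaboration of a step the paper treats as implicit in the construction of the $f_T$'s (the same non-crossing reasoning appears in the paper's proof of Lemma~\ref{lem:voronoi_containment} via Property~\ref{prop:path-to-same-edge}).
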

\begin{proof}
From Lemma~\ref{lemma:flowers_total_linear}, we can construct all the edge funnels $Y(e)$ in time $\sum_e |Y(e)| \in O(n)$, and from Claim~\ref{claim:cover-one-funnel}, we
can compute ${\cal T}(e)$ in time $O|Y(e)|)$.  Thus we can compute $\cal T$ in time $O(n)$.

To prove that $\cal T$ is a coarse cover, first observe that the functions $f_T$ have the correct forms. 
By Lemma~\ref{lem:voronoi_containment}, for any point $p \in P$, if $e$ is a farthest edge from $p$, then $p$ is in the edge funnel $Y(e)$ so $p$ is contained in some triangle $T$ in the partition of $Y(e)$, and is therefore contained in a triple $(T,f_T,e)$ of ${\cal T}(e)$.
\end{proof}

\subsection{
Details for Section~\ref{section:final_edge_center}, Stage 1 Subproblems
}
\label{appendix:Q-bounds}
\changed{Stage 1 of the 
algorithm to find the edge center recurses on subproblems, each consisting of 
a subpolygon $Q$ that is a \newchanged{simple
$3$-anchor hull,} %
together with the coarse cover elements that intersect the interior of $Q$. 
We give some properties of these.}

\begin{obs}
\label{obs:3-geo-cell}
Let $Q$ be a 
\newchanged{$3$-anchor hull.}

\begin{enumerate}
\squeezelist

\item $Q$ is a closed connected weakly-simple polygon, and is \defn{geodesically convex in $P$}, meaning that for any two points $a$ and $b$ in $Q$, the geodesic path from $a$ to $b$ in $P$ is contained in $Q$.  This implies that the intersection of $Q$ with a chord [geodesic] of $P$ is a chord [geodesic] of $Q$.

\item The boundary of $Q$ consists of: 
\newchanged{the at most three anchors that are subchains of $\partial P$;
and at most three geodesic paths between pairs of anchors.
}

\item 
\newchanged{If $v$ is a vertex of $Q$ but not a vertex of $P$, then $v$ is a point anchor or the endpoint of an anchor chain that is interior to an edge of $P$.  In either case, $v$ is a convex vertex of $Q$.}

\end{enumerate}
\end{obs}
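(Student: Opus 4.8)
The plan is to verify the three items in turn, using only the definition of $Q$ as the geodesic convex hull of at most three anchors, i.e.\ the smallest geodesically convex subset of $P$ that contains the anchors. (This exists since an arbitrary intersection of geodesically convex subsets of $P$ is geodesically convex, and $P$ itself is geodesically convex.) For part~(1), geodesic convexity in $P$ holds by definition; $Q$ is closed because each anchor is compact and the geodesic hull of a compact set is compact; and $Q$ is connected, indeed path-connected, because any two of its points are joined inside it by a geodesic. For the assertion about chords and geodesics of $P$: if $K$ is a chord (respectively, a geodesic) of $P$ and $x,y\in K\cap Q$, then the geodesic $\pi(x,y)$ of $P$ is exactly the straight sub-segment (respectively, the sub-path) of $K$ between $x$ and $y$, since shortest paths in a simple polygon are unique and a sub-path of a geodesic is a geodesic; as $Q$ is geodesically convex this sub-piece lies in $Q$, so $K\cap Q$ is a single sub-segment (sub-path) of $K$ with both endpoints on $\partial Q$, which is what ``chord (geodesic) of $Q$'' means.

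For part~(2) I would use the standard ``rubber-band'' description of the geodesic hull, established by induction on the number of anchors via the incremental construction that adds one anchor at a time and attaches its two tangent geodesics to the current hull. This yields that $\partial Q$ decomposes cyclically into maximal arcs each of which is either (a)~contained in an anchor chain, and hence lies on $\partial P$ and therefore on $\partial Q$, or (b)~a \emph{bridge}: a geodesic path of $P$ joining a point of one anchor to a point of the next anchor in cyclic order around the hull. A point anchor contributes no type-(a) arc and an anchor chain contributes exactly one, while between any two consecutive anchors there is exactly one bridge; so with at most three anchors there are at most three subchains of $\partial P$ and at most three geodesic bridges on $\partial Q$. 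The degenerate configurations (a single anchor, where $Q$ is a point; two anchors, where $Q$ is a geodesic path; coincident or overlapping anchors) only decrease these counts. Since bridges are geodesics they neither cross $\partial P$ nor cross one another transversally, and distinct boundary arcs meet only at common endpoints, so $\partial Q$ is a closed weakly simple curve; this also finishes the weak simplicity claim left open in part~(1).

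For part~(3), let $v$ be a vertex of $Q$ that is not a vertex of $P$. A bridge, being a shortest path in $P$, bends only at reflex vertices of $P$, and an anchor chain bends only at vertices of $P$ in its relative interior; hence by the boundary description from part~(2) the only vertices of $\partial Q$ that are not vertices of $P$ are point anchors and the (at most two) endpoints of anchor chains, and such an endpoint, not being a vertex of $P$, lies in the relative interior of an edge of $P$. This gives the classification. For convexity, note that near $v$ the polygon $P$ is locally flat: either $v$ is interior to $P$, with a full disc neighbourhood, or $v$ lies in the relative interior of an edge of $P$, where the interior angle of $P$ is $\pi$. Pick points $a,b\in\partial Q$ near $v$ on the two boundary arcs incident to $v$; then $\pi(a,b)$ agrees with the straight segment $ab$ in a neighbourhood of $v$, and geodesic convexity of $Q$ puts this segment in $Q$, so it ``cuts the corner'' at $v$. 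Hence the interior angle of $Q$ at $v$ is at most $\pi$, and since $v$ is a genuine vertex it is strictly less than $\pi$, i.e.\ $v$ is a convex vertex of $Q$.

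The step I expect to be the main obstacle is part~(2): making the rubber-band/incremental-construction argument fully rigorous and, in particular, correctly handling the weakly simple (self-touching) and degenerate configurations of anchors; once the boundary structure of $Q$ is pinned down, parts~(1) and~(3) are short.
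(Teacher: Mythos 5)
The paper states this as an unproved Observation (it is treated as routine background in Appendix~\ref{appendix:Q-bounds}), so there is no in-paper proof to match; your write-up supplies essentially the standard argument one would expect, and it is correct. Two small points would tighten it. First, part~(2) is the crux and you only sketch the rubber-band/incremental construction; a cleaner route to the boundary description is to note that every chain anchor lies in $\partial P\cap Q$, hence entirely on $\partial Q$ (since $Q\subseteq P$), so the boundary of $Q$ consists of the full anchor chains joined by geodesic arcs, and geodesic convexity plus the fact that two shortest paths in a simple polygon cross at most once pins down that these connecting arcs are single geodesics between consecutive anchors. Second, in part~(3) your phrase ``a bridge joins a point of one anchor to a point of the next'' leaves open the possibility of a bridge attaching at an interior point of a chain and creating a corner there; the observation just made (the whole chain is on $\partial Q$) rules this out, since bridges can then only start and end at chain endpoints or point anchors, which is exactly the classification you need before the corner-cutting argument for convexity.
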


Each subproblem consists of the following.
\begin{enumerate}
\squeezelist
\item $Q$, a 
\newchanged{simple $3$-anchor hull} %
of $P$
that contains the geodesic edge center in its interior.
\item the set \defn{${\cal T}(Q)$}
of all coarse cover elements whose triangles intersect the interior of $Q$. 
Each triangle of the coarse cover is specified by its two defining chords of $P$ and the subsegment of an edge of $P$ that forms its third side.
Note: In this section we will
refer to triangles of the coarse cover  elements of ${\cal T}(Q)$ as
``triangles of ${\cal T}(Q)$.'' 

\item 
the set \newchanged{${\cal K}_T(Q)$ of 
defining chords of triangles of ${\cal T}(Q)$},
each given by its endpoints on the boundary of $P$, and each recording the one or two triangles of ${\cal T}(Q)$ that it is a side of.  
We also maintain the subset \newchanged{${\cal K}(Q)$ of chords that cross $Q$}, each given by its endpoints on the boundary of $Q$ (as well as its endpoints on the boundary of $P$).
\end{enumerate}

To solve a subproblem in Stage 1 means finding a point in $Q$ that minimizes the upper envelope of the functions of the coarse cover  $\mathcal{T}(Q)$, or reducing to $|Q| < 6$.

Define \defn{$t(Q)$} $:= |\mathcal{T}(Q)|$.
The \defn{size of a subproblem} is $|Q| + t(Q)$, where $|Q|$ is the number of vertices of $Q$ (as a polygon).
Initially, $Q$ is $P$,
$\mathcal{T}(Q)$ is all of $\cal T$, and ${\cal K}_T(Q)$ and ${\cal K}(Q) $ are all of $\mathcal{K}$.  
The size of the initial problem is $O(n)$, because $\cal T$ has linear size by Lemma~\ref{lem:coarse-cover}. 
Our goal is to  spend linear time in the size of a subproblem to reduce the size by a constant fraction.

We need some results about the size of $Q$.

\begin{lem}
\label{lem:floating_cells_constant_size}
If \newchanged{a simple $3$-anchor hull}
$Q$ is contained in a 
triangle of the coarse cover then $|Q| \le 6$.
\end{lem}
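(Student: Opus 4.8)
The plan is to use the fact that a coarse‑cover triangle $T$ is a genuine Euclidean triangle lying inside $P$, so that geodesics of $P$ that stay inside $T$ are straight segments. First I would record the topological fact that $\partial P$ cannot meet the interior of $T$: a point $x\in\partial P\cap\mathrm{int}(T)$ has a neighbourhood contained in $\mathrm{int}(T)$ that $\partial P$ separates into an inside‑$P$ part and an outside‑$P$ part, and the outside‑$P$ part contradicts $T\subseteq P$. Hence every anchor of $Q$ that is a subchain of $\partial P$ lies on $\partial T$. Next I would upgrade ``geodesically convex'' to ``Euclidean convex'': for $a,b\in Q\subseteq T$ the segment $[a,b]$ lies in $T\subseteq P$, so $[a,b]=\pi(a,b)$, and since $Q$ is geodesically convex in $P$ (Observation~\ref{obs:3-geo-cell}) we get $[a,b]\subseteq Q$. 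Thus $Q$ is a convex polygon, equal to the Euclidean convex hull of its at most three anchors, and the at most three bounding geodesics from Observation~\ref{obs:3-geo-cell} are straight segments.

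Consequently $\partial Q$ is built from at most three segments (the bounding geodesics) together with at most three convex sub‑arcs of $\partial T$ (portions of subchain anchors). It therefore suffices to show that each such sub‑arc is itself a single segment; then $\partial Q$ consists of at most six segments, giving $|Q|\le 6$.

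To prove this, I would argue that no vertex of $P$ lies in the relative interior of a subchain‑anchor arc. Such a vertex would lie on $\partial T$. Each side of $T$ is either its base (a sub‑segment of a polygon edge $e$, whose relative interior contains no vertex of $P$) or one of the two bounding chords, which in the coarse‑cover construction (Appendix~\ref{appendix:coarse-cover}) is a segment of a shortest path to $e$ or a shortest‑path‑map window, and hence by Assumption~\ref{assumption:no_three_points_collinear} contains no vertex of $P$ in its relative interior. So the only vertices of $P$ that $T$ can contain are its three corners. The two base corners are the endpoints of the base sub‑segment of $e$, and by general position — the subdivision points of the funnel base and the terminals $t(p(e),e),t(q(e),e)$ lie in the relative interior of $e$; cf.\ Lemma~\ref{lem:1D-Voronoi-edges} — they are not vertices of $P$. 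The apex corner, if it is a vertex of $P$, is a vertex at which shortest paths to $e$ bend, hence is reflex in $P$, and a reflex vertex of $P$ cannot lie in the relative interior of a convex arc of $\partial Q$. Thus every subchain‑anchor arc of $\partial Q$ is a straight segment, which completes the count.

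The step I expect to be the main obstacle is the third one: handling the degenerate possibilities for the corners of $T$ (a base corner coinciding with an endpoint of $e$, or a bounding chord of $T$ running along $\partial P$) and making the claim ``the sides of $T$ carry no vertex of $P$ in their relative interior'' fully rigorous from the coarse‑cover construction. The convexity and topology parts are routine.
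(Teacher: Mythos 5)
Your first two steps are fine and in fact streamline half of the paper's argument: the paper also uses that $\partial P$ cannot enter the interior of $T$, and it handles the bounding geodesics by noting that an internal geodesic vertex on $\partial T$ would force an interior angle of $Q$ at most $\pi$; your observation that $Q\subseteq T$ makes $Q$ Euclidean-convex, so the at most three bounding geodesics are single segments, achieves the same thing cleanly. The genuine gap is in your third step, where you claim each anchor sub-arc is a single segment by ruling out vertices of $P$ on $\partial T$ except at corners and then disposing of the corners. Two of the corner claims are not merely unproven but can be false: a base corner of a coarse-cover triangle \emph{can} be a vertex of $P$, because the funnel terminals $t(p(e),e)$, $t(q(e),e)$ may coincide with an endpoint of $e$ (shortest paths to an edge routinely terminate at its endpoints); this is a robust configuration, not a perturbable degeneracy, and Lemma~\ref{lem:1D-Voronoi-edges} says nothing about it (it concerns the Voronoi points $p(e),q(e)$ themselves, not their terminals). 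Likewise the apex of a triangle need not be reflex: the shortest path map of $Y(e)$ from its base has regions rooted at the endpoints of the base, and such a root can be a convex vertex of $P$ (an endpoint of $e$), so the reflex-angle argument does not cover it. Finally, excluding vertices of $P$ from the relative interiors of the chord sides does not follow directly from Assumption~\ref{assumption:no_three_points_collinear} for chords that pass through at most one vertex of $P$ (e.g.\ the perpendicular windows of the trapezoids); that needs a separate argument from the shortest-path-map construction, as you anticipated.

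The deeper issue is that your reduction ``it suffices to show each sub-arc is a single segment'' is stronger than what is needed, and it is exactly this stronger statement that forces the fragile corner analysis: if an anchor chain turned a corner of $T$ at a vertex of $P$, your accounting could give up to nine edges. The paper's proof counts differently and avoids corners altogether: every edge of $Q$ that lies on $\partial P$ is a subsegment of an edge of $P$ lying on one of the three lines supporting the sides of $T$, and each such line can carry at most one of them, since two collinear subsegments of polygon edges would produce three collinear vertices of $P$, contradicting Assumption~\ref{assumption:no_three_points_collinear}. This bounds the $\partial P$-edges of $Q$ by three \emph{in total}, independently of how they are distributed among anchors or whether an anchor chain bends at a corner of $T$; together with the at most three (straight) bounding geodesics, this gives $|Q|\le 6$. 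If you replace your per-arc single-segment claim with this per-side-of-$T$ count, your argument goes through and the corner case analysis can be dropped entirely.
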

\begin{proof}
Let $T$ be a triangle of ${\cal T}(Q)$
that contains $Q$.
We claim that the boundary of $Q$ has at most three edges that are subsegments of edges of $P$.  Any such segment must lie on the boundary of $T$, and each of the three sides of $T$ can contain at most one such segment by our assumption that no three vertices of $P$ are collinear
(Assumption~\ref{assumption:no_three_points_collinear}).

We next claim that each of the at most three
geodesic chains on the boundary of $Q$ consists of a single segment.  This is because an internal vertex $v$ of a geodesic chain is a vertex of $P$, which must then be on the boundary of $T$ (since no point on the boundary  of $P$ lies in the interior of $T$). 
But then the internal angle of $Q$ at $v$ is $\le \pi$, so $v$ is not an internal vertex of a geodesic path.

Thus $Q$ has at most six edges.
\end{proof}

\begin{claim}
\label{observation:cell_triangles}
\newchanged{If $Q$ is  a simple $3$-anchor hull, then}
$|Q| \le 3 t(Q) + 6 \le 9 t(Q)$.
\end{claim}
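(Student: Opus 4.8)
The plan is a charging argument: every vertex of $Q$ will be shown to be a vertex of one of the $t(Q)$ coarse-cover triangles that together cover $Q$, or one of at most six vertices coming from the anchors. First I would record that the triangles of $\mathcal{T}(Q)$ cover $Q$. The coarse-cover triangles cover all of $P$: every $p\in P$ has a farthest edge $e$, lies in the funnel $Y(e)$ by Lemma~\ref{lem:voronoi_containment}, and $Y(e)$ is partitioned into the triangles of $\mathcal{T}(e)$ in Appendix~\ref{appendix:coarse-cover}. Now any coarse-cover triangle whose interior meets the interior of $Q$ belongs to $\mathcal{T}(Q)$; since a coarse-cover triangle \emph{not} in $\mathcal{T}(Q)$ meets the interior of $Q$ only in its own (measure-zero) boundary, the triangles of $\mathcal{T}(Q)$ must already cover the interior of $Q$, and hence, taking closures, all of $Q$.

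Second, the key structural fact about a single coarse-cover triangle $T$. Its interior lies in the interior of $P$ (a point of $\partial P$ has no full neighbourhood inside $P\supseteq T$); one of its three sides is a subsegment of an edge of $P$, and the other two are chords of $P$ which, being windows of a shortest-path map or diagonals of trapezoidal cells (Appendix~\ref{appendix:coarse-cover}), contain no vertex of $P$ in their relative interiors -- and neither does the polygon-edge side. Consequently $T\cap\partial P$ consists of the polygon-edge side of $T$ together with whichever of the three vertices of $T$ happen to lie on $\partial P$; in particular, \emph{every vertex of $P$ that lies in $T$ is a vertex of $T$}.

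Finally I would count. By Observation~\ref{obs:3-geo-cell}, going around $\partial Q$ one alternately traverses at most three subchains of $\partial P$ (the anchor chains) and at most three bounding geodesic paths, meeting at transition points; each transition point is an endpoint of a bounding geodesic, hence a point of one of the at most three anchors, and a point anchor contributes one such point while a chain anchor contributes at most two, so $Q$ has at most $6$ transition vertices. Every other vertex $v$ of $Q$ lies in the relative interior of an anchor chain or of a bounding geodesic path, where the carrying curve turns; since a polygon subchain and a geodesic turn only at vertices of $P$, $v$ is a vertex of $P$. As $v\in Q$ and the triangles of $\mathcal{T}(Q)$ cover $Q$, $v$ lies in some $T\in\mathcal{T}(Q)$, and by the structural fact $v$ is a vertex of $T$. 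Hence the non-transition vertices of $Q$ form a subset of the set of vertices of the triangles of $\mathcal{T}(Q)$, which has at most $3\,t(Q)$ elements, so $|Q|\le 3\,t(Q)+6$; and since $Q$ has nonempty interior, $t(Q)\ge 1$, whence $3\,t(Q)+6\le 9\,t(Q)$.

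The step I expect to be the real obstacle is the structural fact about coarse-cover triangles -- that each really does have exactly one side lying on an edge of $P$ and that none of its three sides contains a vertex of $P$ in its relative interior. This has to be extracted from the shortest-path-map construction of Appendix~\ref{appendix:coarse-cover} together with the no-three-collinear Assumption~\ref{assumption:no_three_points_collinear} (and a small amount of care for degenerate chords that coincide with polygon edges). Once that is in hand, the covering statement and the vertex count above are routine bookkeeping, and one sees that the argument also subsumes the $t(Q)=1$ analogue of Lemma~\ref{lem:floating_cells_constant_size}, only with a weaker constant.
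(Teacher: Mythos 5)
Your overall skeleton matches the paper's: you use simplicity of $Q$ to place every vertex of $Q$ in some triangle of ${\cal T}(Q)$, you use the structure of a $3$-anchor hull to bound the non-$P$ vertices by six, and you finish with $t(Q)\ge 1$. The one place where you diverge is also the one place where there is a genuine gap, and you flag it yourself: the ``structural fact'' that no side of a coarse-cover triangle contains a vertex of $P$ in its relative interior, so that every vertex of $P$ lying in a triangle $T$ is a corner of $T$. This is stronger than what the stated assumptions give you. Assumption~\ref{assumption:no_three_points_collinear} rules out a vertex of $P$ in the relative interior of a chord whose supporting line already passes through two vertices of $P$ (e.g.\ the extension of a tree edge joining two vertices, or a segment of a splitting path between two bend vertices). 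But the construction in Appendix~\ref{appendix:coarse-cover} also produces chords determined by only one vertex of $P$: perpendiculars to the base edge $e$, and last segments of walls or splitting paths that end at a non-vertex terminal on $e$. Such a chord can, in a degenerate but not excluded configuration, graze a reflex vertex $x$ of $\partial P$ in its relative interior (the cell boundary passes straight through $x$ because the turn there is zero, and the subdivision rule does not cut at $x$ since $x$'s path runs along the cell boundary rather than splitting a cell). In that situation $x$ is a vertex of $P$ inside two coarse-cover triangles without being a vertex of either, and your charging of each non-transition vertex of $Q$ to a corner of a triangle breaks down.

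The paper's proof shows this strong claim is unnecessary: it only proves that each coarse-cover triangle contains at most three vertices of $P$, and this weaker statement follows from what is already assumed. No vertex of $P$ is interior to a triangle; each side contains at most two vertices of $P$ by Assumption~\ref{assumption:no_three_points_collinear}; one side is a subsegment of an edge of $P$ and hence has none in its relative interior; and a short case analysis (apexed triangle versus half of a trapezoid, whose diagonal has interior inside the trapezoid and hence inside the interior of $P$) yields the bound of three per triangle, regardless of whether those vertices are corners of the triangle or lie in the relative interiors of its chord sides. Since each vertex of $Q$ that is a vertex of $P$ lies in at least one triangle of ${\cal T}(Q)$, this gives $|Q|\le 3t(Q)+6$ exactly as in your count. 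So to repair your argument, either add a general-position assumption excluding these tangency configurations and prove your structural fact carefully from the shortest-path-map construction, or---more simply---replace ``vertex of $T$'' by ``one of the at most three vertices of $P$ contained in $T$,'' which is the route the paper takes. Your covering step and the bound of six transition vertices are fine (note, though, that by the definition of ${\cal T}(Q)$ a triangle not in ${\cal T}(Q)$ is disjoint from the interior of $Q$, so the measure-zero remark is not needed).
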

\begin{proof}
\newchanged{Because $Q$ is simple, every vertex $v$ of $Q$ has interior points of $Q$ in its neighbourhood, so $v$}
must be contained in some triangle of ${\cal T}(Q)$ since ${\cal T}(Q)$ is 
a coarse cover of $Q$. 
By Observation~\ref{obs:3-geo-cell}, all but 6 of the vertices of $Q$ are vertices of $P$.

To complete the proof we show that each triangle $T$ of the coarse cover contains at most three vertices of $P$.
No vertex of $P$ is internal to $T$. Since $P$ does not have 3 collinear vertices (by Assumption~\ref{assumption:no_three_points_collinear}), each side of $T$ contains at most 2 vertices of $P$.  Furthermore, one side of $T$---call it $s_1$---is a subsegment of an edge of $P$, so it cannot contain vertices in its interior. 
Triangles of the coarse cover either have a vertex of $P$ as an apex opposite $s_1$, or arise from subdividing a trapezoid (see Appendix~\ref{appendix:coarse-cover}).  In the first case, $T$ has at most one more vertex on each side incident to the apex for a total of at most three vertices of $P$.  In the second case, $T$ has a side that is a diagonal of a trapezoid and contains no vertices in its interior, though it may have a vertex of $P$ at its intersection with $s_1$, and 
the third side of $T$ has at most two vertices of $P$, for a total of at most three vertices of $P$.
Thus $T$ contains at most three vertices of $P$.

This shows that $|Q| \le 3t(Q) + 6$.  For the second part of the inequality, note that $t(Q) \ge 1$.
\end{proof}

We note that the above Claim depends on the assumption that $Q$ is a $3$-anchor hull of $P$.
If we constructed $3$-anchor hulls of $3$-anchor hulls, then
the number of vertices that are not vertices of $P$ would grow.

We also need the following relationships between the number of  chords  and the 
number of coarse cover triangles.

\begin{claim}
\label{observation:chords_triangles}
$|\mathcal{K}(Q)| \leq |\mathcal{K}_T(Q)| \leq 2 t(Q)$.  If $Q$ is not contained in a triangle of ${\cal T}(Q)$, then $t(Q) \le 2|{\cal K}(Q)|$.
\end{claim}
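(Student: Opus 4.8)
The first two inequalities are incidence counts and the last one combines an incidence count with a short topological argument.

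\emph{The bound $|{\cal K}(Q)| \le |{\cal K}_T(Q)| \le 2t(Q)$.} The first inequality is immediate from the definitions, since ${\cal K}(Q)$ is by construction a sub(multi)set of ${\cal K}_T(Q)$. For the second, recall from Appendix~\ref{appendix:coarse-cover} that every triangle $T$ of ${\cal T}(Q)$ has exactly two of its three sides being chords of $P$ (the third being a subsegment of an edge of $P$), and that ${\cal K}_T(Q)$ is exactly the (multi)set of such chord sides (with duplicates kept when a chord is a side of triangles in more than one funnel). Consider the set of incidences $\{(K,T): K\text{ a recorded chord side of }T,\ T\in{\cal T}(Q)\}$. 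Each triangle contributes exactly two incidences, so there are $2t(Q)$ of them, and each entry of ${\cal K}_T(Q)$ occurs in at least one incidence; hence $|{\cal K}_T(Q)|\le 2t(Q)$.

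\emph{The bound $t(Q)\le 2|{\cal K}(Q)|$ when $Q$ lies in no triangle of ${\cal T}(Q)$.} The plan is to show that, under this hypothesis, \emph{every} triangle $T$ of ${\cal T}(Q)$ has a chord side that crosses $Q$, and hence lies in ${\cal K}(Q)$. Fix $T\in{\cal T}(Q)$. By definition $T$ meets the interior of $Q$; since $Q$ is simple, $\mathrm{int}(Q)$ is dense in $Q$, and since $T$ is closed and $Q\not\subseteq T$, we get $\mathrm{int}(Q)\setminus T\ne\emptyset$; as $\mathrm{int}(Q)$ is connected, it follows that $\partial T$ meets $\mathrm{int}(Q)$. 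The side of $T$ that is a subsegment of an edge of $P$ lies on $\partial P$, which is disjoint from $\mathrm{int}(Q)\subseteq\mathrm{int}(P)$ (a point interior to $Q$ cannot lie on an anchor chain and hence cannot lie on $\partial P$); therefore some chord side $K$ of $T$ meets $\mathrm{int}(Q)$. Picking $x\in K\cap\mathrm{int}(Q)$, the set $\mathrm{int}(Q)$ contains a two-dimensional ball about $x$, which the line through $K$ splits into two parts lying in the two open half-polygons of $K$; thus both open half-polygons of $K$ contain points of $Q$, i.e.\ $K$ crosses $Q$, so $K\in{\cal K}(Q)$.

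To finish, map each triangle $T\in{\cal T}(Q)$ to a chord side of $T$ that lies in ${\cal K}(Q)$ (taking the entry of ${\cal K}_T(Q)$ recorded for $T$'s own funnel). Each entry of ${\cal K}_T(Q)$ is a side in the triangulated shortest-path-map partition of a single funnel, hence a side of at most two triangles of ${\cal T}(Q)$; so this map has fibres of size at most $2$, giving $t(Q)\le 2|{\cal K}(Q)|$. The one genuinely non-routine step is the topological argument in the last part: passing from ``a chord side of $T$ meets $\mathrm{int}(Q)$'' to ``that chord crosses $Q$'', which uses the local two-dimensionality of $\mathrm{int}(Q)$ together with the fact that the non-chord side of $T$ lies on $\partial P$ and so cannot be the side of $T$ that reaches into $\mathrm{int}(Q)$; the rest is bookkeeping of side-incidences.
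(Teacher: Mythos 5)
Your proof is correct and follows essentially the same route as the paper: count two chord sides per triangle for $|\mathcal{K}_T(Q)| \le 2t(Q)$, and for $t(Q) \le 2|\mathcal{K}(Q)|$ observe that every triangle of $\mathcal{T}(Q)$ has a chord side crossing $Q$ while each (duplicated per funnel) chord bounds at most two triangles. The only difference is that you spell out the topological justification that a triangle meeting $\operatorname{int}(Q)$ but not containing $Q$ must have a chord side crossing $Q$, a step the paper asserts without detail.
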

\begin{proof}
For the first inequality, ${\cal K}(Q) \subseteq {\cal K}_T(Q)$ and every triangle of the coarse cover has two chords (the third side is a piece of a polygon edge).  For the second inequality, since no triangle of ${\cal T}(Q)$ contains $Q$, each one has at least one chord that crosses $Q$, and each chord comes from the coarse cover ${\cal T}(e)$ of a  
funnel $Y(e)$ and is a side of one or two coarse cover triangles in ${\cal T}(e)$.  (If a chord arises from more more than one $Y(e)$, we duplicate it in $\cal K$, see the definition of ${\cal T}(e)$ in
Appendix~\ref{appendix:coarse-cover}.)
\end{proof}

\subsection{Details for Section~\ref{section:largeQ}, Stage 1: Algorithm for Large $Q$}
\label{appendix:Algorithm-Stage1}

In this section we give an algorithm to handle a subproblem corresponding to a
subpolygon $Q$ 
\newchanged{(a simple 3-anchor hull)} with $|Q| >6$ 
and its associated sets ${\cal T}(Q)$, ${\cal K}_T(Q)$, and ${\cal K}(Q)$. 
By Lemma~\ref{lem:floating_cells_constant_size}, no triangle of ${\cal T}(Q)$
contains $Q$, so every triangle of ${\cal T}(Q)$ has a chord in ${\cal K}(Q)$.
The algorithm either finds the edge center or reduces to a subproblem with $|Q| \le 6$ which is handled in 
Appendix~\ref{section:constantQ}.
The idea was described in the main text.

\begin{enumerate}
\squeezelist

\item For 
$\epsilon = \frac{1}{80}$,
construct an $\epsilon$-net $N$ for the 
\newchanged{$3$-anchor}
range space
with ground set ${\cal K}(Q)$.  The range space is defined with respect to $3$-anchor hulls of $P$.

\item Compute the arrangement $A$ of the chords $N$
inside $Q$, and
use the Chord Oracle of Lemma~\ref{lem:generalized-chord-oracle} to find the face $F$ of $A$ that contains the edge center.

\item 
Partition  face $F$ into a constant number of
\newchanged{$3$-anchor hulls}
of $P$.

\item
Use the Geodesic Oracle (Lemma~\ref{lem:geodesic-oracle}) to find
\newchanged{which of these $3$-anchor hulls contains the edge center, and to reduce it to a simple $3$-anchor hull $Q'$.}

\item If $|Q'| \le 6$ then test each triangle of ${\cal T}(Q)$ to find ${\cal T}(Q')$ and ${\cal K}_T(Q')$, and switch to Stage 2 in the next subsection.

\item Otherwise $|Q'| > 6$.
Find ${\cal K}(Q')$  and ${\cal T}(Q')$, and recurse on the subproblem for $Q'$.

\end{enumerate}

We elaborate on these steps and their run-times below, but first we justify that our choice of $\epsilon$ in Step 2 guarantees that the size of the subproblem we recurse on is reduced by a fraction.
Recall that the size of the subproblem for $Q$ is $|Q| + t(Q)$.

\begin{lem}
For 
$\epsilon = \frac{1}{80}$,
if $|Q'| > 6$, then 
$|Q'| + t(Q') \le \frac{1}{2} (|Q| + t(Q))$.
\end{lem}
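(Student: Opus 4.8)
The plan is to chain the $\epsilon$-net guarantee together with the counting relations of Claims~\ref{observation:cell_triangles} and~\ref{observation:chords_triangles} and the size bound of Lemma~\ref{lem:floating_cells_constant_size}, all tuned to the value $\epsilon = \frac{1}{80}$. The key structural fact is that the chords of the $Q'$-subproblem that actually cross $Q'$ form a range of the $3$-anchor range space on the ground set ${\cal K}(Q)$, and that this range avoids $N$.

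First I would verify the inclusion ${\cal K}(Q') \subseteq \{K \in {\cal K}(Q) : K \text{ crosses } Q'\}$. Since $Q' \subseteq Q$, every triangle of ${\cal T}(Q')$ is a triangle of ${\cal T}(Q)$, so ${\cal K}_T(Q') \subseteq {\cal K}_T(Q)$; and a chord $K$ that crosses $Q'$ also crosses $Q$, because both open half-polygons of $K$ then contain points of $Q' \subseteq Q$. The right-hand side above is exactly the range of the simple $3$-anchor hull $Q'$ in the $3$-anchor range space. Next, $Q'$ lies inside a single cell $F$ of the arrangement of $N$ in $Q$ (the algorithm partitions $F$ into $3$-anchor hulls and derives $Q'$ from one of them), so no chord of $N$ meets the interior of $F \supseteq Q'$; since $Q'$ is the closure of its interior, $Q'$ lies in one closed half-polygon of each chord of $N$, i.e.\ no chord of $N$ crosses $Q'$. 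Hence the range of $Q'$ contains no element of $N$, and the defining property of the $\epsilon$-net gives $|{\cal K}(Q')| \le \epsilon\,|{\cal K}(Q)|$.

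Now the bookkeeping. By the first inequality of Claim~\ref{observation:chords_triangles}, $|{\cal K}(Q)| \le 2\,t(Q)$, so $|{\cal K}(Q')| \le 2\epsilon\,t(Q)$. Since $|Q'| > 6$, Lemma~\ref{lem:floating_cells_constant_size} says $Q'$ is not contained in a triangle of ${\cal T}(Q')$, so the last part of Claim~\ref{observation:chords_triangles} gives $t(Q') \le 2\,|{\cal K}(Q')| \le 4\epsilon\,t(Q)$. Also $|Q'| > 6$ forces $t(Q') \ge 1$: Claim~\ref{observation:cell_triangles} gives $|Q'| \le 3\,t(Q') + 6$, which would be at most $6$ if $t(Q') = 0$. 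Hence $|Q'| \le 9\,t(Q')$ by Claim~\ref{observation:cell_triangles}, and combining,
\[
|Q'| + t(Q') \ \le\ 10\,t(Q') \ \le\ 40\epsilon\,t(Q) \ \le\ \frac{1}{2}\bigl(|Q| + t(Q)\bigr),
\]
where the last step uses $40\epsilon = \frac{1}{2}$ (that is, $\epsilon = \frac{1}{80}$) and $t(Q) \le |Q| + t(Q)$.

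The only genuinely delicate step, and the one I would write out carefully, is the reduction in the second paragraph: checking that the crossing chords of the $Q'$-subproblem really do form a range of the $3$-anchor range space built on ${\cal K}(Q)$, and that this range is disjoint from $N$ (which rests on $Q'$ sitting inside one cell of the arrangement of $N$). Everything after that is arithmetic, with the constants chosen precisely so that the product $2\cdot 2\cdot 10\cdot\epsilon = 40\epsilon$ equals $\frac{1}{2}$.
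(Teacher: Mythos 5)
Your proposal is correct and follows essentially the same route as the paper's proof: the paper likewise uses $|Q'|>6$ with Lemma~\ref{lem:floating_cells_constant_size} to apply the last part of Claim~\ref{observation:chords_triangles}, chains $|Q'|+t(Q') \le 10\,t(Q') \le 20\,|{\cal K}(Q')| \le \tfrac{1}{4}|{\cal K}(Q)| \le \tfrac{1}{2}t(Q)$ via the $\epsilon$-net property and Claims~\ref{observation:cell_triangles} and~\ref{observation:chords_triangles}, with identical constants. Your explicit verification that ${\cal K}(Q')$ sits inside the range of $Q'$ over the ground set ${\cal K}(Q)$ and that this range avoids $N$ (because $Q'$ lies in one cell of the arrangement of $N$) is a welcome spelling-out of what the paper leaves implicit in the phrase ``since ${\cal K}(Q') \cap N = \phi$.''
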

\begin{proof} 
Since $|Q'| > 6$, no triangle of ${\cal T}(Q')$ contains $Q'$.
Thus, since ${\cal K}(Q') \cap N = \phi$, the defining property of $\epsilon$-nets (equation~\ref{eqn:epsilon-net}), ensures that
$|{\cal K}(Q')| \le \frac{1}{80}  |{\cal K}(Q)|$,
which we relate to the subproblem sizes as follows.
\begin{align*}
|Q'| + t(Q') & \le 9 t(Q') + t(Q') & \text{by Claim~\ref{observation:cell_triangles}}\\
& = 10 t(Q') \le 20 |{\cal K}(Q')| & \text{by Claim~\ref{observation:chords_triangles} (no triangle of ${\cal T}(Q')$ contains $Q'$)}\\
& \le \tfrac{20}{80} |{\cal K}(Q)| = \tfrac{1}{4}  |{\cal K}(Q)|& \text{by the $\epsilon$-net property}\\
& \le \tfrac{1}{2} t(Q) & \text{by Claim~\ref{observation:chords_triangles}}\\
& \le \tfrac{1}{2}(|Q| + t(Q))\\
\end{align*}
\end{proof}

We now fill in more details of the steps of the algorithm, and justify that the runtime is $O(|Q| + t(Q))$.

\smallskip
\noindent{\bf 1. Construct an $\epsilon$-net.}
Lemma~\ref{lem:constant_shattering_dimension}
proves that the
$3$-anchor 
range space has bounded VC-dimension, and
Lemma~\ref{lem:subspace-oracle} proves that a subspace oracle exists.  This implies (see
Lemma~\ref{lem:constant_size_net})
that we can find a constant
sized $\epsilon$-net for this range space in time proportional to the size of the ground set, which is 
$O(| \mathcal{K}(Q) |)$ in our case.
By Claim~\ref{observation:chords_triangles} this is $O(t(Q))$.

\smallskip
\noindent{\bf 2. Compute the arrangement of $A$ in $Q$ and find the face $F$ that contains the edge center.}
Once the constant sized $\epsilon$-net  $N$ is determined, we can construct the arrangement of the chords in $O(|N|^2) = O(1)$ 
time, using the algorithm of Edelsbrunner et al~\cite{edelsbrunner1986constructing}.
Note that we know the endpoints of each chord of $N$ on $\partial Q$.
We run the chord oracle of Lemma~\ref{lem:generalized-chord-oracle} on each chord of $N$ inside polygon $Q$ to determine the face $F$ that contains the edge center (halting if we find the center on one of the chords).
This takes $O(t(Q))$ time for each chord of $N$. 
Since $N$ has constant size, 
this step takes $O(t(Q))$ time.

\smallskip
\noindent{\bf 3. Partition $F$ into 
\newchanged{$3$-anchor hulls.}
}
The boundary of 
$F$ consists of $O(1)$ segments of chords in $N$, $O(1)$ subchains of the geodesics bounding $Q$, and  $O(1)$ subchains of the polygon $P$. Let 
\newchanged{$V = \{v_0, \ldots, v_t \}$ be the points in order around $\partial F$ that join successive segments/subchains}.
Then $V$ has size $O(1)$.  %
Find shortest paths 
$\gamma_i = \pi(v_0, v_i)$, $i = 1, \ldots, t$
in $F$. 
This takes time $O(|F|)$, which is $O(|Q|)$.

Let $\Gamma$ be the set of these $O(1)$ shortest (geodesic) paths. 
Because $F$ is geodesically convex, each shortest path $\gamma_i \in \Gamma$ is a geodesic path in $P$ (the shortest path in $P$ from $v_0$ to $v_i$ lies inside $F$, and thus is equal to $\gamma$).
\newchanged{We claim that
the paths of $\Gamma$ subdivide $F$ into a constant number of 
$3$-anchor hulls (which need not be simple).  If the boundary of $\partial F$ between $v_i$ and $v_{i+1}$, $i = 1, \ldots, t-1$,  is a segment of a chord of $N$ or a subchain of a geodesic bounding $Q$, then take the $3$-anchor hull that is the geodesic hull of the three point anchors $v_0, v_i, v_{i+1}$.  If the boundary of $\partial F$ between $v_i$ and $v_{i+1}$
is a subchain of $\partial P$, then take the $3$-anchor hull that is the geodesic hull of $v_0$ and the polygon chain. 
Finally, if if the boundary of $\partial F$ between $v_0$ and $v_1$ or between $v_t$ and $v_0$ is a subchain of $\partial P$, then take the $3$-anchor hull of the polygon chain. 
}

\smallskip
\noindent{\bf 4. 
\newchanged{Find a simple $3$-anchor hull $Q' \subseteq F$ that contains the edge center.}}
Call the Geodesic Oracle (Lemma~\ref{lem:geodesic-oracle}) in $Q$ for each of the $O(1)$ geodesics
of $\Gamma$.
Halt if we find the center on one of the geodesics.
\newchanged{Otherwise, the geodesic oracle tells us which region of the partition by $\Gamma$ contains the edge center in its interior, and this gives us a \defn{simple} $3$-anchor hull $Q'$ with the edge center in its interior.}
Each of the constant number of calls to the geodesic oracle takes time 
$O(|Q| + t(Q))$.

\smallskip
\noindent{\bf 5. If $|Q'| \le 6$, find  ${\cal T}(Q')$ and ${\cal K}_T(Q')$.}
Since $Q'$ has constant size, we can find its intersection with each triangle in ${\cal T}(Q)$ in constant time, so we can find ${\cal T}(Q')$ and ${\cal K}_T(Q')$ in time $O(t(Q))$.

\smallskip
\noindent{\bf 6. If $|Q'| > 6$ find ${\cal K}(Q')$ and ${\cal T}(Q')$.}
We first find ${\cal K}(Q')$ by checking which chords of ${\cal K}(Q)$ cross $Q'$.
By Observation~\ref{obs:3-geo-cell}, the 
$3$-anchor hull
$Q'$ is bounded by at most three polygon chains and three geodesic chains.
A chord of ${\cal K}(Q)$ crosses $Q'$ if and only if it has an endpoint interior to one of polygon chains of $Q'$, or crosses one of the geodesic chains of $Q'$.  
We can test the former in constant time per chord because we know the endpoints of each chord on $\partial P$ (including knowing  which edge of $P$ contains the endpoint). %
We can test the latter by finding the intersections of the chords of ${\cal K}(Q)$ with each of the at most three geodesics bounding $Q'$ using Lemma~\ref{lem:intersection-with-geodesic} in $Q$. The runtime is $O(|{Q}| + |{\cal K} (Q)|)$ = $O(|{Q}| +  t(Q))$.  

Note that these tests also determine the endpoints of each chord of ${\cal K}(Q')$ on $\partial Q'$.

Finally, since each chord of ${\cal K}(Q)$ records the triangles of ${\cal T}(Q)$ that it bounds, we set ${\cal T} (Q')$ to be the triangles that are bounded by a chord of ${\cal K}(Q')$.
Note that this gives all triangles that intersect the interior of $Q'$ since no triangle contains $Q'$
by Lemma~\ref{lem:floating_cells_constant_size}.
This step takes
$O(t(Q))$ 
time.

\subsection{Problem with the Partitioning Scheme of Ahn et al.}
\label{section:counterexample}

In this section we explain the error in the step of the algorithm of Ahn et al.~\cite[Section 6]{linear_time_geodesic} where they take a 4-cell subdivided by chords of  
an $\epsilon$-net $N$ of constant size and partition the resulting faces into a constant number of $4$-cells. 
From the intersection points and endpoints of the chords in $N$, they shoot vertical rays up and down until either a chord of $N$ or the boundary of the outer 4-cell is reached. 
They claim that this subdivides each face into a constant number of 4-cells.  It is true that there are a constant number of regions, but not true that the regions are $4$-cells.
\reviewerchange{A counterexample is shown in Figure~\ref{fig:decomposition-1};}
there are five chords in $N$, and the construction of Ahn et al.~leaves a $5$-cell.

\reviewerchange{We briefly describe a way to fix their approach.
Find a trapezoidization of the faces of the arrangement of $N$ in the $4$-cell.  This can be done in time linear in the size of the $4$-cell.
The dual of the trapezoidization is a tree.  Working from the leaves of the tree, take a union of trapezoids until the resulting region is a $4$-cell, then chop it off and continue.
}

\subsection{Details for Section~\ref{section:epsilon-net-results}, $\epsilon$-Net Results for Stage 1}
\label{appendix:epsilon-nets}

\begin{claim}
\label{claim:4-cell}
If $Q$ is a 4-cell, then it is a $4$-anchor hull.
\end{claim}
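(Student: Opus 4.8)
To show that a $4$-cell is a $4$-anchor hull, the plan is to read off at most four anchors directly from the boundary of the $4$-cell $Q = H_1 \cap H_2 \cap H_3 \cap H_4$, where $H_i$ is the half-polygon bounded by a chord $K_i$. Two basic facts set this up. First, each half-polygon is geodesically convex in $P$: a geodesic joining two points on one side of a chord $K$ cannot cross $K$, since the subpath between two crossing points could be shortcut along $K$ itself; hence $Q$, as an intersection of geodesically convex sets, is geodesically convex. Second, for each $i$ the set $\partial Q \cap K_i$ equals $Q \cap K_i$ and is a single (possibly empty or degenerate) subsegment of $K_i$: a point of $Q$ lying on $K_i$ lies on $\partial H_i$, hence on $\partial Q$; and $Q \cap K_i = K_i \cap \bigcap_{j \ne i} H_j$ is an intersection of three subintervals of the segment $K_i$, because two chords, being straight segments, meet in at most one point, so each $K_i \cap H_j$ is a subinterval of $K_i$.

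Next I would assume $Q$ is two-dimensional (otherwise it is empty, a point, or a segment, and trivially a $\le 4$-anchor hull) and regard $\partial Q$ as a simple closed curve. By the second fact, $\partial Q$ carries at most four maximal pieces lying on chords — at most one per chord $K_i$. Deleting these pieces leaves at most four arcs of $\partial Q$, and each such arc lies on $\partial P$: a boundary point of $Q$ lying on no chord must lie on $\partial P$, since $\partial Q \subseteq \bigcup_i \partial H_i$ and $\partial H_i$ is $K_i$ together with a subchain of $\partial P$. Each of these at-most-four arcs is either a nondegenerate subchain of $\partial P$, or a single point; a single point here is either a point of $\partial P$ or, if it is the meeting of two chord pieces in the interior of $P$, a crossing point of two chords (a convex vertex of $Q$). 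In every case the arc is an anchor, so I obtain a set $\mathcal{A}$ of at most four anchors (and if $\partial Q$ carries no chord piece at all then $Q = P$, with $\mathcal{A} = \{\partial P\}$).

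It then remains to verify that $Q$ equals the geodesic convex hull of $\mathcal{A}$. Every member of $\mathcal{A}$ lies in $Q$ and $Q$ is geodesically convex, so the hull of $\mathcal{A}$ is contained in $Q$. For the reverse inclusion: each endpoint of each chord piece of $\partial Q$ is either a chord-crossing point that belongs to $\mathcal{A}$, or an endpoint of one of the $\partial P$-subchains that belong to $\mathcal{A}$; such a chord piece, being a straight segment inside $P$, is the geodesic between its two endpoints and hence lies in the hull of $\mathcal{A}$. Together with the $\partial P$-subchains themselves this shows $\partial Q$ is contained in the hull of $\mathcal{A}$, and since that hull is geodesically convex and contains the simple closed curve $\partial Q$, it contains the region $Q$ bounded by it. Hence $Q$ is the geodesic convex hull of at most four anchors, i.e., a $4$-anchor hull.

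The delicate step is the second basic fact together with the bookkeeping in degenerate configurations — chords sharing an endpoint on $\partial P$, pairs of collinear chords, and chord pieces that shrink to points — where one must check that the count ``at most four chord pieces, hence at most four gaps, hence at most four anchors'' does not break; the remaining topological manipulations of geodesic convexity are routine. (The same argument, specialised to three chords, re-proves that a $3$-cell is a $3$-anchor hull, and in fact shows that a $k$-cell is a $k$-anchor hull.)
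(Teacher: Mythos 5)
Your proposal is correct and follows essentially the same route as the paper: the paper's (two-sentence) proof also observes that $\partial Q$ carries at most four chord segments, takes the polygon chains or points joining consecutive chord segments as the $\le 4$ anchors, and asserts that $Q$ is their geodesic hull. You simply supply the details (geodesic convexity of $Q$, one boundary piece per chord, and both hull inclusions) that the paper leaves implicit.
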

\begin{proof}
Around the boundary of $Q$, there are four chords (or segments of chords), with two consecutive ones joined by a polygon chain or meeting at a point.  $Q$ is the geodesic hull of these $\le 4$ polygon chains and points.
\end{proof}

\begin{claim}
\label{claim:same-crossing-chords}
Let $Q$ be a 
\newchanged{$3$-anchor hull}
and $\psi(Q)$ be the corresponding expanded 3-anchor hull. Then 
a chord of $\cal{K}$ crosses $Q$ if and only if it crosses $\psi(Q)$,
i.e., ${\cal K}(Q) = {\cal K}(\psi(Q))$.
\label{claim:expansions}
\end{claim}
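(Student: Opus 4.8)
The plan is to prove the two inclusions ${\cal K}(Q) \subseteq {\cal K}(\psi(Q))$ and ${\cal K}(\psi(Q)) \subseteq {\cal K}(Q)$ separately. The first is pure monotonicity: each anchor of $Q$ is contained in the expanded anchor that replaces it, and the geodesic convex hull is monotone under containment, so $Q \subseteq \psi(Q)$. Hence if both open half-polygons of a chord $K \in {\cal K}$ meet $Q$, they also meet $\psi(Q)$, i.e.\ $K$ crosses $\psi(Q)$ whenever it crosses $Q$.

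For the reverse inclusion I would argue by contrapositive. Suppose $K \in {\cal K}$ does not cross $Q$. Then one open half-polygon of $K$ misses $Q$ entirely, so $Q$ is contained in the complementary closed half-polygon $D$ of $K$. I claim $\psi(Q) \subseteq D$, which immediately gives that $K$ does not cross $\psi(Q)$. Since a (closed) half-polygon is geodesically convex in $P$ (it is the geodesic hull of a boundary subchain together with the chord, cf.\ Observation~\ref{obs:3-geo-cell}), and $\psi(Q)$ is the geodesic hull of its expanded anchors, it suffices to show that every expanded anchor of $\psi(Q)$ lies in $D$. Each anchor of $Q$ lies in $Q \subseteq D$, so the task reduces to: (a) a point anchor $p \in D$ is replaced by a feature $\sigma(p)$ of $A({\cal K})$ (the smallest internal vertex/edge/face containing $p$), and we must check $\sigma(p) \subseteq D$; (b) a polygon-chain anchor $C \subseteq D$ is replaced by the smallest subchain $\bar C$ of $\partial P$ containing $C$ with endpoints in $V({\cal K})$, and we must check $\bar C \subseteq D$.

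For (a), the key observation is that because $K$ is itself one of the chords of ${\cal K}$, the arrangement $A({\cal K})$ refines the bipartition of $P$ by $K$: $K$ is a union of arrangement edges and vertices, so every face of $A({\cal K})$ lies in a single open half-polygon of $K$, and every edge or vertex lies in a single open half-polygon or is contained in $K$. Consequently $\sigma(p)$ is contained in every closed half-polygon of $K$ that contains $p$ — in particular in $D$. For (b), note that $D \cap \partial P$ is exactly the closed arc of $\partial P$ bounded by the two endpoints of $K$, and those endpoints lie in $V({\cal K})$; thus this arc is itself a valid superchain of $C$ with endpoints in $V({\cal K})$, so by minimality $\bar C \subseteq D \cap \partial P \subseteq D$. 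Combining (a) and (b) gives $\psi(Q) \subseteq D$.

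The step I expect to need the most care is (b) — reconciling ``smallest superchain with endpoints in $V({\cal K})$'' with ``does not leave the half-polygon'' — which works precisely because the two endpoints of $K$ are chord endpoints and hence belong to $V({\cal K})$; and, to a lesser extent, making (a) airtight by spelling out that $A({\cal K})$ refines the partition induced by $K$. The geodesic convexity of half-polygons is standard and lets the hull of the (contained) expanded anchors stay inside $D$ with no further work. For the other direction of Lemma~\ref{lem:anchor-ranges} (the statement that $\psi(\gamma(Q)) = Q$ and hence ${\cal K}(Q) = {\cal K}(\gamma(Q))$) the same Claim applies verbatim once one checks that replacing each face/edge/vertex expanded anchor by an interior point recovers, under $\psi$, the original expanded anchor.
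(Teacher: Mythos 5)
Your proof is correct and follows essentially the same route as the paper's: the easy direction via $Q \subseteq \psi(Q)$, and the converse by placing $Q$ in a closed half-polygon of $K$, showing every expanded anchor stays in that half-polygon, and invoking geodesic convexity of half-polygons to keep $\psi(Q)$ inside it. Your steps (a) and (b) merely spell out in detail what the paper compresses into the remark that expanded anchors were defined so as not to cross chords of $\cal K$, which is a welcome (and accurate) elaboration rather than a different argument.
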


\begin{proof} 
One direction of the proof is simple: If a chord crosses $Q$, it must cross $\psi(Q)$ since $Q \subseteq \psi(Q)$.

For the other direction we prove
that if a chord $K \in {\cal K}$ does not cross the 
$3$-anchor hull $Q$, 
then it does not cross the 
expanded $3$-anchor hull $\psi(Q)$.
Suppose that a chord $K \in \mathcal{K}$ does not cross $Q$. 
Then $Q$ is contained in one of the closed half-polygons, say $H$, defined by $K$.
\newchanged{
This implies that the anchors of $Q$ are contained in $H$.
Since the corresponding expanded anchors were defined to not cross chords of $\cal K$, they are contained in $H$. Thus $\psi(Q)$, being the geodesic convex hull of sets in $H$, is also in $H$.  So $K$ does not cross $\psi(Q)$. 
} 
\end{proof}

\begin{lem}
\label{lem:subspace-oracle}
The $3$-anchor 
range space 
has a subspace oracle.
\end{lem}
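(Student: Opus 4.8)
Recall that a subspace oracle for a range space must, given a subset ${\cal K}' \subseteq {\cal K}$ with $|{\cal K}'| = m$, enumerate all distinct ranges of the induced subspace in time polynomial in $m$; this is exactly what is needed to invoke the deterministic $\epsilon$-net construction (Theorem~47.4.3 of~\cite{toth2017handbook}). The plan is to produce such an algorithm running in time $O(m^7)$, following the description already sketched after the lemma statement: we reduce from $3$-anchor hulls to \emph{expanded} $3$-anchor hulls via Lemma~\ref{lem:anchor-ranges}, then enumerate the polynomially-many expanded hulls explicitly.

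First I would build the arrangement $A({\cal K}')$ of the $m$ chords (without the polygon edges, or with them — either is fine since they only add $O(n)$ extra structure, and we treat $P$ as fixed), together with its $O(m^2)$ internal vertices, edges, and faces; this takes $O(m^2)$ time by~\cite{edelsbrunner1986constructing}. These, together with the $O(m^2)$ polygon subchains whose endpoints lie in $V({\cal K}')$, are exactly the \emph{expanded anchors}, so there are $O(m^2)$ of them in total. Next, for each chord $K \in {\cal K}'$ and each of the $O(m^2)$ expanded anchors $a$, I would determine in $O(1)$ time (after linear preprocessing of $A({\cal K}')$) which of the two open half-polygons of $K$ contain points of $a$ — equivalently, on which side(s) of the line through $K$ the anchor lies, since each expanded anchor is a connected cell of the arrangement or a boundary chain that does not cross any chord. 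Storing, for each $(K,a)$ pair, a two-bit "side mask" takes $O(m^3)$ time and space.

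Then I would enumerate all expanded $3$-anchor hulls: there are $O((m^2)^3) = O(m^6)$ ordered triples of expanded anchors, and for each triple $(a_1,a_2,a_3)$ I compute its range ${\cal K}'(\psi)$ by scanning all $m$ chords and including $K$ iff the three side masks for $(K,a_1),(K,a_2),(K,a_3)$ are not all "one side" — i.e., iff both open half-polygons of $K$ meet $\{a_1,a_2,a_3\}$, which by convexity of geodesic hulls (Observation~\ref{obs:3-geo-cell}) is equivalent to $K$ crossing the geodesic hull $\psi$. This costs $O(m)$ per triple, hence $O(m^7)$ overall. Collecting the resulting $O(m^6)$ subsets and removing duplicates (e.g. by sorting, another $O(m^7 \log m)$ or $O(m^7)$ via bitset hashing) yields the full list of ranges. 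By Lemma~\ref{lem:anchor-ranges} this list equals $\{{\cal K}'(Q) : Q \text{ a $3$-anchor hull}\}$, so the algorithm is a valid subspace oracle.

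The one point needing care — and the only place where the geometry (as opposed to bookkeeping) enters — is the equivalence "$K$ crosses $\psi(a_1,a_2,a_3)$ iff both open sides of $K$ meet $\{a_1,a_2,a_3\}$." This is where I would lean on the argument already used inside the proof of Claim~\ref{claim:same-crossing-chords}: if all three expanded anchors lie in one closed half-polygon $H$ of $K$, then their geodesic convex hull lies in $H$ and so $K$ does not cross it; conversely, if $K$ crosses $\psi$, then since the $a_i$ together with $\psi$'s interior span both sides and each $a_i$ is connected and disjoint from $K$'s interior, some $a_i$ must lie strictly on each side (using that an expanded anchor cannot straddle $K$). Everything else is routine enumeration, so I expect the main obstacle to be purely expository: making precise that "an expanded anchor never crosses a chord of ${\cal K}'$" (true for faces/edges/vertices of $A({\cal K}')$ by construction, and true for polygon chains with endpoints in $V({\cal K}')$ because such a chain is a subset of $\partial P$ bounded by chord endpoints), and that therefore each $(K,a)$ side-relation is well-defined and $O(1)$-computable.
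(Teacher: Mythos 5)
Your proposal is correct and follows essentially the same route as the paper's proof: reduce to expanded $3$-anchor hulls via Lemma~\ref{lem:anchor-ranges}, build the arrangement of the $m$ chords, record for each chord--anchor pair which side(s) the anchor meets in $O(m^3)$ total, and then enumerate the $O(m^6)$ triples at $O(m)$ each for an $O(m^7)$ oracle. The one point where the paper is more careful is that it replaces the polygon-boundary chains by dummy edges so the combinatorial arrangement has size $O(m^2)$ independent of $n$, whereas your aside that including the edges of $P$ is ``fine since we treat $P$ as fixed'' would make each oracle call cost $\Omega(n)$, which violates the requirement that the oracle run in time $O(m^{d+1})$ and would threaten the overall linear-time bound when the ground set ${\cal K}(Q)$ is much smaller than $n$ (also note that omitting the boundary entirely is not quite right either, since faces of the chord-only arrangement intersected with $P$ may be disconnected).
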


\begin{proof} 
We must provide a deterministic algorithm that, given a subset ${\cal K}' \subseteq {\cal K}$ with $|{\cal K}'| = m$, computes the set
of ranges ${\cal R} = \{ {\cal K}'(Q) \mid Q \text{ is 
a $3$-anchor hull}\}$
in time $O(m^{d+1})$, where $d=6$ is the shattering dimension of the $3$-anchor range space.  

\newchanged{
We use the equivalence of 
the $3$-anchor range space and the expanded $3$-anchor range space (Lemma~\ref{lem:anchor-ranges}).
In Lemma~\ref{lem:constant_shattering_dimension} we proved that 
the number of expanded $3$-anchor hulls, $Q$, 
is $O(m^6)$.  
We must find these, and find, 
for each $Q$, the set of chords of $\cal K'$ that cross it. 

Recall that $A(\cal{K'})$ is the arrangement of the chords of ${\cal K}'$ plus the edges of $P$. 
This is an arrangement of line segments, with the special property that all segment endpoints are on the outer face. 
Recall also that $V({\cal K'})$ denotes the endpoints of the chords in $\cal K'$.
If $\cal{K'}$ has size $m$, then
$A(\cal{K'})$ has $O(m^2)$ faces, $O(m^2)$ internal vertices and edges, and $n+2m$ external vertices and edges on the boundary of $P$. 
In particular, the external vertices are the vertices of $P$ union $V({\cal K'})$.
For the algorithm we will avoid the dependence on $n$ by working with a combinatorial version of $A(\cal{K'})$ in which each minimal chain along $\partial P$ with endpoints in $V({\cal K'})$
is represented by a single ``dummy edge''. 
Note that the number of dummy edges is at most $2m$.  
Let \defn{$G(A({\cal K'}))$}
denote this planar graph, which has $O(m^2)$ vertices, edges, and faces.

We compute $G(A({\cal K'}))$ as follows. 
Compute 
the arrangement of the $m$ line segments $\cal K'$ in time $O(m^2)$.
Then traverse the outer face of the arrangement, adding dummy edges corresponding to subchains of $\partial P$ between vertices of  $V({\cal K}')$. 
We thus compute $G(A({\cal K'}))$ in time $O(m^2)$.

Next, we enumerate all of the possible expanded anchors: the $O(m^2)$ internal vertices, edges, and faces of $G(A({\cal K'}))$, and the $O(m^2)$ polygon chains, each represented by two endpoints in $V({\cal K'})$. 

For each of the $m$ chords $K$ of $\cal K'$ we enumerate the
anchors that lie in each of the two closed half-polygons $H$ defined by $K$.
In particular, we can traverse $G(A({\cal K'}))$ in time $O(m^2)$ to find the 
vertices, edges, and faces that lie in $H$.  We can also decide which of the $O(m^2)$  polygon chains lie entirely in $H$, based on where the endpoints lie.  This takes time $O(m^2)$ per chord, for a total of $O(m^3)$.

Finally, we can enumerate all the $O(m^6)$ choices of at most three expanded anchors that determine an expanded $3$-anchor hull $Q$.  For each choice we spend $O(m)$ time to find the set of chords crossing $Q$---begin with all of $\cal K'$ and eliminate chords that have all three anchors on the same side, since these are precisely the chords do not cross $Q$.
This gives us the set of chords crossing $Q$.}
\end{proof}

\newchanged{Designing a subspace oracle for the 
$4$-cell range space of Ahn et al.~seems problematic.  However, the above proof can be used to show that the $4$-anchor range space has a subspace oracle.  Thus constant-sized $\epsilon$-nets can be found in deterministic linear time. An $\epsilon$-net for the $4$-anchor range space is an $\epsilon$-net for the $4$-cell range space.  This repairs the approach of Ahn et al., modulo repairing their partition of a cell into $4$-cells (Appendix~\ref{section:counterexample}).}

\subsubsection{Overview of $\epsilon$-Nets}
\label{appendix:epsilon-net-overview}
This section contains background results on 
$\epsilon$-nets and their use in geometric divide-and-conquer algorithms.
For more details, we refer to the paper by Haussler and Welzl~\cite{haussler1987},  the survey by Mustafa and Varadarajan~\cite[Chapter 47]{toth2017handbook}, and the book by Mustafa~\cite{mustafa2022sampling}.
A \defn{range space} is a pair $(X,\mathcal{R})$ where $X$ is a \defn{ground set} of elements and $\mathcal{R}$ is a set of subsets of $X$.
We refer to the elements of $\mathcal{R}$ as the \defn{ranges} of the range space.
For any $\epsilon$ between 0 and 1, an \defn{$\epsilon$-net} for the range space $(X,\mathcal{R})$ is a subset $N \subseteq X$ with the following property: for every range $R$ in $\mathcal{R}$ with $|R| \geq \epsilon |X|$, we have $N \cap R \neq \phi$.
We use this as:

\begin{equation}
\text{if $N \cap R = \phi$, then $|R| < \epsilon |X|$}   \label{eqn:epsilon-net} 
\end{equation}

In many geometric settings, the ground set consists of hyperplanes.
In such cases, 
an $\epsilon$-net $N$ determines a hyperplane arrangement that partitions the space and suggests 
a natural divide and conquer approach based on the cells of this partition (called a \emph{cutting}~\cite{chazelle1993cutting,matouvsek1991cutting}).
We follow this approach, although our ground set consists of chords of the polygon rather than hyperplanes.

The size of the $\epsilon$-net 
directly controls the number of subproblems in the divide and conquer algorithm.
Efficient algorithms using this approach require an $\epsilon$-net of small size.
One way to guarantee constant sized $\epsilon$-nets is using combinatorial properties like the VC-dimension or shattering dimension.

Consider the range space $(X,\mathcal{R})$.
For a set $A \subseteq X$, the \defn{restriction of the ranges} to $A$, denoted \defn{${\cal R}_{|A} $}, is defined to be $\{A \cap R : R \in {\cal R} \}$.
A set $A$ is \defn{shattered} by the range space $(X,\mathcal{R})$ if ${\cal R}_{|A} $ is the power set of $A$.
The \defn{VC-dimension of a range space} $(X,\mathcal{R})$ is the maximum size of a set that can be shattered by the range space.
If a range space can shatter sets of arbitrarily large size, it has infinite VC-dimension.

The \defn{shattering dimension} of the range space $(X,\mathcal{R})$ is the minimum number $d$ such that for all $m$ and for all sets $A \subseteq X$ with $|A|=m$, we have $|{\cal R}_{|A}| \in O(m^d)$.
Equivalently, this says
that the number of ranges 
when restricted to
any subset of size $m$ is upper bounded by a polynomial in $m$ of degree equal to the shattering dimension.
Usually, upper bounds for the shattering dimension can be found more readily than those for the VC-dimension,
and upper bounds on the shattering dimension imply upper bounds on the VC-dimension, as expressed by
the following restatement of Lemma 5.14 from Har-Peled~\cite{har2011geometric}:

\begin{lem}
\label{lemma:shatter_vc_related}
If a range space has shattering dimension $d$, its VC-dimension is bounded by $O(d \log d)$, specifically by $12d \ln{(6d)}$.
\end{lem}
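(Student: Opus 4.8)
The plan is to reproduce the short combinatorial argument behind this statement, which is exactly Lemma~5.14 of Har-Peled~\cite{har2011geometric}; one could alternatively just cite that source.  Write $k$ for the VC-dimension and $d$ for the shattering dimension of the range space.  By definition of the VC-dimension there is a set $A \subseteq X$ with $|A| = k$ that is shattered, so $|{\cal R}_{|A}| = 2^k$.  On the other hand, the definition of shattering dimension bounds $|{\cal R}_{|A}|$ by a degree-$d$ polynomial in $|A|$; using the standard normalization $|{\cal R}_{|B}| \le |B|^{d}$ for all finite $B \subseteq X$ (absorbing the constant hidden in the $O(m^d)$ only perturbs the final bound by lower-order terms), this gives $|{\cal R}_{|A}| \le k^{d}$.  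Combining the two estimates yields $2^{k} \le k^{d}$.

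Taking base-$2$ logarithms turns this into $k \le d \log_2 k$, equivalently $k \ln 2 \le d \ln k$, and the remaining task is to solve this transcendental inequality for $k$.  I would do this by contradiction: assume $k > 12 d \ln(6 d)$.  Since $12 d \ln(6 d) \ge e$ for every $d \ge 1$ and $x \mapsto (\ln x)/x$ is decreasing for $x \ge e$, and since $\ln\bigl(12 d \ln(6 d)\bigr) = \ln 12 + \ln d + \ln\ln(6 d) \le 8 \ln(6 d)$ (using $\ln(6d) > 1$ and $\ln d \le \ln(6d)$), we get
\[
\frac{\ln k}{k} \;<\; \frac{\ln\bigl(12 d \ln(6 d)\bigr)}{12 d \ln(6 d)} \;\le\; \frac{8 \ln(6 d)}{12 d \ln(6 d)} \;=\; \frac{2}{3 d} \;<\; \frac{\ln 2}{d},
\]
which contradicts $k \ln 2 \le d \ln k$.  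Hence $k \le 12 d \ln(6 d) = O(d \log d)$, as claimed.

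The substantive part is entirely the last step: the elementary but slightly delicate arithmetic that converts $2^{k} \le k^{d}$ into the explicit constant $12 d \ln(6 d)$ (note how tight $\frac{2}{3} < \ln 2$ is, and that one must check $12 d \ln(6d) \ge e$ to invoke monotonicity of $(\ln x)/x$).  Everything preceding it is just unwinding the definitions of VC-dimension and shattering dimension, and the qualitative Sauer--Shelah-flavoured relationship between the two is folklore.
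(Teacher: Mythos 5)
The paper never proves this lemma itself—it is stated as a restatement of Lemma~5.14 of Har-Peled and justified purely by that citation—and your argument correctly reproduces the standard proof behind that citation: a shattered set of size $k$ gives $2^{k}\le k^{d}$, and the elementary monotonicity estimate on $(\ln x)/x$ turns this into $k\le 12d\ln(6d)$ (your arithmetic, including $2/3<\ln 2$ and $12d\ln(6d)\ge e$, checks out). Your caveat about absorbing the constant hidden in the $O(m^{d})$ of the shattering-dimension definition is the same normalization implicitly made in the cited source, so the proposal is consistent with the paper.
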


In the next lemma, we state the result of Haussler and Welzl~\cite{haussler1987} that a range space $(X,\mathcal{R})$ of  finite VC dimension has constant-sized $\epsilon$-nets.  
For a divide and conquer algorithm we also need an  algorithm to \emph{find} an $\epsilon$-net of constant size. 
A randomized algorithm is easier to obtain
but we need a deterministic algorithm.
Such a deterministic algorithm
was given by 
Matousek~\cite{matousek1991subspace} for any range space of shattering dimension $d$ that has a 
\defn{subspace oracle} which is defined to be a deterministic algorithm that, given a subset $X' \subseteq X$, computes the set $\mathcal{R}_{|X'}$ in time 
$O(|X'|^{(d+1)})$.

We summarize the results of Haussler and Welzl~\cite{haussler1987} and 
Matousek~\cite{matouvsek1989construction} in the following lemma.
Other sources for these results include
the survey  by Mustafa and Varadarajan~\cite[Chapter 47, Theorem 47.4.3]{toth2017handbook}, and the textbook by Mulmuley~\cite{mulmuley1994computational}.

\begin{lem}
\label{lem:constant_size_net}
A range space $(X,\mathcal{R})$ of finite VC-dimension has $\epsilon$-nets of size $O(\frac{1}{\epsilon} \log{\frac{1}{\epsilon}})$.
Furthermore, if the range space has a subspace oracle
then such an $\epsilon$-net can be found in deterministic time $O(|X|)$.
\end{lem}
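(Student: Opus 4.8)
The plan is to combine two classical results, treating the statement as a recollection of standard facts rather than proving anything new: the existence of small $\epsilon$-nets for range spaces of finite VC-dimension, due to Haussler and Welzl~\cite{haussler1987}, and the deterministic construction of such nets in linear time when a subspace oracle is available, due to Matousek~\cite{matousek1991subspace, matouvsek1989construction}. Let $d$ denote the VC-dimension, a constant by hypothesis.

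For the existence claim I would follow the double-sampling argument. Fix $\epsilon$, set $m = c\,\tfrac{1}{\epsilon}\log\tfrac{1}{\epsilon}$ for a constant $c=c(d)$, and draw a sample $N$ of $m$ elements of $X$ uniformly at random. Call a range $R\in\mathcal{R}$ \emph{heavy} if $|R|\ge\epsilon|X|$; we want $N$ to meet every heavy range with positive probability. The key step bounds the bad event by comparing $N$ against a second independent sample $N'$ of the same size: on the combined multiset $N\cup N'$ of size $2m$ the number of distinct ranges is, by the Sauer--Shelah lemma, only $O((2m)^d)$ — this is exactly where finite VC-dimension enters. A union bound over these polynomially-many ranges, together with a Chernoff-type estimate that a heavy range missed by $N$ must still be well represented in $N'$, pushes the failure probability below $1$, yielding an $\epsilon$-net of size $O(\tfrac{1}{\epsilon}\log\tfrac{1}{\epsilon})$ with the hidden constant depending on $d$.

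For the deterministic linear-time construction I would invoke Matousek's derandomization. Rather than sampling all of $X$ at once, one processes $X$ in a bottom-up fashion: split $X$ into blocks of size $O_{d,\epsilon}(1)$ and repeatedly merge the $\epsilon$-nets (or $\epsilon$-approximations) of a pair of already-processed subsets into an $\epsilon$-net of their union, each merge shrinking the working set back down. Every local step operates on a set whose size depends only on $d$ and $\epsilon$, and to carry it out deterministically one enumerates all ranges restricted to that small set and selects, by the method of conditional probabilities, a subsample that provably preserves the covering property. Enumerating those restricted ranges in polynomial time is precisely what the subspace oracle supplies: given $X'\subseteq X$ it returns $\mathcal{R}_{|X'}$ in time $O(|X'|^{d+1})$. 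Since each of the $O(|X|)$ local computations is on a set of size $O_{d,\epsilon}(1)$, the total time is $O(|X|)$ with the constant absorbing the dependence on $d$ and $\epsilon$, and the output still has size $O(\tfrac{1}{\epsilon}\log\tfrac{1}{\epsilon})$.

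The main obstacle is the derandomization step — converting the existential Haussler--Welzl bound into a deterministic algorithm without more than a constant-factor blow-up in running time. The probabilistic proof is short, but making it constructive requires the careful block-merging scheme together with the ability to list ranges on small subsets efficiently, which is exactly the role of the subspace oracle. For our application this hypothesis is met: Lemma~\ref{lem:subspace-oracle} exhibits a subspace oracle for the $3$-anchor range space and Lemma~\ref{lem:constant_shattering_dimension} bounds its VC-dimension. Everything else — the union bound, the Chernoff estimate, the method of conditional probabilities, the merging analysis — is routine, and I would cite~\cite{haussler1987, matousek1991subspace, matouvsek1989construction, toth2017handbook, mulmuley1994computational} rather than reproduce it.
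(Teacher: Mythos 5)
Your proposal is correct and takes essentially the same approach as the paper: the paper likewise states this lemma as a summary of the known results of Haussler--Welzl and Matousek (pointing to the Mustafa--Varadarajan survey and Mulmuley's book as additional sources) rather than reproving them. Your sketches of the double-sampling existence argument and of the oracle-based deterministic merging construction are accurate recollections of those standard proofs, so citing them, as you do, is exactly what the paper does.
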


\subsection{Stage 2: Algorithm for $Q$ a Triangle}
\label{section:constantQ}
\label{section:constantQ_Q}

In this section we 
\newchanged{outline the} algorithm to solve a subproblem for a
subpolygon $Q$ with $|Q| \le 6$ 
and its associated sets ${\cal T}(Q)$ and ${\cal K}_T(Q)$. 
Some of the triangles of  ${\cal T}(Q)$ may contain $Q$. 
We can triangulate $Q$ in constant time and apply the chord oracle to determine which triangle contains the center. Thus we will assume that $Q$ is a triangle.  

We must find the point that minimizes the upper envelope of the functions of the coarse cover ${\cal T}(Q)$.  
We crucially use the properties that the upper envelope is a geodesically convex function (Lemma~\ref{lem:geodesically-convex}) and that $Q$ is convex---together these imply that the upper envelope is a convex function.
We use a Megiddo-style prune-and-search technique, following the same approach as Ahn et al.~\cite[Section7]{linear_time_geodesic} but 
modified to deal with the edge center rather than the vertex center.

Each triangle $T$ of the coarse cover is the domain of a distance function to some edge $e$ of $P$.
Definition~\ref{defn:coarse-cover} tells us that functions associated with coarse cover elements have two different forms. Accordingly, we partition ${\cal T} (Q)$ into:
\begin{enumerate}
    \squeezelist
    \item \textbf{${\cal T}_1$}: Coarse cover elements whose associated functions have the form $d_2(x,v)+ \kappa$, where $v$ is a polygon vertex and $\kappa$ is a constant.
    \item \textbf{${\cal T}_2$}: Coarse cover elements whose associated functions have the form $d_2(x,{\bar e})$, where $\bar e$ is the line through polygon edge $e$.
\end{enumerate}

To determine the edge center, we must locate a point $x = (x_1, x_2)$ and a value $\rho$ to solve the following

\begin{equation}\label{program:optimization_floating}
\begin{array}{ll@{}ll}
\text{minimize}  & \rho &\\
\text{subject to}&
x \in Q\\
&d_2(x,v) + \kappa \le \rho \ \ \ &
\text{$x \in T \cap Q$; $v$, $\kappa$, and $T$
from 
an element of ${\cal T}_1$}\\
&d_2(x,{\bar e}) \le \rho & 
\text{$x \in T \cap Q$; $\bar e$ and $T$
from 
an element of  ${\cal T}_2$}\\
\end{array}
\end{equation}

We show how to solve 
Problem (\ref{program:optimization_floating}) 
\newchanged{in linear time} when the upper envelope of the coarse cover functions is convex.
(Without this condition 
the problem becomes hard since we then have unrelated convex constraints defined on different subdomains $T$.)

The constraints corresponding to ${\cal T}_1$ will be referred to as \textit{disk constraints}.
The constraints corresponding to ${\cal T}_2$ will be referred to as \textit{half-plane constraints}.
\newchanged{Ahn et al.~\cite[Section 7]{linear_time_geodesic} solve Problem (\ref{program:optimization_floating}) when there are no half-plane constraints.
Following their approach, we first describe previous work that handles the case when all triangles of the coarse cover contain $Q$.}

\subparagraph*{Special Case: All Triangles Contain $Q$.}
\newchanged{Note that in this case there is no need to assume that the upper envelope of the coarse cover functions is convex, since this}
follows immediately from the fact that each constraint
is convex on $Q$.
\begin{enumerate}
\squeezelist

\item \label{halfplane_constraints} 
Suppose all the constraints are half-plane constraints.
In this case, the problem 
is simply
linear programming in fixed
dimension 
which was solved in linear time by Megiddo~\cite{megiddo_linear} and Dyer~\cite{dyer1984linear}.
The idea 
is to pair up the lines that define the half-planes, and compute the angle bisector of each pair.
Knowing which side of the bisector contains the optimum point allows us to restrict the domain and discard one of the two constraints.
Find an appropriately-sized cutting of the 
bisectors.
\nnewchanged{
If we find which simplex of the cutting contains the optimum point, we can discard a constant fraction of the constraints.
The simplex can be found using an ``oracle'' that finds the optimum restricted to a line, i.e., in one lower dimension, 
and then testing whether this solution is the  global optimum, and if not, finding which side of the line contains the optimum.} 
The ``oracle'' on a line uses the prune-and-search technique applied repeatedly to the median point.

    \item \label{disk_constraints} 
    Suppose all the constraints are disk constraints.
This special case was also solved by Megiddo~\cite{megiddo_spanned_ball} and the solution was used in the geodesic center algorithm of Pollack et al.~\cite{pollack_sharir}.
The idea is again to pair up the constraints.
Although the constraints are non-linear, Megiddo showed that in the three-dimensional space of $x_1,x_2,\rho$, the locus of points where two constraints are equally tight is a plane that acts as the bisector between the two constraints.
\nnewchanged{The methods used to solve linear programming in three dimensions can then be applied to solve the problem in linear time.}

\item Finally, suppose there are both half-plane and disk constraints.
A linear-time algorithm for this case is given by Lubiw and Naredla~\cite{lubiw2021visibility} in their solution of the visibility center problem.
\newchanged{The idea is to pair up the half-plane constraints and separately pair up the disk constraints. After computing the bisector of each pair, the}
prune-and-search approach described above will %
prune away a constant fraction of 
both 
types of constraints in linear time.
\end{enumerate}

\subparagraph*{General Case.}
\newchanged{The new complication is that}
each constraint applies only in a triangular subdomain.
The idea for the solution one dimension down 
(with interval subdomains on a line) comes from the linear-time chord oracle of Pollack et al.~\cite{pollack_sharir}.  This was extended by Ahn et al.~\cite{linear_time_geodesic} to two dimensions. 
\newchanged{They dealt only with disk constraints, but we can extend the approach to handle both disk constraints and half-plane constraints, by pairing each constraint with another of the same type.

We outline the approach of Ahn et al.~\cite[Section 7.1]{linear_time_geodesic}.}
The basic idea is to add the subdomain boundary lines to the set of bisectors.
\newchanged{Each triangle of the coarse cover is bounded by two chords of $P$.
A pair of constraints (of the same type) then involves five linear constraints (a ``plane-set''): two for each triangular subdomain plus one bisecting plane.
Using cuttings and a ``side-decision'' algorithm
\nnewchanged{(which Megiddo called an ``oracle'')} 
we can in linear time restrict our search to a constant sized convex region $Q'\subseteq Q$ 
such that some constant fraction of the pairs of constraints have the property that no member of their plane-set
intersects $Q'$.
The claim is that at least one of each such pair can be eliminated. 
If $Q'$ is outside either of the two triangular domains, then the corresponding constraint is irrelevant
Otherwise, $Q'$ is inside both the domains.
In this case, we use the fact that it lies on one side of the bisector plane.
One constraint dominates over the other on this side of the bisector plane, and the other one may be ignored.
The last remaining ingredient is   
the ``side-decision'' algorithm which involves solving Problem~(\ref{program:optimization_floating}) 
restricted to a plane---this is the same problem down a dimension---and then testing
whether this solution is a local (hence global) solution and if not, finding which side of the plane contains the optimum.  
}

This completes the outline for solving Problem~\ref{program:optimization_floating} in linear-time.

\end{document}